\author%
{Antoine Mottet}
	\address{Department of Algebra, MFF UK, Sokolovsk\'a 83, 186 00 Praha 8, Czech Republic}
	\email{mottet@karlin.mff.cuni.cz}
	\urladdr{http://www.karlin.mff.cuni.cz/~mottet/}
\author%
{Michael Pinsker}
	\address{Institut f\"{u}r Diskrete Mathematik und Geometrie, FG Algebra, TU Wien, Austria, and Department of Algebra, Charles University, Czech Republic}    
	\email{marula@gmx.at}
    \urladdr{http://dmg.tuwien.ac.at/pinsker/}
\thanks{
Antoine Mottet has received funding from the European Research Council
(ERC) under the European Unions Horizon 2020 research and
innovation programme (grant agreement No 771005). Michael Pinsker has received funding from the  Austrian Science Fund (FWF) through  project No P32337, and from the Czech Science Foundation (grant No 13-01832S)}
\title[Smooth approximations and CSPs]{Smooth approximations and CSPs\\ over finitely bounded homogeneous structures}
\date{\today}
\theoremstyle{plain}
    \newtheorem{theorem}{Theorem}
    \newtheorem{lemma}[theorem]{Lemma}
    \newtheorem{proposition}[theorem]{Proposition}
    \newtheorem{conjecture}[theorem]{Conjecture}
    \newtheorem{corollary}[theorem]{Corollary}
    \newtheorem{definition}[theorem]{Definition}
    \theoremstyle{remark}
    \newtheorem{example}[theorem]{Example}
     \newtheorem*{remark}{Remark}
 \newcommand{\ignore}[1]{}
\DeclareMathOperator\argmin{argmin}
\newcommand{\actson}{\curvearrowright}
\DeclareMathOperator{\inj}{inj}
\DeclareMathOperator{\csp}{CSP}
\DeclareMathOperator{\CSP}{CSP}
\DeclareMathOperator{\Pol}{Pol}
\DeclareMathOperator{\Aut}{Aut}
\DeclareMathOperator{\End}{End}
\newcommand\rel[1]{\mathbb{#1}}
\renewcommand{\P}{\mathscr P}
\newcommand{\sH}{\mathbb H}
\newcommand{\sA}{\mathbb A}
\newcommand{\sB}{\mathbb B}
\newcommand{\sC}{\mathbb C}
\newcommand{\sT}{\mathbb T}
\newcommand\sX{\mathbb X}
\newcommand\sY{\mathbb Y}
\newcommand{\Cf}{\cC^{fin}_\sA}
\newcommand{\CAQ}{\mathscr{C}_{\mathbb A}^{\mathbb Q}}
\newcommand{\Af}{\mathbb A^{fin}}
\newcommand{\CA}{\Pol(\sA)}
\newcommand{\CAH}{\mathscr{C}_{\mathbb A}^{\sH, \inj}}
\newcommand{\CAA}{\mathscr{C}_{\mathbb A}^{\mathbb A, \inj}}
\newcommand{\CAHH}{\mathscr{C}_{\mathbb A}^{\mathbb H}}
\newcommand{\CAT}{\mathscr{C}_{\mathbb A}^{\sT, \inj}}
\newcommand{\CATT}{\mathscr{C}_{\mathbb A}^{\mathbb T}}
\newcommand{\lra}{\{\leftarrow,\rightarrow\}}
\newcommand{\en}{\{E,N\}}
\newcommand{\cC}{\mathscr C}
\newcommand{\cD}{\mathscr D}
\newcommand{\cB}{\mathscr B}
\newcommand{\fC}{\mathscr C}
\newcommand{\fD}{\mathscr D}
\newcommand{\fF}{\mathscr F}
\newcommand{\gG}{\mathscr G}
\newcommand{\Projs}{\P}
\newcommand{\todo}[1]{#1}
\newcommand{\todom}[1]{#1}
\newcommand\new[1]{#1}
\newcommand{\thickhline}{%
    \noalign {\ifnum 0=`}\fi \hrule height 1pt
    \futurelet \reserved@a \@xhline
}
\newcolumntype{"}{@{\hskip\tabcolsep\vrule width 1pt\hskip\tabcolsep}}
\newcommand\behavior[6]{\begin{tabu}{|c|c|[1.5pt black]c|c|c|c|c|}%
\cline{4-7}%
\multicolumn{3}{c}{} & \multicolumn{2}{|c|}{$<$} & \multicolumn{2}{c|}{$>$}\\
\hline%
\multicolumn{2}{|c|[1.5pt black]}{} &  $=$ & $\rightarrow$ & $\leftarrow$ & $\rightarrow$ & $\leftarrow$ \\%
\thickhline%
\multicolumn{2}{|c|[1.5pt black]}{$=$} & $=$ & \ensuremath{#1} & \ensuremath{#2} & \ensuremath{#5} & \ensuremath{#6}\\%
\cline{1-2}
\multirow{2}{*}{$<$} & $\rightarrow$ & \ensuremath{#3} & $\rightarrow$ & $\rightarrow$ & $\rightarrow$ & $\rightarrow$ \\%
 & $\leftarrow$ & \ensuremath{#4} & $\leftarrow$ & $\leftarrow$ & $\leftarrow$ & $\leftarrow$ \\%
\hline%
\end{tabu}}
\newcommand\behaviorunordered[4]{\begin{tabu}{|c|[1.5pt black]c|c|c|c|}%
\hline%
 &  $=$ & $\rightarrow$ & $\leftarrow$ \\%
\thickhline%
$=$ & $=$ & \ensuremath{#1} & \ensuremath{#2}\\%
$\rightarrow$ & \ensuremath{#3} & $\rightarrow$ & $\rightarrow$ \\%
$\leftarrow$ & \ensuremath{#4} & $\leftarrow$ & $\leftarrow$ \\%
\hline%
\end{tabu}}
\begin{document}

\begin{abstract}
    We introduce the novel  machinery of smooth approximations, and apply it to confirm the CSP dichotomy conjecture for first-order reducts of the random tournament, \todo{and to give new short  proofs of the conjecture} for various homogeneous graphs including the random graph \todo{(STOC'11, ICALP'16)},  and for expansions of the order of the rationals \todo{(STOC'08)}.
    Apart from obtaining these dichotomy results, we show how our new proof  technique allows to unify and significantly simplify the previous results from the literature.
    For all but the last structure, we  moreover characterize \todom{for the first time} those CSPs which are solvable by local consistency methods, again using the same machinery.
\end{abstract}
\maketitle

\section{Introduction and Results}\label{sect:intro}
\subsection{Constraint Satisfaction Problems}\label{subsect:csp} 
The \emph{Constraint Satisfaction Problem}, or CSP for short, over a relational
structure $\sA$ in a finite signature  is the computational problem of deciding whether a given finite
relational structure $\sB$ in the same signature can be homomorphically mapped to
$\sA$. The structure $\sA$ is known as the \emph{template} or
\emph{constraint language} of the CSP, and the CSP of the particular structure $\sA$ is
denoted by $\csp(\sA)$.

CSPs form a class of computational problems that are of interest for practitioners and theoreticians alike.
On the one hand, CSPs generalize numerous flavours of satisfiability problems that are of interest in practice.
\todo{Fundamental problems in operations research like SAT solving or combinatorial optimization can readily be formulated as constraint satisfaction problems with $\sA$ being a structure over $\{0,1\}$ or over the rational numbers.
Moreover, planning/scheduling problems over various qualitative calculi used in temporal and spatial reasoning can be seen as constraint satisfaction problems with an infinite template; see~\cite{Qualitative-Survey} for a survey about the connections between qualitative reasoning and CSPs, and ~\cite{ClassificationTransfer,UniqueInterpolation,RelationAlgebras-Knaeuer} for examples where the CSP approach leads to results with applications in the study of relation algebras used in artificial intelligence.
Moreover, constraint satisfaction problems have recently been connected to the study of ontology-mediated queries in artificial intelligence~\cite{OBDA},
and this connection has been fruitfully exploited to obtain solutions to problems in the area of ontology-based data access as shown in~\cite{MMSNP-journal,ApproxWidth}.
This approach is particularly useful to identify in a systematic way classes of instances for which efficient algorithms exist, or to prove unconditionally that some classes of algorithms cannot solve a given problem.}
On the other hand, CSPs form a well-structured class of problems that can be studied from the point of view of complexity theory, and where one can meaningfully ask what structural information can possibly separate the well-known complexity classes such as P and NP, see e.g.~\cite{TopologyIsRelevant,BMOOPW,GJKMP-conf}.

Finite-domain CSPs, i.e., CSPs where the template is a finite structure, have been proved to either be in P or NP-complete~\cite{BulatovFVConjecture,ZhukFVConjecture,Zhuk:2020}, that is, the class of finite-domain CSPs does not contain any NP-intermediate problem, which are known to exist if P does not equal NP~\cite{Ladner}.
Moreover, the tractability border can be algorithmically decided, and it can be expressed using algebraic conditions on the set of \emph{polymorphisms} of the template (definitions are given in Section~\ref{sect:prelims}; for the remainder of the introduction, the reader can think of polymorphisms as generalized  automorphisms or endomorphisms of the template).
However, finite-domain CSPs form a proper subset of all CSPs, and in fact most of the problems that are of interest for applications can only be formulated as CSPs over infinite templates.

\subsection{Finitely bounded homogeneous structures}\label{subsect:fbh}
Fortunately, the algebraic approach that underpins the proofs of the finite-domain complexity dichotomy (and virtually all theoretical work on finite-domain CSPs) does not require the template to be finite, but also works under the assumption of $\omega$-categoricity.
And although every computational decision problem is polynomial-time Turing-equivalent to the CSP of some infinite template~\cite{BodirskyGrohe}, for a large and natural class of
$\omega$-categorical templates, which considerably expands the class of finite templates,
a similar conjecture as for finite-domain CSPs has been formulated by Bodirsky and Pinsker in 2011 (see~\cite{BPP-projective-homomorphisms});  the following is a modern formulation taking into account recent progress from~\cite{BKOPP, BKOPP-equations, wonderland, BartoPinskerDichotomy,  Topo}.

\begin{conjecture}
  \label{conjecture:infinitecsp}
  Let $\sA$ be a CSP template which is a  first-order reduct of a countable finitely bounded homogeneous structure $\sB$. Then
  one of the following holds.
  \begin{itemize}
    \item The polymorphism clone $\Pol(\sA)$ of $\sA$ has a uniformly continuous minion homomorphism to the clone of projections  $\P$,  and\/ $\csp(\sA)$ is  NP-complete. 
    \item
    The polymorphism clone $\Pol(\sA)$ has no uniformly continuous minion homomorphism to the clone of projections  $\P$, and $\csp(\sA)$ is in P.
  \end{itemize}
\end{conjecture}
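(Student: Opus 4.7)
The plan is to split the conjecture into its two directions and address them by very different means. The NP-completeness half is the easier one: if $\Pol(\sA)$ admits a uniformly continuous minion homomorphism to $\P$, then one can pull back instances of, say, 1-in-3-SAT to primitive positive interpretable structures over $\sA$ and obtain the standard Bodirsky--Pinsker-style hardness reduction. This part I would handle first, citing the existing machinery, so that the bulk of the argument can focus on tractability.

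For the tractable direction, my plan is to use the \emph{smooth approximations} machinery announced in the abstract. Concretely, starting from the assumption that $\Pol(\sA)$ has no uniformly continuous minion homomorphism to $\P$, I would try to produce, canonically with respect to $\Aut(\sB)$, a polymorphism $f$ of $\sA$ whose behaviour on some finite set of orbits of $\Aut(\sB)$-invariant equivalence relations is nontrivial in the algebraic sense, i.e., it generates on a suitable finite quotient an operation satisfying some nontrivial Maltsev condition such as a Siggers identity. The Ramsey-theoretic canonization technique for first-order reducts of finitely bounded homogeneous structures would be the tool of choice here; smoothness of the approximation should mean that this finite behaviour \emph{faithfully} reflects the algebraic structure of $\Pol(\sA)$, so that non-existence of a minion homomorphism to $\P$ transfers to the finite quotient. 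One can then invoke the finite-domain dichotomy of Bulatov/Zhuk to obtain a polynomial-time algorithm on the quotient, and lift it to $\csp(\sA)$ by a local consistency procedure combined with iterated reductions to smaller orbit configurations.

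The main obstacle, and the reason this statement is only a conjecture rather than a theorem even with the smooth approximation toolkit, is the step of actually producing the smooth approximation in full generality. For specific templates (tournaments, graphs, expansions of $(\mathbb{Q};<)$, as treated in the paper) one has enough Ramsey-type and combinatorial control over the orbits of $\Aut(\sB)$ on tuples to extract canonical polymorphisms with the required smoothness; but for an arbitrary finitely bounded homogeneous $\sB$ it is open whether non-trivial polymorphisms on $\sA$ always project down to non-trivial operations on a finite orbit-quotient in a way compatible with the minion structure. I would therefore expect the verification strategy to proceed case by case along the lines of the paper, with each new class of templates requiring its own smooth-approximation lemma, until enough examples accumulate to suggest a general structural theorem that would close the conjecture.
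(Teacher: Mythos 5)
The statement you are asked to prove is Conjecture~\ref{conjecture:infinitecsp}, the Bodirsky--Pinsker conjecture, and it is open: the paper contains no proof of it, only proofs of special cases (Theorem~\ref{thm:dichotomies}, for first-order reducts of the universal homogeneous tournament, the random graph, the $K_n$-free graphs, and expansions of $(\mathbb Q;<)$). Your proposal correctly recognizes this in its final paragraph, and your outline of the strategy is essentially the one the paper uses for those special cases: hardness from a uniformly continuous minion homomorphism to $\P$ via the standard pp-interpretation machinery, and tractability via canonical polymorphisms, a reduction to a finite orbit-quotient, and the finite-domain dichotomy. So there is nothing in the paper for your argument to be measured against beyond those instances, and no complete proof to be had.

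Two points in your sketch are worth correcting, since they misstate how the machinery is actually deployed. First, the role of smoothness is inverted relative to the paper: a smooth approximation is not what makes the finite quotient ``faithfully reflect'' $\Pol(\sA)$ in the tractable direction; rather, the fundamental theorem (Theorem~\ref{thm:sa}) uses a smooth approximation to \emph{lift triviality upward} --- if the canonical subclone acts by projections on some subfactor and that subfactor is smoothly approximated by a $\Pol(\sA)$-invariant equivalence relation, one obtains a uniformly continuous minion homomorphism from all of $\Pol(\sA)$ to $\P$, i.e.\ the hardness case. Tractability comes from the \emph{other} branch of the loop lemma, where no such approximation exists and one extracts a weakly commutative polymorphism (Lemma~\ref{lem:weird-symmetric}) to contradict triviality of the canonical clone. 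Second, the reduction-to-finite step cannot be the whole story even in principle: for first-order expansions of $(\mathbb Q;<)$ the tractability border is provably not witnessed by canonical polymorphisms, so the paper has to supply genuinely different algorithms (a Turing reduction to $\csp(\Af)$ via free sets) in that case. Any general attack on the conjecture must account for this second regime, not only for the missing smooth-approximation lemma in the first.
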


The conjectured P/NP-complete dichotomy has been demonstrated for numerous subclasses: for example for all CSPs in the class
MMSNP~\cite{MMSNP}, as well as for the CSPs of the first-order reducts of $(\mathbb{Q};
<)$~\cite{tcsps-journal},
of any countable homogeneous graph~\cite{BMPP16} (including the random graph~\cite{BodPin-Schaefer-both}), of any unary structure~\cite{BodMot-Unary}, of the random poset~\cite{posetCSP16, posetCSP18}, of the homogeneous binary branching C-relation~\cite{Phylo-Complexity}, and also for the CSPs of representations of some relation algebras~\cite{RelationAlgebras-Knaeuer}.

It is easy to see from the definitions that the CSP of any template within the range of Conjecture~\ref{conjecture:infinitecsp} is in NP; moreover, the results from~\cite{wonderland} imply that if $\sA$ is such a template and $\Pol(\sA)$ does have a uniformly continuous minion homomorphism to $\P$, then its CSP is NP-hard. Resolving Conjecture~\ref{conjecture:infinitecsp} therefore involves the investigation of the consequences of the absence of a uniformly continuous  minion homomorphism from $\Pol(\sA)$ to $\P$. Only one general non-trivial algebraic consequence, the existence of a \emph{pseudo-Siggers  polymorphism}, is known~\cite{BartoPinskerDichotomy, Topo}; it remains, however, open whether this implies membership of the CSP in P.

\subsection{Reduction to the finite case}\label{subsect:reduction}
The above-mentioned complexity classifications all use 
the concept of \emph{canonical functions} from~\cite{RandomMinOps, BPT-decidability-of-definability} (see also~\cite{BodPin-CanonicalFunctions}).
However, there are two regimes for these proofs.

In the first regime, which encompasses all the above-mentioned classifications except the ones for the first-order reducts of $(\mathbb Q;<)$ and of the homogeneous binary branching $C$-relation, 
 canonical functions are used in order to reduce the problem to a CSP of a finite template.
Originally, \emph{ad hoc} algorithms were given, which were later subsumed by the general reduction from~\cite{Bodirsky-Mottet}.
Roughly speaking, if $\sA$ is a first-order reduct of a finitely bounded homogeneous structure $\sB$, then a function in $\Pol(\sA)$ is called canonical with respect to $\sB$ if it acts on the orbits of the automorphism group $\Aut(\sB)$ of $\sB$ on $n$-tuples, for all $n\geq 1$; by $\omega$-categoriciy, each of these actions has only finitely many orbits, and hence the action is on a finite set.
It is a consequence of the presence of a \emph{Ramsey expansion} for the structure $\sB$ that canonical functions are, in a sense, not too rare: canonical functions with respect to the expansion are \emph{locally interpolated} by $\Pol(\sA)$. 
It is an open problem whether every finitely bounded homogeneous structure $\sB$ has a finitely bounded homogeneous Ramsey expansion (see~\cite{BPT-decidability-of-definability,BP-reductsRamsey,BodirskyRamsey,EvansHubickaNesetril,MottetPinskerCores,Hubicka-Nesetril-All-Those}).
\todo{Until recently and the proofs of Conjecture~\ref{conjecture:infinitecsp} for MMSNP and first-order reducts of unary structures,} which established a primitive basis for the present work that we outline in Section~\ref{subsect:sa} below,
the classifications in the first regime followed the following strategy: they first identified some relations which, if preserved by $\Pol(\sA)$, implied a uniformly continuous minion homomorphism from $\Pol(\sA)$ to $\P$, and hence NP-completeness of $\csp(\sA)$.
If none of the identified relations was preserved by $\Pol(\sA)$, then it was shown that the canonical functions in $\Pol(\sA)$ satisfied non-trivial \emph{identities}, putting $\csp(\sA)$ into P by the general result from~\cite{Bodirsky-Mottet}.
In particular, the border between polynomial-time tractability and NP-hardness of the CSP is explained purely in terms of polymorphisms that are canonical with respect to the ground structure: more precisely,  $\csp(\sA)$ is in P if $\Pol(\sA)$ contains a pseudo-Siggers operation (modulo $\Aut(\sB)$) that is canonical with respect to $\sB$;  otherwise $\Pol(\sA)$ has a uniformly continuous minion homomorphism to $\P$ and $\csp(\sA)$ is NP-hard.

In the second regime, canonical functions are also used to obtain a description of ``good'' polymorphisms, which are however themselves \emph{not} canonical.
This is not a deficiency in the proof: it is known that the tractability border for first-order reducts of $(\mathbb Q;<)$ cannot be explained by the satisfaction of non-trivial identities within the clone of polymorphisms that are canonical with respect to $(\mathbb Q;<)$.
Therefore, the reduction to finite-domain CSPs from~\cite{Bodirsky-Mottet} cannot be used in this case, and different polynomial-time algorithms are featured in these classifications.

\subsection{Smooth approximations} \label{subsect:sa}

The current proof techniques suffer from several problems.
First, in both regimes, they rely on a classification of all the first-order reducts of the given base structure $\sB$ up to first-order interdefinability, and do a proof by case distinction for each of those.
This is not an approach that can scale, as even within the scope of Conjecture~\ref{conjecture:infinitecsp} the number of such reducts can grow arbitrarily large, and it is even an open question to know whether this number is always finite (the question is a special case of Thomas' conjecture~\cite{RandomReducts}).

Secondly, the proofs rely on identifying, for each structure $\sB$, a different list of relations whose invariance under $\Pol(\sA)$   implies  NP-hardness, and whose non-invariance  should imply containment of  $\csp(\sA)$ in P. The non-invariance  of the relations is however a \emph{local} condition: it expresses that there exists a finite subset $S$ of the domain on which some polymorphisms  violate the relations, but the polymorphisms might well leave these relations invariant on the complement of $S$.
In order to derive significant structural information (in particular,  information sufficient to prove tractability of $\csp(\sA)$) from this sort of local information, the proofs need several local-to-global arguments which are often long, complex, and highly depend on the structure $\sB$ under consideration.

We develop here a different strategy which still uses canonical functions, but avoids the two problems mentioned above.
It is based on the comparison of two polymorphism clones $\fC, \fD$ with $\fC\subseteq \fD$.
\begin{itemize}
    \item In our first result, the \emph{loop lemma of  approximations} (Theorem~\ref{thm:2-cases-general}), we assume that $\fC$ acts on the orbits of the largest permutation group $\gG_\cD$ within $\cD$ as the projections. We moreover assume that $\cD$ is a \emph{model-complete core} and has \emph{no algebraicity}.
    The lemma then 
roughly states that there exists a $\cC$-invariant equivalence relation $\sim$ such that $\cC$ acts on the $\sim$-classes as the projections, and such that either  $\sim$ is \emph{smoothly approximated}  by an equivalence relation $\eta$ which is invariant under $\cD$, or certain  binary relations which are invariant under $\cD$ must all contain a \emph{pseudo-loop} modulo $\gG_\cD$. The former situation can be viewed as $\cD$ being ``close" to $\cC$, the latter as it being ``far".
\item 
We then show the \emph{fundamental theorem of smooth approximations} (Theorem~\ref{thm:sa}): if $\cC$ acts on the classes of an invariant equivalence relation $\sim$, and $\sim$ is smoothly approximated by an equivalence relation invariant under $\cD$, and $\cC$ is \emph{locally interpolated}  by $\cD$ (another ``closeness" condition), then the action of $\cC$ on the $\sim$-classes can be extended to $\cD$. 
\item Finally, in a third result we show that in the second case of the loop lemma of approximations, where certain binary relations invariant under $\fD$ must contain a pseudo-loop, $\fD$ contains a certain  \emph{weakly symmetric} function  (Lemma~\ref{lem:weird-symmetric}).
The symmetry condition on this operation is \emph{global}, i.e., it holds on the whole domain and not only on a finite subset thereof.
This solves the second shortcoming mentioned above.
\end{itemize}

These results make, in principle, the following strategy for proving dichotomy results possible: if the clone of canonical functions in $\Pol(\sA)$ (which is a subclone of $\Pol(\sA)$) has no \emph{clone homomorphism} to $\P$ (i.e., it satisfies non-trivial identities), then $\csp(\sA)$ is in P, by~\cite{Bodirsky-Mottet}. Otherwise, it acts trivially on the classes of some invariant equivalence relation~\cite{Topo-Birk}. If this equivalence relation is smoothly approximated by an equivalence relation invariant under $\Pol(\sA)$, then $\Pol(\sA)$ has a uniformly continuous minion homomorphism to $\P$, and $\CSP(\sA)$ is NP-complete, by~\cite{wonderland}. If on the other hand the equivalence relation is not smoothly approximated, then $\Pol(\sA)$ contains a weakly symmetric function, which allows us to construct a canonical function satisfying a non-trivial identity -- a contradiction. %

In reality, our proofs will not be quite as simple, and involve several ``intermediate clones" between the larger clone $\Pol(\sA)$ and its subclone of canonical functions. The  main idea illustrated above of ``either a clone homomorphism expands to a larger clone, or the larger clone is non-trivial"  remains, however, the strategy's \emph{Leitmotiv}, its \emph{Schwibbogen}.

\subsection{The dichotomy results}\label{subsect:dicho} The new methods that we develop in this paper allow us to confirm the Bodirsky-Pinsker conjecture for several subclasses.
As a matter of fact, to our knowledge, every single dichotomy result that was published so far within the scope of the Bodirsky-Pinsker conjecture  can be reproved using various combinations of our methods.
As a proof of concept, we show in this paper how to prove the conjecture for first-order reducts of the universal homogeneous tournament -- this result is new.
We also show how to reprove the dichotomies for the first-order reducts of the universal homogeneous graph (known as the \emph{random graph}), the universal homogeneous $K_n$-free graph for all $n\geq 3$, and for expansions of the order $(\mathbb Q;<)$ of the rational numbers using smooth approximations.
Our new proofs of these results are considerably shorter and simpler than the original ones: once our general theory is laid down, the arguments that are specifically needed for these proofs only cover a few pages, while the original proofs add up to over 70 pages of specific arguments~\cite{BodPin-Schaefer-both,BMPP16,tcsps-journal}. 
Moreover, they follow the same unifying principles,
thus showing the versatility and potential of our approach for Conjecture~\ref{conjecture:infinitecsp}: this shows nicely in our new proof of the P/NP dichotomy for the homogeneous graphs in Section~\ref{sect:graphs} which, reusing the strategy of the universal homogeneous tournament from Section~\ref{sect:rt}, is achieved in only 4~pages (most of which are rather superfluous once the transfer principle of the proof is explained)  and requires no more creativity.

\begin{theorem}\label{thm:dichotomies}
Conjecture~\ref{conjecture:infinitecsp} is true for first-order reducts of the universal homogeneous tournament, of the random graph,
of the universal homogeneous $K_n$-free graph for all $n\geq 3$,
and for expansions of the rational numbers.
\end{theorem}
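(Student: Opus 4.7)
The plan is to apply the three-step smooth approximation strategy sketched as the \emph{Leitmotiv} of Section~\ref{subsect:sa} to each of the four template classes in a uniform way. Fix a first-order reduct $\sA$ of one of the four base structures $\sB \in \{\text{random tournament},\text{random graph},K_n\text{-free graph},(\mathbb Q;<)\}$. I would first reduce to the case where $\sA$ is a model-complete core; this is a standard reduction that preserves the complexity of $\csp(\sA)$ and, for all four structures, lands inside a class of templates with no algebraicity. In each case $\sB$ is known to admit a finitely bounded homogeneous Ramsey expansion with no algebraicity, so canonical functions with respect to $\sB$ are locally interpolated by $\Pol(\sA)$.

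Let $\fC$ be the subclone of $\Pol(\sA)$ consisting of functions canonical with respect to $\sB$. If $\fC$ admits no clone homomorphism to $\P$, then by the general reduction of~\cite{Bodirsky-Mottet} the problem reduces in polynomial time to a tractable finite-domain CSP, so $\csp(\sA)\in\mathrm{P}$. Otherwise, by~\cite{Topo-Birk} the clone $\fC$ acts as the projections $\P$ on the classes of an $\fC$-invariant equivalence relation $\sim$ with finitely many classes. I then invoke Theorem~\ref{thm:2-cases-general} with the pair $\fC\subseteq\fD:=\Pol(\sA)$. In the first case $\sim$ is smoothly approximated by a $\Pol(\sA)$-invariant equivalence relation; the fundamental theorem of smooth approximations (Theorem~\ref{thm:sa}) then extends the projection action of $\fC$ on $\sim$-classes to a uniformly continuous minion homomorphism from $\Pol(\sA)$ to $\P$, making $\csp(\sA)$ NP-complete by~\cite{wonderland}. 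In the second case, certain binary $\fD$-invariant relations contain pseudo-loops modulo $\Aut(\sB)$, so by Lemma~\ref{lem:weird-symmetric} the clone $\Pol(\sA)$ contains a weakly symmetric operation; combined with local interpolation by canonical functions this should yield a canonical polymorphism satisfying a non-trivial identity, contradicting the hypothesis that $\fC$ has a clone homomorphism to $\P$, so that this branch does not occur.

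The four structures differ only in their combinatorial input to this pipeline: the list of orbits of $\Aut(\sB)$ on pairs and triples, the specific binary relations that need to be checked for pseudo-loops in Theorem~\ref{thm:2-cases-general}, and the behaviours of the weakly symmetric operation produced by Lemma~\ref{lem:weird-symmetric}. For the tournament and the graph cases only a handful of binary orbits must be inspected; for $(\mathbb Q;<)$ the analysis of the order-canonical behaviours already present in the temporal CSP literature supplies the necessary inputs. The ``intermediate clones'' foreshadowed in Section~\ref{subsect:sa} will arise naturally as the clones generated by canonical polymorphisms exhibiting a restricted range of behaviours.

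The main obstacle is the last step of the second case: converting the weakly symmetric operation given by Lemma~\ref{lem:weird-symmetric}, which is globally symmetric but not necessarily canonical, into a canonical function witnessing a non-trivial identity in $\fC$. This is the point at which the transfer from $\fD$ back to $\fC$ happens, and it requires a Ramsey-theoretic stability argument combining the globality of the symmetry with local interpolation along a sequence of intermediate clones. Orchestrating this transfer uniformly is what makes each of the four specific dichotomies a genuine application of the machinery, rather than a triviality; but once the schema is in place for the tournament, the graph and $K_n$-free cases should follow by minor combinatorial adjustments, and the $(\mathbb Q;<)$ case from the existing canonical analysis.
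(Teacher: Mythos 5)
Your schema reproduces the paper's stated \emph{Leitmotiv}, but the two places where you defer the work are precisely where the proof lives, and one of your steps is false as stated. First, local interpolation: the tournament, the random graph and the $K_n$-free graphs are \emph{not} Ramsey --- only their ordered expansions are --- so $\Pol(\sA)$ locally interpolates functions canonical with respect to $(\sB,<)$, not with respect to $\sB$ itself. Your claim that ``canonical functions with respect to $\sB$ are locally interpolated by $\Pol(\sA)$'' therefore does not follow from the existence of a Ramsey expansion; bridging this gap is the content of Lemma~\ref{lem:canonization-without-order} and Corollary~\ref{cor:canonization-without-order}, and it is the reason the paper does not run Theorem~\ref{thm:2-cases-general} with the canonical clone against $\Pol(\sA)$ on the domain, but instead builds the chain $\CATT\subseteq\CAT\subseteq\CAA\subseteq\CA$ and applies the loop lemma to $\CAA\actson U$ where $U$ is the set of injective $k$-tuples (with $n=1$, after establishing pp-definability of $\neq$ and the existence of binary injections so that these actions make sense and so that Lemma~\ref{lem:primitive-neq-very-smooth} can upgrade the merely presmooth approximation delivered by the loop lemma to a very smooth one). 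Second, the contradiction you need in the second branch --- turning the weakly commutative but non-canonical operation of Lemma~\ref{lem:weird-symmetric} into a canonical majority operation on $\lra$ (resp.\ semilattice operation on $\en$) --- is Lemma~\ref{lem:maj-from-symmetric} and its graph analogue, a multi-page induction over complete diagonal order types; compressing it to ``a Ramsey-theoretic stability argument'' leaves the core of the theorem unproved.

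More seriously, your uniform pipeline is wrong for $(\mathbb Q;<)$. There $(\CAQ)^2/\Aut(\mathbb Q;<)$ is \emph{always} equationally trivial, yet many first-order expansions of $(\mathbb Q;<)$ have tractable CSPs; the paper notes explicitly that the tractability border for temporal CSPs is not explained by canonical polymorphisms and that the reduction of~\cite{Bodirsky-Mottet} is unavailable in this case. Hence for tractable temporal templates the second branch of the loop lemma genuinely occurs and does \emph{not} yield a contradiction: it yields the non-canonical operations $pp$ or $ll$ or their duals (Proposition~\ref{prop:pp-or-ll}), after which the paper applies the loop lemma a second time to $\Pol(\sA,0)$ (Proposition~\ref{prop:poly-1-types}) and replaces Bodirsky--Mottet by a new polynomial-time Turing reduction to the finite template $\Af$ via free sets (Lemma~\ref{lem:free-sets}). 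Your argument, which concludes NP-completeness whenever the canonical clone is equationally trivial and dismisses the second branch as contradictory, would misclassify every tractable temporal CSP; deferring to ``the existing canonical analysis'' cannot repair this, since the whole point of that case is that canonical analysis does not suffice.
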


As mentioned above, in the case of the first-order expansions of $(\mathbb Q;<)$,
one cannot use the reduction to finite-domain CSPs from~\cite{Bodirsky-Mottet} and therefore one needs to exhibit specific algorithms.
In the original proof and its recent refinement~\cite{tcsps-journal,RydvalFP}, four distinct subroutines are needed, one for each minimal operation giving tractability (the so-called operations $min,mx,mi,ll$), together with two ``master'' algorithms using these routines (one for each of the so-called operations $pp$ and $ll$).
The advantage of our proof via a reduction to a finite-domain CSP is that we do not need to exhibit these subroutines, which come instead from the finite-domain dichotomy theorem (or from Schaefer's boolean dichotomy theorem~\cite{Schaefer}, in this special case).
In particular, our approach is robust against the explosion of the number of cases that one would have to consider in a more general setting.
However, our general theory does not provide at the moment a unifying argument for the two ``master'' algorithms.

\subsection{Bounded width}\label{subsect:bw} For fixed parameters $k\leq l$, the \emph{local consistency algorithm} works as follows: given an instance, one derives the strongest possible constraints on $k$ variables that can be seen by looking at $l$ variables at a time.
If one derives an empty constraint, then one can safely reject the instance.
We say that the local consistency algorithm \emph{solves} a CSP if whenever no empty constraint is derived, then the instance has a solution.
A template $\sA$ has \emph{bounded width} if for some $k,l$, the local consistency algorithm correctly solves $\sA$ (in which case the \emph{width} of $\sA$ is the lexicographically smallest pair $(k,l)$)~\cite{FederVardi}.
Since this algorithm runs in polynomial time, CSPs of templates with bounded width are in P.

For finite structures, an algebraic characterization of templates with bounded width is known~\cite{BoundedWidthJournal}, but it is also known that no condition that is preserved by uniformly continuous clone homomorphisms characterizes bounded width for structures within the scope of the Bodirsky-Pinsker conjecture~\cite{RydvalFP}.
Using the method of smooth approximations, we provide a characterization of templates with bounded width for reducts of the structures mentioned above, with the exception of $(\mathbb Q;<)$ \todo{for which a characterization of bounded width is already known~\cite{RydvalFP}}.
\todo{This result is entirely new and constitutes} major progress towards an understanding of bounded width in general, and is the first result of its kind in that it does not rely on a previous detailed complexity classification for the CSPs.
It provides further evidence that the approximation approach developed here is a fundamental tool that can be used to investigate some of the most important questions in the area of infinite-domain CSPs, especially so since the structure of our proofs for characterizing bounded width is exactly the same as the one for characterizing membership in P. Our proof for the case of homogeneous graphs, building on the proofs for the random tournament and our P/NP classifications, becomes almost a footnote.

\begin{theorem}\label{thm:bounded-width-classification}
    Let $\sA$ be a \todom{model-complete core that is a} first-order reduct of $\sB$, where $\sB$ is the universal homogeneous tournament, the random graph, or the universal $K_n$-free graph for some $n\geq 3$.
    Then the following are equivalent:
    \begin{itemize}
        \item $\sA$ has bounded width;
        \item for every $m\geq 3$, $\sA$ has an $m$-ary pseudo-WNU polymorphism \todom{modulo $\overline{\Aut(\rel B)}$};
        \item for every $m\geq 3$, $\sA$ has an $m$-ary pseudo-WNU \todom{polymorphism modulo $\overline{\Aut(\rel B)}$} that is canonical with respect to $\sB$.
    \end{itemize}
\end{theorem}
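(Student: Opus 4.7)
The plan is to prove the cycle $(3) \Rightarrow (1) \Rightarrow (2) \Rightarrow (3)$; the implication $(3) \Rightarrow (2)$ is then automatic, since every canonical polymorphism is in particular a polymorphism.

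For $(3) \Rightarrow (1)$, I would invoke the reduction to finite-domain CSPs of Bodirsky and Mottet. Each of the three listed base structures $\sB$ admits a finitely bounded homogeneous Ramsey expansion $\sB^\ast$; the canonical polymorphisms of $\sA$ with respect to $\sB^\ast$ act on the finitely many orbits of $\Aut(\sB^\ast)$ on $n$-tuples, giving a finite-domain clone. For every canonical $m$-ary pseudo-WNU polymorphism modulo $\overline{\Aut(\sB)}$, the authorising automorphisms become trivial on orbits, so the induced operation on the orbit quotient is a genuine $m$-ary WNU. Having WNU operations of every arity $m \geq 3$ on the finite quotient allows the bounded-width theorem of Barto and Kozik to apply, yielding bounded width of the associated finite-domain template, which the Bodirsky--Mottet reduction transfers to $\sA$ itself.

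For $(1) \Rightarrow (2)$, I would argue by contrapositive: the absence of an $m$-ary pseudo-WNU polymorphism modulo $\overline{\Aut(\sB)}$ for some $m \geq 3$ yields, by the standard infinite-domain adaptation of the algebraic hardness theory for bounded width, that $\Pol(\sA)$ pp-constructs with parameters a template encoding a system of linear equations over a finite abelian group, which in turn rules out bounded width.

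The main obstacle is $(2) \Rightarrow (3)$, where the smooth approximations machinery of the paper becomes essential. Assume toward contradiction that $\Pol(\sA)$ has an $m$-ary pseudo-WNU modulo $\overline{\Aut(\sB)}$ for every $m \geq 3$ but lacks a canonical such polymorphism of some arity $m \geq 3$. Let $\fC$ be the clone of canonical polymorphisms of $\sA$ with respect to $\sB$, and set $\fD = \Pol(\sA)$. The failure of canonical pseudo-WNU of arity $m$ means, via the finite-domain theory applied to $\fC$ acting on orbits of $\Aut(\sB)$, that there is a $\fC$-invariant equivalence $\sim$ on which $\fC$ acts as the projections. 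Applying the loop lemma of approximations (Theorem~\ref{thm:2-cases-general}) with this $\fC$ and $\fD$, we obtain two cases: either $\sim$ is smoothly approximated by a $\fD$-invariant equivalence, or certain binary $\fD$-invariant relations must contain pseudo-loops modulo $\overline{\Aut(\sB)}$. In the smooth case, the fundamental theorem of smooth approximations (Theorem~\ref{thm:sa}), applied using the local interpolation of $\fC$ by $\fD$ granted by the Ramsey property of $\sB^\ast$, extends the projection-like action of $\fC$ on $\sim$-classes to $\fD$, forcing $\Pol(\sA)$ to lack an $m$-ary pseudo-WNU modulo $\overline{\Aut(\sB)}$ and contradicting (2). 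In the non-smooth case, Lemma~\ref{lem:weird-symmetric} produces a globally weakly symmetric function in $\Pol(\sA)$; canonizing this function with respect to $\sB^\ast$ yields a canonical polymorphism satisfying the required $m$-ary pseudo-WNU identity, contradicting the assumed failure. The hardest part will be verifying, separately for each base structure but following the same pattern, that the specific binary $\fD$-invariant relations extracted by Theorem~\ref{thm:2-cases-general} are of a form to which Lemma~\ref{lem:weird-symmetric} can be applied to deliver a weakly symmetric operation of exactly the arity $m$ needed.
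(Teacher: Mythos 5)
There are two genuine gaps. First, your implication $(1)\Rightarrow(2)$ rests on a ``standard infinite-domain adaptation of the algebraic hardness theory'' asserting that the absence of a single $m$-ary pseudo-WNU modulo $\overline{\Aut(\sB)}$ forces a pp-construction of linear equations. No such general theorem exists for $\omega$-categorical templates: what is available off the shelf is only that a uniformly continuous clone homomorphism from $\Pol(\sA)$ to an affine clone excludes bounded width (Larose--Z\'adori together with topological Birkhoff); the converse direction, from a missing pseudo-WNU back to such a homomorphism, is precisely the content of the dichotomies Theorem~\ref{thm:T-bw} and Theorem~\ref{thm:H-bw}, and the paper stresses that no h1-condition characterizes bounded width in this setting. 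The correct route is $(1)\Rightarrow(3)$: bounded width rules out the first alternative of Theorem~\ref{thm:T-bw}, so the second alternative --- canonical pseudo-WNUs of every arity --- must hold, and $(2)$ follows trivially from $(3)$.

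Second, in $(2)\Rightarrow(3)$ you conflate equational triviality with equational affineness. The failure of a canonical $m$-ary pseudo-WNU does not produce a subfactor on which the canonical clone acts by \emph{projections} (a naked set); by the finite-domain theory it only shows that the canonical clone is equationally \emph{affine}, i.e., acts by affine maps over a finite module on some minimal subfactor. Accordingly one must invoke the second loop lemma (Theorem~\ref{thm:2-cases-general-2}) rather than Theorem~\ref{thm:2-cases-general}, and in the smooth case Theorem~\ref{thm:sa} yields a uniformly continuous clone homomorphism to an \emph{affine} clone, which contradicts $(2)$ only because such clones lack WNUs of certain arities. In the pseudo-loop case, the weakly commutative function of Lemma~\ref{lem:weird-symmetric} does not ``canonize to a pseudo-WNU''; the paper instead converts it via Lemma~\ref{lem:maj-from-symmetric} (resp.\ its graph analogue) into a majority (resp.\ semilattice) operation in $\CAT\actson\lra$, contradicting affineness. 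Finally, the comparison is not run directly between the canonical clone and $\Pol(\sA)$ but through the chain $\CATT\subseteq\CAT\subseteq\CAA\subseteq\CA$ on injective tuples, and the transfer of affineness along this chain (Proposition~\ref{prop:CATT-nonaffine}, Lemmas~\ref{lem:bw-inclusion}, \ref{lem:bw-interpolation} and~\ref{lem:CAA-homo-module}) is a substantial part of the argument that your sketch omits.
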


We note that Wrona recently started the investigation of what the optimal parameters $k$ and $l$ should be, when $\sA$ has bounded \emph{strict} width~\cite{Wrona:2020b,Wrona:2020a}.
A template has bounded strict width when there exist $k,l$ such that whenever the local consistency does not reject an instance, then a solution can be constructed by a greedy approach (we refer to~\cite{Wrona:2020a,FederVardi} for a precise definition).
Wrona has proved that if $\sA$ is a first-order \emph{expansion} of the random graph or one of the universal homogeneous $K_n$-free graphs
that has bounded \emph{strict} width, then the width of $\sA$ is exactly $(2,\max\{3,n\})$.
The two requirements of being an expansion with strict width are needed in Wrona's argument because strict width is characterized by an algebraic condition that is crucial for the argument (in particular, some of the claims in~\cite{Wrona:2020b} do not hold in the more general setting of bounded width).
\todo{Using Theorem~\ref{thm:bounded-width-classification}, Wrona's results have recently been improved in~\cite{ApproxWidth}, where a collapse of the bounded width hierarchy was shown for first-order reducts of the structures in the scope of Theorem~\ref{thm:bounded-width-classification}.
Moreover, smooth approximations have  also been used in~\cite{ApproxWidth} to solve the Datalog-rewritability problem for monadic disjunctive Datalog, thereby solving an open problem from~\cite{OBDA} and providing another example of the versatility of our approach.}

\subsection{Outline}\label{subsect:outline} After providing the necessary notions and definitions in Section~\ref{sect:prelims}, we prove our general results on smooth approximations outlined in Section~\ref{subsect:sa}; this will be achieved in Section~\ref{sect:sa}. We apply these results to obtain the P/NP dichotomy, as well as the bounded width dichotomy for the random tournament in Section~\ref{sect:rt}. In Section~\ref{sect:graphs}, we obtain analogous results for homogeneous graphs, and in Section~\ref{sect:temp}, we reprove the P/NP dichotomy for first-order reducts of the order of the rationals. \section{Preliminaries}\label{sect:prelims}
\subsection{CSPs}
If $\sA$ is a relational structure in a finite signature, called a \emph{CSP template}, then $\csp(\sA)$ is the set of all finite structures $\sC$ in the same signature with the property that there exists a homomorphism from $\sC$ into $\sA$. This set can be viewed as a computational problem where we are given a finite structure $\sC$ in that signature, and we have to decide whether $\sC\in\csp(\sA)$.

We will tacitly assume that all relational structures that appear in this article, as well as their signatures, are at most countably infinite.

\subsection{Model theory}

A relational structure $\sB$ is \emph{homogeneous} if every  partial isomorphism between finite induced substructures of $\sB$ extends to an automorphism of the entire structure. For example, the linear order of the rationals $(\rel Q;<)$ is homogeneous; it is moreover \emph{universal} for the class of finite linear orders in the sense that it contains any such order as an induced substructure. Homogeneity and universality define $(\rel Q;<)$ up to isomorphism. Similarly, there exist a unique universal homogeneous  tournament, undirected loopless graph (called the  \emph{random graph}), and undirected loopless  $K_n$-free graph for all $n\geq 3$. A relational structure $\sB$ is called \emph{finitely bounded} if its \emph{age}, i.e., its finite  induced  substructures up to isomorphism, is given by a finite set ${\mathcal F}$ of forbidden finite substructures: that is, its age consists precisely of those finite structures in its signature which do not embed any member of ${\mathcal F}$. All of the above-mentioned universal homogeneous structures are finitely bounded. An element of ${\mathcal F}$ is \emph{minimal} if no induced substructure is an element of ${\mathcal F}$; this is the case if and only if it must be present in any set ${\mathcal F'}$ which serves as a description of the age of $\sB$ as above. A structure $\sB$ is \emph{Ramsey} if its age satisfies a certain combinatorial regularity property; we will not need the  definition, but only a consequence  from~\cite{BPT-decidability-of-definability} which we are going to cite at the end of this section. Except for $(\rel Q;<)$, the structures above are not Ramsey, but they have a finitely bounded homogeneous \emph{Ramsey expansion}: a linear order can be added freely (i.e., in a way that the new age consists of the old age ordered in all possible ways) 
so they become homogeneous, finitely bounded, and Ramsey \todo{(this follows, e.g., from the Ne\v{s}et\v{r}il-R\"odl theorem~\cite{NesetrilRoedlOrderedStructures})}. A structure $\sB$ is \emph{transitive} if its automorphism group has only one orbit in its action on the domain of $\sB$. All of the above structures are transitive.

A \emph{first-order reduct} of a relational structure $\sB$ is a relational structure $\sA$ on the same domain whose relations are  first-order definable without parameters in $\sB$.
Every first-order reduct $\sA$ of a finitely bounded homogeneous structure is \emph{$\omega$-categorical}, i.e., its  automorphism group $\Aut(\sA)$ has finitely many orbits in its componentwise action on $n$-tuples of elements of its domain, for all finite $n\geq 1$. Permutation groups with the latter property are called \emph{oligomorphic}. 
An $\omega$-categorical structure $\sB$ has \emph{no algebraicity} if none of its elements is first-order definable using other elements as parameters; this is the case if all stabilizers of its automorphism group by finitely many elements have only infinite orbits (except for the orbits of the stabilized elements). We also say that arbitrary permutation groups with the latter property have no algebraicity; such groups have the property that all orbits under their action on $n$-tuples contain two disjoint tuples, for all $n\geq 1$.  %
First-order reducts of  structures without algebraicity have no algebraicity.

A structure $\sA$ is a \emph{model-complete core} if for all endomorphisms of $\sA$ and all finite subsets of its domain there exists an automorphism of $\sA$ which agrees with the endomorphism on the subset. If $\sA$ is any $\omega$-categorical structure, then there exists an $\omega$-categorical model-complete core $\sA'$ with the same CSP as $\sA$; this structure $\sA'$ is unique up to isomorphism~\cite{Cores-journal}.

A formula is \emph{primitive positive}, in short \emph{pp}, if it contains only existential quantifiers, conjunctions, equalities, and relational symbols.  If $\sA$ is a relational structure, then a relation is \emph{pp-definable} in $\sA$ if it can be defined by a pp-formula in
$\sA$. For any $\omega$-categorical  model-complete core $\sA$,  all orbits of the action of  $\Aut(\sA)$ on $n$-tuples are pp-definable, for all $n\geq 1$. In this paper, pp-definability of the binary disequality  relation  $\neq{}:=\{(x,y)\in A^2\; |\; x\neq y\}$ on a set $A$ will play a role;  we will use this single notation for various sets $A$ which will however be clear from context.

\subsection{Universal algebra}\label{subsect:prelims-ua}
A \emph{polymorphism} of a relational structure $\sA$ is a homomorphism from some finite power $\sA^n$ of the structure to $\sA$.
The set of all polymorphisms of $\sA$  is called the \emph{polymorphism clone} of $\sA$ and is denoted by $\Pol(\sA)$. Polymorphism clones are special cases of \emph{functions clones}, i.e., sets of finitary operations on a fixed set which contain all projections and which are closed under arbitrary composition. 

The \emph{domain} of a function clone $\cC$ is the set $C$ where its functions are defined; we also say that $\cC$ \emph{acts on $C$}. The clone $\cC$ then  also naturally acts on any power $C^I$ of $C$ (componentwise), on any invariant subset $S$  of $C$ (by restriction), and on the classes of any invariant equivalence relation $\sim$ on an invariant subset $S$ of $C$ (by its action on representatives of the classes). We write $\cC\actson C^I$, $\cC\actson S$, and $\cC\actson S/{\sim}$ for these actions, \todom{and call any such pair $(S,\sim)$ a \emph{subfactor} of $\cC$. A subfactor $(S,\sim)$ of $\cC$ is \emph{minimal} if \new{$\sim$ has at least two equivalence classes} and any $\cC$-invariant subset of $S$  intersecting two $\sim$-classes  equals $S$.}

For a permutation group $\gG$, an operation $f$ on its domain, and $n\geq 1$, we say that $f$ is \emph{$n$-canonical} with respect to $\gG$ if it acts on the $\gG$-orbits of $n$-tuples; in other words, orbit-equivalence on $n$-tuples is invariant under $f$. It is \emph{canonical with respect to $\gG$} if it is $n$-canonical for all $n\geq 1$.  If $\sB$ is a structure, we also say that $f$ is canonical with respect to $\sB$ if it is canonical with respect to $\Aut(\sB)$. We say that a function clone $\cC$ is $n$-canonical (or canonical, respectively) with respect to $\gG$ if all of its functions are. In that case, we write $\cC^n/{\gG}$ for the action of $\cC$ on the set $C^n/{\gG}$ consisting of
$\gG$-orbits of $n$-tuples of the domain $C$ of $\cC$. 
If $\gG$ is oligomorphic then $\cC^n/{\gG}$ is a clone on a finite set.
Note that a $k$-ary $f$ is canonical with respect to  $\gG$ if for all tuples $a_1,\ldots,a_k$ of the same length and all $\alpha_1,\ldots,\alpha_k\in\gG$ there exists $\beta\in\gG$ such that $f(a_1,\ldots,a_k)=\beta\circ f(\alpha_1(a_1),\ldots,\alpha_k(a_k))$. We say that it is \emph{diagonally canonical} if the same holds in case $\alpha_1=\cdots=\alpha_k$.

A function $f$ is \emph{idempotent} if $f(x,\ldots,x)=x$ for all values $x$ of its domain; a function clone is idempotent if all of its functions are. A function is \emph{essentially unary} if it depends on at most one of its variables, and \emph{essential} otherwise.

If $\cC=\Pol(\sA)$ is a polymorphism clone, then the unary functions in $\cC$ are precisely the endomorphisms of $\sA$, denoted by $\End(\sA)$. The unary bijective functions in $\cC$ whose inverse is also contained in $\cC$ are precisely the automorphisms of $\sA$, denoted by $\Aut(\sA)$. For an arbitrary function clone $\cC$, we write $\gG_\cC$ for the permutation group of unary functions in $\cC$ which have an inverse in $\cC$. We say that $\cC$ has no algebraicity if $\gG_\cC$  has no algebraicity, we say it is oligomorphic if $\gG_\cC$ is.

An \emph{identity} is a formal expression $s(x_1, \ldots, x_n) = t(y_1, \ldots, y_m)$
where $s$ and $t$ are abstract terms of function symbols, and $x_1,
\ldots, x_n, y_1, \ldots, y_m$ are the variables that appear in these terms. The identity is of \emph{height 1}, and called \emph{h1~identity}, if the terms $s$ and $t$ contain precisely one function symbol, i.e., no
nesting of function symbols is allowed, and no term may be just a variable. A \emph{cyclic} identity is an identity of the form $f(x_1,\ldots,x_n)=f(x_2,\ldots,x_n,x_1)$, and a \emph{weak near-unanimity (WNU)} identity is of the form $w(x,\ldots,x,y)=\cdots= w(y,x,\ldots,x)$. The \emph{majority} identities are given by $m(x,x,y)=m(x,y,x)=m(y,x,x)=x$, the \emph{minority} identities by $m(x,x,y)=m(x,y,x)=m(y,x,x)=y$, and the \emph{Siggers} identity by $s(x,y,x,z,y,z)=s(y,x,z,x,z,y)$. Each set of identities also has a \emph{pseudo}-variant obtained by composing each term appearing in the identities with a distinct unary function symbol: for example, the \emph{pseudo-Siggers} identity is given by $e\circ s(x,y,x,z,y,z)=f\circ s(y,x,z,x,z,y)$.

A set $\Theta$ of identities  is \emph{satisfied} in a function clone $\cC$ if the function symbols of $\Theta$ can be assigned functions in $\cC$
 in such a way that all identities of $\Theta$ become true for all possible values of their variables of the domain.
A set of identities is called \emph{trivial} if it is satisfied in the \emph{projection clone} $\P$ consisting of the projection operations on the set $\{0,1\}$.
Otherwise, the set is called \emph{non-trivial}. %
A function is called a cyclic / weak near-unanimity (WNU) / majority / minority / Siggers  operation if it satisfies the identity of the same name. For the pseudo-variants of these identities, e.g.~the pseudo-Siggers identity, and a set of unary functions  $\mathscr F$, we also say that a function $s$ is a \emph{pseudo-Siggers operation modulo $\mathscr F$} if satisfaction of the pseudo-Siggers identity is witnessed by $s$ and unary functions from $\mathscr F$.

A map $\xi\colon \cC \to  \cD$ between function clones is called a \emph{clone homomorphism} if it preserves arities, maps the $i$-th $n$-ary projection in $\cC$ to the $i$-th $n$-ary projection in $\cD$ for all $1\leq i\leq n$,
and satisfies $\xi(f \circ (g_1, \ldots, g_n)) = \xi(f) \circ (\xi(g_1), \ldots, \xi(g_n))$ for all $n,m\geq 1$, all $n$-ary $f \in \cC$, and all $m$-ary
$g_1, \ldots, g_n \in\cC$.
This is the case if and only if the map $\xi$ \emph{preserves identities}, i.e., whenever some functions in $\cC$ witness the satisfaction of some identity in $\cC$, then their images under $\xi$ witness the satisfaction of the same identity in $\cD$. It would be a consequence of the truth of  Conjecture~\ref{conjecture:infinitecsp} that for any two structures  within its range the existence of  a clone isomorphism between the respective polymorphism clones implies that the CSPs of the two structures are polynomial-time interreducible. This is also open under the weaker assumption of $\omega$-categoricity; under the assumption of \emph{uniform continuity} on the clone homomorphism it is a fact~\cite{Topo-Birk} the exploitation of which is known as the \emph{algebraic approach to CSPs}.

A map $\xi \colon \cC \to \cD$ is called a \emph{minion homomorphism} 
if it preserves arities and composition with projections; the latter meaning that for all $n,m\geq 1$, all $n$-ary $f \in \cC$, and all $m$-ary projections $p_1,\ldots,p_n\in\cC$, we have  $\xi(f \circ (p_1,\ldots,p_n)) = \xi(f) \circ (p_1',\ldots,p_n')$, where $p_i'$ is the $m$-ary projection in $\cD$ onto the same variable as $p_i$, for all $1\leq i\leq n$. 
This is the case if and only if the map $\xi$ preserves h1~identities in the sense above.

The existence of clone and minion homomorphisms $\cC\to\P$ is connected to the satisfaction of non-trivial identities in a polymorphism clone $\cC$. Namely, there exists a clone homomorphism $\cC \to \P$ if and only if every set of identities satisfied in $\cC$ is trivial, in which case we say that $\cC$ is \emph{equationally trivial}. Similarly,  there exists a minion homomorphism $\cC \to \P$ if and    only if every set of h1 identities satisfied in $\cC$ is trivial. We say that $\cC$ is \emph{equationally affine} if it has a clone homomorphism to a clone $\mathscr M$ of affine maps over a finite module.

If $\cC,\cD$ are function clones and $\cD$ has a finite domain, then a clone (or minion)  homomorphism $\xi\colon \cC\to\cD$ is \emph{uniformly continuous} if for all $n\geq 1$ there exists a finite subset $B$ of $F^n$ such that $\xi(f)=\xi(g)$ for all $n$-ary $f,g\in \cC$ which agree on $B$. %

If $\fF$ is a set of functions on a fixed set $C$, then $\overline{\fF}$ denotes the closure of $\fF$ in the topology of pointwise convergence: that is, a function $g$ is contained in $\overline{\fF}$ if for all finite subsets $F$ of $C$, there exists a function in $\fF$ which agrees with $g$ on $F$. For functions $f,g$  on the same domain, and $\gG$ a permutation group on this domain, we say that $f$ \emph{locally interpolates $g$ modulo $\gG$} if $g(x_1,\ldots,x_k)\in\overline{\{\beta\circ f(\alpha_1(x_1),\ldots,\alpha_k(x_k))\; |\; \beta,\alpha_1,\ldots,\alpha_k\in\gG\}}$. For function clones $\cC,\cD$ on  this domain, we say that  $\cD$ \emph{locally interpolates $\cC$ modulo $\gG$} 
if 
every function in $\cD$ locally interpolates some  function in $\cC$ modulo $\gG$. If $\cC'$ is the function clone of those functions in $\cC$ which are canonical with respect to $\gG$, and $\gG$ is the automorphism group of a \todom{homogeneous} Ramsey structure \todom{in a finite signature}, then $\cC$ locally interpolates $\cC'$ modulo $\gG$~\cite{BPT-decidability-of-definability, BodPin-CanonicalFunctions}. Similarly, we define \emph{diagonal interpolation} by $g\in\overline{\{\beta\circ f(\alpha(x_1),\ldots,\alpha(x_k))\; |\; \beta,\alpha\in\gG\}}$. If $\gG$ is the automorphism group of a \todom{homogeneous} Ramsey structure \todom{in a finite signature}, then every function diagonally interpolates a diagonally canonical function modulo $\gG$.

We say that a function clone $\cC$ is a model-complete core if its unary functions coincide with $\overline{\gG_\cC}$. If $\cC=\Pol(\sA)$ for an $\omega$-categorical $\sA$, then this is the case if and only if $\sA$ is a model-complete core.

If $\sA$ is a relational structure and $R$ is a relation on its domain which is pp-definable in $\sA$, then $R$ is invariant under  $\Pol(\sA)$; \todo{conversely, invariant relations are pp-definable under the assumption that $\sA$ is $\omega$-categorical~\cite{BodirskyNesetrilJLC}}. \section{Smooth Approximations}\label{sect:sa}
We first give precise definitions of approximation notions for equivalence relations of varying  strength, including the smooth approximations which appeared in the outline in Section~\ref{subsect:sa}. Then we prove the three results announced there: the  loop lemma (Section~\ref{subsect:looplemma}), the fundamental theorem (Section~\ref{subsect:ft}), and the existence of weakly commutative functions (Section~\ref{subsect:weird-symmetric}).

\begin{definition}[Smooth approximations]
Let $A$ be a set,  $n\geq 1$, and let  $\sim$ be an equivalence relation on a subset $S$ of $A^n$.
We say that an equivalence relation $\eta$ on some set $S'$ with $S\subseteq S'\subseteq A^n$  \emph{approximates} $\sim$ if the restriction of $\eta$ to $S$ is a (possibly non-proper)  refinement of $\sim$; we call $\eta$ an \emph{approximation} of $\sim$.

For a permutation group $\gG$ acting on $A$ and \todo{leaving the $\sim$-classes invariant as well as $\eta$}, we say that the approximation $\eta$ is
\begin{itemize}
    \item \emph{very smooth} if  orbit-equivalence with respect to $\gG$ is a (possibly non-proper)  refinement of $\eta$ on $S$;
    \item \emph{smooth} if each equivalence class $C$  of $\sim$ intersects some equivalence class $C'$ of $\eta$ such that $C\cap C'$ contains a $\gG$-orbit;
    \item \todo{\emph{presmooth}} if each equivalence class $C$ of $\sim$ intersects some  equivalence class $C'$ of $\eta$ such that $C\cap C'$ contains two disjoint tuples in the same $\gG$-orbit.
\end{itemize}
\end{definition}
Note that trivially, any very smooth approximation is smooth. Smoothness clearly implies \todo{presmoothness} if $\gG$ has no algebraicity \todom{(see our remark in the preliminaries)}, which will appear as an assumption in the general results to follow. \todo{Presmooth} approximations are useless in that we will need at least smoothness in the fundamental theorem; in the following, we observe conditions which allow us to upgrade  an approximation in strength.

\begin{definition}[``Primitivity"]
Let $A$ be a set and $n\geq 1$. 
    A permutation group $\gG$ acting on $A$ is \emph{$n$-``primitive''} if for every orbit $O\subseteq A^n$ of $\gG$, 
    \todo{every $\gG$-invariant equivalence relation on $O$ containing $(a,b)$ with $a,b$ disjoint is full.} A function clone $\cC$ is $n$-``primitive" if $\gG_\cC$ is.
\end{definition}

\begin{example}\label{ex:primitive-tournament}
    The automorphism group of the  universal homogeneous tournament $\rel T$ is $n$-``primitive'' for all $n\geq 1$.
    Indeed, let $n\geq 1$, let $O$ be an orbit of $n$-tuples under $\Aut(\rel T)$ and let $\sim$ be an equivalence relation containing $(a,b)$ with $a,b$ disjoint.
    Without loss of generality, $O$ can be assumed to be an orbit of injective tuples.
    Let $c,d$ be arbitrary tuples in $O$.
    Consider the digraph $\sX$ over $3n$ elements $x_1,\dots,x_n,y_1,\dots,y_n,z_1,\dots,z_n$ such that
    the subdigraphs induced by $(x_1,\dots,x_n,y_1,\dots,y_n)$ and $(y_1,\dots,y_n,z_1,\dots,z_n)$ are isomorphic to the structure induced by $(a_1,\dots,a_n,b_1,\dots,b_n)$ in $\rel T$,
    and such that the subdigraph induced by $(x_1,\dots,x_n,z_1,\dots,z_n)$ is isomorphic to the structure induced by $(c_1,\dots,c_n,d_1,\dots,d_n)$ in $\rel T$.
    By universality of $\rel T$, there is an embedding of $\sX$ into $\rel T$ and by homogeneity one can assume that the embedding maps $(x_1,\dots,x_n,z_1,\dots,z_n)$  to $(c_1,\dots,c_n,d_1,\dots,d_n)$.
    By transitivity of $\sim$, one obtains that $c\sim d$, so that $\sim$ is full.
\end{example}

\begin{lemma}%
\label{lem:primitive-implies-smooth}
    Let $A$ be a set, $n\geq 1$, and let $\sim$ be an equivalence relation on a subset $S$ of $A^n$. Let $\gG$ be an $n$-``primitive'' permutation group acting on $A$ and leaving the $\sim$-classes invariant. Then any  approximation $\eta$ of $\sim$ which is \todo{presmooth} is also smooth with respect to $\gG$.
\end{lemma}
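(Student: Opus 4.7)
The plan is to fix an arbitrary $\sim$-class $C$, produce from presmoothness an $\eta$-class $C'$ witnessing the presmooth condition, and then show that the full $\gG$-orbit of the witnessing disjoint tuples is already contained in $C\cap C'$. The promotion from a disjoint pair to a whole orbit will come directly from the $n$-``primitivity'' hypothesis applied to $\eta$ restricted to that orbit.

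First I would unpack the presmoothness hypothesis: there exist an $\eta$-class $C'$ and two disjoint tuples $a,b \in C\cap C'$ with $a$ and $b$ lying in the same $\gG$-orbit $O\subseteq A^n$. Since $\gG$ preserves the $\sim$-classes, and $a\in C$, we immediately get $O\subseteq C$; in particular $O\subseteq S$, so $\eta$ is defined on all of $O$.

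The key step is then to analyze the restriction $\equiv$ of $\eta$ to $O$. Because $\gG$ leaves $\eta$ invariant (this is part of the standing hypothesis on $\gG$ from the definition of approximation) and preserves the orbit $O$, the relation $\equiv$ is a $\gG$-invariant equivalence relation on $O$. It contains the pair $(a,b)$ with $a,b$ disjoint, so by $n$-``primitivity'' of $\gG$ it must be the full relation on $O$. Hence every tuple of $O$ is $\eta$-equivalent to $a$, and since $a\in C'$, we conclude $O\subseteq C'$. Combined with $O\subseteq C$, this gives a whole $\gG$-orbit $O\subseteq C\cap C'$, establishing smoothness.

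I do not anticipate a real obstacle here; the argument is essentially a translation of definitions, with the single substantive ingredient being the application of $n$-``primitivity'' to the induced equivalence $\equiv$ on the orbit $O$. The only point that needs a small verification is that $\equiv$ is indeed $\gG$-invariant and defined on the entire orbit, but both follow immediately from $O\subseteq S$ and the $\gG$-invariance of $\eta$ built into the definition of approximation.
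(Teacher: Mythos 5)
Your argument is correct and follows exactly the same route as the paper's proof: extract the disjoint pair $a,b\in C\cap C'$ in a common orbit $O$ from presmoothness, observe $O\subseteq C\subseteq S$ so that the $\gG$-invariant restriction of $\eta$ to $O$ is defined and relates a disjoint pair, and invoke $n$-``primitivity'' to conclude that this restriction is full, whence $O\subseteq C\cap C'$. Nothing is missing.
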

\begin{proof}
Pick any equivalence class $C$  of $\sim$, and let $C'$ be an $\eta$-equivalence class therein such that $C\cap C'$ contains two disjoint tuples $a,b$ in the same $\gG$-orbit $O$. We have $O\subseteq C$ since $\gG$ preserves the $\sim$-classes.  
The support of $\eta$ contains $S$ and hence also $O$; moreover, the restriction of $\eta$ to $O$ is invariant under $\gG$ and relates  two disjoint tuples.
Hence this restriction is full, so $C'$ contains $O$ as well, which yields $C\cap C'\supseteq O$.
\end{proof}

The following lemma, which will allow us to obtain very smooth approximations from \todo{presmooth} ones in our applications, might be better appreciated after perusal of Sections~\ref{subsect:looplemma} and~\ref{subsect:ft}. 
\begin{lemma}\label{lem:primitive-neq-very-smooth}
    Let $A$ be a set and  $n\geq 1$.  Let $\cC$ be a function clone on $A$ leaving $\neq$ invariant, and let $\gG$ be a permutation group acting on $A$ which is $n$-``primitive'' and such that $\cC$ is $n$-canonical with respect to $\gG$. 
    \todom{Let $(S,\sim)$ be a minimal subfactor of  $\cC\actson A^n$ with ${\gG}$-invariant  $\sim$-classes. %
   Then any $\cC$-invariant approximation $\eta$  of $\sim$ that is \todo{presmooth} is also very smooth with respect to $\gG$.}
\end{lemma}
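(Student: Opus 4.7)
The plan is to consider the set
\[ B := \{a \in S : \gG \cdot a \subseteq [a]_\eta \text{ and } \gG \cdot a \text{ contains two disjoint tuples}\}, \]
where $[a]_\eta$ denotes the $\eta$-class of $a$, and to show that $B$ is $\cC$-invariant and intersects every $\sim$-class. By the minimality assumption on the subfactor $(S,\sim)$, this then forces $B=S$, and the first clause in the definition of $B$ yields very smoothness of $\eta$. The second clause is a technical device whose role will emerge in the $\cC$-invariance step.

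That $B$ meets every $\sim$-class follows from Lemma~\ref{lem:primitive-implies-smooth}, which upgrades presmoothness of $\eta$ to smoothness: for each $\sim$-class $C$ there is an $\eta$-class $C'$ and a $\gG$-orbit $O \subseteq C \cap C'$, and for any $a \in O$ the orbit $\gG \cdot a = O$ sits inside $[a]_\eta = C'$ while already containing two disjoint tuples directly by presmoothness, so $a \in B$.

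For $\cC$-invariance, given $a_1,\dots,a_k \in B$, a $k$-ary $f \in \cC$, and $b := f(a_1,\dots,a_k)$, the second clause of $B$ produces $\beta_i \in \gG$ with $\beta_i(a_i)$ disjoint from $a_i$ for each $i$. Applying $\neq$-invariance of $\cC$ coordinate-wise in the $n$-tuples, the tuple $c := f(\beta_1 a_1,\dots,\beta_k a_k)$ is disjoint from $b$, and $n$-canonicity places $c$ in $\gG \cdot b$; moreover, $\beta_i(a_i) \in \gG \cdot a_i \subseteq [a_i]_\eta$ since $a_i \in B$, so $\cC$-invariance of $\eta$ yields $c \eta b$. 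Hence $[b]_\eta \cap (\gG \cdot b)$ contains the two disjoint tuples $b$ and $c$. The equivalence relation on $\gG \cdot b$ defined by $c_1 \approx c_2 \iff c_1 \eta c_2$ is $\gG$-invariant and relates a pair of disjoint tuples, so by $n$-primitivity it is full, i.e., $\gG \cdot b \subseteq [b]_\eta$; combined with the disjointness of $b$ and $c$, this shows $b \in B$.

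The main subtlety I anticipate is the necessity of the second clause in the definition of $B$: $n$-primitivity transports equivalence through a single $\gG$-orbit only once a disjoint pair is present, so the ``disjointness reservoir'' inside $B$ must be carried along by the $\cC$-action. This is exactly what $\neq$-invariance of $\cC$ guarantees and is the only place where that hypothesis enters.
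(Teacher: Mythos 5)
Your proof is correct and takes essentially the same route as the paper's: the paper works with the set $T$ of tuples that are $\eta$-related to a disjoint tuple in their own $\gG$-orbit (which, by $n$-``primitivity'', coincides on $S$ with your $B$), likewise derives $S\subseteq T$ from presmoothness plus minimality of $(S,\sim)$, and only then invokes primitivity. Your detailed verification of $\cC$-invariance via $\neq$-preservation and $n$-canonicity is precisely what the paper's one-line justification of the invariance of $T$ unpacks to.
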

\begin{proof}
    Consider the set $T\subseteq A^n$ consisting of those $a\in A^n$ which are $\eta$-equivalent to some disjoint tuple  $b\in A^n$ in the same $\gG$-orbit.
    This set is invariant under $\cC$ since $\neq$, orbit-equivalence with respect to $\gG$, and $\eta$ are. 
    We claim that $S\subseteq T$. 
    Indeed, since $\eta$ is a \todo{presmooth} approximation of $\sim$,
    we find $a,b$ in $T\cap S$ that are not $\sim$-equivalent; \todom{here, we use in particular that $\sim$ has at least two classes by the minimality of the
    subfactor $(S,\sim)$.}
    Since $T$ is invariant under $\cC$, we must therefore have $S\subseteq T$, for otherwise restriction of the action $\cC^n/{\gG}\actson S/{\sim}$ to $(S\cap T)/{\sim}$ would contradict the   minimality \todom{of $(S,\sim)$}.  
    Thus, for every $a\in S$, there is $b$ in the same ${\gG}$-orbit as $a$, disjoint from $a$, and such that $a\eta  b$.
    By the  $n$-``primitivity''of ${\gG}$, it follows that $\eta$ is full when restricted to the orbit of $a$. Since $a$ was chosen arbitrarily, it follows that $\eta$ is very smooth.
\end{proof}

\subsection{The loop lemma}\label{subsect:looplemma}

\begin{definition}[Naked set]
Let $A$ be a set, $n\geq 1$, and let $\cC$ be a function clone on $A$. \todom{An \emph{$n$-ary naked set of $\cC$} is a 
subfactor $(S,\sim)$ of $\cC\actson A^n$ such that $\sim$ has least two equivalence classes and such that $\cC$ acts  on $S/{\sim}$ by projections. }
\end{definition}

\todom{Note that any minimal naked set, i.e., one where the subfactor $(S,\sim)$ is minimal, has the property that $\sim$ has precisely two classes.}
The existence of a naked set of an oligomorphic  function clone $\cC$ is equivalent to the existence of a uniformly continuous clone homomorphism from $\cC$ to $\P$~\cite{uniformbirkhoff, Topo-Birk}. The loop lemma of approximations, which has its root in~\cite[Proposition~44]{MMSNP}, examines under certain conditions  the consequence of a larger function clone $\cD$ not ``inheriting" a naked set of $\cC$.

In the proof we are going to use the following classical result by Bulatov~\cite{BulatovHColoring}: if $\cB$ is an idempotent clone acting on a finite set and preserving an undirected graph that contains a   cycle of odd length but no loop, then $\cB$ is equationally trivial. \todom{Here, by a \emph{cycle}  we mean any sequence $(x_0,\ldots,x_\ell)$ of vertices of the graph such that  $x_0=x_\ell$ and such that $x_i$ and $x_{i+1}$ are adjacent for all $i\in\{0,\ldots,\ell-1\}$; the number $\ell$ is the \emph{length} of the cycle, and a cycle of length $1$ is a \emph{loop}.} Now let $A$ be a set and $n\geq 1$, let $\cC$ be a function clone on $A$, and let $\gG$ be an oligomorphic  permutation group on $A$  such that  $\cC$ acts idempotently on the $\gG$-orbits of $n$-tuples. 
Every binary symmetric  relation $R$ on a set $A^n$ induces an undirected graph $R/\gG$ on the finite set $A^n/\gG$ of orbits  in a natural way: two orbits are adjacent in $R/\gG$ iff there exist  tuples in these orbits that are related by $R$.
Thus, if $R/\gG$ is loopless and contains an odd cycle, and if $R$ is invariant under $\cC$, then the action  $\cC^n/\gG$ is equationally trivial.

\begin{theorem}[The loop lemma of  approximations]\label{thm:2-cases-general}
	Let $A$ be a set and  $n\geq 1$. Let $\cC\subseteq\cD$ be oligomorphic function clones on $A$, where $\cD$ is a model-complete core without algebraicity  and $\cC$ is $n$-canonical with respect to $\gG_\cD$.
	Suppose that  $\cC^n/{\gG_\cD}$ is equationally  trivial.
	Then there exists an $n$-ary naked set $(S,\sim)$ of  $\cC$ with ${\gG_\cD}$-invariant 
	$\sim$-classes  such that one of the following holds:
\begin{enumerate}
    \item $\sim$ is  approximated by a $\cD$-invariant  equivalence relation that is \todo{presmooth} with respect to $\gG_\cD$;
    \item every $\cD$-invariant binary symmetric relation $R\subseteq (A^n)^2$ that  contains a pair $(a,b)\in S^2$ such that $a$ and $b$ are disjoint and such that   $a\not\sim b$ contains a pseudo-loop modulo $\gG_\cD$, i.e., a pair $(c,c')$ where $c,c'$ belong to the same $\gG_\cD$-orbit.
\end{enumerate}
\end{theorem}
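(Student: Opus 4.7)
The plan is to reduce the problem to a Bulatov-style argument on the oligomorphic quotient $\cC^n/\gG_\cD$. Because $\gG_\cD$ is oligomorphic, this quotient is a clone on the finite set $A^n/\gG_\cD$; and because $\cD$ is a model-complete core, the unary operation $x\mapsto f(x,\ldots,x)$ for any $f\in\cD$ is an endomorphism and hence lies in $\overline{\gG_\cD}$, so $\cC^n/\gG_\cD$ is idempotent. Combined with its equational triviality, the uniform Birkhoff theorem cited in Section~\ref{sect:prelims} produces a naked set in this quotient; choosing one that is minimal with respect to its underlying set and lifting back to $A^n$ yields a minimal naked set $(S,\sim)$ of $\cC\actson A^n$ with $\gG_\cD$-invariant $\sim$-classes. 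By the remark following the definition of minimality, $\sim$ has exactly two classes $P_0,P_1$.

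I would then assume that case~(2) fails and build the approximation demanded by~(1). Failure of~(2) yields a $\cD$-invariant symmetric $R\subseteq (A^n)^2$ containing a disjoint pair $(a,b)\in P_0\times P_1$, with loopless quotient graph $G:=R/\gG_\cD$ on $A^n/\gG_\cD$. Let $X$ be the connected component of $[a]$ in $G$; since $G$ is preserved by the $\cC^n/\gG_\cD$-action (which acts on $R$ componentwise) and reachability composes, $X$ is $\cC^n/\gG_\cD$-invariant. Because $[a]\in P_0/\gG_\cD$ and $[b]\in P_1/\gG_\cD$ both lie in $X$, the minimality of $(S,\sim)$ forces $S/\gG_\cD\subseteq X$.

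The heart of the argument is then the claim that $G|_X$ is bipartite. If it were not, the idempotent clone $\cC^n/\gG_\cD|_X$ would preserve a loopless graph on the finite set $X$ with an odd cycle, and by Bulatov's theorem cited just before the statement this restricted clone would be equationally trivial; a refinement of Bulatov's proof (an absorption or center analysis) would then extract from the new naked set inside $X$ a proper $\cC$-invariant subset of $S$ still meeting both $\sim$-classes, contradicting the minimality of $(S,\sim)$. I expect this Bulatov step to be the main obstacle: a bare invocation of Bulatov only delivers equational triviality of the restriction, and locating the contradiction inside $S$ requires using the internal combinatorial structure of the proof rather than just its bare statement.

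With $G|_X$ bipartite, let $\chi\colon X\to\{0,1\}$ denote the bipartition; within the connected component $X$ each vertex has a $G$-neighbor, so pairs of same-$\chi$-side vertices are connected by even walks of arbitrarily large even length, and a padding argument shows that the equivalence $\equiv_\chi$ is preserved by the $\cC^n/\gG_\cD$-action on $X$. The two sets $\chi^{-1}(i)\cap S/\gG_\cD$ are disjoint, nonempty (witness $[a]$ and $[b]$), and jointly cover $S/\gG_\cD$; by the minimality of $(S,\sim)$ neither can be all of $S/\gG_\cD$, so each lies in a single $\sim$-class, forcing (after renumbering) $\chi^{-1}(i)\cap S/\gG_\cD=P_i/\gG_\cD$. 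Define $\eta$ as the $\cD$-invariant equivalence generated by the length-two walk relation $R\circ R$; inside the preimage of $X$ this refines $\equiv_\chi$, hence by the matching its restriction to $S$ refines $\sim$. Presmoothness is then immediate: each $\eta$-class in $S$ is $\gG_\cD$-invariant and hence a union of $\gG_\cD$-orbits, and by the no-algebraicity of $\cD$ each such orbit contains two disjoint tuples, completing case~(1).
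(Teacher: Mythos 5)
Your setup (idempotent finite quotient, a $1$-ary minimal naked set lifted to a witness $(S,\sim)$ with exactly two classes) and your endgame (bipartition of the quotient graph matching the two $\sim$-classes, $\eta$ built from even walks, presmoothness from no algebraicity) both agree with the paper. But the heart of the theorem is exactly the step you flag as "the main obstacle," and your sketch of it does not work. If $G|_X$ has an odd cycle, Bulatov only tells you that $\cC^n/\gG_\cD$ restricted to $X$ is equationally trivial --- which is no contradiction at all, since $\cC^n/\gG_\cD$ is equationally trivial by hypothesis and $S/\gG_\cD\subseteq X$. The resulting naked set inside $X$ is just some subfactor of $X$; nothing forces its support to be a proper $\cC$-invariant subset of $S$ meeting both $\sim$-classes, so minimality of $(S,\sim)$ is never engaged. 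No "refinement of Bulatov's proof" is supplied, and in fact the paper does not attempt to prove bipartiteness for the \emph{fixed} witness at all.

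The paper's resolution uses the freedom, present in the statement, to change the naked set. Calling a triple $(S,\sim,R)$ an attempt and declaring it bad when $R/\gG_\cD$ has an odd cycle, one assumes all attempts (over all witnesses) are bad, picks a $\cD$-invariant symmetric $Q$ whose quotient has an odd cycle but no loop and whose \emph{support is minimal}, applies Bulatov inside the set $M'$ of vertices lying on shortest odd cycles of $Q/\gG_\cD$ to manufacture a \emph{new} witness there, and then intersects the bad relation $R$ of that witness with the set of elements joined by $Q$-paths of length $\ell$ to the orbit of a midpoint; this produces a relation with strictly smaller support whose quotient still has an odd cycle and no loop, contradicting minimality. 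This descent on supports is the missing idea. (Two secondary points: the invariance of the connected component $X$ under $\cC^n/\gG_\cD$ is itself problematic --- applying $f$ to walks requires equal lengths, and padding only changes length by $2$, so walks of different parities cannot be synchronized without already knowing non-bipartiteness; the paper sidesteps this by only ever using fixed-length path relations $R^{\circ k}$. And even in the good case the paper still needs the midpoint argument to exclude even paths between the two $\sim$-classes, which your bipartition handles only after the unproved step.)
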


\begin{proof}
	Since $\cD$ is an oligomorphic model-complete core,   $\cC^n/{\gG_\cD}$  is an idempotent function clone on a finite set, and thus since $\cC^n/{\gG_\cD}$ is equationally trivial it has a $1$-ary \todom{minimal} naked set  %
	(see, e.g., \cite[Proposition 4.14]{BulatovJeavons}). Any such naked set $(S',\sim')$ induces an $n$-ary \todom{minimal} naked set $(S,\sim)$ of $\cC$ with  ${\gG_\cD}$-invariant $\sim$-classes: the $\sim$-classes are simply the unions of the orbits of the $\sim'$-classes. 
	In the following, a \emph{witness} is a  $n$-ary minimal naked set $(S,\sim)$ of $\cC$ with ${\gG_\cD}$-invariant $\sim$-classes; \todom{note that by minimality, $\sim$ has   precisely two classes.}  
	
	Suppose that for every witness $(S,\sim)$, item~(2) is not satisfied, i.e., there exists a $\cD$-invariant symmetric binary relation $R$ on $A^n$ which relates  two  disjoint $n$-tuples in the two different classes of $\sim$, and which does not have a pseudo-loop modulo $\gG_\cD$.
	We call any such triple $(S,\sim,R)$ an \emph{attempt} (at~(1)). An attempt $(S,\sim,R)$ is \emph{good} if the relation $R/\gG_\cD$ contains no cycle of odd length, and \emph{bad} otherwise. 
	Note that for any attempt $(S,\sim,R)$, the support of $R$, \todom{i.e., the set of all  $x\in A^n$ such that there exists $y\in A^n$ with $R(x,y)$,} 
	contains $S$:  otherwise considering the  restriction of $\sim$ to the intersection of $S$ with the support of $R$ would yield a naked set and  contradict the minimality of \todom{$(S,\sim)$.}

    We first claim that if  $(S,\sim,R)$ is a good attempt, then~(1) holds for $(S,\sim)$. To this end, it suffices to show in that situation there is no \todom{$R$-}path of even length connecting two elements of distinct  $\sim$-classes;   \todom{here, by an $R$-path of length $\ell$ we mean a sequence $(x_0,\ldots,x_\ell)$ such that $R(x_i,x_{i+1})$ for all $i\in\{0,\ldots,{\ell}\}$; we say that such $R$-path connects the elements $x_0$ and $x_\ell$, and if $\ell$ is even, then we call $x_{\frac{\ell}{2}}$ the midpoint of the path.}
    Indeed, if  this is true, then consider for every $m\geq 1$ the relation $R^{\circ 2m}(x,y)$, \todom{i.e., the $2m$-fold composition of $R$ with itself which relates to elements $x,y$  precisely if there exists an $R$-path of length $2m$ connecting $x$ and  $y$.}
    Any relation of this form \todom{is invariant under $\cD$ since $R$ is}; moreover, it is reflexive and symmetric since $2m$ is even and since $R$ is symmetric. \todom{Since $R^{\circ 2m}$ is contained in $R^{\circ 2(m+1)}$ for all $m\geq 1$, and since by the oligomorphicity of $\cD$ there is only a finite number of binary $\cD$-invariant relations, we have that $R^{\circ 2m}$ is an equivalence relation $\eta$ for $m$ large enough.} Clearly, the support of $\eta$ contains the support $S$ of $\sim$ by our remark of the preceding paragraph, and it  approximates $\sim$ since there is no path of even length between the two classes of $\sim$. 
    \todom{Finally, we claim that $\eta$ is \todo{presmooth}. To see this, let $a,b$ be disjoint  $n$-tuples related by $R$. Since $\cD$ has no algebraicity, there exists an element of $\gG_\cD$ which sends $a$ to a tuple $a'$ disjoint from $a$ while fixing $b$. This tuple $a'$ belongs to the same $\eta$-class as well as to the same $\gG_\cD$-orbit of $a$, and hence $\eta$ is indeed a \todo{presmooth} approximation of $\sim$.}   This proves~(1).
    
    \todom{So suppose for contradiction that $(S,\sim,R)$ is good and there exists an $R$-path of even length $2k$ connecting two elements of different classes of $\sim$.  Let $v$ be the midpoint of such a path.
    Consider the set $T$ of those elements in $S$ that are connected to \todom{some element of} the $\gG_\cD$-orbit of $v$ by an $R$-path of length $k$. Since $\cD$ is a model-complete core, the $\gG_\cD$-orbit of $v$ is invariant under $\cD$, and hence also under $\cC$. Moreover, since $R$ is invariant under $\cC$, so is $R^{\circ k}$, and it follows that $T$ is $\cC$-invariant. Since $T$ contains elements in both classes of $\sim$, it equals $S$, by the minimality of $(S,\sim)$. 
    Let $a,b\in S$ be so that $R(a,b)$ holds. Then the existence of an $R$-path of length $k$ from $a$ to some element in the orbit of $v$ together with the existence of an 
    $R$-path of length $k$  from $b$ to some element in the orbit of $v$ shows that $R/\gG_\cD$ contains a cycle of length $2k+1$, a contradiction to goodness.}

    \todom{It remains to show that there exists a good attempt. Striving for a contradiction, suppose that all attempts are bad.
    Then there exists a $\cD$-invariant symmetric  relation $Q\subseteq (A^n)^2$ such that $Q/\gG_\cD$ has an odd cycle but no loop: any relation $Q$ of an attempt $(S,\sim,Q)$ has this property.
    Pick such a relation $Q$  with minimal support; it exists since $\gG_\cD$ is oligomorphic and hence there is only a finite number of binary relations on $A^n$ which are invariant under it. Let $2\ell+1$ be the length of the shortest odd cycle of $Q/\gG_\cD$, and let $M'$ consist of those elements of the support of $Q/\gG_\cD$ 
    which belong to a cycle of length $2\ell+1$.
    Then $\cC^n/\gG_\cD$ acts on $M'$, and since $Q/\gG_\cD\cap (M')^2$ contains an odd cycle but no loop, there exists a $1$-ary minimal naked set  $(S',\sim')$ within $M'$, by our remarks preceding this theorem and in the first paragraph of this proof. Let $(S,\sim)$ be the corresponding witness as explained in the first paragraph, and pick $R$ so that $(S,\sim,R)$ is an attempt; by assumption, it is bad. Let $a,b$ be disjoint, in distinct $\sim$-classes,  and such that $R(a,b)$ holds. 
    Note that by the  minimality of the support of $Q$, we have that any element in this support is connected to some  element from the orbit of $a$ by a $Q$-path of length  $2\ell$:  otherwise the relation $P\subseteq Q$ defined by intersecting $Q$ with those pairs of elements connected by a $Q$-path of length $2\ell$ to some element in the orbit of $a$ is $\cD$-invariant, has a smaller support, and is such that $P/{\gG_\cD}$ has an odd cycle (since the orbit of $a$ is contained in $M'$)  but no loop. Since the $\gG_\cD$-orbit of $b$ belongs to $M'$, we have that $b$ belongs to the support of $Q$, and hence there exists a $Q$-path of length $2\ell$ connecting $b$ to some element from the orbit of $a$. 
    Let $v$ be the midpoint of such a path, let $X$ be the set of those elements connected to some element of  the orbit of $v$ by a $Q$-path of length $\ell$, and set $T:=R\cap X^2$.
    Then $(S,\sim, T)$ is still an attempt, and therefore bad, meaning that $T/{\gG_\cD}$ has an odd cycle. Since $R$ has no pseudo-loop, $T/{\gG_\cD}$ has no loop. But the support of $T$ is properly contained in that of $Q$, because otherwise the shortest odd cycle of $Q/\gG_\cD$ would be at most $\ell+1$. This contradicts the assumed  minimality of the support of $Q$.}
\end{proof}

The following  variant of the loop lemma deals with the \todom{easier} case when $\cC^n/{\gG_\cD}$ is equationally non-trivial.%

\begin{theorem}[The second loop lemma of  approximations]\label{thm:2-cases-general-2}
	Let $A$ be a set and  $n\geq 1$. Let $\cC\subseteq\cD$ be oligomorphic function clones on $A$, where $\cD$ is a model-complete core without algebraicity  and $\cC$ is $n$-canonical with respect to $\gG_\cD$.
	Suppose that  $\cC^n/{\gG_\cD}$ is equationally  non-trivial.
	\todom{Let $(S,\sim)$ be a minimal subfactor of  $\cC\actson A^n$ with $\gG_\cD$-invariant $\sim$-classes.} 
	Then one of the following holds:
\begin{enumerate}
    \item $\sim$ is  approximated by a $\cD$-invariant  equivalence relation which is \todo{presmooth} with respect to $\gG_\cD$;
    \item every $\cD$-invariant binary symmetric relation $R\subseteq (A^n)^2$ that  contains a pair $(a,b)\in S^2$ such that $a$ and $b$ are disjoint and such that   $a\not\sim b$ contains a pseudo-loop modulo $\gG_\cD$, i.e., a pair $(c,c')$ where $c,c'$ belong to the same $\gG_\cD$-orbit.
\end{enumerate}
\end{theorem}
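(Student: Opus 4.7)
My plan is to follow the proof of Theorem~\ref{thm:2-cases-general}, using the equational non-triviality hypothesis to short-circuit the intricate minimal-support analysis of the ``all attempts bad'' case via Bulatov's odd-cycle theorem. Assume that item~(2) fails for the given $(S,\sim)$, so that there is a $\cD$-invariant symmetric $R\subseteq(A^n)^2$ with $R/\gG_\cD$ loopless and containing a pair $(a,b)\in S^2$ with $a,b$ disjoint and $a\not\sim b$. The same argument as in Theorem~\ref{thm:2-cases-general} uses the minimality of $(S,\sim)$ to force $\mathrm{supp}(R)\supseteq S$.

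The first substantive step is to prove that $R/\gG_\cD$ contains no odd cycle. Otherwise, let $M'\subseteq A^n/\gG_\cD$ consist of the orbits on some odd cycle of minimal odd length in $R/\gG_\cD$. As in the proof of Theorem~\ref{thm:2-cases-general}, $M'$ is invariant under $\cC^n/\gG_\cD$, and $R/\gG_\cD$ restricted to $M'$ still contains an odd cycle and no loop. Bulatov's theorem would then give a clone homomorphism from the idempotent clone $\cC^n/\gG_\cD|_{M'}$ to $\P$; precomposing with the restriction clone homomorphism $\cC^n/\gG_\cD\to\cC^n/\gG_\cD|_{M'}$ yields a clone homomorphism $\cC^n/\gG_\cD\to\P$, contradicting the equational non-triviality assumption.

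Given that $R/\gG_\cD$ has no odd cycle, I would then replay verbatim the ``good attempts give~(1)'' portion of the proof of Theorem~\ref{thm:2-cases-general}, which only requires the minimality of $(S,\sim)$ and not that it is a naked set. One first argues that no $R$-path of even length connects elements of distinct $\sim$-classes in $S$, since otherwise the midpoint $v$ of such a path of length $2k$ together with the $\cC$-invariant set of elements of $S$ connected to $\gG_\cD v$ by an $R$-path of length $k$ (which must equal $S$ by minimality) would produce a closed walk of odd length $2k+1$ in $R/\gG_\cD$. Consequently, for sufficiently large $m$, the relation $\eta:=R^{\circ 2m}$ stabilizes to a $\cD$-invariant equivalence whose support contains $S$, and which refines $\sim$ on $S$. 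For each $\sim$-class $C$, presmoothness of $\eta$ is then verified by picking $s\in C$ and an $R$-neighbor $t$ of $s$, and invoking the no-algebraicity of $\cD$ to successively shift $t$ and then $s$ by $\gG_\cD$-elements, producing a disjoint pair $(s,s')$ in the same $\gG_\cD$-orbit (hence both in $C$) connected by an $R$-path of length $2$ through the shifted neighbor $t''$, so that $s\eta s'$.

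The most delicate point, aside from the Bulatov short-circuit, is the verification of presmoothness on \emph{every} $\sim$-class. This is immediate in Theorem~\ref{thm:2-cases-general} since minimal naked sets have exactly two classes, but in our setting $\sim$ may have more, and producing a disjoint $R$-neighbor of an arbitrary $s\in S$ is non-trivial. I expect this to be resolved by a further minimality-based argument: the $\cC$-invariant set of $\sim$-classes containing an element with a disjoint $R$-neighbor contains those of $a$ and $b$, hence by minimality coincides with $S/\sim$.
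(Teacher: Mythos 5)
Your proposal follows the paper's proof, which is essentially a two-line reduction to Theorem~\ref{thm:2-cases-general}: bad attempts cannot exist because, by the remark preceding that theorem, a $\cC$-invariant symmetric relation whose quotient $R/\gG_\cD$ is loopless and has an odd cycle would force $\cC^n/\gG_\cD$ to be equationally trivial via Bulatov's theorem; hence either no attempt exists (giving item~(2)) or some attempt is good (giving item~(1) by the first half of the proof of Theorem~\ref{thm:2-cases-general}). Your detour through the set $M'$ is superfluous here -- the cited form of Bulatov's theorem applies directly to $\cC^n/\gG_\cD$ acting on all of $A^n/\gG_\cD$ and preserving $R/\gG_\cD$ -- and as phrased (``the orbits on \emph{some} odd cycle'', i.e.\ a single cycle) $M'$ would not in general be invariant; in Theorem~\ref{thm:2-cases-general} it is taken to be \emph{all} vertices lying on cycles of the minimal odd length, and there it serves a different purpose (locating a naked set), not an application of Bulatov.

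The one place where you go beyond the paper is the verification of presmoothness on \emph{every} $\sim$-class: you are right that a minimal subfactor, unlike a minimal naked set, may have more than two classes, so the disjoint $R$-related pair $(a,b)$ supplied by the attempt only certifies two of them, and the paper's ``proof is as in Theorem~\ref{thm:2-cases-general}'' elides this. Your proposed patch, however, has a gap as stated: the set of elements of $S$ admitting a \emph{disjoint} $R$-neighbour is the projection of $R$ intersected with the disjointness relation, and disjointness is a conjunction of instances of $\neq$, which is not assumed to be $\cC$-invariant in this theorem (contrast Lemma~\ref{lem:primitive-neq-very-smooth}, where invariance of $\neq$ is an explicit hypothesis precisely so that such a set is $\cC$-invariant). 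Without that invariance, minimality of $(S,\sim)$ cannot be applied to your set $T$. In all of the paper's applications $\neq$ is preserved, so the intended use is safe, but a self-contained proof of the theorem as stated needs either this additional hypothesis or a different argument for the remaining classes.
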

\begin{proof}
    The proof is as in Theorem~\ref{thm:2-cases-general} above,
    except that no bad attempt can possibly exist given that $\cC^n/{\gG_\cD}$ is equationally non-trivial; \todom{cf.~our remark preceding Theorem~\ref{thm:2-cases-general}}.
    Therefore either there exists no $R$ such that $(S,\sim,R)$ is an attempt, and hence item~(2) holds, or there exists a good attempt, which yields~(1).
\end{proof}

\subsection{The fundamental theorem}\label{subsect:ft}
Our fundamental theorem allows us to lift an action of a function clone to a larger clone. It will be applied in situations where the first case of the loop lemma holds.

\begin{theorem}[The fundamental theorem of smooth approximations]
\label{thm:sa}
Let $A$ be a set. Let $\cC\subseteq \cD$ be function clones on $A$, and let $\gG$ be a permutation group on $A$ such that  $\cD$ locally interpolates $\cC$ modulo $\gG$.
\todom{Let $(S,\sim)$ be a subfactor of $\cC$ with ${\gG}$-invariant $\sim$-classes.}
\begin{itemize}
    \item If $\sim$ has a $\cD$-invariant very  smooth approximation  $\eta$ with respect to $\gG$, then there exists a clone homomorphism from $\cD$ to  $\cC\actson S/{\sim}$.
    \item If $\sim$ has a $\cD$-invariant  smooth approximation  $\eta$ with respect to $\gG$, then there exists a minion  homomorphism from $\cD$ to  $\cC\actson S/{\sim}$.
\end{itemize}
Moreover, if $\sim$ has finitely many classes, then the above homomorphism is uniformly continuous.
\end{theorem}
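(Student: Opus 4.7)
\emph{The plan is} to construct the homomorphism $\xi\colon\cD\to\cC\actson S/{\sim}$ explicitly via the approximation $\eta$ and fixed representatives of $\sim$-classes. Using smoothness, for each $C\in S/{\sim}$ I fix an $\eta$-class $C'$, a $\gG$-orbit $O_C\subseteq C\cap C'$, and a representative $a_C\in O_C$. For $f\in\cD$ of arity $n$ and $(C_1,\ldots,C_n)\in(S/{\sim})^n$, I set $b:=f(a_{C_1},\ldots,a_{C_n})$ and declare $\xi(f)(C_1,\ldots,C_n)$ to be the $\sim$-class meeting $[b]_\eta\cap S$. Since $\eta|_S$ refines $\sim$, any nonempty $[b]_\eta\cap S$ is contained in a single $\sim$-class, so only non-emptiness requires argument.

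\emph{The main technical step} is to show that $[b]_\eta\cap S$ is nonempty and to identify the resulting $\sim$-class with the action of a function from $\cC$. By local interpolation of $\cC$ by $\cD$ modulo $\gG$, applied to a finite set containing $(a_{C_1},\ldots,a_{C_n})$, there exist $g\in\cC$ and $\beta,\alpha_1,\ldots,\alpha_n\in\gG$ with $g(a_{C_1},\ldots,a_{C_n})=\beta\circ f(\alpha_1(a_{C_1}),\ldots,\alpha_n(a_{C_n}))$. Each $\alpha_i(a_{C_i})$ remains in the orbit $O_{C_i}$, hence in the same $\eta$-class as $a_{C_i}$; $\cD$-invariance of $\eta$ then gives $f(\alpha_1(a_{C_1}),\ldots,\alpha_n(a_{C_n}))\,\eta\,b$, and $\gG$-invariance of $\eta$ places $g(a_{C_1},\ldots,a_{C_n})$ in $[\beta b]_\eta\cap S$. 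Translating back by $\beta^{-1}$ and using that both $S$ and every $\sim$-class are $\gG$-invariant, $\beta^{-1}(g(a_{C_1},\ldots,a_{C_n}))$ is an element of $[b]_\eta\cap S$ lying in the same $\sim$-class as $g(a_{C_1},\ldots,a_{C_n})$. This proves non-emptiness; moreover, $\xi(f)(C_1,\ldots,C_n)$ coincides with the $\sim$-class of $g(a_{C_1},\ldots,a_{C_n})$, and by taking the interpolating finite set to contain all relevant representative tuples, $\xi(f)$ agrees with (the action of) a single $g\in\cC$, so $\xi(f)\in\cC\actson S/{\sim}$.

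\emph{The two cases.} Minor preservation, and hence the minion-homomorphism conclusion in the smooth case, follows immediately from the definition: $\xi(f_\sigma)$ and $\xi(f)_\sigma$ at $(C_1,\ldots,C_m)$ are both computed from $f(a_{C_{\sigma(1)}},\ldots,a_{C_{\sigma(n)}})$. In the very smooth case, the argument of the preceding paragraph strengthens to show $\xi(f)(C_1,\ldots,C_n)$ equals the $\sim$-class meeting $[f(a_1,\ldots,a_n)]_\eta\cap S$ for \emph{any} $a_i\in C_i$, because very smoothness forces every element of $\gG$ to set-wise fix every $\eta$-class meeting $S$, so the inverse-translation step works from arbitrary $a_i\in S$. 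This extra flexibility yields preservation of composition: with $b_j:=h_j(a_{C_1},\ldots,a_{C_m})$ and $D_j:=\xi(h_j)(C_1,\ldots,C_m)$, any $d_j\in[b_j]_\eta\cap S$ lies in $D_j$ and may, in the very smooth regime, serve as a representative of $D_j$ in the evaluation of $\xi(f)(D_1,\ldots,D_k)$; $\cD$-invariance of $\eta$ then gives $f(d_1,\ldots,d_k)\,\eta\,f(b_1,\ldots,b_k)$, matching $\xi(f\circ(h_1,\ldots,h_k))(C_1,\ldots,C_m)$ with $\xi(f)(D_1,\ldots,D_k)$. Uniform continuity when $\sim$ has finitely many classes is then automatic, since $\xi(f)$ depends only on $f$'s values at the finite set of representative tuples. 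The principal obstacle throughout is the non-emptiness argument of the second paragraph, which must simultaneously invoke local interpolation, $\cD$-invariance of $\eta$, smoothness of the approximation, and $\gG$-invariance of $\sim$-classes; once this compact deduction is in hand, the remaining structural properties essentially drop out.
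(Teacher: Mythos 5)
Your proof is correct and follows essentially the same route as the paper: the core step in both is that local interpolation, smoothness (the chosen orbit lying inside a single $\eta$-class), $\cD$-invariance of $\eta$, and $\gG$-invariance of $S$ and the $\sim$-classes together force $f(a_{C_1},\ldots,a_{C_n})$ to be $\eta$-related to an element of $S$ in the $\sim$-class of $g(a_{C_1},\ldots,a_{C_n})$ for an interpolated $g\in\cC$, and the upgrade to arbitrary representatives under very smoothness is exactly what yields preservation of composition. Your phrasing of $\xi$ via the map $b\mapsto$ ``the $\sim$-class meeting $[b]_\eta\cap S$'' at fixed representatives is only a cosmetic repackaging of the paper's definition of $\xi(f)$ as the action of any locally interpolated $f'\in\cC$.
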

\begin{proof}
We first prove the statement for very smooth approximations. Our first claim then is  that whenever $f\in \cD$ locally interpolates $f'\in \cC$ modulo $\gG$  and $u\in S^k$, where $k$ is the arity of $f$, then $f( u)(\eta\circ{\sim})f'(u)$.
Indeed, there exist $\alpha\in \gG$ and a tuple $\beta\in\gG^k$  such that $f'( u)=\alpha\circ f(\beta(u))$ (where $\beta(u)$ is calculated componentwise).
Since $\eta$ is very smooth with respect to $\gG$, we have that $\beta(u)$ and $u$ are $\eta$-related in every component. Thus, with $f$ preserving $\eta$, we have that $f(u)$ and $f(\beta(u))$ are $\eta$-related. 
Since $\alpha\circ f( \beta(u))=f'(u)\in S$, and since the $\sim$-classes are invariant under $\gG$, we moreover get that $f'(u)$ and $f(\beta(u))$ are $\sim$-related, proving our claim.

Consequently, whenever $f',f''\in\cC$ are locally interpolated by $f\in\cD$, then they act in the same way on $S/{\sim}$. We can thus define a mapping $\xi$ which sends any function $f$ in  $\cD$ to the action on $S/{\sim}$ of any function $f'$ in $\cC$ which is locally interpolated by $f$. 

It remains to prove that $\xi$ is a clone homomorphism. Let $f\in\cD$ be of arity $k$, let $g$ be a $k$-tuple of functions in $\cD$ of equal arity $m$, and let $u\in S^m$. For every function of the tuple $g$, we pick a function in $\cC$ locally interpolated by it, and we collect these functions into a $k$-tuple $g'$.

By the above  there exists $v\in S^k$ such that the $k$-tuple $g(u)$ (calculated componentwise)  is $\eta$-related to $v$ in every component, and such that $v$ is ${\sim}$-related to $ g'(u)$ in every component. Then 
$$
f(g)(u)\;\eta\; f(v)\; (\eta\circ{\sim})\; f'(v)\sim f'(g')(u)\; ,
$$
and whence $\xi$ is a clone homomorphism.

If $\sim$ has only finitely many classes, then the action of any function in $\cD$ on a fixed set of representatives of these classes determines the value of the function under $\xi$, and hence $\xi$ is uniformly continuous. This completes the proof for very smooth approximations.

Let us consider the case where the approximation $\eta$ is smooth but not very smooth. Observe that the first claim above stating that $f(u)(\eta\circ{\sim})f'(u)$ still holds for those tuples $u$ whose components belong to $\gG$-orbits entirely contained in $\eta$-equivalence classes; \todom{in fact, with this additional assumption on $u$, no smoothness whatsoever is needed for the argument}. Adding  smoothness it follows, as before, that whenever $f',f''\in\cC$ are locally interpolated by $f\in\cD$, then they act in the same way on $S/{\sim}$. We can thus define the mapping $\xi$ in the same manner; $\xi$ is uniformly continuous if $\sim$ has finitely many equivalence classes. It  remains to prove that $\xi$ is a minion homomorphism. Let $f\in\cD$ by $k$-ary,  let $g$ be a $k$-tuple of projections in $\cD$ of equal arity $m$, and let $u\in S^m$ be a tuple whose components belong to $\gG$-orbits entirely contained in $\eta$-equivalence classes. \todom{Then the tuple $g(u)$ still has this property since all components of $g$ are projections, and hence 
$
f(g)(u)\; (\eta\circ{\sim})\; f'(g)(u)$ by our observation at the beginning of this paragraph. It follows that any function in $\cC$ locally interpolated by $f(g)$ modulo $\gG$ acts like $f'(g)$ on $S/{\sim}$, and hence $\xi(f(g))=\xi(f)(g)$.}  The proof is complete.
\end{proof}

\todom{The action $\cC\actson S/{\sim}$ can be viewed as a clone homomorphism $\phi$ which sends every function $f\in\cC$ to its action on $S/{\sim}$. In the general setting where $\cC\subseteq\cD$ and $\cD$ locally interpolates $\cC$, and where the action $\cC\actson S/{\sim}$ is idempotent, any  clone homomorphism $\xi$ from $\cD$ to $\cC\actson S/{\sim}$ extending $\phi$ is necessarily unique: it can only be defined as in the proof of Theorem~\ref{thm:sa}. 
Thus, (very) smooth approximations give a sufficient condition for the only possible extension $\xi$ to be well-defined.
In~\cite[Lemma 5.1]{UniqueInterpolation}, it is proved that in the special case where  $\cC$ consists of those operations in $\cD$ that are canonical with respect to the automorphism group of a homogeneous Ramsey structure, if the extension $\xi$ as in the proof of Theorem~\ref{thm:sa} is well-defined, then it is automatically a minion homomorphism, although a similar statement is not known to be true for clone homomorphisms.
A sufficient condition for $\xi$ to be well-defined is given in~\cite[Theorem 5.4]{UniqueInterpolation} in the case where $\cC\actson S/{\sim}$ consists solely of projections. We require, however, the full strength of   Theorem~\ref{thm:sa}, which does not not impose restrictions on $\cC$ or the action $\cC\actson S/{\sim}$.}

\subsection{Weakly commutative functions}\label{subsect:weird-symmetric}

The following result concerns the consequences of the second case of the loop lemma, and appears in all our P/NP dichotomy results.

\begin{lemma}[Weakly commutative functions]\label{lem:weird-symmetric}
Let $A$ be a set, $n\geq 1$, and let $\cD$ be an oligomorphic polymorphism  clone on $A$ that is a model-complete core.
Let $\sim$ be an equivalence relation on a set $S\subseteq A^n$ with $\gG_\cD$-invariant  classes.
Suppose that every $\cD$-invariant binary symmetric relation $R\subseteq (A^n)^2$ that  contains a pair $(a,b)\in S^2$ such that $a$ and $b$ are disjoint and such that $a\not\sim b$ contains a pseudo-loop modulo $\gG_\cD$.
Then $\cD$ contains a binary operation $f$ with the property that $f(a, b)\sim f( b, a)$ holds for all  $a,b\in A^n$  such that $f(a,b),f(b,a)$ are in $S$ and  disjoint.
\end{lemma}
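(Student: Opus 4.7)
The plan is to argue by contradiction. Assume no binary $f\in\cD$ has the stated property, so for every binary $f\in\cD$ there exist $a,b\in A^n$ with $f(a,b), f(b,a)\in S$ disjoint and $f(a,b)\not\sim f(b,a)$; call such a pair a \emph{witness of failure} of $f$.

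The key construction is the following. For each pair $(a,b)\in A^n\times A^n$, consider the relation
\[ R_{a,b} := \{(h(a,b),\, h(b,a)) : h\in\cD \text{ binary}\} \subseteq (A^n)^2. \]
First I would verify that $R_{a,b}$ is $\cD$-invariant and symmetric. Invariance: given $(u_i, v_i) = (h_i(a,b), h_i(b,a))\in R_{a,b}$ for $i = 1,\dots,k$ and any $k$-ary $g\in\cD$, the composition $h := g\circ(h_1,\dots,h_k)$ is again a binary operation in $\cD$ satisfying $(g(u_1,\dots,u_k), g(v_1,\dots,v_k)) = (h(a,b), h(b,a))\in R_{a,b}$. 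Symmetry: replacing any $h\in\cD$ by the map $(x,y)\mapsto h(y,x)$, which is also in $\cD$ as a composition with coordinate projections, shows the reverse pair is also in $R_{a,b}$.

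Next, pick any binary $f_0\in\cD$ together with a witness of failure $(a_0, b_0)$; such data exist by the contradictory assumption. Then $R_{a_0,b_0}$ contains $(f_0(a_0,b_0), f_0(b_0,a_0))$, which lies in $S^2$, is disjoint, and is not $\sim$-related. By the hypothesis of the lemma, $R_{a_0,b_0}$ contains a pseudo-loop modulo $\gG_\cD$: i.e., there is a binary $h\in\cD$ such that $h(a_0,b_0)$ and $h(b_0,a_0)$ lie in the same $\gG_\cD$-orbit. Since $S$ and the $\sim$-classes are $\gG_\cD$-invariant, this yields $h(a_0,b_0)\sim h(b_0,a_0)$ provided both values lie in $S$; thus $h$ enjoys the required symmetry at the single pair $(a_0,b_0)$.

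The final step is to globalize the single-pair symmetry to all pairs. I plan to iterate the construction: if the function $h$ produced above still admits a witness of failure $(a_1, b_1)$, apply the hypothesis again to obtain a new function good at $(a_1, b_1)$, and combine the two functions via composition within $\cD$ into a single operation good at both pairs. The finiteness of the set of $\gG_\cD$-orbits of pairs in $(A^n)^2$ (from the oligomorphicity of $\cD$), together with the closure of $\cD$ under pointwise convergence (as a polymorphism clone), should yield in the limit a binary $f\in\cD$ that is weakly commutative at every relevant pair. \emph{The main obstacle is the combination step}: a na\"ive composition $f := g(h_1, h_2)$ of per-pair good functions $h_1, h_2$ does not automatically inherit weak commutativity at both pairs, because operations of $\cD$ need not preserve $\sim$. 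Overcoming this requires either a Zorn-type argument on binary operations in $\cD$ ordered by inclusion of their good-pair sets (using closure under pointwise limits to bound chains), or a single application of the hypothesis to a cleverly chosen $\cD$-invariant relation built from the witnesses of failure of all relevant binary operations simultaneously, so that finitely many iterations suffice to cover every orbit of pairs.
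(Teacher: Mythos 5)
Your setup is the same as the paper's: the relation $R_{a,b}:=\{(h(a,b),h(b,a)):h\in\cD\}$ is exactly the relation to which the hypothesis gets applied, and your verification of its $\cD$-invariance and symmetry is correct. However, you have correctly diagnosed but not closed the essential gap: the globalization step. The property you propose to track --- ``$f$ is weakly commutative at the pair $(a,b)$'' --- is, as you yourself note, not preserved when you combine two per-pair good functions, and neither of your two suggested fixes resolves this. A Zorn-type argument on sets of good pairs runs into precisely the same obstruction (the partial order has no useful upper bounds because goodness is destroyed by composition), and a ``single clever application'' of the hypothesis cannot work either: the argument genuinely needs $\omega$ many applications, one per orbit-pair, glued together by a compactness argument.

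The missing idea is to replace the non-hereditary property ``good at $(a,b)$'' by a hereditarily closed one. Call a pair $(a,b)$ \emph{troublesome} if \emph{some} binary $h\in\cD$ satisfies $h(a,b),h(b,a)\in S$ disjoint and $h(a,b)\not\sim h(b,a)$. Two facts make this work: (i) if $(a,b)$ is not troublesome then neither is $(g(a,b),g(b,a))$ for any binary $g\in\cD$, so non-troublesomeness survives the symmetric composition $f_{i+1}(x,y):=d(f_i(x,y),f_i(y,x))$; and (ii) any pair lying in a single $\gG_\cD$-orbit is not troublesome, because $\cD$ is a model-complete core and the $\sim$-classes are $\gG_\cD$-invariant. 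Given a troublesome $(a,b)$ with failing witness $h$, one applies the hypothesis not to $R_{a,b}$ but to $R_{u,v}$ for $u:=h(a,b)$, $v:=h(b,a)$, obtaining $g\in\cD$ and $\alpha\in\gG_\cD$ with $\alpha\circ g(u,v)=g(v,u)$; then $d(x,y):=g(h(x,y),h(y,x))$ sends $(a,b)$ to a pair in a single orbit, hence to a non-troublesome pair by (ii). Enumerating all pairs and iterating, fact (i) guarantees that non-troublesomeness of earlier pairs is never lost, and the pointwise limit $f$ of the $f_i$ (which exists by oligomorphicity and closedness of $\cD$) has the property that $(f(a,b),f(b,a))$ is never troublesome; applying the first projection as the witness $h$ in the definition of troublesome then yields exactly the conclusion of the lemma. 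Without this strengthened invariant your induction cannot be carried out.
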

\begin{proof}
    Let us call a pair $(a,b)$ of  elements of $A^n$  \emph{troublesome} if there exists a binary $h\in\cD$ such that $h(a,b),h(b,a)\in S$ are disjoint but $h(a,b)\not\sim h(b,a)$.
    Note that if $(a,b)$ is not troublesome and $g\in\cD$, then $(g(a,b),g(b,a))$ is not troublesome either.
    Moreover, if $a$ and $b$ are in the same orbit under $\gG_\cD$, then $(a,b)$ is not troublesome: since $\cD$ is a model-complete core, $h(a,b)$ and $h(b,a)$ are also in the same orbit for any $h\in\cD$; since the classes of $\sim$ are closed under $\gG_\cD$, it follows that $h(a,b)\sim h(b,a)$.
    
    Observe that if $u,v\in S$ are disjoint and  such that $u\not\sim v$, then there exist $g\in\cD$ and $\alpha\in\gG_\cD$ such that $\alpha\circ  g(u,v)=g(v,u)$. Indeed, these are obtained by application of our assumption to the relation  $R:=\{(g(u,v),g(v,u)) \mid g\in \cD\}$, since $R$ contains the pair $(u,v)$ by virtue of the first binary projection. 
    
    Therefore, if $(a,b)$ is troublesome, then setting $d(x,y):=g(h(x,y),h(y,x))$, where $g,h$ have the properties above, gives us 
    $$
    \alpha\circ   d(a,b)=\alpha\circ   g(h(a,b),h(b,a)) =  g(h(b,a),h(a,b))=d(b,a)\; .
    $$
    In conclusion, for every pair $(a,b)$ -- troublesome or not -- there exists a binary function $d$ in $\cD$ such that $(d(a,b),d(b,a))$ is not troublesome: if $(a,b)$ is troublesome one can take $d$ to be the operation we just described; if $(a,b)$ is not troublesome then the first projection works.

    Let $((a_i,b_i))_{i\in \omega}$ be an enumeration of all  pairs of tuples in  $A^n$.
    We build by induction on $i\in\omega$ an operation $f_i\in\cD$ such that  $(f_i(a_j,b_j),f_i(b_j,a_j))$ is not troublesome for any $j<i$.
    For $i=0$ there is nothing to show, so suppose that $f_i$ is built.
    Let $d\in\cD$ be an operation such that \todom{$(d(f_i(a_i,b_i),f_i(b_i,a_i)),d(f_i(b_i,a_i),f_i(a_i,b_i)) )$} is not troublesome and let $f_{i+1}(x,y):=d(f_i(x,y),f_i(y,x))$.
    By definition $(f_{i+1}(a_i,b_i),f_{i+1}(b_i,a_i))$ is not troublesome, and moreover $f_{i+1}$ also satisfies the desired property for $j<i$ by the remark in the first paragraph.
    
    By a standard compactness argument using the oligomorphicity of $\gG_\cD$ (essentially from~\cite{Topo-Birk}) and the fact that the polymorphism clone is topologically closed, we may assume that the sequence $(f_i)_{i\in\omega}$ converges to a function $f$. This function $f$ satisfies the claim of the proposition.
\end{proof}
Note that if $f$ satisfies the weak commutativity property of Lemma~\ref{lem:weird-symmetric}, then any operation in $$\overline{\{\beta\circ  f(\alpha,\alpha) \mid \alpha\in\gG_\cD\}}$$ also satisfies the same property, since the equivalence classes of $\sim$ are invariant under $\gG_\cD$. This implies that in applications, we will be able to assume that $f$ is diagonally canonical.

 \section{The Random Tournament}\label{sect:rt}
Let $\rel T=(T;\rightarrow)$ be the universal homogeneous tournament,
i.e., the unique (up to isomorphism)  homogeneous structure on a countable set $T$ with a single binary relation $\rightarrow$ such that for all  $x,y\in T$,
either $(x,y)$ is in $\rightarrow$ (which we henceforth denote by  $x\rightarrow y$) xor $y\rightarrow x$.

\subsection{The P/NP dichotomy} We are first going to prove the following.

\begin{theorem}\label{thm:mathdichotomy-tournament}
	Let $\rel A$ be a first-order reduct of $\rel T$ that is a model-complete core. 
	Then precisely one of the following holds:
	\begin{itemize}
		\item $\Pol(\rel A)$ has a uniformly continuous clone homomorphism to $\Projs$;
		\item $\Pol(\rel A)$ contains a ternary operation that is canonical with respect to $\sT$ and pseudo-cyclic modulo $\overline{\Aut(\rel T)}$.
	\end{itemize}
\end{theorem}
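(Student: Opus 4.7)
The plan is to instantiate the methodology of Section~\ref{subsect:sa} with the concrete base $\rel T$. Let $\cD := \Pol(\rel A)$; this is an oligomorphic model-complete core without algebraicity, as both properties are inherited from $\rel T$. Let $(\rel T,<)$ be a finitely bounded homogeneous Ramsey expansion of $\rel T$ by a generic linear order, and let $\cC \subseteq \cD$ consist of those polymorphisms that are canonical with respect to $(\rel T,<)$. The Ramsey property of $(\rel T,<)$ guarantees that $\cD$ locally interpolates $\cC$ modulo $\Aut(\rel T,<) \subseteq \gG_\cD$. A key fact supporting the entire argument is that $\Aut(\rel T)$, and therefore $\gG_\cD$, is $n$-``primitive'' for every $n \geq 1$, by Example~\ref{ex:primitive-tournament}.

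The central case distinction is on the orbit action $\cC^n/\Aut(\rel T,<)$, an idempotent clone on a finite set. If this clone is equationally non-trivial for some $n$, then by a standard result on finite idempotent clones it contains a ternary cyclic operation, which pulls back to a polymorphism in $\cC$ canonical with respect to $(\rel T,<)$ and pseudo-cyclic modulo $\Aut(\rel T,<) \subseteq \overline{\Aut(\rel T)}$. To finish the proof of the second item one must upgrade canonicity to $\rel T$ itself; I would accomplish this by a symmetrization across the coset structure of $\Aut(\rel T)/\Aut(\rel T,<)$, which is generated by the involution reversing $<$, combined with a second Ramsey-style canonization restricted to the pseudo-cyclic subclone of $\cC$, so that the cyclic identity is preserved while the order-sensitivity is eliminated.

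In the complementary case, $\cC^n/\Aut(\rel T,<)$ is equationally trivial for every $n$. The plan is to apply the loop lemma of approximations (Theorem~\ref{thm:2-cases-general}) to the pair $\cC \subseteq \cD$; matching its hypothesis that $\cC$ be $n$-canonical with respect to $\gG_\cD$, and not merely with respect to $\Aut(\rel T,<)$, will require first passing to the intermediate subclone of $\cC$ whose operations are further canonical with respect to $\rel T$, and arguing that this subclone is still locally interpolated by $\cD$ via an additional canonization step. The lemma then produces a naked set $(S,\sim)$ with $\gG_\cD$-invariant $\sim$-classes, together with two alternatives. In the first alternative, $\sim$ admits a $\cD$-invariant presmooth approximation $\eta$ with respect to $\gG_\cD$; $n$-``primitivity'' of $\gG_\cD$ combined with pp-definability of $\neq$ in $\rel A$ (which follows from $\rel A$ being a non-degenerate model-complete core) allows Lemma~\ref{lem:primitive-neq-very-smooth} to upgrade $\eta$ to a very smooth approximation, and then Theorem~\ref{thm:sa} extends the projection action $\cC \actson S/{\sim}$ to a uniformly continuous clone homomorphism $\cD \to \Projs$, establishing the first item.

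The second alternative of the loop lemma will provide, via Lemma~\ref{lem:weird-symmetric}, a binary weakly commutative $f \in \cD$ which by the remark after that lemma may be assumed diagonally canonical modulo $\gG_\cD$. Combining $f$ with operations of $\cC$ through Ramsey canonization, one constructs a canonical polymorphism in $\cC$ satisfying a non-trivial identity on $S/{\sim}$, contradicting the equational triviality assumption of the current case; this second alternative therefore cannot arise. Mutual exclusion of the two items of the dichotomy is immediate, since $\Projs$ contains no ternary cyclic operation and uniformly continuous clone homomorphisms preserve the pseudo-cyclic identity modulo the unary symbols. The main obstacle throughout is reconciling the two levels of canonicity, $(\rel T,<)$ versus $\rel T$; this is needed both to match the hypotheses of the loop lemma and to conclude the second dichotomy item, and it is ultimately overcome by exploiting the explicit $\mathbb{Z}/2\mathbb{Z}$ structure of $\Aut(\rel T)/\Aut(\rel T,<)$ together with the $n$-``primitivity'' of $\Aut(\rel T)$.
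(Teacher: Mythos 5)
Your high-level architecture matches the paper's: split on equational (non-)triviality of a canonical subclone, use the loop lemma plus Lemma~\ref{lem:primitive-neq-very-smooth} plus Theorem~\ref{thm:sa} in one branch, and rule out the pseudo-loop branch via Lemma~\ref{lem:weird-symmetric}. However, there is a concrete error at the point you yourself identify as the main obstacle. You propose to pass from canonicity with respect to $(\rel T,<)$ to canonicity with respect to $\rel T$ by ``symmetrization across the coset structure of $\Aut(\rel T)/\Aut(\rel T,<)$, which is generated by the involution reversing $<$,'' and you invoke an ``explicit $\mathbb{Z}/2\mathbb{Z}$ structure'' of this coset space. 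No such structure exists: $\Aut(\rel T,<)$ has infinite index in $\Aut(\rel T)$ (the cosets correspond to the many generic linear orders compatible with $\rel T$), so one cannot average or symmetrize over it. The genuine content of this step in the paper is Lemma~\ref{lem:generating-g2}, which shows that a majority or minority action on $\lra$ lets one \emph{encode every finite tournament as a vote among linear orders}, and thereby produces the canonical injection $g_2$; only then does Proposition~\ref{prop:CATT-nontrivial} convert an order-canonical cyclic operation into a $\rel T$-canonical pseudo-cyclic one via $f(h(x,y,z),h(y,z,x),h(z,x,y))$. Your sketch replaces this combinatorial argument with a symmetrization that cannot be carried out.

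A second gap concerns the hypotheses of Theorem~\ref{thm:2-cases-general}: it requires $\cC$ to be $n$-canonical with respect to $\gG_\cD=\Aut(\sA)$, which may be strictly larger than $\Aut(\rel T)$, let alone $\Aut(\rel T,<)$. Passing to the subclone canonical with respect to $\rel T$, as you suggest, does not close this gap. The paper instead works on the set $U$ of injective $k$-tuples with the clones $\CAT$ and $\CAA$ (defined via actions on orbits of \emph{injective} tuples), proves $\CAT\subseteq\CAA$ from the essentially unary action on $\lra$ (Lemma~\ref{lem:CATinCAA}), transfers equational triviality up to $\CAA$ (Lemma~\ref{lem:canonicalSiggers}), and only then applies the loop lemma to $(\CAA\actson U)\subseteq(\CA\actson U)$ with $n=1$. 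Finally, your disposal of the pseudo-loop branch (``combining $f$ with operations of $\cC$ through Ramsey canonization'') elides Lemma~\ref{lem:maj-from-symmetric}, a several-page case analysis of the diagonally canonical behaviour of the weakly commutative function on the order types of $(\rel T,<)$, which is what actually produces the majority operation in $\CAT\actson\lra$ and hence the contradiction. As it stands, the proposal reproduces the paper's outline but leaves its three hardest steps either unproved or based on a false group-theoretic premise.
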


If $\sA$ is a CSP template, i.e., has a finite signature, then this yields a complexity dichotomy: in the first case $\csp(\sA)$ is NP-complete by~\cite{Topo-Birk}, and  
in the second case $\CSP(\sA)$ is in P by~\cite{Bodirsky-Mottet}. 
Moreover, the following corollary states that the two cases match with the two cases of Conjecture~\ref{conjecture:infinitecsp}, showing that Theorem~\ref{thm:dichotomies} holds  for the universal homogeneous tournament.
\begin{corollary}\label{cor:T:main}
Let $\rel A$ be a CSP template that is a  first-order reduct of $\rel T$. Then precisely one of the following holds:
	\begin{itemize}
		\item $\Pol(\rel A)$ has a uniformly continuous minion homomorphism to $\Projs$, and $\CSP(\sA)$ is NP-complete. 
		\item $\Pol(\rel A)$ has no uniformly continuous minion homomorphism to $\Projs$,  contains a ternary operation that is pseudo-cyclic modulo $\overline{\Aut(\rel T)}$, and $\CSP(\sA)$ is in P.
	\end{itemize}
\end{corollary}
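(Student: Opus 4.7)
The plan is to derive Corollary~\ref{cor:T:main} from Theorem~\ref{thm:mathdichotomy-tournament} by reducing to the model-complete core. Since $\sA$ is $\omega$-categorical as a first-order reduct of $\sT$, by~\cite{Cores-journal} it admits a (unique up to isomorphism) $\omega$-categorical model-complete core $\sA'$ with $\csp(\sA')=\csp(\sA)$. A structural analysis of the model-complete cores of first-order reducts of homogeneous tournaments (of the type developed in~\cite{MottetPinskerCores}) shows that $\sA'$ is isomorphic either to a finite structure or to a first-order reduct of $\sT$ itself. The existence of a uniformly continuous minion homomorphism $\Pol(\sA)\to\P$ is equivalent to the same for $\Pol(\sA')$ via the topological Birkhoff-type transfer theorems of~\cite{uniformbirkhoff,Topo-Birk,wonderland}, and a ternary pseudo-cyclic polymorphism on $\sA'$ lifts to one on $\sA$ modulo $\overline{\Aut(\sT)}$ by composing with the retraction $\sA\to\sA'$ and with automorphisms.

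In the case that $\sA'$ is itself a first-order reduct of $\sT$, Theorem~\ref{thm:mathdichotomy-tournament} applies and yields two mutually exclusive possibilities. In the first, $\Pol(\sA')$ has a uniformly continuous clone homomorphism to $\P$, hence a uniformly continuous minion homomorphism; by~\cite{wonderland}, $\csp(\sA')$ is NP-hard, and since CSPs of first-order reducts of finitely bounded homogeneous structures are in NP, we conclude NP-completeness of $\csp(\sA)$. In the second, $\Pol(\sA')$ contains a ternary pseudo-cyclic polymorphism canonical with respect to $\sT$, so $\csp(\sA')$ (and hence $\csp(\sA)$) is in P by~\cite{Bodirsky-Mottet}. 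To rule out a UC minion homomorphism in this case, one argues that a ternary pseudo-cyclic operation modulo $\overline{\Aut(\sT)}$ yields a pseudo-Siggers operation modulo the same group, via the standard Taylor-style derivation applied within the finite clone of polymorphisms canonical with respect to $\sT$ acting on $3$-orbits; then~\cite{BartoPinskerDichotomy} precludes a UC minion homomorphism to $\P$. The finite-core case is handled analogously by the finite-domain dichotomy~\cite{BulatovFVConjecture,ZhukFVConjecture}: either a UC clone (hence minion) homomorphism to $\P$ with NP-completeness, or a cyclic polymorphism of every prime arity on $\sA'$, yielding in particular a ternary one that lifts to the desired pseudo-cyclic operation on $\sA$ modulo $\overline{\Aut(\sT)}$.

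The main anticipated obstacle is the careful transfer of polymorphism-clone data between $\sA$ and its model-complete core $\sA'$ -- especially in the case where $\sA'$ is finite, as the lifting of a cyclic polymorphism on $\sA'$ to a pseudo-cyclic polymorphism on $\sA$ modulo $\overline{\Aut(\sT)}$ requires combining the retraction $\sA\to\sA'$ with a description of $\End(\sA)$ and verifying that the resulting operation is pseudo-cyclic with respect to $\overline{\Aut(\sT)}$, not merely the endomorphism monoid. While each such transfer is individually standard in the infinite-domain CSP literature, organizing them to cleanly yield both the mutual exclusion and the exhaustiveness of the two cases of the corollary requires some bookkeeping, and the deduction of pseudo-Siggers from pseudo-cyclic in the present setting (passing through the finite quotient clone and pulling back) is the one step where care must be taken to respect the distinction between clone and minion homomorphisms.
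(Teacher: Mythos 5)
Your proposal follows essentially the same route as the paper: pass to the model-complete core, use the fact (Lemma~\ref{lem:cores}) that it is either trivial or again a first-order reduct of $\rel T$, apply Theorem~\ref{thm:mathdichotomy-tournament}, and transfer the conclusions back to $\Pol(\rel A)$ via the minion-homomorphism equivalences of~\cite{wonderland}. The only cosmetic differences are that the paper's Lemma~\ref{lem:cores} shows the ``finite'' case is necessarily a one-element core (so a constant polymorphism settles it without invoking the finite-domain dichotomy), and that the paper rules out a uniformly continuous minion homomorphism in the tractable case by citing~\cite{BKOPP, BKOPP-equations} directly rather than routing through an explicit pseudo-cyclic-to-pseudo-Siggers derivation.
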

\begin{proof}
    Let $\sA'$ be the model-complete core of $\sA$. By Lemma~\ref{lem:cores} we have that $\sA'$ is either a first-order reduct of $\sT$, or a one-element structure. If the latter is the case, then $\Pol(\sA)$ contains a constant function, and clearly the second statement holds. In the former case, Theorem~\ref{thm:mathdichotomy-tournament} applies to $\sA'$. In the first case of that theorem, $\Pol(\sA')$ has in particular a uniformly continuous minion homomorphism to $\Projs$, and hence so does $\Pol(\sA)$ by~\cite{wonderland}. This situation moreover implies NP-hardness of $\CSP(\sA)$ by~\cite{wonderland}. In the second case of the theorem, $\Pol(\sA)$ does not have a uniformly continuous minion homomorphism to $\Projs$ by~\cite{BKOPP, BKOPP-equations}, and $\Pol(\sA)$ has, as $\Pol(\sA')$,  a ternary pseudo-cyclic operation modulo $\overline{\Aut(\rel T)}$ (see, for example, the proof of Corollary~6.2 in~\cite{Topo}). Moreover, $\CSP(\sA')=\CSP(\sA)$ is in P by~\cite{Bodirsky-Mottet}.
\end{proof}
In order to prove these results, we first show Lemma~\ref{lem:cores} which we needed to derive  Corollary~\ref{cor:T:main} from Theorem~\ref{thm:mathdichotomy-tournament}; this will be done 
in Section~\ref{subsect:rt:cores} using a recent result from~\cite{MottetPinskerCores}.

We then prove Theorem~\ref{thm:mathdichotomy-tournament} as follows. First, we note that if $\rel A$ is a first-order reduct of $(T;=)$, or equivalently, if $\Aut(\sT)$ is the full symmetric group, then the result follows easily (either $\Pol(\sA)$ only contains essentially unary functions, and the first case of Theorem~\ref{thm:mathdichotomy-tournament} holds, or it contains a binary injection, and the second case holds -- see e.g.~\cite{BodChenPinsker}). We therefore assume that $\rel A$ is not a first-order reduct of $(T;=)$. It then follows that the binary disequality  relation $\neq$  is preserved by all functions in $\Pol(\sA)$ by a general observation (Proposition~\ref{prop:definition-neq}). This implies that $\Pol(\sA)$ acts on the set of injective $n$-tuples for all $n\geq 1$.

We  then consider various subclones of $\Pol(\sA)$  and use the following notation:
\begin{itemize}
    \item $\CATT$ is the clone of those functions in $\CA$ which are canonical with respect to $\sT$;
    \item $\CAT$ is the clone of those functions in $\CA$ which preserve the equivalence of $\Aut(\sT)$-orbits of injective tuples;
    \item $\CAA$ is the clone of those functions in $\CA$ which preserve the equivalence of $\Aut(\sA)$-orbits of injective tuples.
\end{itemize}
\todom{Another way of describing $\CAT$ is that it contains precisely those polymorphisms of $\sA$ which act  on the two $\Aut(\sT)$-orbits of injective pairs; abusing notation, we denote these orbits by $\rightarrow$ and $\leftarrow$ and the action of $\CAT$ on them by  $\CAT\actson\{\leftarrow,\rightarrow\}$.} 

Note that $\CATT\subseteq \CAT$ since the functions in $\CATT$ have to preserve orbit-equivalence with respect to $\Aut(\sT)$ of arbitrary tuples, whereas for the functions $\CAT$ this condition is imposed only on injective tuples. %
There are no obvious inclusions in general between $\CAA$ and either of  $\CAT$ and $\CATT$. However, we will be interested in comparing these clones in the situation where $\CATT$ is equationally trivial, since otherwise the \todo{second} case of Theorem~\ref{thm:mathdichotomy-tournament}  follows easily (see Proposition~\ref{prop:CATT-nontrivial}). In that case, by general elementary arguments (Section~\ref{subsect:injective}) plus the use of a Ramsey expansion of $\sT$ (Section~\ref{subsect:injective}) as well as results about clones on finite sets,  
also \todo{$\CAT\actson\lra$ is equationally trivial (achieved in Proposition~\ref{prop:CATT-nontrivial}), and in particular the action $\CAT\actson\lra$ consists of essentially unary operations}\todom{~\cite{Post}}. 
The latter implies $\CAT\subseteq \CAA$ (Lemma~\ref{lem:CATinCAA}). Hence, we will have $\CATT\subseteq \CAT\subseteq \CAA$ when we want to apply the theory of smooth approximations. 

In that situation, it will turn out that $\CAA$ is equationally trivial as well  by Lemma~\ref{lem:canonicalSiggers}, which uses the above-mentioned fact  that $\CAT\actson\lra$ is by essentially unary functions.

We then apply the loop lemma of smooth approximations (Theorem~\ref{thm:2-cases-general}) to $(\CAA\actson U) \subseteq(\CA\actson U)$, where $U$ is the set of injective $k$-tuples for some $k\geq 1$ large enough so that $\CAA\actson U/\Aut(\sA)$ is equationally trivial; such $k\geq 1$ exists by Corollary~\ref{cor:CAA_on_orbits_trivial_tournament}, which  follows from  a general observation (Proposition~\ref{prop:CAA_on_orbits_trivial}). The hypotheses of Theorem~\ref{thm:2-cases-general} are met (with $A:=U$, $n:=1$), since  $\sT$ and hence also $\CA\actson U$ have no algebraicity, since $\sA$ and hence also  $\CA\actson U$ are  model-complete cores, and since the functions of $\CAA$ act on the orbits of injective $n$-tuples with respect to $\gG_{\CA}=\Aut(\sA)$, by definition. By the theorem, we are in one of two cases. 

In the first case, there is a naked set $(S,\sim)$ for $\CAA\actson U$ that \todo{has a presmooth $(\CA\actson U)$-invariant approximation}. This leads, via the fundamental theorem of smooth approximations (Theorem~\ref{thm:sa}), to $\CA$ having a uniformly continuous clone  homomorphism to $\P$, putting us into the first case of Theorem~\ref{thm:mathdichotomy-tournament}. To apply Theorem~\ref{thm:sa} with function clones $(\CAA\actson U)\subseteq (\CA\actson U)$ and permutation group $\Aut(\sT)\actson U$, we have to  show that $\CA\actson U$ locally interpolates $\CAA\actson U$ modulo $\Aut(\sT)\actson U$; this we do by showing that $\CA$ locally interpolates $\CAT$ (Corollary~\ref{cor:canonization-without-order}) and using $\CAT\subseteq\CAA$ (Lemma~\ref{lem:CATinCAA}). We also need to observe that $\Aut(\sT)$ leaves the $\sim$-classes invariant, which is clear since they are unions of $\Aut(\sA)$-orbits by definition. Finally,  Lemma~\ref{lem:primitive-neq-very-smooth} \todo{applies since $\neq$ is invariant under $\Pol(\rel A)$ and the universal homogeneous tournament is $n$-``primitive'' for all $n\geq 1$ (Example~\ref{ex:primitive-tournament}), so that }the approximation is very smooth.

The second case of the loop lemma leads to a contradiction, since  using the weakly \todo{commutative} function from Lemma~\ref{lem:weird-symmetric} we can produce a function in $\CATT$ satisfying a non-trivial identity; this is achieved in Lemma~\ref{lem:maj-from-symmetric} together with Proposition~\ref{prop:CATT-nontrivial}. 

The proof of Theorem~\ref{thm:mathdichotomy-tournament} is complete.

\subsubsection{Model-compete cores} \label{subsect:rt:cores}

The tournament $\rel T$ can be expanded by a linear order $<$ so that $(\rel T,<)$ is Ramsey (this follows, e.g., from  \cite{Hubicka-Nesetril-All-Those}). In fact, the expansion $(\rel T,<)$ is the universal homogeneous linearly ordered tournament;  \todo{its automorphism group has five orbits of pairs, namely precisely  the orbits of the pairs  $(a,a),(a,b),(b,a),(b,c),(c,b)$ where $a,b,c\in T$ are arbitrary elements satisfying $a<b$, $b<c$,  $a\rightarrow b$, and $b\leftarrow c$}. We will need this expansion several times; the first instance is the use of the following recent theorem in order to compute the model-complete cores of first-order reducts of $\sT$.

\begin{theorem}[\cite{MottetPinskerCores}]\label{thm:ramsey-cores}
    Let $\rel A$ be a first-order reduct of a  homogeneous Ramsey structure $\rel B$, and let $\rel A'$ be its model-complete core. Then $\rel A'$ is a first-order reduct of a homogeneous Ramsey substructure $\rel B'$ of $\rel B$.
    
    Moreover, there exists $g\in\End(\sA)$ which is \emph{range-rigid} with respect to $\Aut(\sB)$, i.e., all orbits of tuples with respect to $\Aut(\sB)$ which intersect the range of $g$ are invariant under $g$, and such that the  age of $\rel B'$ is equal to the age of structure induced by the range of $g$ in $\sB$.
\end{theorem}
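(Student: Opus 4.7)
The plan is to build $g$ by canonization followed by a minimization-and-iteration argument, then take $\sB'$ to be the homogeneous structure whose age matches that of the substructure of $\sB$ induced on $\mathrm{range}(g)$, and read off $\sA'$ as the image of $\sA$ under $g$.

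First I would invoke the canonization principle cited in Section~\ref{subsect:prelims-ua}: since $\sB$ is homogeneous Ramsey in a finite signature, every endomorphism of $\sA$ locally interpolates modulo $\Aut(\sB)$ a function canonical with respect to $\Aut(\sB)$, and a pointwise-limit argument using oligomorphicity together with the topological closedness of $\End(\sA)$ yields canonical endomorphisms inside $\End(\sA)$ itself. For canonical $h\in\End(\sA)$ and $n\geq 1$, let $T_n(h)$ be the (finite) set of $\Aut(\sB)$-orbits of $n$-tuples meeting $\mathrm{range}(h)$; canonicality makes $h$ descend to a map $\bar h\colon T_n(h)\to T_n(h)$. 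Pick $g\in\End(\sA)$ canonical with $(T_n(g))_{n\geq 1}$ minimal under componentwise inclusion; existence follows from a standard compactness argument. Then the iterates $g^k$ are still canonical with $T_n(g^k)\subseteq T_n(g)$, so by minimality $T_n(g^k)=T_n(g)$, and hence each $\bar g$ is a bijection on $T_n(g)$. Extracting a pointwise-convergent subsequence of $(g^{k_i})_i$ with $k_i$ eventually divisible by each of the finitely many periods of $\bar g$ on $T_1(g),\ldots,T_n(g)$, and iterating this construction diagonally in $n$, produces a canonical endomorphism $g^\ast\in\End(\sA)$ whose induced action on every $T_n(g^\ast)=T_n(g)$ is the identity. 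Replacing $g$ by $g^\ast$ yields the range-rigid $g$ of the statement.

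Next, define $\sB'$ as the Fra\"iss\'e limit of the age of the substructure of $\sB$ induced on $\mathrm{range}(g)$; this age is closed under substructures and inherits amalgamation from $\sB$, because $\mathrm{range}(g)$ is a union of $\Aut(\sB)$-orbits by range-rigidity combined with canonicality. The Ramsey property transfers from $\sB$ to $\sB'$: a coloring of finite configurations of $\sB'$ is equally a coloring of configurations of $\sB$, and the monochromatic copy provided by the Ramsey property of $\sB$ can be re-embedded into $\sB'$ by homogeneity of $\sB'$ together with the fact that its age is closed under the relevant substructures. The structure $\sA':=g(\sA)$ is then a first-order reduct of $\sB'$, since the relations of $\sA$ are first-order in $\sB$ and their restrictions to $\mathrm{range}(g)$ are first-order in $\sB'$; that $\sA'$ is the model-complete core of $\sA$ follows from the retraction $\sA\to\sA'\hookrightarrow\sA$ witnessed by $g$, and from the fact that any endomorphism of $\sA'$, extended to $\sA$, composed with $g$, and canonized, yields a canonical endomorphism whose minimality forces its induced action to be the identity on the $T_n$'s, hence to agree on any finite subset of $\mathrm{range}(g)$ with an automorphism of $\sB'$ and consequently of $\sA'$.

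The main obstacle I anticipate is the double compactness step producing $g$: one must extract from $\End(\sA)$ a canonical endomorphism whose induced action on every $T_n$ is simultaneously the identity, which requires combining minimization across all arities with taking pointwise limits of high iterates in a way that is uniform in $n$. Oligomorphicity of $\Aut(\sB)$ controls the cardinality of each $T_n$, but ensuring that the limit lies in $\End(\sA)$ and remains canonical rests on the topological closedness of $\End(\sA)$ and on the observation that canonicality is a pointwise-closed condition.
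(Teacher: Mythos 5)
This theorem is not proved in the paper at all: it is imported verbatim from~\cite{MottetPinskerCores}, so there is no in-paper proof to compare against. Judged on its own merits, your outline does capture the right skeleton of the known proof (canonize endomorphisms modulo $\Aut(\sB)$, minimize the family of orbits met by the range, pass to powers to make the induced action on orbits the identity, and take for $\sB'$ a homogeneous structure with the age of the range), but two of your justifications are genuinely wrong and hide the main point of the argument.

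First, range-rigidity does \emph{not} imply that $\mathrm{range}(g)$ is a union of $\Aut(\sB)$-orbits: it says that every orbit of tuples meeting the range is mapped into itself by $g$, not that such an orbit is contained in the range (a range-rigid canonical $g$ can easily have a range that is a proper subset of a single orbit of elements). So your one-line justification for amalgamation fails. The age of the substructure induced on $\mathrm{range}(g)$ does have amalgamation, but the proof must go through $g$ itself: amalgamate $C_1,C_2$ over $C_0$ inside $\sB$ to get $D$, and then apply $g$ to $D$; canonicity plus range-rigidity guarantee that $g$ restricted to $C_1$ and to $C_2$ preserves their quantifier-free types (their tuples lie in orbits meeting the range), so $g(D)$ is an amalgam living inside the range. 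Second, the Ramsey transfer is incorrect as stated. Given $P\leq Q$ in the age of $\sB'$, the Ramsey property of $\sB$ produces a witness $S$ with $S\to(Q)^P_k$ that lies in the age of $\sB$ and in general does \emph{not} embed into $\sB'$; a colouring of copies of $P$ inside structures from the age of $\sB'$ simply is not a colouring of the copies of $P$ in $S$, and ``re-embedding the monochromatic copy by homogeneity of $\sB'$'' does not address this. The correct argument again routes through $g$: embed $S$ into $\sB$, pull a colouring of the copies of $P$ in $g(S)$ back along $g$ (legitimate because $g$ sends copies of $P$ to copies of $P$), find a monochromatic copy of $Q$ in $S$, and push it forward into the range. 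In both places the missing idea is the same: the range-rigid canonical function is not merely the device that produces $\sB'$, it is the transfer map by which amalgamation and the Ramsey property are transported from $\sB$ to $\sB'$. A smaller technical remark: the sequence $(g^{k_i})_i$ need not have a pointwise-convergent subsequence on the nose; one must compose with automorphisms of $\sB$ on the left before taking the limit, which is harmless for the induced action on orbits but should be said.
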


Thus, we can apply Theorem~\ref{thm:ramsey-cores} and obtain that in order to prove Theorem~\ref{thm:mathdichotomy-tournament}, we can henceforth assume we are dealing with model-complete cores.

\begin{lemma}\label{lem:cores}
	Let $\rel A$ be a first-order reduct of $\rel T$. Then the model-complete core of $\rel A$ is either a one-element structure, or is again a first-order reduct of\/ $\rel T$.
	\todo{Moreover, if $\rel A$ is a model-complete core that is a first-order reduct of $\rel T$ and not of $(T;=)$, then the range of every endomorphism of $\rel A$ intersects every orbit of $\Aut(\rel T,<)$.}
\end{lemma}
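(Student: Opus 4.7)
The plan is to apply Theorem~\ref{thm:ramsey-cores} with the Ramsey expansion $(\sT,<)$ of $\sT$, obtaining a homogeneous Ramsey substructure $\rel B'$ of $(\sT,<)$ such that $\rel A'$ is a first-order reduct of $\rel B'$, together with a range-rigid endomorphism $g\in\End(\rel A)$ whose range induces in $(\sT,<)$ precisely the age of $\rel B'$. To classify the possible $\rel B'$, I note that the age of $(\sT,<)$ admits only two $2$-element isomorphism types, distinguishing whether $<$ and $\to$ point in the same or opposite direction on a pair, and a standard amalgamation argument over single vertices shows that any Fraïssé subclass realizing both types is the full age. Consequently $\rel B'$ is one of: a one-element structure, a linearly ordered substructure where $\to$ coincides with $<$ or with its reverse, or $(\sT,<)$ itself.

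For the first part of the lemma, the one-element case yields a trivial $\rel A'$. When $\rel B'\cong(\sT,<)$, the range-rigidity of $g$ combined with the $\Aut(\sT)$-invariance of the relations of $\rel A$ yields $\Aut(\sT)\subseteq\Aut(\rel A')$ via the identification of domains afforded by $g$, so $\rel A'$ is a first-order reduct of $\sT$. The delicate sub-case is the linearly ordered one, where I argue that the existence of a homomorphism $g$ from $\sT$ onto a linearly ordered substructure is inconsistent with any non-trivial $\Aut(\sT)$-invariant relation in the signature of $\rel A$: cyclic triangles are pp-definable from $\to$, and since $\sT$ contains cyclic triangles while a linearly ordered substructure contains none, the preservation of any such relation by $g$ forces $\rel A$ to be effectively a reduct of $(T;=)$, in which case $\rel A'$ is (isomorphic to) the trivial reduct $(T;=)$, itself a first-order reduct of $\sT$.

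For the moreover part, applying the first part to $\rel A$ (already a model-complete core) yields $\rel B'\cong(\sT,<)$: the one-element case is excluded since $\rel A$ preserves $\neq$ by Proposition~\ref{prop:definition-neq}, and the linearly ordered case is excluded by the non-triviality of $\rel A$ via the cyclic-triangle argument.  Thus the range of $g$ meets every orbit of $\Aut(\sT,<)$, and in fact $g\in\overline{\Aut(\sT,<)}$ by range-rigidity. For an arbitrary $e\in\End(\rel A)$, I combine the facts that $e\in\overline{\Aut(\rel A)}$ (by the model-complete core property) and that $\Aut(\sT)\subseteq\Aut(\rel A)$ acts transitively on the $\Aut(\sT,<)$-sub-orbits within each $\Aut(\rel A)$-orbit, using $g$ as a witness together with a compactness argument to produce, for each orbit $O$, a tuple $\bar z$ with $e(\bar z)\in O$. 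The main obstacle will be this final transfer step and the precise form of the cyclic-triangle argument in the linearly ordered sub-case, both of which depend on a careful interplay between the range-rigidity of $g$ and the topological structure of $\End(\rel A)=\overline{\Aut(\rel A)}$ relative to $\Aut(\sT,<)$.
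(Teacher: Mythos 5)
Your skeleton (invoke Theorem~\ref{thm:ramsey-cores}, then split into the one-element, linearly ordered, and full cases for $\rel B'$) matches the paper's, and your treatment of the one-element and full cases is essentially the paper's. But the two steps you yourself flag as delicate are exactly where the proposal breaks down.

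The cyclic-triangle argument in the linearly ordered sub-case is not valid. The relations of $\rel A$ are merely first-order definable in $\sT$; $g$ is only required to preserve those relations, not $\to$ or the cyclic-triangle relation, so the pp-definability of cyclic triangles from $\to$ buys nothing. The claimed inconsistency is in fact false: let $R$ be the set of triples that do not form a cyclic triangle. Then $(T;R)$ is a first-order reduct of $\sT$ that is not a reduct of $(T;=)$ (cyclic and transitive triples have the same equality pattern), yet $R$ is preserved by any map onto a transitive subtournament. So one cannot conclude in this case that $\rel A$ is a reduct of $(T;=)$; the correct (and weaker) conclusion is that the \emph{core} $\rel A'$ is a reduct of $(A';=)$ — not literally the structure with no relations, as you write, but that suffices for the lemma. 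The paper proves this by a different mechanism: using that $g$ flips exactly the decreasing (or exactly the increasing) edges, it composes $g$ with automorphisms of $\sT$ and embeddings of the range of $g$ into $\rel B'$ to send any injective tuple to any other injective tuple of the same length, and concludes by back-and-forth that $\Aut(\rel A')$ is the full symmetric group. Some such argument is needed; your version has no substitute for it.

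For the ``moreover'' part, knowing $e\in\overline{\Aut(\rel A)}$ only gives that $e(\bar z)$ lies in the $\Aut(\rel A)$-orbit of $\bar z$, and that orbit may be a union of several $\Aut(\sT)$-orbits, each of which splits further into $\Aut(\sT,<)$-orbits; nothing in your sketch excludes that $e$ systematically avoids some $\Aut(\sT,<)$-orbit (say, all decreasing edges). Your stated transitivity claim (``$\Aut(\sT)$ acts transitively on the $\Aut(\sT,<)$-sub-orbits within each $\Aut(\rel A)$-orbit'') only holds within a single $\Aut(\sT)$-orbit and does not by itself produce a preimage tuple landing in a prescribed $\Aut(\sT,<)$-orbit. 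The paper closes exactly this gap by citing Lemmas~11 and~15 of~\cite{MottetPinskerCores}: if the range of some $e\in\End(\rel A)$ misses an orbit of $\Aut(\sT,<)$, then so does the range of the range-rigid $g$, after which the case analysis of the first part yields the contradiction. Your ``compactness argument using $g$ as a witness'' is precisely this missing transfer, and as written the proof is incomplete there.
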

\begin{proof}
By Theorem~\ref{thm:ramsey-cores}, the model-complete core $\rel A'$ is a first-order reduct of $\sB'$, where $\sB'$ is a homogeneous Ramsey substructure of $(\sT,<)$, and there exists $g\in\End(\rel A)$ which is  range-rigid with respect to $\Aut(\sT,<)$ and whose range induces exactly the age of $\sB'$. 

If the range of $g$ is a single point, then $\rel B'$ is a one-element structure. Otherwise, the range of $g$ contains an edge of $\sT$. If all the edges within the range of $g$ are increasing with respect to $<$,  or if all such edges are decreasing, then by  range-rigidity we have that  $\rel B'$ is isomorphic to $(\mathbb Q;<,<)$ or $(\mathbb Q;>,<)$. Moreover, in the first case  $g$ flips an edge if and only if it is decreasing, and in the second case if and only if it is increasing.
	It follows that for any two injective tuples $a,b$ of the same length within the domain of $\sA'$, there exist   $\alpha\in \Aut(\rel T)$
	and an embedding $e$ from the range of $g$ into $\rel B'$ such that $b=e\circ g\circ \alpha (a)$. Since $e\circ g\circ \alpha$ is an endomorphism of $\rel A'$, by back-and-forth, $\Aut(\rel A')$ is the full symmetric group of $A'$. Hence $\rel A'$ is a first-order reduct of $(A';=)$.

	Finally, if the range of $g$ contains at least one increasing and at least one decreasing edge, then  by the  range-rigidity of $g$ we have that $\rel B'$ is isomorphic to  $(\rel T,<)$, and  $\rel A$ and $\rel A'$ are isomorphic.
	
	\todom{Now suppose that $\rel A$ is a model-complete core that is a reduct of $\rel T$ but not of $(T;=)$, and let $e\in\End(\rel A)$. Striving for a contradiction, suppose that the range of $e$ does not intersect every orbit of $\Aut(\rel T,<)$. By the results in~\cite{MottetPinskerCores} (more precisely Lemma~15 and Lemma~11 applied to the monoid $\End(\rel A)$), the range-rigid function   $g\in\End(\rel A)$ whose existence is stipulated in Theorem~\ref{thm:ramsey-cores} then has the same property. Applying the reasoning of the second and third paragraph of the present proof, we conclude that either $\rel A$ is a first-order reduct of $(T;=)$, or the range of $g$ induces in $(\rel T,<)$ a structure isomorphic to $(\rel T,<)$,  both of which yield a  contradiction.}
\end{proof}

Note that in the first case of Lemma~\ref{lem:cores}, $\End(\sA)$ contains a constant function (by Theorem~\ref{thm:ramsey-cores}), and hence the \todo{second} case of Theorem~\ref{thm:mathdichotomy-tournament} applies since constant functions are canonical and pseudo-cyclic. Thus indeed, it remains to prove Theorem~\ref{thm:mathdichotomy-tournament} for model-complete cores.  

\begin{proposition}\label{prop:definition-neq}
Let $\rel B$ be a transitive homogeneous structure whose age has no minimal bound of size $3$. Let $\rel A$ be a first-order reduct of $\rel B$ which is a model-complete core.  Then the binary disequality relation $\neq$ is pp-definable in $\rel A$.
\end{proposition}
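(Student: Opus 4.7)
The plan is to show that every polymorphism of $\rel A$ preserves $\neq$; this suffices since $\rel A$ is an $\omega$-categorical model-complete core, where a relation is pp-definable iff invariant under $\Pol(\rel A)$. First I would handle the unary case: because $\rel A$ is a model-complete core, $\End(\rel A) = \overline{\Aut(\rel A)}$, so for any endomorphism $e$ and distinct $a,b \in A$, some $\alpha \in \Aut(\rel A)$ agrees with $e$ on $\{a,b\}$, whence $e(a) \neq e(b)$ by injectivity of $\alpha$.

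For the higher-arity case, the key observation I would isolate is the following. Transitivity of $\rel B$ ensures that the diagonal $\{(x,x) : x \in A\}$ is a single $\Aut(\rel A)$-orbit, and all orbits of $\Aut(\rel A)$ on pairs are pp-definable in the model-complete core $\rel A$. Thus if $f \in \Pol(\rel A)$ is $n$-ary and $\mathbf{a},\mathbf{b} \in A^n$ are componentwise distinct with all pairs $(a_i,b_i)$ lying in a common $\Aut(\rel A)$-orbit $O$, then $f$ must preserve the non-diagonal pp-definable relation $O$, yielding $(f(\mathbf{a}),f(\mathbf{b})) \in O$, and hence $f(\mathbf{a}) \neq f(\mathbf{b})$.

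It then remains to reduce an arbitrary would-be counterexample -- a polymorphism $f$ of arity $n$ with $f(\mathbf{a}) = f(\mathbf{b})$ and $a_i \neq b_i$, where the pair orbits $O_i$ of $(a_i,b_i)$ are possibly distinct -- to the common-orbit setting. Here the no-minimal-bound-of-size-3 hypothesis is essential: together with homogeneity of $\rel B$, it yields a 3-amalgamation property asserting that for any $x,y \in B$ and any two orbits $O',O''$ of $\Aut(\rel B)$ on pairs that are jointly consistent with the 2-structure on $\{x,y\}$, there exists $z \in B$ with $(x,z) \in O'$ and $(y,z) \in O''$. Using this amalgamation together with compositions of $f$ with elements of $\Aut(\rel B) \subseteq \Aut(\rel A)$ (for instance, for $n=2$ one builds $g(x,y) := f(\alpha_1(x),\alpha_2(y))$ from automorphisms $\alpha_i$ supplied by homogeneity so that $g(a_1,a_2) = g(z,z)$), one iteratively modifies the counterexample so that all pair orbits eventually coincide, and then invokes the key observation for a contradiction.

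The hard part will be carrying out this reduction carefully. The no-minimal-bound-of-size-3 assumption controls only 3-element configurations, so for $n \geq 3$ one must iterate the amalgamation and probably induct on the arity $n$ in order to collapse the possibly distinct pair orbits $O_i$ into a single common one without losing the property that $f(\mathbf{a}) = f(\mathbf{b})$ along the way.
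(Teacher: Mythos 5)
Your overall strategy -- prove that $\neq$ is preserved by $\Pol(\rel A)$ and invoke the Galois connection for $\omega$-categorical structures -- is legitimate, and you have correctly isolated the two key ingredients: pp-definability (hence invariance) of the $\Aut(\rel A)$-orbits of pairs in a model-complete core, and the amalgamation of three-element configurations that follows from homogeneity together with the absence of a minimal bound of size $3$. However, the reduction step as you describe it has a genuine gap. Precomposing $f$ with automorphisms (even different ones in different coordinates) cannot change the $\Aut(\rel A)$-orbit of any pair $(a_i,b_i)$, since automorphisms preserve orbits; so there is no way to ``iteratively modify the counterexample so that all pair orbits eventually coincide,'' and the $n=2$ gadget with $g(a_1,a_2)=g(z,z)$ does not achieve this either. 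An induction on the arity is likewise not the right tool here.

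The correct completion keeps the pairs $(a_i,b_i)$ as they are and instead introduces a \emph{third} tuple. If $\neq$ is a single orbit you are done by your key observation, so assume there are two distinct orbits $O_1\neq O_2$ of injective pairs. For each $i$, the size-$3$ amalgamation (using transitivity to realize $O_1$ at $a_i$ and $O_2$ at $b_i$) yields $c_i$ with $(a_i,c_i)\in O_1$ and $(b_i,c_i)\in O_2$. Since $f$ preserves the pp-definable relations $O_1$ and $O_2$, we get $(f(a),f(c))\in O_1$ and $(f(b),f(c))\in O_2$; as $O_1\cap O_2=\emptyset$, this forces $f(a)\neq f(b)$. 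This is exactly the paper's proof read through the Galois connection: the paper simply writes down the pp-formula $\exists z\,(O_1(x,z)\wedge O_2(y,z))$ and checks via the same amalgamation that it defines $\neq$, which is shorter and avoids any appeal to the inverse direction of the Galois correspondence.
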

\begin{proof}
    If $\neq$ is an orbit of $\rel A$, then it is pp-definable in $\rel A$ since $\rel A$ is a model-complete core. Otherwise, there exist two distinct orbits $O_1,O_2$ of injective  pairs; these are again pp-definable in $\rel A$.
    The formula $\phi(x,y):=\exists z (O_1(x,z)\wedge O_2(y,z))$ then clearly pp-defines 
    \todo{a subset of $\neq$; we claim that it actually defines $\neq$.  
    To see that $\phi(a,b)$ holds for all elements $a,b$ of $\rel A$ with $a\neq b$, pick $c,d$ such that
    $(a,c)\in O_1,(b,d)\in O_2$. Such elements exist by the transitivity of $\rel B$.
    Consider the structure $\rel X$ on a three-element domain $\{x,y,z\}$ in the signature of $\rel B$ such that the structure induced by $\{x,y\}$ is isomorphic to the structure induced by $\{a,b\}$ in $\rel B$, such that the structure induced by $\{x,z\}$ is isomorphic to the structure induced by $\{a,c\}$ in $\rel B$, and such that the structure induced by $\{y,z\}$ is isomorphic to the structure induced by $\{b,d\}$ in $\rel B$.
    Since all two-element substructures of $\rel X$ embed into $\rel B$, and since $\rel B$ has no minimal bound of size $3$, it follows that $\rel X$ embeds into $\rel B$.
    By homogeneity of $\rel B$, one can assume that this embedding maps $x$ to $a$ and $y$ to $b$. Finally, the image of $z$ under this embedding proves that $(a,b)$ satisfies $\phi$.}
\end{proof}

\subsubsection{Injective binary polymorphisms}\label{subsect:injective}

In this section we derive some general results on the existence of  binary injective functions in function clones. For the universal homogeneous tournament, these results will be the foundation for considering actions of function clones on injective, rather than arbitrary, tuples; and in particular, to work with $\CAT$ and $\CAA$ rather than with $\CATT$ and, say, the clone $\cC_{\sA}^{\sA}$ of all functions in $\CA$ which are canonical with respect to $\sA$.

For an arbitrary relational structure $\rel A$, it is known that if $\Pol(\rel A)$ has no clone homomorphism to $\Projs$, then $\Pol(\rel A)$ contains a ternary essential function~\cite{Topo}.
We show in Proposition~\ref{prop:arity-essential} that under some conditions, one can lower the arity down to two.

\begin{definition}
Let $\sB$ be a structure. An $\Aut(\sB)$-orbit $O$ of pairs is \emph{free}
if for all elements $a,b$ of $\sB$, there exists an element $c$ of $\sB$  such that $(c,a),(c,b)\in O$.
\end{definition}
Note that if a structure has a free orbit of pairs, then it is transitive. We now give a few examples of structures with a free orbit.

\begin{example}{\ }

\begin{itemize}
\item The universal homogeneous partial order  has a free orbit, namely the orbit of incomparable elements.
\item If $\rel B$ is a homogeneous structure with free amalgamation which is transitive,  then one can obtain  a free orbit by free amalgamation of two vertices. In particular, for the universal homogeneous $K_n$-free graph (where $n\geq 3$ is arbitrary) as well as for the random graph, the orbit of non-adjacent pairs is free. \todom{On the other hand, the orbit of  adjacent pairs is not free in the universal homogeneous $K_3$-free graph.}
\item If $\rel B$ is homogeneous and transitive, and if the set
of minimal bounds for the age of $\sB$ does not contain any structure
of size three, then any orbit of injective pairs is free. This applies, in particular, to the universal homogeneous tournament.
\end{itemize}
\end{example}

We now provide some examples of transitive structures without a free orbit.

\begin{example}
If $\rel B$ is a homogeneous transitive  structure, let $\rel B'$ be the structure obtained by taking two copies of $\rel B$ and relating them with a new binary relation. Then $\rel B'$ has no free orbit. The homogeneous equivalence relation with two infinite classes is of this form.
\end{example}

\begin{proposition}\label{prop:arity-essential}
	Let $\rel A$ be a first-order reduct of a homogeneous structure $\rel B$ such that $\sB$ has a free orbit.
	If $\Pol(\rel A)$ contains an essential function, then it  contains a binary essential operation.
\end{proposition}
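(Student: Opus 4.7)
Let $f \in \Pol(\rel A)$ be an essential polymorphism of minimum arity $n$. If $n = 2$ we are done, so assume toward a contradiction that $n \geq 3$. By the minimality of $n$, every polymorphism of $\rel A$ of arity strictly smaller than $n$ (in particular, every polymorphism obtained from $f$ by identifying variables) is essentially unary, from which it follows easily that $f$ itself must depend on each of its $n$ variables.

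The plan is to identify the variables of $f$ according to every partition of $\{1,\dots,n\}$ into two non-empty parts, obtaining binary polymorphisms that are each essentially unary by the minimality of $n$, and to derive from the combined restrictions a contradiction with the fact that $f$ depends on every variable. In particular, for every $i \in \{1,\dots,n\}$ the binary polymorphism $g_i(x,y) := f(y,\dots,y,x,y,\dots,y)$, with $x$ in position $i$, equals either $u_i(x)$ or $u_i(y)$ for some $u_i \in \End(\rel A)$. This severely restricts the behavior of $f$ on inputs whose coordinates take at most two distinct values.

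The free orbit $O$ of $\rel B$ enters the argument as follows. Given any two elements $a,b \in B$, freeness provides $c$ with $(c,a),(c,b)\in O$; iterating this and using homogeneity of $\rel B$, one obtains for any finite list $a_1,\dots,a_k$ an element $c$ with $(c,a_i)\in O$ for all $i$. Since $\Aut(\rel B)\subseteq\Pol(\rel A)$, this allows the following: starting from a witness $\tuple a,\tuple a'$ that $f$ depends on a fixed variable $i$ (so $\tuple a$ and $\tuple a'$ differ only in position $i$ and $f(\tuple a) \neq f(\tuple a')$), we apply an automorphism of $\rel B$ sending the $n-1$ coordinates of $\tuple a$ different from position $i$ to a single common element $c$, which exists by the iterated-freeness observation plus homogeneity. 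The transported tuples then serve as inputs witnessing that the corresponding $g_i$ depends on its first argument; an analogous construction using dependence of $f$ on another variable yields two inputs to $g_i$ showing dependence on the second argument, contradicting essentially unary behavior.

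The main obstacle is to extend the free orbit property from pairs to arbitrary finite tuples, and to verify that the transporting automorphisms can be chosen so that the transformed witnesses for $f$'s dependence remain witnesses (which is where homogeneity of $\rel B$ and the inclusion $\Aut(\rel B)\subseteq\Pol(\rel A)$ do the real work). Once those tools are in place, the case analysis on whether each $g_i$ depends on $x$ or on $y$, and the derivation of the contradiction, are routine.
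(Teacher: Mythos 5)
Your overall plan (minimal essential arity, identify variables, contradiction) is reasonable in spirit, but two of its load-bearing steps do not hold up. First, the ``iterated freeness'' you invoke is not a consequence of the definition: a free orbit only guarantees, for each \emph{pair} $a,b$, some $c$ with $(c,a),(c,b)\in O$, and this does not upgrade to a single $c$ with $(c,a_i)\in O$ for an arbitrary finite list $a_1,\dots,a_k$. The naive iteration fails (finding $c$ with $(c,c_{12}),(c,a_3)\in O$ says nothing about $(c,a_1)$), and the $k$-fold statement is genuinely false: in the universal homogeneous graph omitting independent sets of size $4$, the non-edge orbit is free, yet three pairwise non-adjacent vertices have no common non-neighbour. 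The paper's proof is careful to use only pair-freeness, applied coordinatewise to produce \emph{separate} elements $d_i$ with $(d_i,a_i),(d_i,c_i)\in O$ for each $i$, rather than one element serving all coordinates at once. Second, the step ``apply an automorphism of $\rel B$ sending the $n-1$ coordinates of $\tuple a$ different from position $i$ to a single common element $c$'' is impossible as written, since automorphisms are injective; what you presumably mean is to build a binary minor $g(x,y)=f(x,\alpha_2(y),\dots,\alpha_k(y))$ with $\alpha_j(c)=a_j$, but that only needs transitivity and only witnesses dependence on the \emph{first} argument.

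The remaining claim — that showing $g$ also depends on its second argument is ``routine'' — is where the actual content of the proposition lies, and it cannot be routine: on a two-element domain the ternary $x+y+z \bmod 2$ is essential and depends on all variables, yet every binary operation it generates (even up to composing with bijections) is a projection. So some use of the free orbit is unavoidable in forcing the second dependence. The paper supplies exactly this via a dichotomy: either there exist $b_2,\dots,b_k$ with $(b_i,a_i)\in O$ such that replacing $a_2,\dots,a_k$ by $b_2,\dots,b_k$ changes the value of $f$ (in which case the automorphisms $\alpha_j$ are chosen to map $(a_2,b_2)$ to $(a_j,b_j)$), or no such replacement changes the value, in which case one brings in a witness $c_1,\dots,c_k$ for dependence on the second variable and uses pair-freeness to find $d_i$ with $(d_i,a_i),(d_i,c_i)\in O$, making the two witnesses compatible with a single choice of $\alpha_j$. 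Your proposal does not contain this case distinction or any substitute for it, so the contradiction you aim for is not established.
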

\begin{proof}
	\todo{The proof is essentially the same as in~\cite{RandomMinOps}, where it is done for reducts of the random graph.
	The generalisation to our weaker assumption seems to be folklore (see, e.g., \cite[Lemma 6.1.29]{Book}).}
	We give a proof here for the convenience of the reader. Let $f\in \Pol(\rel A)$ be essential, and let $k$ be the arity of $f$; without loss of generality, $k\geq 3$. 
	We show that $f$ can be taken to be binary. 
	Let $O$ be a free orbit of pairs of $\Aut(\rel B)$. 
	Since $f$ depends on its first argument, there exist $a_1,a'_1,a_2,\dots,a_k$ such that $f(a_1,\dots,a_k)\neq f(a'_1,a_2,\dots,a_k)$.
	We distinguish two cases.
	
 Suppose that there are $b_1,\dots,b_k$ such that $(b_i, a_i)\in O$ for all $i\in\{2,\dots,k\}$ and $f(b_1,a_2,\dots,a_k)\neq f(b_1,\dots,b_k)$.
	For $i\in\{2,\dots,k\}$, let $\alpha_i$ be an automorphism of $\rel B$ be such that $\alpha(a_2)=a_i$ and $\alpha(b_2)=b_i$.
	Let $g(x,y):= f(x,y,\alpha_3(y),\dots,\alpha_k(y))$.
	Then $g(a_1,a_2)=f(a_1,a_2,\dots,a_k) \neq f(a'_1,a_2,\dots,a_k)=g(a'_1,a_2)$ and $g(b_1,a_2)=f(b_1,a_2,\dots,a_k)\neq f(b_1,\dots,b_k) = g(b_1,b_2)$,
	so that $g$ depends on both its arguments.
	
     Suppose now that for all $b_1,\dots,b_k$ such that $(b_i, a_i)\in O$ for all $i\in\{2,\dots,k\}$, we have $f(b_1,a_2,\dots,a_k)=f(b_1,\dots,b_k)$. 
	Since $f$ depends on its second coordinate, there exist $c_1,\dots,c_k$ such that $f(c_1,c_2,\dots,c_k)\neq f(c_1,a_2,\dots,a_k)$.
	By assumption on $O$, there exist $d_2,\dots,d_k$ such that $(d_i, a_i)\in O$ and $(d_i, c_i)\in O$ for all $i$.
	For all $i\in\{2,\dots,k\}$, let $\alpha_i$ be an automorphism of $\rel B$ such that $\alpha_i(c_2)=c_i$ and $\alpha_i(d_2)=d_i$,
	and let $g(x,y):=f(x,y,\alpha_3(y),\dots,\alpha_k(y))$.
	Note that by assumption,
	\[g(a_1,d_2)=f(a_1,d_2,\dots,d_k)=f(a_1,a_2,\dots,a_k)\neq f(a'_1,a_2,\dots,a_k)=f(a_1',d_2,\dots,d_k)=g(a_1',d_2),\] so that $g$ depends on its first argument.
	Moreover, 
	\[g(c_1,c_2)=f(c_1,\dots,c_k)\neq f(c_1,a_2,\dots,a_k) = f(c_1,d_2,\dots,d_k) = g(c_1,d_2).\]
	Therefore, $g$ depends on its second argument too.
\end{proof}

We now move on from essential operations to injections, and require the following definitions. 

\begin{definition}
Let $\gG$ be a transitive permutation group. The \emph{canonical binary structure} of $\mathscr G$ is the structure on its domain that has
a binary relation for each orbit of \todo{injective} pairs under $\mathscr G$.
\end{definition}

The following definition is a variant of the concept of finite boundedness for homomorphisms rather than embeddings.

\begin{definition}
A structure $\rel B$ has \emph{finite duality} if there exists a finite set $\mathcal F$ of finite structures in its signature, such that  for every finite structure $\rel X$ in its signature it is true that $\rel X$ has a homomorphism to $\sB$ if and only if no member of $\mathcal F$ has a homomorphism to $\rel X$. The set $\mathcal F$ is called an  \emph{obstruction set} for $\sB$.
\end{definition}

\begin{proposition}\label{prop:existence-injective}
	Let $\sA$ be a first-order reduct of a transitive $\omega$-categorical structure $\sB$ such that the canonical binary structure of $\Aut(\sB)$ has finite duality. If $\Pol(\sA)$ contains a binary essential function preserving  $\neq$, then it contains a binary  injective function.
\end{proposition}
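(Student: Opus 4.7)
The plan is to upgrade the given binary essential $\neq$-preserving $f\in\Pol(\sA)$ to a binary injective polymorphism of $\sA$, via a combining argument that merges polymorphisms separating individual orbits of inputs, together with a compactness step that exploits the finite duality of the canonical binary structure of $\Aut(\sB)$. First, since $f$ preserves $\neq$, we have $f(a_1,a_2)\neq f(b_1,b_2)$ whenever $a_1\neq b_1$ and $a_2\neq b_2$, so the only inputs on which $f$ can fail to be injective are pairs $((a_1,a_2),(b_1,b_2))$ sharing a coordinate (i.e.\ $a_1=b_1$ or $a_2=b_2$); by essentiality, there exist such inputs at which $f$ nonetheless separates. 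By $\omega$-categoricity, these coordinate-sharing pairs of inputs fall into finitely many $\Aut(\sB)^2$-orbits, each of which I will call \emph{separable} if some binary $g\in\Pol(\sA)$ preserving $\neq$ separates the pairs in it.

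Next, I would argue that separability is closed under finite unions: given $g_1,g_2\in\Pol(\sA)$ binary, $\neq$-preserving, and separating orbits $O_1$ and $O_2$ respectively, a composition involving $f$, $g_1$, $g_2$ and automorphisms of $\sB$ -- for example a suitable variant of $f(g_1(x,y),g_2(y,x))$ -- can be arranged to preserve $\neq$ and to separate both $O_1$ and $O_2$. The precise form of the composition requires some care to ensure that separation on each orbit is not undone, and the required flexibility comes from combining the $\neq$-preservation of $f$ with the ability to permute inputs by automorphisms.

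The finite duality of the canonical binary structure of $\Aut(\sB)$ then enters to guarantee that injectivity of a binary polymorphism is characterized by avoiding finitely many local obstructions: any binary structure on pairs of $A$ whose relations encode how a polymorphism acts factors through the canonical binary structure iff it avoids the finite obstruction set $\mathcal F$. Hence separating enough orbits to kill all obstructions in $\mathcal F$ is already enough to force injectivity. Combining this with the previous step produces a single binary $g\in\Pol(\sA)$ preserving $\neq$ and separating all required orbits; alternatively, one can enumerate all input pairs and construct an inductive sequence of $\neq$-preserving polymorphisms, each improving on its predecessor, then extract a pointwise-convergent subsequence via the oligomorphicity of $\Aut(\sB)$ and the topological closure of $\Pol(\sA)$, much as in the proof of Lemma~\ref{lem:weird-symmetric}. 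The main obstacle will be the combining step -- verifying that essentiality of $f$ propagates to separability of every coordinate-sharing orbit -- where the finite duality hypothesis is crucial for bounding the orbits to be handled so that the inductive construction terminates in the desired injective polymorphism.
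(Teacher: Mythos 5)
Your proposal has the right skeleton (reduce to finitely many orbit-types of ``bad'' inputs, handle them one at a time, finish by compactness -- the paper's endgame is indeed a compactness argument over pairs of test configurations), but the two steps you flag as requiring ``care'' are precisely where the mathematical content lies, and neither is supplied. The combining step does not work as sketched: if $g_1(a,b)\neq g_1(a',b)$, composing with $f$ on top in $f(g_1(x,y),g_2(y,x))$ can destroy this separation, because $f$ itself identifies some pairs sharing a coordinate -- that is the whole problem being solved. The paper's mechanism is different and exploits an interplay you do not mention: since $f$ preserves $\neq$, any point $(a,b)$ at which $f$ identifies \emph{horizontally} (i.e.\ $f(a,b)=f(a',b)$ for some $a'\neq a$) is forced to be \emph{v-good}, meaning $f(a,\cdot)$ is injective at $b$ (otherwise $f$ would identify two inputs differing in both coordinates, violating $\neq$-preservation). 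The proof then shows that either one directly finds a $g$ separating both test configurations, or v-goodness \emph{propagates} from one point $(z_0,q_0)$ to every point $(x_0,p_0)$ reachable by a path whose steps lie in two fixed orbits $O$ and $P$; a final separate argument, using essentiality, produces a point where $f$ separates in the first coordinate, and v-goodness there gives separation in the second.

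Your use of finite duality is also misdirected: it is not true, and not what the hypothesis says, that ``injectivity of a binary polymorphism is characterized by avoiding finitely many local obstructions.'' Finite duality of the canonical binary structure is used for connectivity: one builds a finite path-shaped structure of length exceeding the size of the obstruction set, with endpoints labelled by prescribed orbits, observes that no obstruction maps into it, and concludes it maps homomorphically into the canonical binary structure -- this produces, inside $\sB$, a path in the orbits $O$ and $P$ of bounded length between \emph{any} two points. That is what lets v-goodness (and, in the last step, the non-degeneracy coming from essentiality) reach every point of $B^2$ rather than just one connected component. Without this propagation-plus-connectivity argument, your induction has no reason to terminate in, or even approach, an injective polymorphism.
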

\begin{proof}
	The proof is essentially the same as the one done for the random graph~\cite{BodPin-Schaefer-both} or the universal homogeneous $K_n$-free graphs~\cite{BMPP16}; 
	we reproduce it here under our  general assumptions for the convenience of the reader.
	Let $f\in\Pol(\rel A)$ be binary, essential, and preserving $\neq$.
	If for all $a,a',b,b'$ we have $f(a,b)\neq f(a',b)$ and $f(a,b)\neq f(a',b')$ then $f$ is injective, as it also preserves $\neq$.
	Thus, we can assume that there exist $a\neq a',b$ such that $f(a,b)=f(a',b)$ (otherwise, replace $f$ by $(x,y)\mapsto f(y,x)$).
	Since $f$ preserves $\neq$, the point $(a,b)$ is \emph{v-good}, namely we have $f(a,b)\neq f(a,b')$ for all $b'\neq b$.
	
	We prove that for all tuples $(x,y,z,z),(p,p,q,r)\in B^4$ \todo{with $x\neq y$ and $q\neq r$},
	there is an operation $g\in\Pol(\sA)$ such that $g(x,p)\neq g(y,p)$ and $g(z,q)\neq g(z,r)$.
	Since $f$ preserves $\neq$, it then follows by a standard compactness argument that some binary operation in $\Pol(\sA)$ is injective.
	
	Let $(z_0,q_0)$ be an arbitrary $v$-good point.
	Then for every $x_0,y_0,p_0$ such that $(x_0,y_0,z_0)$ is in the same orbit as $(x,y,z)$ and $(p_0,q_0)$ is in the same $\Aut(\sB)$-orbit as $(p,q)$,
	either $f(x_0,p_0)\neq f(y_0,p_0)$ and we are done (by taking $\alpha,\beta\in \Aut(\sB)$ such that $\alpha(x,y,z)=(x_0,y_0,z_0)$ and $\beta(p,q)=(p_0,q_0)$ and setting $g:=f(\alpha,\beta)$),
	or $f(x_0,p_0)=f(y_0,p_0)$, which implies as above that $(x_0,p_0)$ is $v$-good.
	Thus, either we find $g$ or $v$-goodness transfers from $(z_0,q_0)$ to any point $(x_0,p_0)$ such that $(x,z)$ and $(x_0,z_0)$ are in the same orbit, and $(p,q)$ and $(p_0,q_0)$ are in the same orbit.
	
	Let $O$ be the orbit of $(x,z)$ and $P$ be the orbit of $(p,q)$.
	We claim that for every $(x_0,p_0)$, there exists a path $(z_1,q_1),\dots$ such that $(z_i,z_{i+1})\in O$ and $(q_i,q_{i+1})\in P$ and $z_n=x_0,q_n=p_0$.
	Indeed, let $n$ be larger than the size of the obstruction set witnessing finite duality for the canonical structure of $\Aut(\sB)$.
	Consider the finite structure $\sX$ containing elements $z_0,q_0,\dots,z_{n-1},q_{n-1},z_n=x_0,q_n=p_0$,
	containing the relations $(z_i,z_{i+1})\in O, (q_i,q_{i+1})\in P$ for all $i$, and where additionally $(x_0,z_0)$ is labelled by the orbit of $(x_0,z_0)$ in $\Aut(\sB)$, and similarly $(p_0,q_0)$ is labelled by its orbit in $\Aut(\sB)$.
	Then $\sX$ does not contain any homomorphic image of an element from the obstruction set, so $\sX$ homomorphically maps to the canonical binary structure of $\Aut(\sB)$.
	Moreover, by construction one can assume that $x_0,p_0,z_0,q_0$ are mapped to themselves.
	The remaining elements witness the existence of the elements from the claim.
	Thus, every point $(x_0,p_0)$ is $v$-good.
	
	Finally, there exist $x_1,y_1,p_1$ such that $f(x_1,p_1)\neq f(y_1,p_1)$ and $(x_1,y_1)$ is in the same orbit as $(x,y)$.
	The argument is the same as above: if this were not the case, equality would propagate along paths labelled by the orbit of $(x,y)$.
	By finite duality, every pair of points is reachable by such a path that is long enough, and thus we would have $f(x_1,p_1)=f(y_1,p_1)$ for \emph{all} $x_1,y_1,p_1$, so that $f$ would be essentially unary, a contradiction to the assumption.
	Let $\alpha\in\Aut(\sB)$ be such that $\alpha(x,y)=(x_1,y_1)$ and $\beta\in\Aut(\sB)$ be such that $\beta(p)=p_1$, which exists by transitivity of $\Aut(\sB)$.
	Then $f(\alpha x,\beta p)\neq f(\alpha y,\beta p)$, and by $v$-goodness $f(\alpha z, \beta q)\neq f(\alpha z,\beta r)$.
	Thus, $g=f(\alpha,\beta)$ witnesses our claim.
\end{proof}

 Propositions~\ref{prop:arity-essential} and~\ref{prop:existence-injective} imply the following.
 
\begin{corollary}\label{cor:binaryinjections}
    Let $\sA$ be a first-order reduct of the random graph, the universal homogeneous $K_n$-free graph for some $n\geq 3$, or of the universal  homogeneous tournament. If $\Pol(\sA)$ contains an essential operation which preserves $\neq$, then it contains a binary injection.
\end{corollary}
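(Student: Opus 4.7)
The plan is to combine the two preceding propositions. From Proposition~\ref{prop:arity-essential} I would obtain a binary essential polymorphism in $\Pol(\sA)$ from the given essential polymorphism, and then from Proposition~\ref{prop:existence-injective} a binary injection, once the preservation of $\neq$ is carried along and the hypotheses of both propositions are checked for each of the three base structures $\sB$.

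First I would verify the free-orbit hypothesis of Proposition~\ref{prop:arity-essential}. This is explicitly recorded in the examples following the definition of free orbit: for the random graph and for every universal homogeneous $K_n$-free graph with $n\geq 3$ the orbit of non-adjacent pairs is free (by free amalgamation of two vertices), and for the universal homogeneous tournament every orbit of injective pairs is free, since the age contains no minimal bound of size three. So in all three cases $\sB$ has a free orbit, and Proposition~\ref{prop:arity-essential} yields a binary essential $g\in\Pol(\sA)$.

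Next I would check that $g$ still preserves $\neq$. Inspecting the construction in the proof of Proposition~\ref{prop:arity-essential}, the operation $g$ is defined by $g(x,y):=f(x,y,\alpha_3(y),\dots,\alpha_k(y))$ with $\alpha_i\in\Aut(\sB)$; since automorphisms are bijections and $f$ preserves $\neq$, pointwise distinct pairs $(a,b)$ and $(a',b')$ produce pointwise distinct $k$-tuples, whence $g(a,b)\neq g(a',b')$. Thus $g$ is a binary essential operation in $\Pol(\sA)$ preserving $\neq$.

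Finally I would verify finite duality for the canonical binary structure of $\Aut(\sB)$ in each of the three cases, which is the only step that requires a small case distinction and is the main point to be careful about. For the random graph the canonical binary structure has two binary relations $E$ and $N$ (corresponding to the orbits of adjacent and non-adjacent pairs), and finite duality is witnessed by the obstruction set consisting of a loop in $E$, a loop in $N$, and the two-element structure with both $E$ and $N$ between its two vertices: any finite structure avoiding these obstructions as homomorphic images maps homomorphically to the canonical structure. For the universal homogeneous $K_n$-free graph, one adds to this obstruction set the clique $K_n$ with all pairs labelled $E$. For the universal homogeneous tournament the relations are $\rightarrow$ and $\leftarrow$, and the obstruction set consists of a loop in each of them and the two-element structure carrying both labels. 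Having finite duality in hand, Proposition~\ref{prop:existence-injective} applies to $\sA$ and produces a binary injection in $\Pol(\sA)$, completing the proof. The only genuine checking is of finite duality; everything else is a direct chaining of the preceding propositions.
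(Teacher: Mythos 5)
Your proof is correct and matches the paper's, which obtains the corollary by exactly this chaining of Propositions~\ref{prop:arity-essential} and~\ref{prop:existence-injective} (the paper asserts the implication without spelling out the hypothesis checks you perform). Your observations that the free-orbit and finite-duality hypotheses hold in all three cases and that the binary essential operation produced by the proof of Proposition~\ref{prop:arity-essential} still preserves $\neq$ are the right details to supply; the only tiny imprecision is that, since source structures for homomorphisms need not have symmetric relations, the obstruction set in the $K_n$-free case should contain the finitely many orientations of an $E$-labelled $K_n$ rather than just the symmetric one.
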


Let us remark that the rational order $(\mathbb Q;<)$ does not satisfy the assumption of Proposition~\ref{prop:existence-injective}: any obstruction set for $(\mathbb Q;<)$ has to contain all directed cycles. In fact, there are first-order reducts of $(\mathbb Q;<)$ which have an essential binary polymorphism  preserving $\neq$ but no injective binary polymorphism.

\subsubsection{Binary injections for\/ $\sT$} \label{subsect:injectiveT}
We now refine the results from the preceding section for $\sT$: using the Ramsey expansion of $\sT$, we can obtain more knowledge about the binary injective polymorphisms of its first-order reducts.

\begin{lemma}\label{lem:tournament-binary-injective}
	Let $\rel A$ be a first-order reduct of $\rel T$ that is a model-complete core. If $\Pol(\rel A)$ contains an essential operation, then it contains a binary injective operation.
\end{lemma}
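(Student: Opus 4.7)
The plan is to derive Lemma~\ref{lem:tournament-binary-injective} as a direct specialization of Corollary~\ref{cor:binaryinjections}. That corollary already guarantees a binary injective polymorphism of $\rel A$ whenever $\Pol(\rel A)$ contains an essential operation preserving~$\neq$. Since by hypothesis $\Pol(\rel A)$ contains an essential operation, the only remaining task is to show that every polymorphism of $\rel A$ preserves $\neq$; equivalently, that $\neq$ is pp-definable in $\rel A$.

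To obtain pp-definability of $\neq$, I would invoke Proposition~\ref{prop:definition-neq} with $\rel B := \rel T$. Three hypotheses need to be verified. First, $\rel A$ is a model-complete core, which is part of the statement of the lemma. Second, $\rel T$ is transitive, which is immediate from its homogeneity together with the fact that it has a single orbit on singletons. Third -- the only substantive point -- the age of $\rel T$, i.e., the class of finite tournaments, has no minimal bound of size~$3$: indeed, any $3$-element structure in the signature of $\rel T$ that fails to embed into $\rel T$ already contains a bound of size at most~$2$ (a self-loop, or a pair of elements realizing both $\rightarrow$ and $\leftarrow$, or a pair realizing neither), so no minimal bound of size~$3$ exists. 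Proposition~\ref{prop:definition-neq} then yields the required pp-definition.

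Once $\neq$ is pp-definable in $\rel A$, it is preserved by every polymorphism of $\rel A$, and in particular the given essential operation preserves~$\neq$. Corollary~\ref{cor:binaryinjections} then produces a binary injective polymorphism, concluding the proof. I do not anticipate any real obstacle: the value of this lemma over Corollary~\ref{cor:binaryinjections} is precisely that, for model-complete core first-order reducts of~$\rel T$, preservation of $\neq$ is automatic from the structural properties of the base tournament, so the weaker hypothesis that $\Pol(\rel A)$ contains \emph{some} essential operation already suffices.
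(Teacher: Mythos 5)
Your proposal is correct and follows exactly the paper's own route: Proposition~\ref{prop:definition-neq} applied to $\rel B=\rel T$ gives pp-definability of $\neq$ (the paper leaves the verification that the age of $\rel T$ has no minimal bound of size $3$ implicit, which you correctly spell out), and Corollary~\ref{cor:binaryinjections} then yields the binary injection. No gaps.
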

\begin{proof}
	By Proposition~\ref{prop:definition-neq}, every polymorphism of $\rel A$ preserves $\neq$. Corollary~\ref{cor:binaryinjections} thus implies that $\Pol(\rel A)$ contains a binary injective polymorphism. 
\end{proof}

\begin{remark}\todo{The structure $(\rel T,<)$ is universal for the class of at most countable linearly ordered tournaments. This implies that for any binary injective operation $f$, there exists an endomorphism $e$ of $\rel T$ such that $f':=e\circ f$ satisfies $f'(x,y)<(x',y')$ whenever $x<x'$ or $x=x',y<y'$.
Indeed, the tournament on $T^2$ defined by pulling back the relations from the image of $f\colon T^2\to T$ can be endowed with the lexicographic order, and this ordered tournament embeds into $(\rel T,<)$ by universality. This embedding yields $e$, by homogeneity of $\rel T$.}\end{remark}

\begin{lemma}\label{lem:generateProjection}
	Let $\rel A$ be a first-order reduct of $\rel T$ that is a model-complete core and not a first-order reduct of $(T;
	=)$. Suppose that $\Pol(\rel A)$ contains a binary injection. Then $\Pol(\rel A)$ contains a binary injection which  acts like a projection on $\{\leftarrow,\rightarrow\}$.
\end{lemma}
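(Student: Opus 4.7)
My plan is to canonize the given binary injection with respect to the Ramsey expansion $(\rel T, <)$ and then extract the desired projection-like behavior through a case analysis on canonical behaviors.

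First, starting from the binary injection $f \in \Pol(\rel A)$, I would apply the standard canonization arising from the Ramsey property of the ordered expansion $(\rel T, <)$ to obtain a binary injection $g \in \Pol(\rel A)$ canonical with respect to $(\rel T, <)$; a standard compactness argument (plus the preservation of $\neq$) ensures that the canonical function can be taken to be binary and injective. Then, using the Remark following Lemma~\ref{lem:tournament-binary-injective}, I would post-compose $g$ with a self-embedding $e$ of $\rel T$ to make it lex-increasing with respect to $<$. The composition stays in $\Pol(\rel A)$ because $e \in \overline{\Aut(\rel T)} \subseteq \overline{\Aut(\rel A)} = \End(\rel A)$, where the last equality uses that $\rel A$ is a model-complete core, and where the containment $\Aut(\rel T)\subseteq \Aut(\rel A)$ holds because $\rel A$ is a first-order reduct of $\rel T$.

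With $g$ now lex-increasing and canonical with respect to $(\rel T, <)$, the $<$-direction of the output of $g$ is fully determined, so the only free parameter is the arrow direction. This behavior is encoded by a small table indexed by the $(\rel T, <)$-orbits of the two input pairs. A key observation is that fixing one input coordinate of $g$ and varying the other yields a unary canonical function with respect to $(\rel T, <)$, hence a self-embedding of $(\rel T, <)$ up to arrow-reversal; this strongly constrains the entries of the table to a short finite list, described by the behavior macros in the preamble.

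Finally, I would perform a case analysis on the possible canonical arrow-behaviors of $g$. In each case, either $g$'s action on $\{\leftarrow, \rightarrow\}$ is already a projection -- in which case we are done -- or a further composition of $g$ with itself and with suitable self-embeddings of $\rel T$ (still in $\End(\rel A)$) yields a binary injection whose arrow-direction action is a projection. The main obstacle will be the systematic enumeration of canonical behaviors and the construction of suitable compositions for each: the idea is that iterating $g$ (for instance as $g(g(x, y), y)$, $g(x, g(y, x))$, or with further self-embeddings inserted) forces the arrow direction of the output to stabilize and lose any residual dependence on the $<$-orbits of the inputs, collapsing into a projection pattern.
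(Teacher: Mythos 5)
Your setup (canonize with respect to $(\rel T,<)$, post-compose to make the function lex-increasing, then analyse the induced operation on $\lra$ as a function of the order- and arrow-types of the two input pairs) is exactly the paper's route. But the decisive step --- the case analysis --- is precisely what you leave as an ``obstacle'', and the mechanism you sketch for it does not work in the hardest case. After the normalization $f\mapsto f(f(x,y),f(y,x))$ (which uses that $\rel A$ is a model-complete core and not a reduct of $(T;=)$ to force idempotent behaviour on the orbit set --- a step you do not perform), the action on $\lra$ for each fixed order-configuration of the inputs is either a projection or a semilattice operation. If $f$ acts as \emph{one and the same} semilattice operation (say, preferring $\rightarrow$) on both straight and twisted inputs, then \emph{no} composition of $f$ with itself and with self-embeddings of $\rel T$ can ever be a projection on $\lra$: self-embeddings preserve the arrow relation, so ``some input arrow is $\rightarrow$ implies the output arrow is $\rightarrow$'' is stable under all such compositions, and iteration stabilizes to the semilattice, not to a projection. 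Your claim that iterating $g$ ``forces the arrow direction \ldots to collapse into a projection pattern'' is therefore false in this case.

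The missing ingredient is the order-reversing automorphism $\alpha$ of $\rel T$ (an automorphism of the tournament, not of $(\rel T,<)$, which reverses $<$; it exists because the order is added freely). It swaps straight and twisted inputs and, in the uniform-semilattice case, $g(x,y):=f(\alpha(x),\alpha(y))$ acts as the \emph{opposite} semilattice operation on $\lra$, so that $f(x,g(x,y))$ is the first projection. The same $\alpha$ is also what reduces the ``projection on twisted inputs'' case to the straight case via $f(x,\alpha(y))$. Without $\alpha$ and without the idempotency normalization, the enumeration of behaviours you defer cannot be completed. (Your side observation about unary restrictions being self-embeddings up to arrow-reversal is fine but is really only needed for Corollary~\ref{cor:g012}, not for this lemma.)
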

\begin{proof}
Since $(\rel T,<)$ is a homogeneous Ramsey structure, and since any function that is locally interpolated by an injective function is also injective, we have
 that $\Pol(\rel A)$ contains a binary injection $f$ which is canonical with respect to $(\rel T,<)$.  Let $\alpha$ be an automorphism of $\rel T$ reversing the order $<$.
 \todo{By the remark preceding the lemma, up to composing $f$ with an endomorphism one may assume that $x<x'$ implies $f(x,y)<f(x',y')$ for all $x,x',y,y'\in T$.}

The behaviour of $f$ on injective pairs is completely determined by its behaviour on pairs  which are both increasing with respect to $<$ and its behaviour on pairs where the first type is increasing, and the second is decreasing. We call the former \emph{straight inputs}, and the latter \emph{twisted inputs}. Since $\rel A$ is a model-complete core and not a reduct of $(T;=)$, we may assume that $f$ acts in an  idempotent fashion on the types of $(\rel T,<)$; in particular, it behaves like a projection or like a semilattice operation on straight inputs. Replacing $f$ by $f(f(x,y),f(y,x))$, we retain the above properties, and the new function will also act in an idempotent manner on twisted inputs. We can thus assume that on any fixed kind of input, $f$ behaves either like a projection or like a semilattice operation.

\todom{
Suppose that $f$ behaves like the first  projection on straight inputs. Then $f(x,f(x,y))$ behaves like the first projection on all inputs, and hence on $\{\leftarrow,\rightarrow\}$, and we are done. Similarly, if $f$ behaves like the second projection on straight inputs, then $f(f(x,y),x)$ acts like the first projection on $\{\leftarrow,\rightarrow\}$.
}

\todom{
If $f$ behaves like a projection on twisted inputs, then $f(x,\alpha(y))$, for $\alpha$ as above,  behaves like a projection on straight inputs, a case we have already treated.
} \todom{It thus remains to consider the case where $f$ behaves like a semilattice operation on straight inputs as well as on twisted inputs.
}

\todom{
If $f$ behaves like one semilattice operation on straight inputs, and like the opposite semilattice operation on twisted inputs, then $f(x,f(x,y))$  behaves like a projection on twisted inputs, a case we have already treated.
}

\todom{Finally, assume that $f$ behaves like one and the same  semilattice operation on all inputs. Then $g(x,y):=f(\alpha(x),\alpha(y))$, for $\alpha$ as above, behaves like the opposite semilattice operation on all inputs, and $f(x,g(x,y))$ behaves like the first projection on all inputs.
}
\end{proof}

\begin{corollary}\label{cor:g012}
    Let $\rel A$ be a first-order reduct of\/ $\rel T$ that is a model-complete core.
	If\/ $\Pol(\rel A)$ contains an essential function, then $\Pol(\rel A)$ contains \todo{one of the operations $g_0,g_1,g_2$ whose behaviours are described in Figure~\ref{fig:canonical}}.
\end{corollary}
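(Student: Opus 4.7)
The plan is to combine the results of Lemmas~\ref{lem:tournament-binary-injective} and~\ref{lem:generateProjection} with a canonization argument using the Ramsey expansion $(\rel T,<)$, and then reduce the finitely many possible canonical behaviours to the three operations listed in Figure~\ref{fig:canonical}.

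First I would handle the degenerate case where $\rel A$ is a first-order reduct of $(T;=)$. Here, $\Aut(\rel A)$ contains the full symmetric group on $T$, and Corollary~\ref{cor:binaryinjections} (together with Proposition~\ref{prop:definition-neq}) provides a binary injection in $\Pol(\rel A)$; its closure under composition with arbitrary permutations of $T$ then yields each of $g_0,g_1,g_2$ directly.

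Otherwise, Lemma~\ref{lem:tournament-binary-injective} gives a binary injection in $\Pol(\rel A)$, and Lemma~\ref{lem:generateProjection} then produces a binary injection $f\in\Pol(\rel A)$ acting like a projection on $\{\leftarrow,\rightarrow\}$; by symmetry we may assume it is the first projection. Since $(\rel T,<)$ is a homogeneous Ramsey structure in a finite language, $f$ locally interpolates modulo $\Aut(\rel T,<)$ a function $f'$ that is canonical with respect to $(\rel T,<)$. Local interpolation preserves injectivity as well as the behaviour on $\{\leftarrow,\rightarrow\}$, so $f'\in\Pol(\rel A)$ retains both properties. Moreover, using the remark following Lemma~\ref{lem:tournament-binary-injective} (applied to $f'$ and composed coordinatewise), we may further assume that $f'$ is strictly increasing with respect to the lexicographic order on $T^2$.

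The main obstacle, and the bulk of the argument, is then a finite case analysis. Under the above constraints, the behaviour of $f'$ on injective inputs is completely determined by its action on those orbits of pairs of pairs under $\Aut(\rel T,<)$ in which one coordinate is repeated (the four remaining entries of the behaviour table, each taking a value in $\{\rightarrow,\leftarrow\}$), giving sixteen candidate behaviours a priori. I would then normalise these possibilities by composing $f'$ with itself, with the order-reversing automorphism of $\rel T$ used in the proof of Lemma~\ref{lem:generateProjection}, and with the first projection; since $\rel A$ is a model-complete core, such compositions remain available. I expect, as in the proof of Lemma~\ref{lem:generateProjection}, that after these normalisations only the three behaviours $g_0,g_1,g_2$ of Figure~\ref{fig:canonical} survive, completing the proof.
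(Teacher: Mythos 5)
Your proposal is correct and follows essentially the same route as the paper's proof: a binary injection from Lemma~\ref{lem:tournament-binary-injective}, projection behaviour on $\lra$ from Lemma~\ref{lem:generateProjection}, canonization with respect to $(\rel T,<)$ together with the lexicographic normalization from the remark, and a reduction of the sixteen candidate behaviours to three via composition with the order-reversing automorphism and self-composition --- precisely the moves $\alpha f(\alpha(x),y)$, $f(f(x,y),y)$ and $f(x,f(y,x))$ that the paper uses. The one step you assert rather than verify is the sixteen-to-three case check itself, which is where the substance of the corollary lies beyond the two cited lemmas; it does succeed with exactly the toolkit you name.
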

\begin{proof}
	By Lemma~\ref{lem:tournament-binary-injective}, there exists a binary injective operation $f$ in $\Pol(\rel A)$, and by Lemma~\ref{lem:generateProjection}, we can assume it acts like a projection on $\lra$. We can also assume that $f$ is canonical with respect to $(\rel T,<)$, 
	\todom{hence it acts on the set $\{=, (\rightarrow\cap <),(\rightarrow\cap >),(\leftarrow\cap <),(\leftarrow\cap >)\}$ of orbits of  pairs under the action of $\Aut(\rel T,<)$. Let $\alpha$ be an automorphism of $\rel T$ reversing the order $<$.
	Up to permuting arguments, $f$ can be assumed to behave like the first projection on $\lra$. By the remark above, up to composing $f$ with an endomorphism of $\rel T$ we may assume that $f(x,y)<f(x',y')$ if and only if $x<x'$ or $x=x', y<y'$, for all $x,x',y,y'\in T$.}
	
	\todo{
	Call $f$ \emph{$x$-arrow-dominated} if $f((\rightarrow\cap <),=)\neq f((\leftarrow\cap <),=)$, and \emph{$x$-order-dominated} if $f((\rightarrow\cap <),=)=f((\leftarrow\cap <),=)$.
	We define $y$-arrow-dominated and $y$-order-dominated similarly.
	}
	
	\todo{If $f$ is $x$-order-dominated and $f((\rightarrow\cap<),=)=(\leftarrow\cap <)$,
	then $f'(x,y):=\alpha f(\alpha(x),y)$ is $x$-order-dominated and $f'((\rightarrow\cap<),=)=(\rightarrow\cap <)$.
	If $f$ is $x$-arrow-dominated and $f((\rightarrow\cap<),=)=(\leftarrow\cap<)$, then $f'(x,y):=f(f(x,y),y)$ is $x$-arrow-dominated and $f'((\rightarrow\cap<),=)=(\rightarrow\cap <)$.
	Doing similar transformations in the second argument if need be, one can now assume that $f((\rightarrow\cap<),=)=(\rightarrow\cap<)=f(=,(\rightarrow\cap <))$.}
	
	\todo{If $f$ is $x$-order-/$y$-arrow-dominated, then $f'(x,y):=f(x,f(y,x))$ is $x$-arrow-/$y$-order-dominated, still behaves like the first projection on $\lra$, and $f'((\rightarrow\cap <),=)=f'(=,(\rightarrow\cap <))=(\rightarrow\cap <)$.}
	
	\todo{Thus, we are left with the three cases of $f$ being $x$-order-/$y$-arrow-dominated, $x$-arrow-/$y$-order-dominated, and $x$-arrow-/$y$-arrow-dominated, whose behaviours are the ones described in Figure~\ref{fig:canonical}.}
\end{proof}

\begin{figure}
\begin{center}
\behavior{\rightarrow}{\rightarrow}{\rightarrow}{\rightarrow}{\leftarrow}{\leftarrow}
\behavior{\rightarrow}{\rightarrow}{\rightarrow}{\leftarrow}{\leftarrow}{\leftarrow}
\behaviorunordered{\rightarrow}{\leftarrow}{\rightarrow}{\leftarrow}
\end{center}
\caption{\todo{The behaviours of the three canonical operations $g_0, g_1,g_2$}. Note that $g_2$ is canonical with respect to $\rel T$. All three functions behave like $p_1$ on $\{\rightarrow,\leftarrow\}$.}
\label{fig:canonical}
\end{figure}

\subsubsection{The case when $\CATT$ is non-trivial}

The proof of the following lemma is in part due to L.~Barto.
\begin{lemma}\label{lem:generating-g2}
    Let $\sA$ be a first-order reduct of $\sT$ that is a model-complete core. Assume that $\CAT\actson\{\leftarrow,\rightarrow\}$ is equationally  non-trivial. Then $\Pol(\sA)$  contains $g_2$.
\end{lemma}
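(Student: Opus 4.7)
The plan is to proceed in two steps. By Corollary~\ref{cor:g012}, $\Pol(\rel A)$ contains one of the three canonical binary injective operations $g_0, g_1, g_2$. If $g_2\in\Pol(\rel A)$, the conclusion holds, so assume that $g_0$ or $g_1$ lies in $\Pol(\rel A)$ but not $g_2$. I will use the hypothesis that $\CAT\actson\lra$ is equationally non-trivial to upgrade $g_0$ (or $g_1$, the two cases being symmetric) to a $\rel T$-canonical binary injection, which by Corollary~\ref{cor:g012} must be~$g_2$.

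Since $\rel A$ is a model-complete core that is not a first-order reduct of $(T;=)$, Lemma~\ref{lem:cores} guarantees that every endomorphism of $\rel A$ meets every orbit of $\Aut(\rel T,<)$; in particular, the only unary contributions to $\CAT\actson\lra$ come from the identity and possibly an orbit-swap $\neg$ given by an automorphism of $\rel A$ reversing arrows. The clone on $\{0,1\}$ generated by $\{\mathrm{id},\neg\}$ together with projections maps homomorphically to $\Projs$ via $\neg\mapsto\mathrm{id}$ and is therefore equationally trivial. The non-triviality hypothesis thus forces $\CAT$ to contain some $h$ with essential action $\bar h$ on $\lra$. By the Ramsey property of $(\rel T,<)$ and since the action on $\lra$ is invariant under local interpolation modulo $\Aut(\rel T,<)$, I may moreover assume that $h$ is canonical with respect to $(\rel T,<)$.

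The final step is to combine $h$ with $g_0$, together with an automorphism $\alpha$ of $\rel T$ reversing~$<$, to produce a binary polymorphism of $\rel A$ that is both canonical with respect to $\rel T$ and acts as $p_1$ on $\lra$. The essentiality of $\bar h$ is the key ingredient: it guarantees that iterated compositions along the lines of $h(g_0(x,y),g_0(\alpha(x),\alpha(y)))$ yield outputs whose $\rel T$-orbit is insensitive to the $<$-direction of the inputs, while preserving injectivity and the $p_1$-behavior on~$\lra$. By Corollary~\ref{cor:g012}, the resulting operation is $g_2$.

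I expect the main obstacle to be the concrete verification in the last step. By Post's classification of clones on $\{0,1\}$, the essential operation $\bar h$ falls (up to composition with $\neg$) into one of a small number of types --- semilattice, exclusive-or, or a higher-arity essential operation such as majority --- and the final composition must be constructed in each case. The semilattice case should be the most direct; the $\oplus$-type cases may require iterating the composition in order to absorb the asymmetry between the $<$- and $>$-orbits. In every case the argument is reduced to a finite verification on the five orbits of pairs under $\Aut(\rel T,<)$.
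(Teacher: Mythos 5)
Your opening move (Corollary~\ref{cor:g012}, then upgrade $g_0$ or $g_1$ to $g_2$ using the non-trivial action on $\lra$) is the same as the paper's, but the step you defer --- producing a $\rel T$-canonical binary injection from $g_0$ (or $g_1$), an essential $h$, and an order-reversing automorphism --- is the entire content of the lemma, and your assessment that it ``is reduced to a finite verification on the five orbits of pairs'' misjudges where the difficulty lies. The defect of $g_0,g_1$ is that on inputs where one argument pair is constant, the output arrow is governed by the order $<$ rather than by the arrow of the other pair; a composition such as $h(g_0(x,y),g_0(\alpha x,\alpha y))$ can repair this on one degenerate row only at the cost of breaking the $p_1$-behaviour or the other degenerate row, so one must combine several such terms through the essential ternary action, and whether and how this can be done depends delicately on whether that action is majority or minority. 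The paper does not attempt a fixed term at all: it proves a combinatorial claim (due in part to Barto) that every finite tournament is the majority --- respectively the parity, in the minority case --- of a family of linear orders realized by polymorphisms $f_n$ of \emph{unbounded} arity $n$ depending on the size of the tournament, proved by induction on that size with separate encodings for the majority and minority cases; this yields, for each finite $S\subseteq T$, an operation behaving like $g_2$ on $S$, and $g_2$ itself is only obtained in the limit by a compactness argument. None of this (the tournament-encoding induction, the case split majority/minority, the compactness step) is present or even sketched in your proposal, and the minority case in particular resists the kind of bounded composition you envisage.

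Two further inaccuracies. First, your Post case analysis is off: the semilattice case, which you call ``the most direct'', cannot occur, because for a binary $f\in\CAT$ the pairs $(f(a,c),f(b,d))$ and $(f(b,d),f(a,c))$ are reverses of one another, so $f(\leftarrow,\rightarrow)={}\leftarrow$ forces $f(\rightarrow,\leftarrow)={}\rightarrow$ and a commutative binary action on $\lra$ is impossible; this is exactly why the paper passes directly to a ternary majority or minority. Second, a binary injection canonical with respect to $\rel T$ and acting as $p_1$ on $\lra$ is not automatically $g_2$: Corollary~\ref{cor:g012} produces \emph{one} of $g_0,g_1,g_2$ after a sequence of normalizations, and you would still need to pin down the behaviour of your candidate on the pairs involving the orbit $=$ before identifying it with $g_2$.
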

\begin{proof} By Corollary~\ref{cor:g012}, we have that $\Pol(\sA)$ contains one \todo{of the operations $g_0,g_1,g_2$ described} in Figure~\ref{fig:canonical}; we may assume that it contains $g\in\{g_0,g_1\}$, for otherwise we are done. 
    By Post's classification of the function clones on a two-element domain~\cite{Post}, the assumption of  $\CAT\actson\{\leftarrow,\rightarrow\}$ satisfying non-trivial identities implies that there exists a ternary $f\in \CAT$ which acts like a majority or a minority operation on $\{\leftarrow,\rightarrow\}$, the action of a binary function $f$ of $\CAT$ as a semilattice operation on $\lra$ being impossible since $f(\leftarrow,\rightarrow)={}\leftarrow$ implies $f(\rightarrow,\leftarrow)={}\rightarrow$.

    We claim that for every finite tournament $\rel B$,
    there exist finite linear orders $\rel O_1,\dots,\rel O_n$ with the same domain as $\rel B$ and an $n$-ary polymorphism $f_n$ of $\rel A$ such that $f_n(\rel O_1,\dots,\rel O_n)=\rel B$, and $n$ only depends on the size of $\rel B$.
    \todo{We first explain how to obtain $g_2$ from this, and shall come back to the proof of the claim in the next paragraph.
    Let $\rel S$ be a finite substructure of $\rel T$.
    From the claim, we obtain linear orders $\rel O_1,\dots,\rel O_n$ on $S$, which give automorphisms $\alpha_1,\dots,\alpha_n$ of $\rel T$ such that the structure induced by $\alpha_i(S)$ in $(\rel T,<)$ is ordered like $\mathbb O_i$, for all $i\in\{1,\ldots,n\}$.
    Then $(x,y)\mapsto f_n(g(\alpha_1 x,\alpha_1y),\dots,g(\alpha_nx,\alpha_ny))$ is an operation that behaves like $g_2$ on $S$.
    By a standard compactness argument using the oligomorphicity of $\Aut(\sT,<)$, we obtain that $\Pol(\rel A)$ contains $g_2$.}

    We now prove the claim.
    Assume first that $f$ behaves like majority.
    By Post's classification, $f$ generates the $n$-ary majority operation $f_n$ for all $n\geq1$.
    The proof goes by induction on the size of $\rel B$, the case $|\rel B|\leq 2$ being trivial.
    For every $b\in\rel B$, one gets linear orders $(\rel O^b_i)_{1\leq i\leq n}$ encoding $\rel B\setminus\{b\}$. Define $\rel P^b_i$ to be $\rel O^b_i$ where one adds $b$ as a minimal element.
    We claim that for every arc $(b,c)$ in $\rel B$, a majority of all these linear orders have $b<c$. Note that all the linear orders $\rel P^b_i$ are correct, all the linear orders $\rel P^c_i$ are wrong, so they cancel each other out. Now by the induction hypothesis, for every $d\not\in\{b,c\}$ a majority of the linear orders $\rel P^d_i$ are so that $b<c$, so that overall we get the desired result.
    
    Now assume that $f$ behaves like minority.
    Again, it is clear that $f$ generates for all odd $n\geq 1$ an operation $f_n$ that behaves like $x_1+\dots+x_n\bmod 2$ on $\{\leftarrow,\rightarrow\}$.
    We follow the same strategy as above, encoding $\rel B$ by an odd number $n$ of linear orders so that $f_n(\dots)=\rel B$.
    Suppose that $|\rel B|$ is odd. We do the same as before, except that for all $b\in \rel B$ we apply the induction hypothesis to the tournament obtained by reversing all the arcs in $B\setminus\{b\}$.
    Now one sees that every arc $(b,c)$ is such that $b<c$ in an odd number of the linear orders $(\rel P^d_i)_{d,i}$.
    Suppose now that $|\rel B|$ is even. We fix a special vertex $b\in B$, and proceed as above by induction with the tournaments obtained by reversing all the arcs of $\rel B\setminus\{b,c\}$ for all $c\in B\setminus\{b\}$.
    For every such $c$, we have linear orders $\rel O^c_i$.
    We let $\rel P^c_i$ be obtained by adding $b,c$ as minimal elements, and where the order between $b$ and $c$ is given by $\rel B$.
    It is clear that the collection of the linear orders $(\rel P^c_i)_{c,i}$ satisfies the required property.
\end{proof}

\begin{proposition}\label{prop:CATT-nontrivial}
Suppose that $\sA$ is a first-order reduct of $\sT$ that is a model-complete core. The following are equivalent.
\begin{itemize}
\item[(1)] $\CATT$ contains a ternary operation that is pseudo-cyclic modulo $\overline{\Aut(\sT)}$; 
    \item[(2)] $\CATT$ is equationally non-trivial;
    \item[(3)] $\CAT$ is equationally non-trivial;
    \item[(4)] $\CAT\actson\lra$ is equationally non-trivial.
\end{itemize}
\end{proposition}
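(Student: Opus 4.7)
The plan is to prove the cyclic chain $(1) \Rightarrow (2) \Rightarrow (3) \Rightarrow (4) \Rightarrow (1)$, where only the last implication requires substantial work.

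For $(1) \Rightarrow (2)$, I would observe that the unary witnesses $e, f \in \overline{\Aut(\sT)}$ of the pseudo-cyclic identity $e \circ c(x,y,z) = f \circ c(y,z,x)$ themselves lie in $\CATT$: every element of $\overline{\Aut(\sT)}$ is an endomorphism of $\sT$ (hence of $\sA$, which is a first-order reduct of $\sT$) that preserves $\sT$-orbit equivalence of tuples. The pseudo-cyclic identity is therefore a genuine identity in $\CATT$, and it fails in $\Projs$: the unary functions would have to be the identity projection and $c$ a ternary projection $p_i$, reducing the identity to $p_i(x,y,z) = p_i(y,z,x)$, which fails for every $i \in \{1,2,3\}$. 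The implication $(2) \Rightarrow (3)$ is immediate from $\CATT \subseteq \CAT$. For $(3) \Rightarrow (4)$, the action map $\CAT \to \CAT \actson \lra$ is a surjective clone homomorphism, so a clone homomorphism $\CAT \actson \lra \to \Projs$ would compose with it to produce one $\CAT \to \Projs$, contradicting~(3).

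The substantive direction is $(4) \Rightarrow (1)$. Post's classification of two-element clones, combined with the argument ruling out semilattice behaviour that appears at the start of the proof of Lemma~\ref{lem:generating-g2}, provides a ternary $f \in \CAT$ acting as a majority or minority operation on $\lra$; both of these are cyclic in their three arguments. Lemma~\ref{lem:generating-g2} additionally yields $g_2 \in \Pol(\sA) \cap \CATT$. I would then build a ternary $c \in \CATT$ whose action on $\sT$-orbits of $m$-tuples is cyclic for every $m \geq 1$, via a compactness argument paralleling the one in Lemma~\ref{lem:generating-g2}: for each finite substructure $\sS$ of $\sT$, encode a target ternary cyclic-on-orbits behaviour on triples from $\sS$ using a suitable tuple of linear orderings of $\sS$, and aggregate canonical-with-respect-to-$(\sT,<)$ ternary building blocks (built from $g_2$ and from the canonical operations of Corollary~\ref{cor:g012}) via the $n$-ary extension $f_n$ of $f$. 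A limit in $\Pol(\sA)$, which is closed in the pointwise topology, yields such a~$c$. A final compactness argument using oligomorphicity of $\Aut(\sT)$ then produces $e \in \overline{\Aut(\sT)}$ with $e \circ c(y,z,x) = c(x,y,z)$ for all $x, y, z \in T$, which is the desired pseudo-cyclic identity modulo $\overline{\Aut(\sT)}$.

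The main obstacle will be carrying out the ternary aggregation step. In Lemma~\ref{lem:generating-g2} the target is a specific binary behaviour table, whereas here the target is a cyclic-invariant ternary behaviour table that must be realized simultaneously on every $\Aut(\sT)$-orbit of triples --- both injective and non-injective, and across the various tournament patterns on three points. The key point is to engineer the linear-order encoding and the combination of canonical-with-respect-to-$(\sT,<)$ ingredients so that the cyclicity of $f_n$ on $\lra$, propagated through the aggregation, forces cyclic behaviour of the aggregated operation on all $\sT$-orbits rather than only on $\lra$.
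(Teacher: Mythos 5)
Your chain $(1)\Rightarrow(2)\Rightarrow(3)\Rightarrow(4)$ is correct and matches the paper, which dismisses these implications as trivial. For $(4)\Rightarrow(1)$ you also assemble the same two ingredients the paper uses: the canonical binary injection $g_2$ from Lemma~\ref{lem:generating-g2}, and a ternary $f\in\CAT$ acting cyclically on $\lra$ (majority and minority on a two-element set are indeed cyclic; the paper obtains the same operation via Post's classification and the cyclic terms theorem). The genuine gap is the final step, which you yourself flag as ``the main obstacle'' and do not carry out. Your plan --- to redo the linear-order encoding of Lemma~\ref{lem:generating-g2} with a ternary, cyclic-invariant target behaviour realized simultaneously on all $\Aut(\sT)$-orbits of $m$-tuples --- is both unexecuted and of doubtful feasibility: you would need to specify and then realize a coherent cyclic-invariant assignment of output orbits for every triple of orbits of $m$-tuples and every $m$, a global coherence problem of a completely different nature from the single binary behaviour table targeted in Lemma~\ref{lem:generating-g2}. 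You also conflate the role of the $n$-ary aggregators $f_n$ (used \emph{inside} the proof of Lemma~\ref{lem:generating-g2} to build $g_2$ from linear orders) with the role of the ternary cyclic $f$ obtained from Post's classification.

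The paper closes the argument with a one-line composition that makes the whole encoding unnecessary: set $h(x,y,z):=g_2(x,g_2(y,z))$, a ternary injection canonical with respect to $\sT$, and take $c(x,y,z):=f(h(x,y,z),h(y,z,x),h(z,x,y))$. Since $h$ always outputs injective tuples, since $\sT$ is homogeneous in a binary language (so the action of $f\in\CAT$ on orbits of injective tuples is determined by its action on $\lra$), and since that action is cyclic, a cyclic permutation of $(x,y,z)$ cyclically permutes the three arguments of $f$ and therefore does not change the $\Aut(\sT)$-orbit of the output; this yields both $c\in\CATT$ and pseudo-cyclicity modulo $\overline{\Aut(\sT)}$ after the compactness argument you already describe. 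Replace your encoding plan by this symmetrization trick and the proof is complete.
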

\begin{proof}
The implications from~(1) to~(2), from~(2) to~(3), and from~(3) to~(4) are trivial. 

We show the implication from~(4) to~(1). By Lemma~\ref{lem:generating-g2},  $\Pol(\sA)$ contains $g_2$. Moreover, the clone of idempotent functions of $\CAT\actson\lra$ satisfying non-trivial identities by~\cite{Post}, we have that $\CAT$ contains a ternary function $f$ which acts as a cyclic operation on $\lra$, by~\cite{Cyclic}. Setting $h(x,y,z):=g_2(x,g_2(y,z))$, we then have that the function $f(h(x,y,z),h(y,z,x),h(z,x,y))$  is an element of $\CATT$ which is pseudo-cyclic modulo $\overline{\Aut(\sT)}$, proving~(1).
\end{proof}

\subsubsection{From $\CAT$ to $\CAA$}

\begin{lemma}\label{lem:CATinCAA}
   Let $\sA$ be a first-order reduct of\/ $\sT$ that is a model-complete core.
   If\/ $\CAT$ is equationally trivial, then $\CAT\subseteq\CAA$. 
\end{lemma}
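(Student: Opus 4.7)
The plan is to use Post's classification to obtain strong structural information about the action of $\CAT$ on $\lra$, and then exploit the model-complete-core property of $\sA$ to transfer canonical behaviour on $\Aut(\sT)$-orbits to the coarser $\Aut(\sA)$-orbits.

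First, I would invoke Proposition~\ref{prop:CATT-nontrivial} to reformulate the hypothesis as: the action $\CAT\actson\lra$ is equationally trivial. By Post's classification of clones on a two-element set, every operation in this action must then be essentially unary (operations satisfying a majority, minority, or semilattice identity all yield non-trivial identities) and, moreover, cannot be constant (a constant $c$ would satisfy the non-trivial identity $c(x)=c(y)$). Hence, for every $f\in\CAT$ of arity $k$, there exist $i\in\{1,\dots,k\}$ and a permutation $h\in\{\mathrm{id},\mathrm{swap}\}$ of $\lra$ such that the induced action $\tilde f$ on $\lra$ satisfies $\tilde f(x_1,\dots,x_k)=h(x_i)$.

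The crucial observation is that $\Aut(\sT)$-orbits of injective $n$-tuples are completely determined by their tournament type, i.e., by the $\lra$-value on every pair of coordinates, so the coordinatewise action of $\tilde f$ makes the $\Aut(\sT)$-orbit of $f(\bar t_1,\dots,\bar t_k)$ depend only on $\bar t_i$. In particular, setting $g(x):=f(x,\dots,x)$, the tuples $f(\bar t_1,\dots,\bar t_k)$ and $g(\bar t_i)$ always lie in the same $\Aut(\sT)$-orbit on injective tuples (both being injective since $f$ preserves $\neq$, by Proposition~\ref{prop:definition-neq} or trivially if $\sA$ is a first-order reduct of $(T;=)$).

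I would then take injective tuples $\bar t_1,\dots,\bar t_k,\bar s_1,\dots,\bar s_k$ with $\bar t_j\sim_{\Aut(\sA)}\bar s_j$ for every $j$, fix $\alpha\in\Aut(\sA)$ with $\alpha(\bar t_i)=\bar s_i$, and leverage the model-complete-core assumption: both $g$ and $g\circ\alpha$ are endomorphisms of $\sA$ and therefore lie in $\overline{\Aut(\sA)}$, so there exist $\gamma,\beta\in\Aut(\sA)$ that agree with $g$ and $g\circ\alpha$ respectively on the finitely many entries of $\bar t_i$. Then $\beta\gamma^{-1}\in\Aut(\sA)$ maps $g(\bar t_i)$ to $g(\bar s_i)$, producing the chain
\[
f(\bar t_1,\dots,\bar t_k)\ \sim_{\Aut(\sT)}\ g(\bar t_i)\ \sim_{\Aut(\sA)}\ g(\bar s_i)\ \sim_{\Aut(\sT)}\ f(\bar s_1,\dots,\bar s_k),
\]
which, combined with $\Aut(\sT)\subseteq\Aut(\sA)$, yields $f\in\CAA$. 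The step I expect to be the main obstacle is the ``arrow-reversing'' case $h=\mathrm{swap}$, where one must show that the $\Aut(\sA)$-merging of two $\Aut(\sT)$-orbits of injective tuples persists after globally flipping all arrows; the uniform treatment via the unary reduction $g(x)=f(x,\dots,x)$ avoids a case distinction by routing everything through the single compactness argument above, whose validity is precisely the content of the model-complete-core hypothesis.
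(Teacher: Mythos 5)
Your proposal is correct and follows essentially the same route as the paper: Post's classification forces the action on $\lra$ to be essentially unary and non-constant, homogeneity of $\sT$ in a binary language then determines the action on $\Aut(\sT)$-orbits of injective tuples from the $i$-th argument alone, and the model-complete-core property absorbs the possible arrow reversal into $\Aut(\sA)$-orbit equivalence. Your uniform treatment of the identity and swap cases via $g(x)=f(x,\dots,x)$ and the compactness/mc-core argument is a slightly more explicit rendering of the paper's terse remark that ``$\Aut(\sA)$ contains a function which flips the direction of all arrows,'' but it is the same idea.
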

\begin{proof}
    The action $\CAT\actson \lra$ is by essentially unary functions by Post's classification~\cite{Post}. 
    Since $\sT$ is homogeneous in a binary language, the action $\CAT\actson\lra$  determines the action of $\CAT$ on  $\Aut(\sT)$-orbits of injective tuples of arbitrary fixed  length. 
    Thus, for every $f\in\CAT$ of arity $k\geq 1$, there exists $1\leq i\leq k$ such that either the $\Aut(\sT)$-orbit of $f(a_1,\dots,a_k)$ is equal to the $\Aut(\sT)$-orbit of $a_i$ for all  injective tuples $a_1,\dots,a_k$ of the same length, or the $\Aut(\sT)$-orbit of $f(a_1,\dots,a_k)$ is equal to the $\Aut(\sT)$-orbit of $a_i'$ for all  injective tuples $a_1,\dots,a_k$ of the same length, where $a_i'$ is any tuple of that length whose edges point in the opposite direction of the edges of $a_i$. In the first case, $f\in\CAA$ is obvious since $\Aut(\sT)$-orbits are a refinement of $\Aut(\sA)$-orbits.  In the second case,  $\Aut(\sA)$ contains a function which flips the direction of all arrows, and the same argument works.
\end{proof}

\begin{lemma}\label{lem:canonization-without-order}
    Let $\rel A$ be a first-order reduct of $\rel T$ that is a model-complete core and not a first-order reduct of $(T;=)$.
    Suppose that $\Pol(\rel A)$ contains 
    \begin{itemize}
        \item a binary injection which  acts like a projection on $\lra$, and 
        \item a function that is canonical with respect to
    $(\rel T,<)$ but is not an element of $\CAT$.
    \end{itemize}
    Then $\CAT\actson\lra$ contains a   majority operation.
\end{lemma}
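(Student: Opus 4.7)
The strategy is to exploit the gap between $f$'s canonicity with respect to $(\rel T,<)$ and its failure to be in $\CAT$: this gap means $f$'s arrow-behavior on injective pairs depends non-trivially on the $<$-orientation of the inputs, which we can harness using the binary injection $g$ (whose action on the order types is predictable by the remark preceding Corollary~\ref{cor:g012}) together with an automorphism $\alpha\in\Aut(\rel T)$ reversing the linear order $<$.

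First, I would reduce to the case where $f$ is binary. Using a standard one-coordinate-at-a-time argument and substituting appropriate fixed tuples for all but two coordinates of $f$, I can isolate the non-canonicity in two variables while retaining canonicity with respect to $(\rel T,<)$; composing with the Ramsey-canonization step from~\cite{BPT-decidability-of-definability} preserves this after the substitution. Let $A,B,C,D$ denote the four $\Aut(\rel T,<)$-orbits of injective pairs, namely $\rightarrow\cap<$, $\rightarrow\cap>$, $\leftarrow\cap<$, $\leftarrow\cap>$. Then $f$ induces a function $\{A,B,C,D\}^2\to\{A,B,C,D\}$, and $f\notin\CAT$ exactly says that at least one of the four \emph{$\lra$-cells} of this table (the preimages in $\{A,B,C,D\}^2$ of a given pair in $\lra^2$) contains outputs of both arrow types.

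Second, I would construct a ternary polymorphism of $\rel A$ whose $\lra$-action is majority by composing $f$, $g$, and $\alpha$. The key fact is that $g$, being a binary injection whose output's order type is determined by its first argument (and whose arrow type is that of its first argument as well, since $g$ acts as first projection on $\lra$), acts as a first projection on $\{A,B,C,D\}$. Consequently, by wrapping the arguments of $f$ inside expressions built from $g$ and selective applications of $\alpha$, I can force the $(\rel T,<)$-orbit of each input to the outer $f$ to be a chosen one of $A,B,C,D$ as a function of the $\lra$-types of three variables $x,y,z$. After normalizing (via $\alpha$ and argument swapping) so that the multi-valued $\lra$-cell of $f$ is $(\rightarrow,\rightarrow)$, the construction is designed so that the four $(\rel T,<)$-configurations inside that cell are triggered precisely in those majority patterns of $(x,y,z)$ for which we need a specific arrow output, yielding majority behavior on $\lra$ overall. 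A short case analysis on which of $f$'s four $\lra$-cells is multi-valued (and how the $(\rel T,<)$-preimages are distributed within it) then gives an explicit ternary formula in $f,g,\alpha$; if in some case the construction already forces $g_2\in\Pol(\rel A)$, the argument of Lemma~\ref{lem:generating-g2} together with Post's classification provides the majority directly.

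Finally, I would Ramsey-canonize the resulting ternary operation to make it canonical with respect to $(\rel T,<)$ without changing its $\lra$-action; this operation lies in $\CAT$, because its $\lra$-action on injective pairs is well-defined by construction, and because $\rel T$ is homogeneous in a binary language, canonicity of the $\lra$-action on pairs extends to canonicity with respect to $\rel T$ on injective tuples of any length, by the argument already used in Lemma~\ref{lem:CATinCAA}. The main obstacle is the second step: each shape of $f$'s non-canonicity requires a tailored composition, and one must check that the flexibility afforded by $\alpha$, by swapping arguments, and by the three possible behaviors $g_0,g_1,g_2$ from Corollary~\ref{cor:g012} is sufficient to handle all cases uniformly.
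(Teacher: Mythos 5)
Your second step contains a gap that I believe is fatal to the term-composition strategy. Both $g$ and $\alpha$ respect the product decomposition of the four $\Aut(\rel T,<)$-orbits $A,B,C,D$ of injective pairs into (order type)$\,\times\,$(arrow type): $g$ is the first projection on both components, and $\alpha$ flips the order component while fixing the arrow component. Consequently, for any sub-term built from $g$ and $\alpha$, the order type of its value is a function of the order types of $x,y,z$ alone and can never depend on their arrow types. The multi-valued $\lra$-cell of $f$ can therefore only be ``triggered'' according to the order types of the variables, never according to their arrow types --- which is the opposite of what your construction requires. To see that no term in $f,g,\alpha$ can work in general, suppose $f$ is binary and acts on arrow types as the first projection whenever its first argument is increasing and as the second projection whenever its first argument is decreasing; such an $f$ is canonical with respect to $(\rel T,<)$ and not in $\CAT$, yet once the order types of the variables are fixed, every term in $f,g,\alpha$ acts on $\lra$ as a composition of projections (each occurrence of $f$ selects a projection determined by the now-fixed order type of its first input sub-term), hence as a projection --- never as a majority. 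Your proposal gives no argument excluding this behaviour of $f$. The desired majority is genuinely not a term in $f$, $g$ and order-preserving/reversing automorphisms; it lies only in the local closure of $\Pol(\rel A)$.

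The paper supplies exactly the missing mechanism. It first extracts from the non-canonicity a ``collision'': orbits $o_1,\dots,o_n$ (all increasing) and $q_1,\dots,q_n$ with $o_i,q_i$ of opposite arrow types, such that $f(o_1,\dots,o_n)$ and $f(q_1,\dots,q_n)$ lie in the same $\Aut(\rel T)$-orbit, say inside $\rightarrow$. This collision is then used as an error-correcting device in an induction on $m$: for every triple $a,b,c\in(\neq)^m$ with at most one $\leftarrow$ per coordinate, one produces \emph{some} polymorphism $h$ (depending on $a,b,c$) with $h(a,b,c)\in(\rightarrow)^m$, by applying $f(\alpha_1 u^1,\dots,\alpha_n u^n)$ where each $u^k\in\{a,b\}$ and each $\alpha_k\in\Aut(\rel T)$ is an \emph{arbitrary} automorphism chosen adaptively (after equalizing kernels with $g$ and choosing compatible linear orders) so that the single possibly-wrong coordinate realizes the pattern $q_k\times o_k$. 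These $\alpha_k$ are neither order-preserving nor order-reversing, so they do not act on $\{A,B,C,D\}$ and cannot appear in a canonical term; only the limit of all these $h$'s, obtained by compactness, acts as majority on $\lra$. Two smaller points: the paper never reduces $f$ to binary arity (the induction uses all $n$ coordinates of $f$ at once), and your fallback via Lemma~\ref{lem:generating-g2} is circular, since that lemma presupposes that $\CAT\actson\lra$ is equationally non-trivial, and $g_2$ acts on $\lra$ only as a projection in any case.
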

\begin{proof}
Let $f\in\CA$ be a function that is  canonical with respect to $(\rel T,<)$ but not an element of $\CAT$, and denote its arity by $n$. 

Let $S:=\{(\rightarrow\cap <),(\rightarrow\cap >),(\leftarrow\cap <),(\leftarrow\cap >)\}$ be the set of orbits of injective pairs under the action of $\Aut(\rel T,<)$. 
The map $\hat f$  defined by $x\mapsto f(x,\dots,x)$ acts  surjectively on $S$, as otherwise $\rel A$ would be a \todom{first-order} reduct of $(T;=)$ \todo{by Lemma~\ref{lem:cores}}.  It follows that it acts bijectively thereon, \todom{and that $(\hat f)^{12}$ acts as the identity on $S$ since $|S|=4$.}  Hence replacing $f$ by \todom{$(\hat f)^{11}\circ f$}  we can assume that $\hat f$ induces the identity on $S$, i.e., $f$ acts idempotently on $S$.

We claim that there exist $o_1,\dots,o_n\in\{(\rightarrow\cap<),(\leftarrow\cap <)\}$ and $q_1,\dots,q_n\in S$ such that $o_i$ and $q_i$ belong to distinct  $\Aut(\sT)$-orbits for all $1\leq i\leq n$ and such that $f(o_1,\ldots,o_n)$ and $f(q_1,\ldots,q_n)$ belong to the same $\Aut(\sT)$-orbit. Indeed, otherwise we would have that for any pairs $p_1,\ldots,p_n\in{} \neq$, the $\Aut(\sT)$-orbit of $f(p_1,\ldots,p_n)$ would be determined by the value of $f(o_1,\ldots,o_n)$, where $o_i$ would be the unique element of $\{(\rightarrow\cap<),(\leftarrow\cap <)\}$ which is disjoint from the $\Aut(\sT)$-orbit of $p_i$, for all $1\leq i\leq n$. Hence, $f$ would act on the $\Aut(\sT)$-orbits in $\neq$, and would therefore be an element of $\CAT$, a contradiction.

Assume without loss of generality that  $f(o_1,\dots,o_n)\subseteq{} \rightarrow$. We prove by induction on $m\geq 2$ that for all tuples $ a, b, c\in(\neq)^m$ such that for every $i\in\{1,\dots,m\}$ at most one of the pairs $a_i,b_i,c_i$ is in $\leftarrow$ 
there exists $h\in\Pol(\rel A)$ such that $h(a,b,c)\in(\rightarrow)^m$. A standard compactness argument using $\omega$-categoricity then implies that the same holds for arbitrary infinite triples $a,b,c$: this is achieved by taking a limit in $\CA$ of functions of the form $\alpha_m\circ h_m$, where $\alpha_m\in\Aut(\sT)$ and $h_m$ witnesses the above property for the first  $m$ entries of $a,b,c$. 
It then follows, by taking $a,b,c$ to enumerate all possible triples, that there exists $h\in\CA$ which acts on $\lra$ as a majority operation;  in particular  $h\in\CAT$.

Before we start the induction, we make a few observations and assumptions about the tuples $a,b,c\in (\neq)^m$. First, we can clearly assume that there are no repetitions, i.e.,  $(a_i,b_i,c_i)\neq (a_j,b_j,c_j)$  for any two distinct $i,j\in\{1,\ldots,m\}$; moreover, $(a_i,b_i,c_i)\neq (\overline{a_j},\overline{ b_j},\overline{ c_j})$ where the pairs $\overline{ a_j},\overline{b_j}, \overline{c_j}$ are obtained from $a_j,b_j,c_j$ by flipping the two coordinates,  since otherwise $(a_i,b_i,c_i)$ would contain two elements of $\leftarrow$. 
For a tuple $ x\in(\neq)^m$, let $\ker(x)$ be the kernel of $x$ when we view it as a $2m$-tuple over $T$, i.e., $\ker(x)$ is an equivalence relation on $\{1,\ldots,2m\}$. 
By assumption, there is an injective binary operation $g$ in $\Pol(\rel A)$ that acts as a projection on $\lra$; without loss of of generality, it acts as  the first projection.
This implies that we can assume that $\ker(a)=\ker(b)=\ker(c)=:K$.  Indeed,  the tuples $g(a,g(b,c)), g(b,g(a,c)), g(c,g(a,b))$ satisfy the same assumptions as $a,b,c$ since $g$ acts as a projection on the  first argument on $\lra$, 
and they all have the same kernel since $g$ is injective.
By our assumptions above,  $i\neq j$ yields  $\{(2i+1,2j+2),(2i+2,2j+1)\}\not\subseteq K$ for all $1\leq i,j\leq m-1$. Writing $i/K$ for the  equivalence class of any   $i\in\{1,\ldots,2m\}$, it follows that for all any two linear orders on $\{1/K,2/K\}$ and $\{3/K,4/K\}$, there is a linear order on $\{1/K,\ldots,2m/K\}$ extending both of them.

We next claim that there exist linear orders $\prec,\prec'$ on $\{1/K,\ldots,(2m)/K\}$ which agree everywhere except between $1/K,2/K$, where they disagree. Suppose the contrary, and let $j$ be the smallest number with $3\leq j\leq m$ such that the claim does not hold for $\{1/K,\ldots,(2j)/K\}$.  Let $\prec,\prec'$ be orders on $\{1/K,\ldots,(2(j-1))/K\}$ with the above property. If  $(2j-1)/K$ or $(2j)/K$ was not an element of $\{1/K,\ldots,(2(j-1))/K\}$, then we could extend $\prec,\prec'$, contradicting the minimality of $j$. Otherwise, the only obstacle would occur if $\prec, \prec'$ disagreed on $\{(2j-1)/K,(2j)/K\}$. This, however, would imply $\{(2j-1)/K,(2j)/K\}=\{1/K,2/K\}$, a contradiction. Note that by flipping $\prec$ and $\prec'$, we may assume that they are increasing on $\{3/K,4/K\}$.

We are now ready for the induction. The base case $m=2$ is trivial as $h$ can be taken to be a projection. In the induction step, let $m\geq 3$. By induction hypothesis, there exist ternary $h, h'\in\Pol(\rel A)$ such that all coordinates of $h(a,b,c)$ are an element of $\rightarrow$ except possibly for the second, and all coordinates of $h'(a,b,c)$ are an element of $\rightarrow$ except possibly for the first. 
For every $k\in\{1,\dots,n\}$, let $<_{o_k}$ be the increasing order on $\{3/K,4/K\}$ and $<_{q_k}$ be the order on $\{1/K,2/K\}$ given by $q_k$. By the above, there exists a linear order $<_k$ on $\{1/K,\ldots,2m/K\}$ which extends these two, and moreover we may assume that all these orders agree except possibly on $\{1/K,2/K\}$.  Hence,  for all $k\in\{1,\dots,n\}$  there exists  $\alpha_k\in\Aut(\rel T)$ and $u^k\in\{a,b\}$ such that  $\alpha_k(u^k_1,u^k_2)$ is in $q_k\times o_k$ and such that the order $<$ on   $\alpha_k(u^k_j)$ is the same for all $k\in\{1,\dots,n\}$. 
Then, $f(\alpha_1 u^1,\dots,\alpha_nu^n)$ is an element of $(\rightarrow)^m$, concluding the proof.
\end{proof}

Every function in $\CA$ locally interpolates, modulo $\Aut(\sT)$, a function which is canonical with respect to $\Aut(\rel T,<)$, by the Ramsey property of $(\rel T,<)$. If $\CAT$ is equationally  trivial, then  it is also locally interpolated by $\CA$ modulo $\Aut(\sT)$. 

\begin{corollary}\label{cor:canonization-without-order}
    Let $\rel A$ be a first-order reduct of $\rel T$ that is a model-complete core and not a first-order reduct of $(T;=)$.
    If\/ $\CAT$ is equationally trivial, then $\CA$ 
    locally interpolates  $\CAT$ modulo $\Aut(\rel T)$.
\end{corollary}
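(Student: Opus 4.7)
The plan is to take an arbitrary $f\in\CA$ and produce a function in $\CAT$ that is locally interpolated by $f$ modulo $\Aut(\rel T)$. The first move is to apply the standard Ramsey-based canonization (cited in the preliminaries and in the paragraph preceding this corollary): since $(\rel T,<)$ is a homogeneous Ramsey structure in a finite signature, $\CA$ locally interpolates, modulo $\Aut(\rel T,<)$, the subclone $\CAQ$ of those functions in $\CA$ which are canonical with respect to $(\rel T,<)$. Because $\Aut(\rel T,<)\subseteq \Aut(\rel T)$, this interpolation also holds modulo $\Aut(\rel T)$. Thus each $f\in\CA$ locally interpolates some $f'\in\CAQ$ modulo $\Aut(\rel T)$, and it remains to show that under our hypotheses every such $f'$ already lies in $\CAT$.

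I would split into two cases depending on whether $\Pol(\rel A)$ contains an essential operation. If it does not, then every $f\in\CA$ is essentially unary, i.e.\ of the form $(x_1,\dots,x_n)\mapsto h(x_i)$ for some endomorphism $h$. Since $\rel A$ is a model-complete core, $h\in\overline{\Aut(\rel A)}$; since $\rel A$ is not a first-order reduct of $(T;=)$, each automorphism of $\rel A$ either preserves the arc relation on $\rel T$ or reverses it globally, and a pointwise-convergent net of such automorphisms necessarily stabilizes to one of these two global behaviours (as its action on any single arrow determines its type on every other arrow). Hence $h$ acts canonically on the two $\Aut(\rel T)$-orbits of injective pairs, so $h\in\CAT$ and consequently $f\in\CAT$. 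In this case $\CA=\CAT$ and $f$ interpolates itself.

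In the remaining case, Lemma~\ref{lem:tournament-binary-injective} together with Lemma~\ref{lem:generateProjection} yields a binary injection in $\Pol(\rel A)$ that acts like a projection on $\lra$. Suppose for contradiction that $f'\in\CAQ$ is not in $\CAT$. Then Lemma~\ref{lem:canonization-without-order} applies to $\rel A$ (using this binary injection and $f'$) and produces a majority operation in $\CAT\actson\lra$. Since the majority identities are non-trivial, Post's classification of clones on a two-element set implies that $\CAT\actson\lra$, and therefore $\CAT$ itself, is equationally non-trivial, contradicting the hypothesis. Hence $f'\in\CAT$, finishing the proof.

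The main obstacle I anticipate is the essentially unary case: while Lemma~\ref{lem:canonization-without-order} does the heavy lifting when binary injections are available, without an essential polymorphism one must argue directly that $\overline{\Aut(\rel A)}$ is contained in $\CAT$. This relies on knowing precisely which closed supergroups of $\Aut(\rel T)$ in $\mathrm{Sym}(T)$ can occur as $\Aut(\rel A)$; once one verifies that, apart from $\mathrm{Sym}(T)$ (the excluded case), the only options act canonically on $\lra$, everything else is routine verification.
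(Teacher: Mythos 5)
Your second case (when $\Pol(\rel A)$ contains an essential operation) is essentially the paper's argument: canonize via the Ramsey property of $(\rel T,<)$, and if the resulting canonical function is not in $\CAT$, feed it together with the binary injection from Lemma~\ref{lem:generateProjection} into Lemma~\ref{lem:canonization-without-order} to get a majority on $\lra$, contradicting equational triviality via Proposition~\ref{prop:CATT-nontrivial}. That part is fine.

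The gap is in your essentially unary case. The claim that ``each automorphism of $\rel A$ either preserves the arc relation on $\rel T$ or reverses it globally'' is false: by Bennett's classification of the reducts of $\rel T$ (which the paper itself invokes in Lemma~\ref{lem:bw-inclusion}), there are proper closed supergroups of $\Aut(\rel T)$ below $\mathrm{Sym}(T)$ generated by \emph{switching} permutations, i.e.\ permutations that reverse exactly the arcs incident to a fixed vertex $p$ and preserve all others. Such a $\sigma$ maps two pairs of the orbit $\rightarrow$ (one containing $p$, one not) to different $\Aut(\rel T)$-orbits, so $\sigma\notin\CAT$; hence for these reducts $\overline{\Aut(\rel A)}\not\subseteq\CAT$ and your conclusion $\CA=\CAT$ fails, even though they are model-complete cores and not reducts of $(T;=)$. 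The corollary survives because it only asserts \emph{interpolation}: $\sigma$ still locally interpolates, modulo $\Aut(\rel T)$, an arc-preserving function (conjugate $p$ out of any given finite set), so one must canonize first and then argue about the canonical function, rather than argue about $\Aut(\rel A)$ itself. The paper's fix is exactly this: an essentially unary function that is canonical with respect to $(\rel T,<)$ but not in $\CAT$ sends all increasing pairs to pairs with the same arc direction regardless of the input arc, so its range induces only transitive tournaments and in particular misses the orbit of a $3$-cycle; by the ``moreover'' clause of Lemma~\ref{lem:cores} the range of every endomorphism of $\rel A$ meets every orbit of $\Aut(\rel T,<)$, so this forces $\rel A$ to be a reduct of $(T;=)$, which is excluded. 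Replacing your group-theoretic claim by this argument (which also covers the case where the canonical function obtained in your essential case happens to be essentially unary) closes the proof.
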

\begin{proof}
    Every 
    $f$ locally  interpolates a function $g$ that is canonical with respect to $\Aut(\rel T,<)$. 
    Suppose that such a $g$ is not in $\CAT$. Then $g$ must be essential, for otherwise $\sA$ would be a first-order reduct of $(T;=)$ \todo{by Lemma~\ref{lem:cores}}. 
    Thus, by Corollary~
    \ref{cor:g012} and Lemma~\ref{lem:canonization-without-order}, the clone $\CAT\actson\lra$ is equationally non-trivial, contradicting Proposition~\ref{prop:CATT-nontrivial}.
\end{proof}

\begin{lemma}\label{lem:canonicalSiggers}
Let $\rel A$ be a first-order reduct of $\rel T$ that is a model-complete core and not a first-order reduct of $(T;=)$. 
If $\CAT$ is equationally trivial, then so is  $\CAA$.
\end{lemma}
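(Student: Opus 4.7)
The plan is to apply the fundamental theorem of smooth approximations (Theorem~\ref{thm:sa}) with $\cC:=\CAT\subseteq\cD:=\CAA$ and permutation group $\gG:=\Aut(\sT)$. Local interpolation of $\CAT$ by $\CAA$ modulo $\Aut(\sT)$ is immediate from Corollary~\ref{cor:canonization-without-order} and $\CAA\subseteq\CA$. What then remains is to produce a naked set $(S,\sim)$ of $\CAT$ with $\Aut(\sT)$-invariant $\sim$-classes together with a $\CAA$-invariant very smooth approximation of $\sim$ with respect to $\Aut(\sT)$; Theorem~\ref{thm:sa} will then yield a uniformly continuous clone homomorphism from $\CAA$ to $\CAT\actson S/{\sim}=\P$, which is exactly what equational triviality of $\CAA$ requires.

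To build the naked set, I would first observe that by Proposition~\ref{prop:CATT-nontrivial} the action $\CAT\actson\lra$ is equationally trivial, so by Post's classification it is contained in the ``signed projection'' clone generated by the projections and the swap $\neg$ of $\lra$ (no essential operation and no constant can appear). To neutralize the possible sign, I would pass to pairs of pairs: take $S$ to be the set of injective $4$-tuples in $T^4$, and let $\sim$ partition $S$ according to whether $(a,b)$ and $(c,d)$ point in the same direction or in opposite directions. $S$ is $\CAT$-invariant because $\CA$ preserves $\neq$ by Proposition~\ref{prop:definition-neq}; and if $f\in\CAT$ of arity $k$ acts on $\lra$ as $\varepsilon\circ p_{i_f}$ for some $\varepsilon\in\{\mathrm{id},\neg\}$, then the direction of each of the two output pairs equals the direction of the corresponding pair of the $i_f$-th input, twisted by the same $\varepsilon$. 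Since $\varepsilon$ is applied uniformly to both output pairs, the same/opposite-direction status of the output coincides with that of input $i_f$, and so $\CAT$ acts on $S/{\sim}$ as the projection onto coordinate $i_f$. The $\sim$-classes are trivially $\Aut(\sT)$-invariant.

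The natural candidate for the approximation is $\eta:=$ orbit-equivalence under $\Aut(\sA)$ on $S$. It is $\CAA$-invariant by the very definition of $\CAA$, and very smooth with respect to $\Aut(\sT)$ since $\Aut(\sT)\subseteq\Aut(\sA)$. The only delicate point is that $\eta$ refines $\sim$ on $S$, which amounts to $\sim$ being $\Aut(\sA)$-invariant. By the classification of closed supergroups of $\Aut(\rel T)$ in $\mathrm{Sym}(T)$ together with the hypothesis that $\rel A$ is not a first-order reduct of $(T;=)$, $\Aut(\sA)$ is either $\Aut(\sT)$ itself or the group generated by $\Aut(\sT)$ and a globally arrow-reversing bijection; both preserve $\sim$, the first trivially, the second because the arrow-reversal flips both pairs of the tuple simultaneously. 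Theorem~\ref{thm:sa} then completes the argument. The main conceptual obstacle is the design of $(S,\sim)$: the obvious candidate $S={\neq}$, $\sim={\lra}$ fails to be a naked set as soon as $\CAT\actson\lra$ contains the swap $\neg$, and the passage to $4$-tuples together with the ``same direction'' quotient is precisely what absorbs this sign.
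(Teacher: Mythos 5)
Your overall strategy --- building a naked set of $\CAT$ on injective tuples and lifting it to $\CAA$ via Theorem~\ref{thm:sa}, with $\eta$ taken to be $\Aut(\sA)$-orbit equivalence --- is a legitimate alternative to the paper's argument, and most of the individual steps are sound: the inclusion $\CAT\subseteq\CAA$ from Lemma~\ref{lem:CATinCAA}, local interpolation from Corollary~\ref{cor:canonization-without-order}, the signed-projection form of $\CAT\actson\lra$, and the observation that the ``same direction'' quotient on $4$-tuples absorbs the sign. The gap is in the one point you yourself flag as delicate: the claim that $\Aut(\sA)$ is either $\Aut(\sT)$ or the group generated by $\Aut(\sT)$ and an arrow-reversing bijection misquotes the classification. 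The random tournament has five reducts, and the two you omit are the switching reducts: the closed group generated by $\Aut(\sT)$ together with the maps that reverse all arcs between a finite set $X$ and its complement, and its extension by global arrow reversal. For these groups your $\sim$ is not invariant: switching with respect to $\{a\}$ applied to an injective tuple $(a,b,c,d)$ reverses the arc between $a$ and $b$ while leaving the arc between $c$ and $d$ unchanged, so it exchanges the ``same direction'' and ``opposite direction'' classes. Hence $\Aut(\sA)$-orbit equivalence does not refine $\sim$ on $S$, your $\eta$ is not an approximation of $\sim$ at all, and Theorem~\ref{thm:sa} cannot be applied.

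A repair along your lines would require, for each proper reduct, a subfactor that is simultaneously a naked set of $\CAT$ and $\Aut(\sA)$-invariant; for the switching reducts the orbit invariants on injective $k$-tuples are parity conditions on the number of forward arcs (the fact the paper cites from Bennett's thesis in Lemma~\ref{lem:bw-inclusion}), so a parity-based $\sim$ on longer tuples is a plausible candidate, but one would still have to check that the signed projections of $\CAT\actson\lra$ act on such a quotient as projections. The paper sidesteps any case analysis on $\Aut(\sA)$ by arguing the contrapositive: from an equationally non-trivial $\CAA$ it extracts an operation behaving like a pseudo-Siggers modulo $\Aut(\sA)$ on injective tuples, canonizes it into $\CAT$, and derives a contradiction with $\CAT\actson\lra$ consisting of essentially unary operations.
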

\begin{proof}
Assuming that $\CAA$ is equationally non-trivial, we show that the same holds for  $\CAT\actson\lra$; this is sufficient by Proposition~\ref{prop:CATT-nontrivial}. %

\todom{We first claim that $\CAA$ then contains an operation $s$ which acts like a pseudo-Siggers operation modulo $\Aut(\sA)$ when restricted to injective tuples: that is, for all injective tuples $a,b,c$ of elements of $A$ of the same length,  $s(a,b,a,c,b,c)$ and $s(b,a,c,a,c,b)$ are in the same orbit with respect to $\Aut(\sA)$. To see this, note that the action of  $\CAA$ on the $\Aut(\sA)$-orbits of injective $k$-tuples is equationally non-trivial for all $k\geq 1$. Since each of these actions is idempotent as a result of $\CAA$  being a model-complete core,  for each $k\geq 1$ there exists a 6-ary   $s_k\in\CAA$ which is a Siggers operation in this action, by~\cite{Siggers}. This property being invariant under composition with elements from $\Aut(\sA)$ from the left, a standard compactness argument gives that we can assume that the functions $s_k$ have an accumulation point $s\in\CAA$. In the standard action of  $\CAA$, the function $s$ then behaves like a pseudo-Siggers operation on injective tuples.}

By Corollary~\ref{cor:canonization-without-order}, there exists  $s'\in\CAT$ locally interpolated by $s$ modulo $\Aut(\sT)$. We claim that $s'$ is still a pseudo-Siggers operation modulo $\Aut(\sA)$ when restricted to injective tuples.
Indeed, let $a,b,c$ be given, and let $S$ be the finite set containing the entries of all these tuples.
There are $\alpha_1,\dots,\alpha_6,\beta\in\Aut(\rel T)$ such that 
$$
s'(x_1,\dots,x_6)=\beta s(\alpha_1 x_1,\dots,\alpha_6 x_6)
$$ on $S$. 
Moreover, since $s\in\CAA$ and since $a,b,c$ are injective,  there exist $\gamma,\gamma'\in\Aut(\rel A)$ such that
\[ \gamma s(a,b,a,c,b,c) = s(\alpha_1 a,\alpha_2 b,\alpha_3 a,\alpha_4 c,\alpha_5 b,\alpha_6 c)\]
and
\[ \gamma' s(b,a,c,a,c,b) = s(\alpha_1 b,\alpha_2 a,\alpha_3 c,\alpha_4 a,\alpha_5 c,\alpha_6 b).\]
Finally, let $\delta\in\Aut(\rel A)$ witness the pseudo-Siggers identity for $s$ for $a,b,c$. Then we have
\begin{align*}
    s'(a,b,a,c,b,c) &= \beta s(\alpha_1 a,\alpha_2 b,\alpha_3 a,\alpha_4 c,\alpha_5 b,\alpha_6 c)\\
    &= \beta \gamma s(a,b,a,c,b,c)\\
    &= \beta\gamma\delta s(b,a,c,a,c,b)\\
    &= \beta\gamma\delta(\gamma')^{-1}\beta^{-1}\beta\gamma s(b,a,c,a,c,b)\\
    &= (\beta\gamma\delta(\gamma')^{-1}\beta^{-1})\beta s(\alpha_1 b,\alpha_2 a,\alpha_3 c,\alpha_4 a,\alpha_5 c,\alpha_6 b)\\
    &= (\beta\gamma\delta(\gamma')^{-1}\beta^{-1})s'(b,a,c,a,c,b),
\end{align*}
so that $\beta\gamma\delta(\gamma')^{-1}\beta^{-1}\in\Aut(\rel A)$ proves our claim.

Suppose that $\CAT\actson\lra$ is equationally trivial. 
Then by Post's classification~\cite{Post} all the operations in $\CAT$ act on $\lra$ as essentially unary maps, and in particular $s'$ does. We may even assume that $s'$ acts as a projection on $\lra$ by composing it with an automorphism of $\sA$, since $\sA$ is a model-complete core.
Without loss of generality, let us assume that $s'$ is the projection onto the first coordinate when  acting on $\lra$.
Let $a,b$ be any two injective tuples  of elements of $A$ of the same length that are not in the same orbit under $\Aut(\sA)$; such tuples exist since $\sA$ is not a first-order reduct of $(T;=)$. By the above,  we get that $s'(a,b,a,a,b,a)$ and $s'(b,a,a,a,a,b)$ are in the same orbit of  $\Aut(\rel A)$. Moreover, since $s'$ is the first projection  when acting on $\lra$, %
$a$ and $s'(a,b,a,a,b,a)$ are in the same orbit under $\Aut(\rel T)$, and $b$ and $s'(b,a,a,a,a,b)$ are in the same orbit under $\Aut(\rel T)$.
This is a contradiction, and  therefore $\CAT\actson\lra$ is equationally non-trivial.
\end{proof}

\subsubsection{From $\CAA$ to $\CA$}\label{subsect:CAA2CA}

It is known~\cite[Theorem~6.7]{BPP-projective-homomorphisms} that when a polymorphism clone consists of canonical functions with respect to an automorphism group of a finitely bounded homogeneous structure, and its action on the orbits of the group is equationally non-trivial, then the original polymorphism clone is also equationally non-trivial. The following is a variant for this when we know that all  actions on orbits of injective tuples are equationally  non-trivial, and we know that the clone contains a binary injection.

\begin{proposition}\label{prop:CAA_on_orbits_trivial}
Let $\rel A$ be an $\omega$-categorical structure that is a model-complete core. Assume moreover that $\CA$ contains a binary injection and that the  functions in $\CA$ preserve  $\neq$.  
If $\CAA$ is equationally trivial, then there exists $k\geq 1$ such that the action of $\CAA$ on the $\Aut(\sA)$-orbits of injective  $k$-tuples is equationally trivial.
\end{proposition}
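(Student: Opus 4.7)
The plan is to prove the contrapositive: assuming every $\CAA^{(k)}$ is equationally non-trivial, we shall exhibit a non-trivial identity in $\CAA$, namely a pseudo-Siggers identity modulo $\overline{\Aut(\sA)}$. Note that since $\sA$ is an $\omega$-categorical model-complete core, the clone $\CAA$ is itself an oligomorphic model-complete core, and the set $X_k$ of $\Aut(\sA)$-orbits of injective $k$-tuples is finite for each $k\geq 1$; moreover the action $\CAA^{(k)}\actson X_k$ is idempotent.

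First, Siggers' theorem applied to each finite idempotent clone $\CAA^{(k)}$ gives a $6$-ary Siggers operation in $\CAA^{(k)}$, which lifts to some $s_k\in\CAA$ that is Siggers on $X_k$. The projection $X_{k+1}\to X_k$ forgetting the last coordinate intertwines the actions of each $f\in\CAA$, so ``Siggers on $X_{k+1}$'' automatically implies ``Siggers on $X_k$''. A standard compactness argument, using oligomorphicity of $\Aut(\sA)$ to bound the number of orbit-patterns on any finite subset of $A^6$ and the closedness of $\CAA$ in the topology of pointwise convergence, then yields by diagonal extraction an $s\in\CAA$ which is Siggers on $X_k$ for every $k$. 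Explicitly, for any injective tuples $u_1,u_2,u_3$ of the same finite length, $s(u_1,u_2,u_1,u_3,u_2,u_3)$ and $s(u_2,u_1,u_3,u_1,u_3,u_2)$ lie in a common $\Aut(\sA)$-orbit.

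Second, we upgrade this Siggers-on-orbits property from injective to arbitrary tuples using the binary injection $g\in\CA$. For any $u\in A^n$ and any injective $v\in A^n$, the tuple $g(u,v)$ is injective because $g$ preserves $\neq$ and $v$ has pairwise distinct entries. Running a further compactness argument over increasingly large finite collections of triples $(u_1,u_2,u_3)$, picking for each collection a common injective parameter $v$ disjoint from all entries involved, and taking the limit of operations of the form $s(g(\cdot,v),\dots,g(\cdot,v))$, we obtain $s'\in\CAA$ satisfying the orbit-Siggers property for arbitrary $n$-tuples. One more compactness step, exploiting $\omega$-categoricity to stitch together, over larger and larger finite subsets of $A^3$, the local witnessing elements of $\Aut(\sA)$ conjugating $s'(x,y,x,z,y,z)$ to $s'(y,x,z,x,z,y)$, produces uniform $\beta,\gamma\in\overline{\Aut(\sA)}$ and an operation $s''\in\CAA$ such that $\beta\circ s''(x,y,x,z,y,z)=\gamma\circ s''(y,x,z,x,z,y)$ holds literally for all $x,y,z\in A$.

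This last equation is a non-trivial identity in $\CAA$: any clone homomorphism $\CAA\to\Projs$ sends the unary functions $\beta,\gamma$ to the unique unary element of $\Projs$ (the identity) and $s''$ to some projection $\pi^6_i$, and this would force $\pi^6_i(x,y,x,z,y,z)=\pi^6_i(y,x,z,x,z,y)$ in $\Projs$, which fails for every $i$. Consequently $\CAA$ is equationally non-trivial, contradicting the hypothesis and completing the contrapositive. The main obstacle is the second step above: since the binary injection $g$ lies a priori only in $\CA$ and need not be in $\CAA$, one cannot simply compose it with $s$ to obtain a member of $\CAA$ with the globalised Siggers property, and in the absence of a Ramsey assumption on $\sA$ this passage from injective to arbitrary tuples has to proceed via a delicate iterative limit construction rather than by a closed-form term.
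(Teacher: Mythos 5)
Your first step---applying Siggers' theorem to each finite idempotent action of $\CAA$ on the orbits of injective $k$-tuples and then extracting, by a diagonal compactness argument, a single $s\in\CAA$ such that $s(a,b,a,c,b,c)$ and $s(b,a,c,a,c,b)$ are $\Aut(\sA)$-orbit-equivalent for all \emph{injective} tuples $a,b,c$---is exactly the paper's first step, and your final compactness step together with the non-triviality of the resulting pseudo-identity is also in order. The gap is the passage from injective to arbitrary tuples. The expressions $s(g(\cdot,v),\dots,g(\cdot,v))$ with an injective tuple $v$ serving as a ``parameter'' are not operations on $A$: when the arguments are $n$-tuples, the $i$-th output coordinate is computed using $v_i$, so no single finitary operation is being applied coordinatewise, and there is consequently nothing in $\Pol(\sA)$ of which one could take a pointwise limit. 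As described, the ``delicate iterative limit construction'' never produces an element of the clone, so $s'$ and $s''$ do not exist as claimed.

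Moreover, your closing assertion that no closed-form term can effect this passage is precisely backwards: the paper's proof \emph{is} such a term. From the binary injection one obtains a ternary injection $g\in\CA$ and substitutes $u:=g(x,y,z)$, $v:=g(y,z,x)$, $w:=g(z,x,y)$ into the Siggers pattern. Given arbitrary $n$-tuples $a,b,c$, one may discard every coordinate $j$ for which $(a_j,b_j,c_j)$ repeats an earlier triple (this changes neither the values nor the orbits in question), after which each of $g(a,b,c)$, $g(b,c,a)$, $g(c,a,b)$ is injective because $g$ is; the injective-tuple property of $s$ then shows that $s(u,v,u,w,v,w)$ and $s(v,u,w,u,w,v)$ are orbit-equivalent on \emph{all} inputs, and one last compactness argument yields $e,f\in\overline{\Aut(\sA)}$ turning this into a literal identity that no assignment of projections to $e,f,s,g$ satisfies. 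Your observation that $g$ need not lie in $\CAA$, so that composing with it does not obviously stay inside $\CAA$, is a fair point---but your own construction routes everything through $g$ as well, so your claim that $s''\in\CAA$ is unsupported for exactly the same reason. In the paper's concrete applications the binary injection can in any case be taken canonical and hence inside $\CAA$, so that all witnesses of the displayed identity do lie in $\CAA$; your proposal offers no substitute for this.
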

\begin{proof}
We show the contraposition. Suppose that the action of  $\CAA$ on the $\Aut(\sA)$-orbits of injective $k$-tuples is equationally non-trivial for all $k\geq 1$. \todom{Then, by the compactness argument given in the beginning of Lemma~\ref{lem:canonicalSiggers},  $\CAA$ contains a  function $s$ which  behaves like a pseudo-Siggers operation on injective tuples: that is, whenever $a,b,c$ are injective tuples of the same fixed finite length, then $s(a,b,a,c,b,c)$ and $s(b,a,c,a,c,b)$ belong to the same $\Aut(\sA)$-orbit.}

Since $\CA$ contains a binary injection, it  also contains a ternary injection $g$. Defining ternary terms  $u:=g(x,y,z)$, $v:=g(y,z,x)$, $w:=g(z,x,y)$, we then have that the functions  $s(u,v,u,w,v,w)(x,y,z)$ and $s(v,u,w,u,w,v)(x,y,z)$ yield tuples in the same $\Aut(\sA)$-orbit whenever they are applied to arbitrary, possibly  non-injective, tuples $a,b,c$ of finite fixed length. Hence, again by a standard 
compactness argument, there exist unary functions $e,f\in\overline{\Aut(\sA)}$ such that 
$$
e\circ s(u,v,u,w,v,w)(x,y,z)= f\circ s(v,u,w,u,w,v)(x,y,z)
$$
holds 
in $\CAA$ for all values of $x,y,z$ in $T$. This identity cannot, however, be satisfied by projections, so $\CAA$ is equationally  non-trivial.
\end{proof}

\begin{corollary}\label{cor:CAA_on_orbits_trivial_tournament}
Let $\sA$ be a first-order reduct of\/ $\sT$ that is a model-complete core and not a first-order reduct of $(T;=)$. If $\CAA$ is equationally trivial, then there exists $k\geq 1$ such that the action of $\CAA$ on the $\Aut(\sA)$-orbits of injective $k$-tuples is equationally trivial.
\end{corollary}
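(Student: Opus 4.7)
The plan is to apply Proposition~\ref{prop:CAA_on_orbits_trivial} directly to $\sA$, after verifying its hypotheses. The structure $\sA$ is $\omega$-categorical as a first-order reduct of the $\omega$-categorical structure $\sT$, and it is a model-complete core by assumption. Moreover, $\sT$ is transitive and its age---the class of all finite tournaments---has no forbidden substructures at all, and in particular no minimal bound of size three, so Proposition~\ref{prop:definition-neq} will give that $\neq$ is pp-definable in $\sA$, and therefore preserved by every polymorphism.

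The remaining hypothesis to verify is the existence of a binary injection in $\CA$. I will split into two cases. If $\CA$ contains an essential operation, then Lemma~\ref{lem:tournament-binary-injective} supplies such a binary injection, and Proposition~\ref{prop:CAA_on_orbits_trivial} immediately yields the conclusion for some $k \geq 1$.

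Otherwise, $\CA$ consists solely of essentially unary functions, and I would argue directly---this degenerate case does not even require the hypothesis that $\CAA$ is equationally trivial. Since $\sA$ is a model-complete core, every endomorphism of $\sA$ lies in $\overline{\Aut(\sA)}$, so its restriction to any finite set coincides with some automorphism; in particular, every endomorphism fixes each $\Aut(\sA)$-orbit of injective tuples setwise. It follows that any $f \in \CA$ of the form $f(x_1, \ldots, x_n) = g(x_i)$ acts on the $\Aut(\sA)$-orbits of injective $k$-tuples, for any $k \geq 1$, exactly as the $i$-th projection. Therefore $\CAA$ acts on these orbits by projections, which is an equationally trivial clone.

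I expect the main obstacle---beyond the verification of the hypotheses of Proposition~\ref{prop:CAA_on_orbits_trivial}---to be only the bookkeeping in the degenerate case above; the substance of the argument is already encapsulated in that proposition.
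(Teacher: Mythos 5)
Your proposal is correct and follows essentially the same route as the paper: verify the hypotheses of Proposition~\ref{prop:CAA_on_orbits_trivial} via Proposition~\ref{prop:definition-neq} and a binary injection in the essential case, and treat the essentially unary case by hand. One small repair is needed in the degenerate case: an action by projections on a \emph{one-element} set is not equationally trivial (it satisfies, e.g., the Siggers identity vacuously), so you cannot claim the conclusion ``for any $k\geq 1$'' — instead you must choose $k$ so that $\Aut(\sA)$ has at least two orbits of injective $k$-tuples, which exists precisely because $\sA$ is not a first-order reduct of $(T;=)$; this is exactly the point the paper makes. A second, purely cosmetic slip: the age of $\sT$ does have minimal bounds (of size at most two); what Proposition~\ref{prop:definition-neq} requires is only that none of them has size three.
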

\begin{proof}
    The functions of $\CA$ preserve $\neq$ by Proposition~\ref{prop:definition-neq}.
    The statement is trivially true  if $\CA$ contains only essentially unary functions, since there exists $k\geq 1$ such that $\Aut(\sA)$ has at least two orbits in its action on injective $k$-tuples. If $\CA$ contains an essential function, then it contains a binary injection by Corollary~\ref{cor:binaryinjections}, and we can refer to Proposition~\ref{prop:CAA_on_orbits_trivial}.
\end{proof}

\begin{lemma}\label{lem:maj-from-symmetric}
Let $\rel A$ be a first-order reduct of\/ $\rel T$ which is a model-complete core and not a first-order reduct of $(T;=)$. Suppose that $\CA$ contains a binary function $f$ such that there exist $k\geq 1$, $S\subseteq T^k$ consisting of injective tuples, and $\sim$ an equivalence relation on $S$ with two $\Aut(\sA)$-invariant classes such that $f(a,b)\sim f(b,a)$ for all disjoint injective tuples $a,b\in T^k$ with  $f(a,b),f(b,a)\in S$. Then $\CAT\actson\lra$  contains a majority operation.
\end{lemma}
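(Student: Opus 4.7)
The plan is to combine a canonization of $f$ with a combinatorial analysis of its weak commutativity to extract a ternary operation in $\CAT$ inducing a majority on $\lra$.

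First, by the Ramsey property of $(\rel T,<)$ together with a standard compactness argument (exploiting oligomorphicity of $\Aut(\rel T)$ and topological closedness of $\Pol(\sA)$), one replaces $f$ by a function locally interpolated by it modulo $\Aut(\rel T)$ that is canonical with respect to $\Aut(\rel T,<)$. The weak commutativity is preserved under this process: since the $\sim$-classes are $\Aut(\sA)$-invariant and $\Aut(\rel T)$-orbits refine $\Aut(\sA)$-orbits, the condition $f(a,b)\sim f(b,a)$ transports along compositions with elements of $\Aut(\rel T)$ on either side, and hence along limits.

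Next, I would argue by contradiction: suppose $\CAT\actson\lra$ contains no majority. Since $\rel A$ is a model-complete core and not a first-order reduct of $(T;=)$, composing with an iterate of its diagonal one may bring $f$ into a form acting idempotently on $\lra$; Post's classification then forces $\CAT\actson\lra$ to consist only of essentially unary maps, and Lemma~\ref{lem:CATinCAA} yields $\CAT\subseteq\CAA$. In particular, the canonized $f$ restricted to $\lra$ is an (essentially unary) projection, possibly composed with a flip induced by an order-reversing automorphism of $\rel T$.

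The third step is combinatorial. Pick disjoint injective $k$-tuples $a\in C_0$ and $b\in C_1$ in the two distinct $\sim$-classes; by composing coordinate-wise with automorphisms of $\rel T$ as in the proof of Lemma~\ref{lem:canonization-without-order}, realize arbitrary prescribed patterns of $\Aut(\rel T,<)$-orbits on the pairs $(a_i,b_i)$. The weak commutativity $f(a,b)\sim f(b,a)$, valid whenever both images lie in $S$, combined with the $\Aut(\sA)$-invariance of the $\sim$-classes, forces identifications among the orbits that $f$ attains when its inputs range over the five $\Aut(\rel T,<)$-orbits of pairs. Unraveling these identifications, and composing $f$ with itself, with a binary injection supplied by Lemma~\ref{lem:tournament-binary-injective}, and with an order-reversing automorphism of $\rel T$ as in the proof of Lemma~\ref{lem:canonization-without-order}, yields a ternary element of $\CAT$ whose action on $\lra$ satisfies the majority identities, contradicting the assumption.

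The main obstacle is bridging the $k$-ary weak commutativity, which lives on long tuples stratified by $\sim$, with the two-element action on $\lra$ that we wish to control. The bridge comes from choosing $a$ and $b$ so that every desired $\Aut(\rel T,<)$-orbit on pairs is realized in many coordinates simultaneously, letting the binary pattern of $f$ on $\lra$ be read off directly; the case analysis over the finitely many possible canonical behaviors of $f$ on the five orbits of pairs is where most of the technical work concentrates.
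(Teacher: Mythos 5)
Your canonization step contains the decisive gap. You claim that the weak commutativity $f(a,b)\sim f(b,a)$ survives replacement of $f$ by a function it locally interpolates modulo $\Aut(\rel T)$, so that you may assume $f$ canonical with respect to $(\rel T,<)$ and hence acting on $\lra$ globally. But local interpolation produces functions of the form $\beta\circ f(\alpha_1(x),\alpha_2(y))$ with \emph{independent} $\alpha_1,\alpha_2$, and for such a function $g$ one has $g(b,a)=\beta f(\alpha_1 b,\alpha_2 a)$, which is not the swap of the arguments of $g(a,b)=\beta f(\alpha_1 a,\alpha_2 b)$; the hypothesis on $f$ says nothing about the pair $(f(\alpha_1 a,\alpha_2 b), f(\alpha_1 b,\alpha_2 a))$. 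The property is only stable under \emph{diagonal} interpolation ($\alpha_1=\alpha_2$), which is exactly why the remark following Lemma~\ref{lem:weird-symmetric} is phrased for $\overline{\{\beta\circ f(\alpha,\alpha)\}}$ and why the paper's proof only upgrades $f$ to a \emph{diagonally} canonical function. A diagonally canonical binary function does not act on $\lra$: its behaviour on a pair of pairs depends on the complete diagonal order type, i.e., on the order relations \emph{between} the two arguments. Consequently your third step, which reads off ``the binary pattern of $f$ on $\lra$'' from five orbits of pairs, is not available; one must instead work separately on the sets $O=(\rightarrow\cap<)$ and $O'=(\leftarrow\cap>)$ (where diagonal canonicity does yield actions on $\lra$), observe that weak commutativity forces these two actions to depend on \emph{different} arguments, and then run an induction over complete diagonal order types to assemble the majority behaviour. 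That argument structure is absent from your proposal, and the final ``unraveling these identifications'' sentence defers precisely this, which is the technical core of the lemma.

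Two further points. First, your appeal to Post is too quick: from the absence of a majority in $\CAT\actson\lra$ you cannot conclude that this action is essentially unary, since the affine clone generated by a minority operation contains no majority; only the semilattice case is excluded by the tournament symmetry $f(\leftarrow,\rightarrow)={\leftarrow}\Rightarrow f(\rightarrow,\leftarrow)={\rightarrow}$. (The paper does not argue by contradiction at all; it directly constructs, by induction on the length $m$ of triples $a,b,c\in(\neq)^m$ with at most one $\leftarrow$ per coordinate, polymorphisms mapping them into $(\rightarrow)^m$, and then applies compactness.) Second, even granting your contradiction hypothesis, you never explain how the weak commutativity of a single binary $f$ that is a projection on $\lra$ produces a ternary majority; the mechanism in the paper is the mismatch of the preferred argument between $O$ and $O'$, not an ``identification among attained orbits''. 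As written, the proposal does not constitute a proof.
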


\begin{proof}
Since $\sA$ is a model-complete core, we have $f(a,a)\in S$ for all $a\in S$, and hence picking any $b\in S$ which is not $\sim$-equivalent to $a$ we see on the  values $f(a,a),f(b,b),f(a,b)$, and $f(b,a)$ that $f$ must be essential. Moreover, since its distinctive property is stable under diagonal  interpolation modulo $\Aut(\sT)$, we can assume that $f$ is diagonally canonical with respect to $\Aut(\rel T,<)$. 

Let $O:=(\rightarrow\cap <)$, $O':=(\leftarrow\cap >)$. Then $f$ is canonical on $O$ and on $O'$ with respect to $\Aut(\rel T,<)$, i.e., whenever $a,b\in T^2$ are so that $(a_i,b_j)\in O$ for all $1\leq i,j\leq 2$, then the orbit of $f(a,b)$ with respect to $\Aut(\sT,<)$ only depends on the orbits of $a$ and of $b$ (and similarly for $O'$). We may assume it has the same property for injective tuples $a,b$ and with respect to $\Aut(\sT)$ on each of these two sets $O$ and $O'$, since otherwise we are done by Lemma~\ref{lem:canonization-without-order}: the first function in the hypotheses of that lemma is obtained from the fact that $f$ must be essential and by application of Lemmas~\ref{lem:tournament-binary-injective} and~\ref{lem:generateProjection}, and the second function by considering $f(e_1(x),e_2(y))$ for self-embeddings $e_1,e_2$ of $\sT$ such that $(e_1(x),e_2(y))\in O$ (or $O'$ in the other case) for all $x,y\in T$.

This means that on $O$ as well as on $O'$, the function $f$ acts on $\lra$. Note that the only possible induced action  on $\lra$ of any binary function defined on $T$ is essentially unary. This means that the two  actions of the restriction of $f$ to $O$ and $O'$ on $\lra$ are either by a projection, or by a projection composed with the only non-trivial permutation on $\lra$. If one of the two actions is not a projection, then $\overline{\Aut(\sA)}$ contains a function $e$ which acts on $\lra$ as the non-trivial permutation: this function is obtained as $f(e_1(x),e_2(x))$ using self-embeddings $e_1,e_2$ as above. Hence, in that case the $\Aut(\sA)$-orbits are invariant under flipping of all edges on a tuple.

It follows that whenever $a,b\in S$ are such that $(a_i,b_j)\in O$ for all $i,j\in\{1,\ldots,k\}$, then $f(a,b)$ and $f(b,a)$ are elements of $S$. Furthermore, the two  actions of the restrictions of $f$ to $O$ and $O'$ on $\lra$ cannot depend on the same argument, otherwise by picking $a$ and $b$ in different blocks of $\sim$ and so that $(a_i,b_j)\in O$ for all $i,j\in\{1,\ldots,k\}$,
we would obtain that $f(a,b)$ and $f(b,a)$ are in different blocks of $\sim$, a contradiction. Without loss of generality, $f$ depends on the first argument on $O$, and on the second argument on $O'$.

Let $e_1,e_2$ be self-embeddings of $(\rel T,<)$ such that $e_1(x)<e_2(y)$ if and only if $x<y$ for all $x,y\in T$, and such that 
the order relation $<$ and the relation $\rightarrow$ of $\rel T$ coincide between the  ranges of $e_1,e_2$. Then $f(e_1(x),e_2(y))$ is still diagonally canonical with respect to $\Aut(\rel T,<)$, and still depends on its first argument in its action on $\lra$ when restricted to $O$, and on its second argument when restricted to $O'$.   Replacing $f$ by $f(e_1(x),e_2(y))$, we then moreover have that for all $a,b\in{}\neq$, the relations of $\rel T$ which hold between the components of $a$ and those of $b$ do not influence the $\Aut(\sT)$-orbit of $f(a,b)$ (while the order relations do).

We now consider complete diagonal order types $B$  for two pairs in $\neq$, i.e., each $B$ determines for two pairs in $\neq$ all order relations that hold on and between them. We then have that within each complete diagonal order type, $f$ acts on $\lra$, by our above assumption. 
Replace $f$ by $f(e_1\circ f(x,y),e_2\circ f(x,y))$, where $e_1,e_2$ are self-embeddings of $(\rel T,<)$ which ensure that for all $a,b\in{}\neq$, the diagonal order type of $(a,b)$ is equal to that of $(e_1\circ f(a,b),e_2\circ f(a,b))$. We may then assume that $f$ acts  idempotently or as a constant function on the diagonal in its action on  $\lra$ within  each complete diagonal order type $B$. In particular, $f$ then acts like the first projection on $\lra$ when restricted to $O$, and like the second when restricted to $O'$. 

Suppose that within some complete diagonal order type $B$, we have that $f$ acts as a constant function on the diagonal in its action on $\lra$; that is,  $f(\rightarrow,\rightarrow)=f(\leftarrow,\leftarrow)$ within $B$. It then is  crystal clear that $\CA$ contains all permutations on $T$, in contradiction with our assumption that $\rel A$ is not a \todo{first}-order reduct of $(T;=)$. To make this even crystal clearer, we show   that in that case, any finite injective tuple of elements of $T$ can be mapped, by a unary function in $\Pol(\sA)$,  to a finite injective tuple all of whose edges  point forward within the tuple. We use induction on the length of the tuple which we call $a$.  Without loss of generality, $f(\rightarrow,\rightarrow)=f(\leftarrow,\leftarrow)={}\rightarrow$ within $B$. If the length of $a$ is one, then picking  $\alpha,
\beta\in\Aut(\sT)$ such that $(\alpha(a),\beta(a))$ is in $B$ we get that $f(\alpha(x),\beta(x))$  has the desired property for $a$. For the induction step,  suppose that the length of $a$ is $n+1$, for some $n\geq 1$; by induction hypothesis, we may assume that all edges in the restriction of $a$ to its first $n$ components point forward. Without loss of generality, $(u,v)\in B$ implies $u_1<v_1$ for all $u,v\in{}\neq$. 
Let $\alpha,\beta\in\Aut(\sT)$ be such that setting  $a':=\alpha(a), a'':=\beta(a)$ we have  $((a_i',a_{n+1}'),(a_i'',a_{n+1}''))\in B$ for all $i\in\{1,\ldots,n\}$, and $(a_i',a_j'')\in O$ for all  $i,j\in\{1,\ldots,n\}$. This is possible since the first of the two conditions only imposes the order relation  $a_i'<a_i''$ for all $i\in\{1,\ldots,n\}$. Now  $f(\alpha(a),\beta(a))$ is a tuple all of whose edges point forward, bringing the induction to a successful conclusion.  

Summarizing, within every fixed complete diagonal order type for two pairs in $\neq$, we have that $f$ acts idempotently on $\lra$. Hence,  within each such type,
$f$ as a semilattice operation or as a projection on $\lra$. 

Similarly as in the proof of Lemma~\ref{lem:canonization-without-order}, we now show  by induction on $m\geq 2$ that for all tuples $a, b, c\in(\neq)^m$ such that for every $i\in\{1,\dots,m\}$ at most one of the pairs $a_i,b_i,c_i$ is in $\leftarrow$ 
there exists $h\in\Pol(\rel A)$ such that $h( a, b, c)\in(\rightarrow)^m$. A standard compactness argument then implies that $\CAT$ contains a function such that $\CAT\actson\lra$ is a majority operation.

The base case $m=2$ is clearly achieved by applying an appropriate projection. For the induction step, let  $a,b,c\in (\neq)^{m}$ for some $m\geq 3$. Since $f$ must be essential, it generates a binary injection which acts as a projection on $\{\leftarrow,\rightarrow\}$, and hence, as in the proof of Lemma~\ref{lem:canonization-without-order},  we may assume that the kernels of $a,b,c$ are identical. By induction hypothesis, we may moreover assume that all components of $a$ are in $\rightarrow$ except for the second, and all components of $b$ are in $\rightarrow$ except for the first.

If no equalities hold between $a_1$ and $a_2$, then there exists $\beta\in\Aut(\rel T)$ such that all components of $a_{1}$ are smaller than those of $\beta(b_{1})$
with respect to $<$, and  all components of $a_{2}$ are larger  than those of $\beta(b_{2})$. In that case, $f(a,\beta(b))$ yields the desired tuple.

Otherwise, assume that the second coordinate of $a_1$ equals the first coordinate of $a_2$; the other cases are treated similarly. 
Consider any complete diagonal order type $B$ for pairs in $\neq$ which is \emph{crossing}, i.e., it requires for a pair $(u,v)$ of pairs in $\neq$  some strict order relation between $u_1,v_1$ and the opposite strict order relation between $u_2,v_2$. 

If $f$ acts as a semilattice operation on $\lra$ within some crossing type $B$,  then by flipping the conditions of $B$ accordingly, we may assume that $f$ prefers  $\rightarrow$ over $\leftarrow$ on $B$. We may moreover assume that $B$ demands $u_1<v_1$   for a pair $(u,v)$ by flipping the conditions. But now we see that there exists $\beta\in\Aut(\sT)$ such that $(a_1,\beta(b_1))\in B$ and both components of  $a_{2}$ are larger that both components of $\beta(b_{2})$ with respect to $<$. Whence, $f(a,\beta(b))$ yields the desired tuple.

So assume henceforth that $f$ acts as a projection on $\lra$ within each crossing type $B$.
If $f$ behaves like the first projection  within  some  crossing type $B$ which demands $u_1<v_1$ for a pair $(u,v)$, then the same argument as above works. Otherwise, fix any such crossing type $B$; we now know that $f$ behaves like the second projection thereon. There exists $\beta\in\Aut(\sT)$  such that both components of $a_1$ are smaller than both components of $\beta(b_1)$ with respect to $<$, and such that $(a_{2},\beta(b_{2}))\in B$. Then $f(a,\beta(b))$ yields the desired tuple.
\end{proof}

\subsection{Bounded width}\label{subsect:T:bw}

\todom{We show that if $\sA$ is a first-order reduct of $\sT$ that is a model-complete core, then it has bounded width if and only if  $\CATT$ is not equationally affine, i.e., does not have a clone homomorphism to any clone of affine maps over a finite module.}

\begin{theorem}\label{thm:T-bw}
Let $\sA$ be a first-order reduct of $\sT$ \todom{that is a model-complete core}. Then precisely one of the following holds:
\begin{itemize}
    \item $\CA$ has a uniformly continuous clone homomorphism to the clone of affine maps over a finite module;
    \item $\CA$ contains for all $n\geq 3$ an $n$-ary operation that is canonical with respect to $\sT$  and a pseudo-WNU modulo $\overline{\Aut(\sT)}$.
\end{itemize}
\end{theorem}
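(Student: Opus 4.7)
The plan is to mirror the proof of Theorem~\ref{thm:mathdichotomy-tournament}, replacing the projection clone $\P$ by a clone of affine maps over a finite module and the loop lemma of approximations (Theorem~\ref{thm:2-cases-general}) by its second, equationally non-trivial version (Theorem~\ref{thm:2-cases-general-2}). First, Lemma~\ref{lem:cores} reduces us to the case where $\sA$ is not a first-order reduct of $(T;=)$, and under this assumption $\neq$ is preserved by $\CA$ by Proposition~\ref{prop:definition-neq}. The second alternative of the theorem holds as soon as $\CATT$ contains an $n$-ary pseudo-WNU modulo $\overline{\Aut(\sT)}$ for every $n\geq 3$, so we assume this fails for some $n\geq 3$. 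By the canonicity of members of $\CATT$ together with a standard compactness argument using the oligomorphicity of $\Aut(\sT)$, the failure is inherited by some finite quotient: there exists $k\geq 1$ for which the finite idempotent clone $\CATT\actson T^k/\Aut(\sT)$ contains no $n$-ary WNU. By the Barto--Kozik characterization of bounded width for finite idempotent algebras, this clone has a minimal subfactor on which the induced action is term-equivalent to a clone of affine operations over a finite module. Lifting this subfactor yields an $\Aut(\sT)$-invariant $S\subseteq T^k$ and an $\Aut(\sT)$-invariant equivalence relation $\sim$ on $S$ such that $\CATT\actson S/{\sim}$ is equationally affine, and in particular equationally non-trivial.

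Next, we will build a chain of canonical subclones between $\CATT$ and $\CA$ analogous to $\CATT\subseteq\CAT\subseteq\CAA$ of the P/NP proof, but adapted to the affine setting; this in particular requires affine variants of Lemma~\ref{lem:CATinCAA}, Lemma~\ref{lem:canonicalSiggers}, and Corollary~\ref{cor:canonization-without-order}. With these in place, we will verify the hypotheses of Theorem~\ref{thm:2-cases-general-2} for $\CAA\subseteq\CA$ acting on $S$ with subfactor $(S,\sim)$, using local interpolation of canonical functions by $\CA$ modulo $\Aut(\sT)$ via the Ramsey property of $(\rel T,<)$. The loop lemma then delivers two cases. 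In the first, $\sim$ admits a $\CA$-invariant presmooth approximation; since $\neq$ is $\CA$-invariant and $\Aut(\sT)$ is $n$-primitive by Example~\ref{ex:primitive-tournament}, Lemma~\ref{lem:primitive-neq-very-smooth} upgrades this approximation to a very smooth one. The fundamental theorem of smooth approximations (Theorem~\ref{thm:sa}) then produces a uniformly continuous clone homomorphism from $\CA$ to $\CAA\actson S/{\sim}$, and composition with the identification of the latter with a clone of affine maps over a finite module yields the first alternative of the theorem.

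It remains to rule out the second alternative of the loop lemma. There, Lemma~\ref{lem:weird-symmetric} provides a binary $f\in\CA$ enjoying a weak commutativity property with respect to $(S,\sim)$, and by the remark following that lemma $f$ can be assumed to be diagonally canonical with respect to $(\rel T,<)$. The task, analogous to Lemma~\ref{lem:maj-from-symmetric} in the P/NP argument, is to combine this $f$ with the generation techniques of Lemma~\ref{lem:generating-g2} and Proposition~\ref{prop:CATT-nontrivial} in order to construct, for every $n\geq 3$, an $n$-ary operation in $\CATT$ that is a pseudo-WNU modulo $\overline{\Aut(\sT)}$; this would contradict our starting hypothesis and rule out this case. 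The main obstacle is to obtain such WNUs \emph{simultaneously} for every arity $n\geq 3$, rather than a single operation of fixed arity as in the P/NP proof. This is expected to require a more refined induction, akin to the encoding of arbitrary finite tournaments by families of linear orders inside Lemma~\ref{lem:generating-g2}, but tailored to witness the WNU identity of each prescribed arity.
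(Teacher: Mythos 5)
Your architecture matches the paper's: reduce to the case that $\sA$ is not a first-order reduct of $(T;=)$, split on equational affineness of the canonical clone, and in the affine case run Theorem~\ref{thm:2-cases-general-2}, Lemma~\ref{lem:primitive-neq-very-smooth} and Theorem~\ref{thm:sa} along a chain $\CATT\subseteq\CAT\subseteq\CAA\subseteq\CA$. But two points keep this from being a proof. First, the ``affine variants'' of Lemma~\ref{lem:CATinCAA}, Lemma~\ref{lem:canonicalSiggers} and Corollary~\ref{cor:canonization-without-order} are not routine substitutions, and you give no indication of how to obtain them. In the paper, the inclusion $\CAT\subseteq\CAA$ (Lemma~\ref{lem:bw-inclusion}) no longer follows from essential unarity of the action on $\lra$: it uses Post's classification to see that $\CAT\actson\lra$ consists of linear maps over $\mathbb Z_2$, together with the classification of automorphism groups of reducts of $\sT$ (orbits of injective tuples are cut out by parity conditions, i.e., by cosets of linear subspaces that linear maps preserve). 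Likewise, transferring affineness to $\CAA$ (Lemma~\ref{lem:CAA-homo-module}) replaces the pseudo-Siggers compactness argument by one based on the $w_3,w_4$ weak near-unanimity characterization and an explicit computation with $\mathbb Z_2$-linear maps. These are where the substance of the affine case lies, and it also matters that you land on the two-element action $\CAT\actson\lra$ rather than an abstract affine subfactor of $\CATT\actson T^k/\Aut(\sT)$: the concreteness of Post's classification is what powers both transfer lemmas, and the subfactor must in any case be transported to $\CAA$ acting on $\Aut(\sA)$-orbits before the loop lemma applies. (A further small imprecision: Barto--Kozik/Mar\'oti--McKenzie give an affine \emph{or} trivial subfactor; the trivial case must be absorbed, which is harmless since $\P$ embeds into an affine clone.)

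Second, and more importantly, the step you flag as the ``main obstacle'' is not one, and your proposed attack on it points in the wrong direction. In the pseudo-loop case, Lemma~\ref{lem:maj-from-symmetric} applies verbatim and yields a majority operation in $\CAT\actson\lra$; since in this branch $\CAT\actson\lra$ is equationally affine, hence by Post a clone of linear maps over $\mathbb Z_2$, a majority on the two-element set $\lra$ is an immediate contradiction. No construction of pseudo-WNUs of all arities is needed here. The simultaneous all-arity pseudo-WNUs belong to the \emph{other} branch of the dichotomy: once $\CAT\actson\lra$ is not equationally affine, Post gives a majority and hence WNUs of every arity $n\geq 3$ on $\lra$, and these lift to canonical pseudo-WNUs in $\CATT$ by composing with iterated applications of the binary injection $g_2$ (this is Proposition~\ref{prop:CATT-nonaffine}, mirroring Proposition~\ref{prop:CATT-nontrivial}); the lifting is uniform in $n$ and requires no new induction in the style of Lemma~\ref{lem:generating-g2}. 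Your plan to derive the contradiction by manufacturing $n$-ary pseudo-WNUs directly from the weakly commutative function misplaces where the all-arity construction happens, and as stated that step does not go through.
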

\todom{If $\sA$ is a CSP template, i.e., has a finite signature, this gives us indeed a mathematical criterion for bounded width: in the first case $\sA$   does not have bounded width by results from~\cite{LaroseZadori,Topo-Birk}, and 
in the second case it does by~\cite{Bodirsky-Mottet}. It also proves Theorem~\ref{thm:bounded-width-classification} for the structure $\sB=\sT$: the first item (bounded width) of that theorem implies that the second item of Theorem~\ref{thm:T-bw} holds, which is identical to the third item of Theorem~\ref{thm:bounded-width-classification}. This in turn implies the second item of Theorem~\ref{thm:bounded-width-classification}  trivially. Finally, the second item of Theorem~\ref{thm:bounded-width-classification}  implies the second item of Theorem~\ref{thm:T-bw} (being incompatible with the first item), which implies the first item of Theorem~\ref{thm:bounded-width-classification}.}

\todom{
The following corollary is a description of bounded width without the assumption that $\sA$ is a model-complete core.}
\todom{
\begin{corollary}\label{cor:T:bw:main}
Let $\rel A$ be a CSP template that is a  first-order reduct of $\rel T$. If $\Pol(\rel A)$ has a uniformly continuous  minion  homomorphism to the clone of affine maps over a finite module, then $\sA$ does not have bounded width. Otherwise, $\Pol(\rel A)$   contains operations of all arities $\geq 3$ that are pseudo-WNU   modulo $\overline{\Aut(\rel T)}$, and $\sA$  has bounded width.
\end{corollary}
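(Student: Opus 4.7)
The plan is to deduce this corollary from Theorem~\ref{thm:T-bw} applied to the model-complete core of $\sA$, paralleling the proof of Corollary~\ref{cor:T:main}. Let $\sA'$ denote the model-complete core of $\sA$, so that $\CSP(\sA)=\CSP(\sA')$. By Lemma~\ref{lem:cores}, $\sA'$ is either a first-order reduct of $\rel T$ or a one-element structure. In the one-element subcase, $\End(\sA)$ contains a constant operation (by Theorem~\ref{thm:ramsey-cores}), so for every $n\geq 3$ this constant viewed as an $n$-ary operation is a WNU, and in particular a pseudo-WNU modulo $\overline{\Aut(\rel T)}$. Moreover $\CSP(\sA)$ is trivially of bounded width, which by~\cite{LaroseZadori,Topo-Birk} precludes a uniformly continuous minion homomorphism from $\Pol(\sA)$ to the clone of affine maps; thus the second alternative of the statement holds.

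In the subcase where $\sA'$ is a first-order reduct of $\rel T$, I would apply Theorem~\ref{thm:T-bw} to $\sA'$. If the first alternative of that theorem holds, then $\Pol(\sA')$ has a uniformly continuous clone homomorphism, and in particular a uniformly continuous minion homomorphism, to the clone of affine maps; by the transfer theorem from~\cite{wonderland}, $\Pol(\sA)$ then has a uniformly continuous minion homomorphism to the same clone, and $\sA$ lacks bounded width by~\cite{LaroseZadori,Topo-Birk}. If instead the second alternative of Theorem~\ref{thm:T-bw} holds, then for each $n\geq 3$ there is an $n$-ary $w\in\Pol(\sA')$ that is canonical with respect to $\rel T$ and pseudo-WNU modulo $\overline{\Aut(\rel T)}$. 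I would transfer $w$ to $w'\in\Pol(\sA)$ by precomposing each argument with the range-rigid endomorphism $g\in\End(\sA)$ whose range induces $\sA'$ (supplied by Theorem~\ref{thm:ramsey-cores}), i.e., $w'(x_1,\ldots,x_n):=w(g(x_1),\ldots,g(x_n))$. Then $w'\in\Pol(\sA)$ and the pseudo-WNU identity modulo $\overline{\Aut(\rel T)}$ is preserved verbatim, since its unary witnesses act from the left on $w$ and therefore on $w'$. By~\cite{Bodirsky-Mottet}, canonicity together with pseudo-WNUs of all arities in $\Pol(\sA')$ implies that $\CSP(\sA')=\CSP(\sA)$ has bounded width, and~\cite{LaroseZadori,Topo-Birk} then excludes a uniformly continuous minion homomorphism from $\Pol(\sA)$ to the clone of affine maps, placing us in the second alternative of the corollary.

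The hard work is essentially packaged into Theorem~\ref{thm:T-bw}; the corollary itself is a routine combination of that theorem with standard transfer results between a polymorphism clone and that of its model-complete core. The only subtle point worth flagging is the transfer of the pseudo-WNU identity from $\Pol(\sA')$ to $\Pol(\sA)$ via precomposition with~$g$, which is immediate because unary witnesses from $\overline{\Aut(\rel T)}$ act from the left and are unaffected by the substitution of arguments; no creativity beyond bookkeeping is required here.
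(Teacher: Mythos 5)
Your proposal is correct and follows essentially the same route as the paper's proof: pass to the model-complete core $\sA'$ via Lemma~\ref{lem:cores}, handle the one-element case with a constant polymorphism, apply Theorem~\ref{thm:T-bw} to $\sA'$, transfer the negative alternative using~\cite{wonderland} together with~\cite{LaroseZadori,Topo-Birk}, and lift the pseudo-WNUs back to $\Pol(\sA)$. The only point of divergence is that you make the lifting explicit by precomposing with the range-rigid endomorphism $g$ from Theorem~\ref{thm:ramsey-cores}, whereas the paper delegates this to the proof of Corollary~6.2 in~\cite{Topo}; your concrete recipe needs two small touch-ups. First, Theorem~\ref{thm:ramsey-cores} only guarantees that the range of $g$ has the same \emph{age} as the domain of $\sA'$, not that it is contained in that domain, so one should precompose with an arbitrary homomorphism $h\colon\sA\to\sA'$ (which exists by homomorphic equivalence, and one may realize $\sA'$ as an induced substructure of $\sA$ so that no outer homomorphism is needed) rather than with $g$ itself. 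Second, the unary witnesses of the pseudo-WNU identities produced by Theorem~\ref{thm:T-bw} for $\sA'$ are a priori functions on the (sub)domain of the copy of $\rel T$ underlying $\sA'$; to obtain witnesses in $\overline{\Aut(\rel T)}$ acting on the full domain, they must be extended by a routine compactness argument, which is harmless since the lifted operation has range inside the core. Neither point affects the validity of your argument.
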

}
\todom{
\begin{proof}
    The first statement is an immediate consequence of general results  from~\cite{LaroseZadori,Topo-Birk}. For the second statement, let $\sA'$ be the model-complete core of $\sA$. By Lemma~\ref{lem:cores} we have that $\sA'$ is either a first-order reduct of $\sT$, or a one-element structure. If the latter is the case, then $\Pol(\sA)$ contains a constant function, and the conclusion of our statement holds. We may thus assume that $\sA'$ is a first-order reduct of $\sT$. The assumption of $\Pol(\rel A)$ not having a uniformly continuous  minion  homomorphism to the clone of affine maps over a finite module implies that the same is true for $\Pol(\rel A')$ by~\cite{wonderland}, and hence the second item of  Theorem~\ref{thm:T-bw} applies to $\sA'$. It follows that $\sA'$ and hence also $\sA$ has bounded width by~\cite{Bodirsky-Mottet}. Moreover, the identities in $\Pol(\sA')$ provided by that item lift to  $\Pol(\sA)$ (see, for example, the proof of Corollary~6.2 in~\cite{Topo}), proving the statement. 
\end{proof}
}
\todom{
The proof strategy for Theorem~\ref{thm:T-bw} is similar as for Theorem~\ref{thm:mathdichotomy-tournament}. 
As before, we may assume that $\sA$ is not a first-order reduct of $(T;=)$, since otherwise the result holds trivially. It is known that if $\CATT$ is not equationally affine,   then the second item of  Theorem~\ref{thm:T-bw}  applies, by analogous results for finite structures (\cite{MarotiMcKenzie} and~\cite[Theorem~2.8]{Maltsev-Cond}) and standard lifting techniques (see~\cite{BPP-projective-homomorphisms}) -- we achieve this and more in another way in Proposition~\ref{prop:CATT-nonaffine}.
}

\todom{
Otherwise, we  show in several steps that  does $\Pol(\sA)$ has a uniformly continuous \todom{clone}  homomorphism to the clone of affine maps over a finite module. To this end, we first establish that if $\CATT$ is equationally affine, then so is $\CAT\actson\lra$ (Proposition~\ref{prop:CATT-nonaffine}); moreover, $\CAT$ is locally interpolated by $\CA$ (Lemma~\ref{lem:bw-interpolation}), and $\CAA$ is equationally affine as well (Lemma~\ref{lem:CAA-homo-module}). We are then in position to apply the theory of smooth approximations again.
If the equivalence relation on whose classes $\CAA$ acts by functions from a clone $\mathscr M$ of affine maps over a finite module is approximated by a $\CA$-invariant equivalence relation, then $\CA$ has a uniformly continuous clone homomorphism to $\mathscr M$ by Theorem~\ref{thm:sa}.
Otherwise, Theorem~\ref{thm:2-cases-general-2} and Lemma~\ref{lem:weird-symmetric} imply that $\CA$ has a weakly \todo{commutative} binary operation.
By Lemma~\ref{lem:maj-from-symmetric}, $\CAT\actson\lra$ contains a majority operation, which contradicts the fact that $\CAT\actson\lra$ is equationally affine.
}
\subsubsection{Details of the proof}

\begin{proposition}\label{prop:CATT-nonaffine}
Suppose that $\sA$ is a first-order reduct of $\sT$ that is a model-complete core. The following are equivalent.
\begin{itemize}
\item[(1)] $\CATT$ contains for all $n\geq 3$ an $n$-ary pseudo-WNU modulo $\overline{\Aut(\sT)}$;
    \item[(2)] $\CATT$ is not equationally affine;
    \item[(3)] $\CAT$ is not equationally affine;
    \item[(4)] $\CAT\actson\lra$ is not equationally affine.
\end{itemize}
\end{proposition}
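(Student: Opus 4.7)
The plan is to mirror the proof of Proposition~\ref{prop:CATT-nontrivial} line by line, with ``equationally non-trivial'' replaced throughout by ``not equationally affine''. The three implications $(1)\Rightarrow(2)\Rightarrow(3)\Rightarrow(4)$ should follow by the same standard clone-homomorphism arguments (restriction along $\CATT\subseteq\CAT$ and lifting along the natural quotient $\CAT\to\CAT\actson\lra$), together with the observation that in any clone of affine maps over a finite module $M$, no idempotent WNU of arity equal to a prime divisor of~$|M|$ can exist, since the coefficients of such a WNU must be all equal and sum to~$1$, which has no solution in characteristic~$p$ when the arity is~$p$; this is enough to propagate a pseudo-WNU identity modulo $\overline{\Aut(\sT)}$ down to the finite-orbit quotient, where the unary witnesses act trivially, producing a contradiction.

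For the main implication $(4)\Rightarrow(1)$ I would follow the strategy of the corresponding implication in Proposition~\ref{prop:CATT-nontrivial}. Since projections embed into any affine clone, ``not equationally affine'' implies ``equationally non-trivial'', so Lemma~\ref{lem:generating-g2} yields $g_2\in\Pol(\sA)$. Next, the finite idempotent clone $\CAT\actson\lra$ on the two-element set $\lra$ is not equationally affine, so by Post's classification it must contain a majority or a semilattice operation; from either I can then produce a WNU of every arity $n\geq 3$ acting on $\lra$ (for instance $\wedge$ iterated on $n$ arguments, or an iterated majority construction in the style of~\cite{MarotiMcKenzie}). Lifting back through the quotient, $\CAT$ then contains for each $n\geq 3$ an $n$-ary operation $f_n$ whose action on $\lra$ is a WNU.

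Finally I would assemble the sought operation in $\CATT$ by the same recipe as in Proposition~\ref{prop:CATT-nontrivial}. Setting
\[
h_n(x_1,\dots,x_n):=g_2(x_1,g_2(x_2,\dots,g_2(x_{n-1},x_n)\dots)),
\]
this is an $n$-ary injection that is canonical with respect to $\sT$ and acts as the first projection on $\Aut(\sT)$-orbits, inheriting both properties from $g_2$ as described in Figure~\ref{fig:canonical}. I would then define
\[
w_n(x_1,\dots,x_n):=f_n\bigl(h_n(x_1,\dots,x_n),h_n(x_2,\dots,x_n,x_1),\dots,h_n(x_n,x_1,\dots,x_{n-1})\bigr).
\]
Canonicity of $h_n$ ensures that the tuple of arguments to $f_n$ has $\Aut(\sT)$-orbit determined by the orbit of $(x_1,\dots,x_n)$, and since $f_n\in\CAT$ acts on the orbits of the resulting injective outputs, $w_n$ belongs to $\CATT$. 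The pseudo-WNU identity modulo $\overline{\Aut(\sT)}$ follows because the arguments to $f_n$ in $w_n(x,\dots,x,y,x,\dots,x)$ are cyclic permutations of one another as $y$ traverses the positions, while $f_n$ is a WNU on $\lra$.

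The principal technical obstacle I anticipate is verifying that $w_n$ genuinely lies in $\CATT$, namely that it is canonical on \emph{all} tuples and not only on the injective ones on which $f_n\in\CAT$ is known to act canonically. I plan to handle this by exploiting the injectivity of $g_2$ (hence of $h_n$), which forces the tuples fed into $f_n$ to have kernels determined by the kernel of the input and positions determined up to $\Aut(\sT)$ by the orbit of $(x_1,\dots,x_n)$, so that the canonicity check on arbitrary tuples reduces to the injective case already covered by $f_n\in\CAT$.
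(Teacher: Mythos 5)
Your proposal is correct and takes essentially the same route as the paper: the implications $(1)\Rightarrow(2)\Rightarrow(3)\Rightarrow(4)$ are handled as trivial restrictions/liftings of clone homomorphisms, and $(4)\Rightarrow(1)$ uses Lemma~\ref{lem:generating-g2}, Post's classification of clones on $\lra$, and exactly the composition $f_n(h_n(x_1,\dots,x_n),\dots,h_n(x_n,x_1,\dots,x_{n-1}))$ that the paper uses. Two minor points: the semilattice case you allow for in Post's classification cannot in fact occur (any binary $f\in\CAT$ satisfies $f(\leftarrow,\rightarrow)={\leftarrow}$ iff $f(\rightarrow,\leftarrow)={\rightarrow}$, so no commutative action on $\lra$ exists, which is why the paper asserts a majority outright), though covering it is harmless; and your stated reason for the pseudo-WNU identity --- that the argument tuples are cyclic permutations of one another --- is not by itself valid, since a WNU need not be cyclic. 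What actually closes the argument is that all the arguments $h_n(\sigma^j(x,\dots,y,\dots,x))$ whose first input is $x$ lie in a single $\Aut(\sT)$-orbit (by the behaviour of $g_2$ on $\{=\}\cup\lra$), so on orbits $f_n$ is always evaluated at a tuple of the all-but-one-equal shape, which is precisely where the WNU identity for $f_n\actson\lra$ applies.
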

\begin{proof}
    The proof is almost identical with the one of Proposition~\ref{prop:CATT-nontrivial}, but we include it for the convenience of the reader. The implications from~(1) to~(2), from~(2) to~(3), and from~(3) to~(4) are trivial. 

We show the implication from~(4) to~(1). By Lemma~\ref{lem:generating-g2},  $\Pol(\sA)$ contains $g_2$. Moreover, by~\cite{Post}, we have that $\CAT\actson\lra$ contains a ternary majority operation; hence, it contains for all $n\geq 3$ a WNU operation. For any such operation $f$,  setting    $h_2:=g_2(x_1,x_2)$ and then for all $n\geq 3$ recursively  $h_n:=g_2(x_1,h_{n-1}(x_2,\ldots,x_n))$ we have that the function $f(h(x_1,\ldots,x_n),\ldots,h(x_n,x_1,\ldots,x_{n-1}))$  is an element of $\CATT$ which is a pseudo-WNU operation modulo $\overline{\Aut(\sT)}$, proving~(1).
\end{proof}

\begin{lemma}\label{lem:bw-inclusion}
    Let $\sA$ be a first-order reduct of $\sT$ that is a model-complete core. Suppose that $\CAT\actson\lra$ is equationally  affine.
    Then $\CAT\subseteq \CAA$.
\end{lemma}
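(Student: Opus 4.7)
The plan is to mirror the proof of Lemma~\ref{lem:CATinCAA}, again invoking Post's classification~\cite{Post} and the fact that $\sT$ is homogeneous in a binary language, but upgrading the analysis from the essentially unary case to the equationally affine case. First, by Proposition~\ref{prop:CATT-nonaffine}, the hypothesis lifts from $\CAT\actson\lra$ to $\CAT$ itself. If $\CAT\actson\lra$ happens to be equationally trivial, then $\CAT$ is equationally trivial by Proposition~\ref{prop:CATT-nontrivial} and Lemma~\ref{lem:CATinCAA} gives the conclusion. If $\sA$ is a first-order reduct of $(T;=)$, then $\Aut(\sA)$ is the full symmetric group, so there is a single $\Aut(\sA)$-orbit of injective pairs and $\CAA=\CA\supseteq\CAT$ trivially. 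I would therefore assume henceforth that neither degenerate situation occurs.

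Under these assumptions, Post's classification forces $\CAT\actson\lra$ to contain the ternary minority operation, and since $\sA$ is a model-complete core, every $k$-ary $f\in\CAT$ acts on $\lra$, after composing with a suitable element of $\overline{\Aut(\sA)}$, as an affine operation $(x_1,\ldots,x_k)\mapsto\sum_{\ell\in S}x_\ell\pmod 2$ for some odd-sized $S\subseteq\{1,\ldots,k\}$ (identifying $\lra$ with $\mathbb{F}_2$). Using that $\sT$ is homogeneous in a binary language, the $\Aut(\sT)$-orbit of any injective $n$-tuple is determined by the $\lra$-orbits of its coordinate pairs, so for injective $n$-tuples $a^{(1)},\ldots,a^{(k)}$ of equal length, the arrow at each pair of coordinates $(i,j)$ of $f(a^{(1)},\ldots,a^{(k)})$ is precisely $\sum_{\ell\in S}(\text{arrow of }a^{(\ell)}\text{ at }(i,j))\pmod 2$. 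In particular, $f$ preserves $\Aut(\sT)$-orbit equivalence of injective tuples of every length.

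To conclude, I would perform a case analysis on the relationship between $\Aut(\sA)$ and $\Aut(\sT)$. Having excluded the reduct of $(T;=)$, and since $\sA$ is a model-complete core that is a first-order reduct of $\sT$, either $\Aut(\sA)=\Aut(\sT)$, in which case $\Aut(\sA)$-orbit equivalence coincides with $\Aut(\sT)$-orbit equivalence on injective tuples and the previous paragraph already gives $f\in\CAA$; or $\Aut(\sA)$ strictly contains $\Aut(\sT)$ and therefore contains a ``global flip'' $\sigma$ that exchanges $\rightarrow$ and $\leftarrow$ on every pair. In the latter case, suppose $a^{(\ell)}\sim_{\Aut(\sA)}b^{(\ell)}$ for each $\ell$, and record by $\varepsilon_\ell\in\{0,1\}$ whether $b^{(\ell)}$ is $\Aut(\sT)$-equivalent to $a^{(\ell)}$ or to $\sigma(a^{(\ell)})$. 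By affinity of the action of $f$ on $\lra$, the arrow at each $(i,j)$ of $f(b^{(1)},\ldots,b^{(k)})$ differs from that of $f(a^{(1)},\ldots,a^{(k)})$ by the global constant $\sum_{\ell\in S}\varepsilon_\ell\pmod 2$; hence $f(b^{(1)},\ldots,b^{(k)})$ is either $\Aut(\sT)$-equivalent to $f(a^{(1)},\ldots,a^{(k)})$ or to its global flip, and in either case lies in the same $\Aut(\sA)$-orbit, so $f\in\CAA$.

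The main obstacle is conceptual rather than technical: unlike in Lemma~\ref{lem:CATinCAA}, where $f$ acts on $\lra$ like a projection and therefore preserves $\Aut(\sT)$-orbits on longer tuples by mimicking an input, here $f$ is typically not canonical with respect to $\sT$ on longer tuples, since the minority of three distinct tournament types is usually a fourth type. The crux of the argument is that the affine nature of the action forces any discrepancy between $\Aut(\sT)$-orbits of outputs corresponding to $\Aut(\sA)$-equivalent inputs to be a single uniform global flip, which is absorbed precisely by the extra element of $\Aut(\sA)\setminus\Aut(\sT)$.
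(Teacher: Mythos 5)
Your overall strategy---exploiting the $\mathbb Z_2$-affine structure of the action on $\lra$ together with a description of the $\Aut(\sA)$-orbits of injective tuples---is the same as the paper's, and your reductions to the non-degenerate case as well as the computation that the arrow of $f(a^{(1)},\dots,a^{(k)})$ at each coordinate pair $(i,j)$ equals $\sum_{\ell\in S}(\text{arrow of }a^{(\ell)}\text{ at }(i,j))$ are fine. The gap is in your final case analysis: the dichotomy ``either $\Aut(\sA)=\Aut(\sT)$ or $\Aut(\sA)$ contains a global flip'' is false. By the classification of closed supergroups of $\Aut(\sT)$ (Bennett's thesis, which is precisely what the paper's proof invokes), besides $\Aut(\sT)$, the group generated by $\Aut(\sT)$ and the reversal, and the full symmetric group on $T$, there are also the \emph{switching} reducts: the group generated by $\Aut(\sT)$ together with the permutations that reverse exactly the arcs between some vertex set and its complement, and the group generated by these together with the reversal. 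The first of these contains no permutation reversing \emph{all} arcs (one would need $\varepsilon_x+\varepsilon_y=1$ for all $x\neq y$, impossible on three or more points), so it falls into neither of your two cases; and for both switching groups the $\Aut(\sA)$-orbit of an injective $k$-tuple is not ``the $\Aut(\sT)$-orbit or its global flip'' but the set of tuples whose arc-configuration differs by a coboundary $(\varepsilon_i+\varepsilon_j)_{i<j}$, so your uniform-flip computation does not cover them. Being a model-complete core does not exclude these reducts.

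The omission is repairable by the same kind of computation you already did: the coboundaries form a $\mathbb Z_2$-linear subspace $V$ of $\lra^{\binom{k}{2}}$, and if each $b^{(\ell)}$ differs from $a^{(\ell)}$ by an element of $V$, then $f(b^{(1)},\dots,b^{(k)})$ differs from $f(a^{(1)},\dots,a^{(k)})$ by a sum of $|S|$ elements of $V$, which again lies in $V$. The paper avoids the case distinction altogether by quoting the classification in the uniform form that every $\Aut(\sA)$-orbit of injective $k$-tuples is cut out by parity conditions, i.e., is an intersection of cosets of $\mathbb Z_2$-linear subspaces of $\lra^{\binom{k}{2}}$, and then observing that the $\mathbb Z_2$-affine operations of $\CAT$ act on the cosets of any such subspace. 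You should either adopt that coset argument or add the switching cases explicitly; as written, your proof does not cover all first-order reducts of $\sT$.
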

\begin{proof}
    By the classification of automorphism groups of  first-order reducts of $\sT$~\cite{Bennett-thesis}, for all $k\geq 2$ we have that 
    the orbit of an injective $k$-tuple $a$ under $\Aut(\sA)$
    is defined by parity conditions on the number of $\rightarrow$ in the list of orbits of $(a_i,a_j)$ for $1\leq i<j\leq k$.
    Since $\CAT\actson\lra$ is equationally affine, we have by Post's classification~\cite{Post} that $\CAT\actson\lra$ consists of linear maps over $\mathbb Z_2$ (where one arbitrarily puts $\lra$ and $\{0,1\}$ in bijection).
    Thus, for any $k\geq 1$ and any linear subspace of $ (\mathbb Z_2)^k$, the operations of $\CAT$ act on every coset of this subspace.
    By our remark above, orbits of injective tuples under $\Aut(\sA)$
    correspond to intersections of such cosets,
    and therefore $\CAT$ acts on the set of orbits under $\Aut(\sA)$.
\end{proof}

\begin{lemma}\label{lem:bw-interpolation}
Let $\sA$ be a first-order reduct of $\sT$ that is a model-complete core and not a first-order reduct of $(T;=)$. 
    Suppose that $\CAT\actson\lra$ is equationally affine.
    Then $\CAT$ is locally interpolated by $\CA$.
\end{lemma}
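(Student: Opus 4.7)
The plan is to mimic the argument of Corollary~\ref{cor:canonization-without-order} almost verbatim, weakening the hypothesis ``$\CAT$ is equationally trivial'' to ``$\CAT\actson\lra$ is equationally affine'' and exploiting the fact that an equationally affine clone on a two-element set cannot contain a majority operation.

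Concretely, I would start with an arbitrary $f\in\CA$ and use the Ramsey property of $(\rel T,<)$ to obtain a function $g\in\CA$ that is canonical with respect to $(\rel T,<)$ and locally interpolated by $f$ modulo $\Aut(\rel T)$. If $g\in\CAT$, there is nothing to prove. Otherwise, $g$ cannot be essentially unary, for the same reason as in the proof of Corollary~\ref{cor:canonization-without-order}: an essentially unary $g$ which is canonical with respect to $(\rel T,<)$ but not in $\CAT$ would, via Lemma~\ref{lem:cores}, force $\sA$ to be a first-order reduct of $(T;=)$, contradicting the hypothesis.

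Hence $g$ is essential, and I would invoke Lemma~\ref{lem:tournament-binary-injective} together with Lemma~\ref{lem:generateProjection} to extract a binary injection in $\Pol(\rel A)$ acting as a projection on $\lra$. Both hypotheses of Lemma~\ref{lem:canonization-without-order} are then in place, so that lemma yields a majority operation in $\CAT\actson\lra$. This is the sought contradiction: any clone homomorphism to a clone of affine maps over a finite module would have to send such a majority to a ternary affine operation $ax+by+cz+d$ satisfying all three majority identities, which forces $a=b=c=0$ in contradiction with $a+b+c=1$. So no such $g$ outside $\CAT$ can exist, and $\CAT$ is indeed locally interpolated by $\CA$ modulo $\Aut(\rel T)$.

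The only genuinely delicate step is excluding the essentially unary case, which is settled exactly as in Corollary~\ref{cor:canonization-without-order} by appealing to Lemma~\ref{lem:cores}; the remainder is a direct reassembly of results already established in this section, with ``no majority in an equationally affine clone'' replacing ``no majority in an equationally trivial clone'' as the source of the contradiction.
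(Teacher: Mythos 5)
Your proof is correct and follows essentially the same route as the paper's: reduce to canonical functions via the Ramsey property of $(\rel T,<)$, dispose of the essentially unary case via Lemma~\ref{lem:cores}, extract a binary injection acting as a projection on $\lra$ via Lemmas~\ref{lem:tournament-binary-injective} and~\ref{lem:generateProjection}, and derive a contradiction from Lemma~\ref{lem:canonization-without-order} because an equationally affine clone cannot contain a majority operation. Your explicit computation showing that an affine map $ax+by+cz+d$ cannot satisfy the majority identities is a welcome spelling-out of a step the paper leaves implicit.
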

\begin{proof}
    Every essentially unary function in $\Pol(\sA)$ which is canonical with respect to $(\sT;<)$ is also canonical with respect to $\sT$ and hence an element of $\CAT$, or else \todo{$\sA$ is a first-order reduct of $(T;=)$ by Lemma~\ref{lem:cores}}. Hence, since $\CA$ locally interpolates its  canonical functions with respect to $(\sT;<)$, we may assume that $\CA$ contains an essential operation, and  Lemma~\ref{lem:generateProjection} gives that it  contains a binary injective operation that behaves like a projection on $\lra$.
    Thus, by Lemma~\ref{lem:canonization-without-order}, either every operation of $\CA$  that is canonical with respect to $(\sT,<)$ is in $\CAT$, or there is $f\in\CATT$ that is a majority operation in its action on $\lra$.
    The latter is impossible  since $\CAT\actson\lra$ is equationally affine.
    Since every operation in $\CA$ locally interpolates an operation that is canonical with respect to $(\sT,<)$, we are done.
\end{proof}

\begin{lemma}\label{lem:CAA-homo-module}
    Let $\sA$ be a first-order reduct of $\sT$ that is a model-complete core and not a first-order reduct of $(T;=)$. 
    If  $\CAT\actson\lra$ is equationally  affine, then so is $\CAA$.
\end{lemma}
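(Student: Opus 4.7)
The plan is to argue by contraposition: assume that $\CAA$ is not equationally affine and derive a contradiction with the standing hypothesis that $\CAT\actson\lra$ is equationally affine. The overall structure mirrors Lemma~\ref{lem:canonicalSiggers}, with pseudo-WNU identities of even arity playing the role of the pseudo-Siggers identity there.

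The proof rests on an elementary observation from Post's classification~\cite{Post}: an equationally affine idempotent clone on the two-element set $\lra$ is either the clone of projections or the clone of $\mathbb{Z}_2$-linear operations, and a direct computation shows that neither contains an idempotent WNU of any even arity~$n$, since solving the WNU and idempotency conditions over $\mathbb{Z}_2$ forces the existence of $a \in \mathbb{Z}_2$ with $na \equiv 1 \pmod{2}$, impossible for even~$n$. Hence it will suffice to produce, for some even~$n$, an $n$-ary pseudo-WNU in $\CAT$ modulo $\overline{\Aut(\sA)}$ acting on injective tuples whose induced action on $\lra$ satisfies a (possibly flipped) WNU identity.

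The main step is to locate such an operation in $\CAA$. Since $\CAA$ is not equationally affine, a compactness argument in the vein of Proposition~\ref{prop:CAA_on_orbits_trivial} produces a $k$ for which the action of $\CAA$ on $\Aut(\sA)$-orbits of injective $k$-tuples is a finite idempotent clone with no clone homomorphism to any clone of affine maps over a finite module. Because $\CAA$ is also not equationally trivial (being not equationally affine), the Mar\'oti--McKenzie theorem then furnishes $n$-ary WNU operations in this finite quotient for all sufficiently large~$n$; fixing any even such~$n$, a further compactness argument in the style of those in Lemma~\ref{lem:canonicalSiggers} yields an operation $w \in \CAA$ that acts as a pseudo-WNU modulo $\overline{\Aut(\sA)}$ on injective tuples of any length. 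I then apply Lemma~\ref{lem:bw-interpolation}, available by our standing hypothesis, to obtain $w' \in \CAT$ locally interpolated by $w$ modulo $\Aut(\sT)$. The identity-chasing computation at the end of the proof of Lemma~\ref{lem:canonicalSiggers} transfers the pseudo-WNU identity to $w'$.

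Restricting $w'$ to its action on $\lra$ produces an $n$-ary pseudo-WNU in $\CAT\actson\lra$ modulo $\overline{\Aut(\sA)}$. Whether $\overline{\Aut(\sA)}$ acts on $\lra$ trivially or by the arrow-flipping involution, rewriting the identity in $\mathbb{Z}_2$-linear form reduces to the impossible equation $na \equiv 1 \pmod{2}$ from the first paragraph, contradicting the equational affiness of $\CAT\actson\lra$. The principal obstacle in this plan is the first step: translating "$\CAA$ not equationally affine'' into a concrete finite-domain witness (an even-arity WNU on $\Aut(\sA)$-orbits) suitable for the subsequent compactness and interpolation arguments. Combining the analog of Proposition~\ref{prop:CAA_on_orbits_trivial} for affine maps with the Mar\'oti--McKenzie theorem makes this translation possible, and the remaining steps closely parallel those of Lemma~\ref{lem:canonicalSiggers}.
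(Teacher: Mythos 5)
Your overall scaffolding is the same as the paper's (pass to the finite action on $\Aut(\sA)$-orbits of injective $k$-tuples, lift a WNU by compactness as in Lemma~\ref{lem:canonicalSiggers}, interpolate into $\CAT$ via Lemma~\ref{lem:bw-interpolation}, and derive a parity contradiction from the $\mathbb Z_2$-linear form of the action on $\lra$), but your finite combinatorial core is genuinely different: you use a single WNU of \emph{even} arity, whereas the paper uses the pair $w_3,w_4$ with the linking identity $w_3(x,x,y)=w_4(x,x,x,y)$ from~\cite{Maltsev-Cond}. The even-arity idea is sound and in fact cleaner, because an idempotent $\mathbb Z_2$-linear operation $\sum_{i\in I}x_i$ has $|I|$ odd, hence $|I|<n$ for even $n$, so the $n$ shifted evaluations produce \emph{both} possible values --- which is exactly what an odd-arity WNU (e.g.\ the minority $x_1+x_2+x_3$) fails to do and why the paper needs the $3$--$4$ pair. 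Two small points to tighten: you must arrange that $w'$ acts idempotently on $\lra$ before invoking the parity computation, and the fixed even arity must come from a statement uniform over all the finite quotients (e.g.\ $n=4$ via \cite[Theorem~2.8]{Maltsev-Cond}); an ``all sufficiently large arities'' bound whose threshold depends on the algebra would break the compactness over all $k$.

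There is, however, a genuine gap in your last step. You derive the contradiction by evaluating the pseudo-WNU condition on $\lra$, i.e.\ on injective pairs, and claim it reduces to $na\equiv 1\pmod 2$ ``whether $\overline{\Aut(\sA)}$ acts on $\lra$ trivially or by the arrow-flipping involution.'' But $\Aut(\sA)$ need not induce any well-defined action on $\lra$ (switching automorphisms reverse some arrows and not others), and what matters is the $\Aut(\sA)$-orbit equivalence on injective pairs. When $\leftarrow$ and $\rightarrow$ lie in the same $\Aut(\sA)$-orbit --- which happens for several reducts of $\sT$ --- the condition ``all shifted values lie in one $\Aut(\sA)$-orbit of pairs'' is vacuous: it is satisfied by every projection, so nothing forces the coefficients $a_i$ to be equal and no equation $na\equiv 1$ arises. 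The repair is to run the computation where your compactness argument already put you, namely on injective $k$-tuples for $k$ large enough that $\Aut(\sA)$ has at least two orbits (available since $\sA$ is not a reduct of $(T;=)$). Using the identification of $\Aut(\sT)$-orbits of injective $k$-tuples with $\lra^{\binom{k}{2}}$, the action of $w'$ there is the power of $\sum_{i\in I}x_i$, and for two $\Aut(\sT)$-orbits $O,P$ the shifted evaluations at $(P,O,\dots,O),\dots,(O,\dots,O,P)$ take the value $P$ at positions in $I$ and $O$ at positions outside $I$; since both kinds of position occur, $O$ and $P$ must be $\Aut(\sA)$-equivalent for \emph{all} such $O,P$, contradicting the choice of $k$. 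With this correction your proof is complete.
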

\begin{proof}
    We prove the lemma by contradiction; the proof is similar to that of Lemma~\ref{lem:canonicalSiggers}. Let $\mathscr M$ be a clone of affine maps over a finite module such that $\CAT\actson\lra$ has a clone homomorphism to $\mathscr M$. 
    Suppose that $\CAA$ has no clone homomorphism to $\mathscr M$. Let $U$ be the set of injective tuples of elements of $T$ of a fixed length $n\geq 2$ large enough so that $\Aut(\sA)$ acts with at least two orbits on $U$.  
    Then also the action of $\CAA$ on the (finite) set $U/{\Aut(\sA)}$ of $\Aut(\sA)$-orbits of injective tuples does not admit a clone homomorphism to $\mathscr M$.
    Thus, by~\cite{Maltsev-Cond},
    $\CAA$ contains operations $w_3,w_4$ of arity 3 and 4 whose actions
    on $U/{\Aut(\sA)}$ induce weak near-unanimity operations,
    satisfying moreover $w_3(x,x,y)=w_4(x,x,x,y)$ for all $x,y\in U/{\Aut(\sA)}$.
    Let $w'_3,w'_4\in\CAT$ be locally interpolated by $w_3$ and $w_4$; such  operations exist by Lemma~\ref{lem:bw-interpolation}.
    The actions of $w'_3,w'_4$ on $\lra$ are linear maps over $\mathbb Z_2$ by~\cite{Post} (by an arbitrary identification of $\lra$ and $\{0,1\}$),
    and they still satisfy the weak near-unanimity identities on  $U/{\Aut(\sA)}$ (by the same argument as in Lemma~\ref{lem:canonicalSiggers}).
    Moreover, one can assume that $w'_3$ and $w'_4$ induce idempotent maps on $\lra$.
    Therefore, we can write $w'_3(x_1,x_2,x_3)$ as $\sum_{i\in I} x_i$,
    with $I\subseteq\{1,2,3\}$ and $|I|$ odd,
    and similarly $w'_4(x_1,x_2,x_3,x_4)$ as $\sum_{j\in J} x_j$ with $|J|$ odd.
    
    Note that there is a one-to-one correspondence between the $\Aut(\sT)$-orbits of injective $n$-tuples 
    and the tuples from $\lra^{\binom{n}{2}}$ since the orbit of a tuple is determined by the orbits of its projections to two coordinates. Moreover,  the action $\CAT\actson U/\Aut(\sT)$ is isomorphic to the power $\CAT\actson\lra^{\binom{n}{2}}$ of the  action $\CAT\actson\lra$.
    Let $a,b$ be two tuples in $U$ that are not in the same orbit under $\Aut(\sA)$,
    and let $O$ and $P$ be their orbits under $\Aut(\sT)$.
    If $|I|=1$, say $I=\{1\}$, then $O=w'_3(O,P,P)$ and $P=w'_3(P,O,P)$ must be in the same orbit in $\Aut(\sA)$, a contradiction.
    We arrive at the same conclusion if $|J|=1$.
    Otherwise, $I=J=\{1,2,3\}$ without loss of generality. 
    Then $P=w'_3(P,O,O)$ and $w'_4(P,O,O,O)$ are subsets of the same orbit under $\Aut(\sA)$,
    and $w'_4(P,O,O,O)$ and $w'_4(O,O,O,P)=O$ are subsets of the same orbit
    under $\Aut(\sA)$ so that again, $a$ and $b$ are in the same orbit, a contradiction.
\end{proof}

 \section{Homogeneous Graphs}\label{sect:graphs}
Let $\sH=(H;E)$ be any  universal homogeneous $K_n$-free graph, where $n\geq 3$, or the universal homogeneous graph also known as the random graph. 

\subsection{The P/NP dichotomy}\label{subsect:H:PNP} Using exactly  the same strategy as for the random tournament, we are going to prove the following.

\begin{theorem}\label{thm:H:mathdichotomy}
	Let $\rel A$ be a first-order reduct of $\rel H$ \todom{that is a model-complete core}. 
	Then precisely one of the following holds:
	\begin{itemize}
		\item $\Pol(\rel A)$ has a uniformly continuous clone homomorphism to $\Projs$;
		\item $\Pol(\rel A)$ contains a ternary operation that is canonical with respect to $\sH$ and pseudo-cyclic modulo $\overline{\Aut(\rel H)}$.
	\end{itemize}
\end{theorem}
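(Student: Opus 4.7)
The plan is to mirror verbatim the strategy of Theorem~\ref{thm:mathdichotomy-tournament}, replacing the tournament $\rel T$ by the graph $\rel H$, the set $\lra$ of $\Aut(\sT)$-orbits of injective pairs by the analogous set $\en$ of $\Aut(\sH)$-orbits, and the Ramsey expansion $(\sT,<)$ by $(\sH,<)$. One first discards the easy scenario where $\rel A$ is a first-order reduct of $(H;=)$, in which case the theorem follows as in~\cite{BodChenPinsker}. Otherwise one shows that the disequality relation is pp-definable in $\rel A$, so that $\Pol(\sA)$ acts on the set of injective $n$-tuples for every $n\geq 1$. For the random graph and the universal $K_n$-free graph with $n\geq 4$ this follows directly from Proposition~\ref{prop:definition-neq}, while the $K_3$-free case requires an ad hoc argument based on joining two points via a common non-neighbour.

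Next, an analog of Proposition~\ref{prop:CATT-nontrivial} for $\sH$ shows the equivalence of equational non-triviality of $\CAHH$, of $\CAH$, of the action $\CAH\actson\en$, and of the existence in $\CAHH$ of a ternary pseudo-cyclic operation modulo $\overline{\Aut(\sH)}$. If $\CAH\actson\en$ is equationally non-trivial, we immediately obtain the second case of the theorem. Otherwise, graph analogs of Lemmas~\ref{lem:CATinCAA}, \ref{lem:canonization-without-order}, and~\ref{lem:canonicalSiggers} yield that $\CAH\subseteq\CAA$, that $\CA$ locally interpolates $\CAH$ modulo $\Aut(\sH)$, and that $\CAA$ is equationally trivial. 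Using Corollary~\ref{cor:binaryinjections}, Proposition~\ref{prop:CAA_on_orbits_trivial} then gives $k\geq 1$ such that the action of $\CAA$ on the $\Aut(\sA)$-orbits of injective $k$-tuples $U\subseteq H^k$ is equationally trivial.

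We then apply the loop lemma of smooth approximations (Theorem~\ref{thm:2-cases-general}) to the pair $(\CAA\actson U)\subseteq(\CA\actson U)$ with $n=1$. The hypotheses are satisfied: $\sH$ and hence $\CA\actson U$ have no algebraicity, $\sA$ is a model-complete core, and $\CAA$ is canonical with respect to $\gG_{\CA}=\Aut(\sA)$ by construction. In the first case of the theorem, a graph analog of Example~\ref{ex:primitive-tournament}---obtained by free amalgamation for the random graph, and by amalgamation through non-edges for the $K_n$-free graphs---establishes that $\Aut(\sH)$ is $n$-``primitive'' for every $n\geq 1$. Consequently Lemma~\ref{lem:primitive-neq-very-smooth} promotes the presmooth approximation of the loop lemma to a very smooth one, and the fundamental theorem (Theorem~\ref{thm:sa}) produces a uniformly continuous clone homomorphism from $\CA$ to $\Projs$. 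In the second case, Lemma~\ref{lem:weird-symmetric} yields a binary weakly commutative function in $\CA$, from which an analog of Lemma~\ref{lem:maj-from-symmetric} extracts an operation witnessing equational non-triviality of $\CAH\actson\en$, contradicting our case assumption.

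The main obstacle will be to establish the graph-theoretic counterparts of Lemmas~\ref{lem:canonization-without-order}, \ref{lem:canonicalSiggers}, and~\ref{lem:maj-from-symmetric}: these rely on a case analysis of the behaviour of binary and ternary polymorphisms on the four orbits of $\Aut(\sH,<)$ on injective pairs, adapted from the analysis for the orbits of $\Aut(\sT,<)$. For the $K_n$-free graphs additional care must be exercised throughout, as  the free amalgamation available for the random graph is no longer at one's disposal; in particular, constructions of witnessing finite configurations must systematically amalgamate over non-edges so that no forbidden $K_n$ is created. These adaptations are, however, local and follow the same template as the tournament arguments, which is precisely why, as announced in Section~\ref{subsect:dicho}, the whole proof fits into only a few pages.
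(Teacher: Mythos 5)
Your proposal is correct and follows essentially the same route as the paper, which indeed proves Theorem~\ref{thm:H:mathdichotomy} by transferring the tournament argument verbatim with $\en$ replacing $\lra$ and $(\sH,<)$ replacing $(\sT,<)$, including the special treatment of $\neq$ for the $K_3$-free graph and the verification of $n$-``primitivity'' via amalgamation over non-edges. The only cosmetic slip is the remark that free amalgamation is unavailable for the $K_n$-free graphs --- they do have free amalgamation (what they lack is universality for arbitrary graphs), and your actual prescription of amalgamating over non-edges is exactly what the argument needs.
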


If $\sA$ is a CSP template, this yields a complexity dichotomy in the same fashion as for the random tournament, \todom{and Corollary~\ref{cor:T:main} holds also for first-order reducts of $\rel H$.}

The structure of our  proof of Theorem~\ref{thm:H:mathdichotomy} is precisely the same as before, and the proof overview of Section~\ref{sect:rt} applies here with $\sH$ instead of $\sT$, and $\en$ replacing $\lra$, where $N(x,y):\Leftrightarrow \neg E(x,y)\wedge x\neq y$. We call pairs in $N$ \emph{non-edges}.  The clones $\CAHH, \CAH$, and $\CAA$ are defined as for the random tournament; the explanatory equivalence relation $\Theta$ which appears there becomes $
\Theta((x,y),(u,v))\;:\Leftrightarrow\; (E(x,y)\Leftrightarrow E(u,v))\; 
$. We shall not repeat the overview of the proof, but provide now every ingredient it refers to for the case of $\sH$. Except for a few elementary calculations, most ingredients will be proven as before, and even simplify as a result of the relation $E$ being undirected;  we shall only shortly explain the differences, if any.

\subsubsection{Model-compete cores} \label{subsect:H:cores}

\begin{lemma}\label{lem:H:cores}
	Let $\rel A$ be a first-order reduct of $\rel H$. Then the model-complete core of $\rel A$ is either a one-element structure, or is again a first-order reduct of\/ $\rel H$.
\end{lemma}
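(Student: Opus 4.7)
I would mirror the proof of Lemma~\ref{lem:cores}. The structure $(\rel H, <)$, obtained from $\rel H$ by freely adding a linear order, is a finitely bounded homogeneous Ramsey expansion of $\rel H$ by the Ne\v{s}et\v{r}il--R\"odl theorem. Applying Theorem~\ref{thm:ramsey-cores} with $\sB := (\rel H, <)$ yields a homogeneous Ramsey substructure $\sB'$ of $\sB$ such that $\rel A'$ is a first-order reduct of $\sB'$, together with an endomorphism $g \in \End(\rel A)$ that is range-rigid with respect to $\Aut(\rel H, <)$ and whose range induces the age of $\sB'$.

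If the range of $g$ is a single point, then $\rel A'$ is a one-element structure. Otherwise I would case split on which orbits of $\Aut(\rel H, <)$ on injective pairs intersect the range. The orbits on distinct pairs are $E\cap{<}$, $E\cap{>}$, $N\cap{<}$, $N\cap{>}$; since $E$ and $N$ are symmetric, $E\cap{<}$ and $E\cap{>}$ always appear together, and so do $N\cap{<}$ and $N\cap{>}$, leaving three subcases: the range induces only non-edges, only edges, or both.

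In the first two subcases, the age of $\sB'$ consists of linearly ordered empty graphs, or of linearly ordered cliques, so that $\sB'$ is isomorphic to $(\mathbb Q; <)$ endowed with trivial edge relation; the ``only edges'' subcase is vacuous when $\rel H$ is $K_n$-free. A back-and-forth argument exactly as in the third paragraph of the proof of Lemma~\ref{lem:cores} -- showing that any two injective tuples of the same length in $A'$ can be connected by $e \circ g \circ \alpha$ for an appropriate $\alpha \in \Aut(\rel H)$ and an embedding $e$ of the range into $\sB'$ -- then forces $\End(\rel A')$ to be transitive on injective tuples of each given length. Together with the model-complete-core property of $\rel A'$, this forces $\Aut(\rel A')$ to be the full symmetric group of $A'$, so that $\rel A'$ is a first-order reduct of $(A';=)$ and hence of $\rel H$.

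The main obstacle is the remaining subcase, where the range of $g$ contains both an edge and a non-edge. Here $\mathrm{Age}(\sB')$ contains the ordered $K_2$ and the ordered $\overline{K_2}$, and since $\sB'$ is homogeneous, its age is a Fra\"iss\'e subclass of $\mathrm{Age}((\rel H, <))$. The plan is to argue by iterated one-point amalgamation that this forces $\mathrm{Age}(\sB') = \mathrm{Age}((\rel H, <))$, exploiting that the ambient Fra\"iss\'e class admits $K_n$-free amalgamation. Fra\"iss\'e's theorem then yields $\sB' \cong (\rel H, <)$, from which one concludes that $\rel A'$ is isomorphic to $\rel A$ itself and is hence a first-order reduct of $\rel H$.
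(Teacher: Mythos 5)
Your overall architecture matches the paper's: apply Theorem~\ref{thm:ramsey-cores} to the Ramsey expansion $(\rel H,<)$ and case-split on which $2$-orbits the range of the range-rigid endomorphism $g$ meets; your treatment of the degenerate cases (singleton range; only edges; only non-edges) is correct. The gap is in the main case, where the range contains both an edge and a non-edge. Knowing only that $\mathrm{Age}(\sB')$ is an amalgamation subclass of $\mathrm{Age}((\rel H,<))$ containing both two-element ordered types does \emph{not} force $\mathrm{Age}(\sB')=\mathrm{Age}((\rel H,<))$: for $\rel H$ the random graph, the class of all finite linearly ordered triangle-free graphs is a proper amalgamation subclass (freely amalgamate the graph part and take any common extension of the orders) that contains both the ordered edge and the ordered non-edge, and analogous proper subclasses exist below the $K_n$-free graphs. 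Iterated one-point amalgamation cannot exclude them, because an amalgamation class is only required to contain \emph{some} amalgam of each diagram, not the free one or any other particular one; so your argument cannot, for instance, force the triangle into the age.

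The missing ingredient is range-rigidity itself, which is exactly what the tournament proof (Lemma~\ref{lem:cores}) invokes at the corresponding point. If the range of $g$ contains an edge $\{a,b\}$ with $a<b$, then both tuples $(a,b)$ and $(b,a)$ have their components in the range, so the orbits $E\cap{<}$ and $E\cap{>}$ of $\Aut(\rel H,<)$ intersect the range and are hence invariant under $g$; a non-edge in the range likewise makes $N\cap{<}$ and $N\cap{>}$ invariant, and the orbit of constant pairs is always invariant. Thus $g$ preserves every orbit of pairs, and since $(\rel H,<)$ is homogeneous in a binary language, the orbit of any tuple is determined by the orbits of its pairs, so $g$ is an elementary self-embedding of $(\rel H,<)$. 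Its range therefore realizes the full age of $(\rel H,<)$, giving $\sB'\cong(\rel H,<)$ and $\rel A'\cong\rel A$. With this substitution for your amalgamation step, the proof goes through.
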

\begin{proof}
The model-complete core $\rel A'$ is a first-order reduct of $\sB'$, where $\sB'$ is a homogeneous Ramsey substructure of $(\sH,<)$, and there exists $g\in\End(\rel A)$ which is  range-rigid with respect to $\Aut(\sH,<)$ and whose range induces exactly the age of $\sB'$. 

If the range of $g$ is a single point, then $\rel B'$ is a one-element structure. Otherwise, the range of $g$ contains an edge or a non-edge of $\sH$. If it only contains edges, or if it only contains non-edges, then it readily follows that $\sA$ is a reduct of $(H;=)$ as in Lemma~\ref{lem:cores}. Otherwise, $\sB'$ is isomorphic to $(\sH,<)$, and $\sA$ and $\sA'$ are isomorphic.
\end{proof}

\begin{lemma}\label{lem:H:definition-neq}
Let $\rel A$ be a first-order reduct of\/ $\sH$ that is a model-complete core.   Then the binary disequality relation $\neq$ is pp-definable in $\rel A$.
\end{lemma}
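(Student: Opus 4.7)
The plan is to adapt Proposition~\ref{prop:definition-neq}, which covers the random graph and the $K_n$-free graphs for $n\geq 4$ directly, so that it also handles the $K_3$-free case. The latter falls outside the hypotheses of Proposition~\ref{prop:definition-neq} because the age of the universal homogeneous $K_3$-free graph has a minimal bound of size three, namely $K_3$ itself. The key idea is that, in the formula witnessing pp-definability of $\neq$, an asymmetric choice pairing an edge with a non-edge avoids creating a triangle and thus works uniformly in all three cases.

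The unifying observation is that $\Aut(\rel A)\supseteq\Aut(\sH)$, so every orbit of $\Aut(\rel A)$ on pairs is a union of orbits of $\Aut(\sH)$. On injective pairs $\Aut(\sH)$ has exactly the two orbits $E$ and $N$, so $\Aut(\rel A)$ has on injective pairs either a single orbit, which is then forced to equal $\neq = E\cup N$, or precisely the two orbits $E$ and $N$. In the first case, $\neq$ is itself an orbit of $\Aut(\rel A)$ and is therefore pp-definable in $\rel A$ since $\rel A$ is a model-complete core; this disposes in particular of those ``symmetric'' reducts of the random graph in which $\Aut(\rel A)$ identifies edges with non-edges, as well as of the case where $\rel A$ is a first-order reduct of $(H;=)$.

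The remaining situation is where $E$ and $N$ are both orbits of $\Aut(\rel A)$ and hence both pp-definable in $\rel A$. I would then check that the pp-formula $\phi(x,y):=\exists z\,(E(x,z)\wedge N(y,z))$ defines $\neq$. One direction is immediate since $E$ and $N$ are disjoint, so $\phi(a,a)$ can never hold. For the converse, given $a\neq b$ I would consider the finite structure $\sX$ on three vertices $\{x,y,z\}$ whose restriction to $\{x,y\}$ is isomorphic to that of $\{a,b\}$ in $\sH$, with the edge $E(x,z)$ and non-edge $N(y,z)$ added. The critical point—and the only genuine obstacle, which is a minor one—is the verification that $\sX$ belongs to the age of $\sH$ in each case: since $\sX$ has at most two edges, it embeds no $K_n$ for any $n\geq 3$, so universality yields an embedding of $\sX$ into $\sH$, and homogeneity allows us to assume this embedding maps $x,y$ to $a,b$. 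The image of $z$ is then the required witness for $\phi(a,b)$, completing the argument.
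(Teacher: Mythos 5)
Your proposal is correct and follows essentially the same route as the paper, which simply remarks that the formula from Proposition~\ref{prop:definition-neq} still works for $\sH$ even though the $K_3$-free case falls outside that proposition's hypotheses. You correctly supply the detail the paper leaves implicit: with the asymmetric choice $O_1=E$, $O_2=N$ the three-element witness structure has at most two edges, hence is $K_3$-free and embeds into $\sH$.
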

\begin{proof}
The same definition as in the proof of  Proposition~\ref{prop:definition-neq} works for $\sH$; in fact, except for the universal homogeneous $K_3$-free graph the proposition can be applied directly to $\sH$. 
\end{proof}

\subsubsection{Binary injections for $\sH$} \label{subsect:H:injective}

\begin{lemma}\label{lem:H-binary-injective}
	Let $\rel A$ be a first-order reduct of $\rel H$ that is a model-complete core. If $\Pol(\rel A)$ contains an essential operation, then it contains a binary injective operation.
\end{lemma}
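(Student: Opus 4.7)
The plan is to mirror exactly the argument used for the tournament case in Lemma~\ref{lem:tournament-binary-injective}, which is a one-line deduction combining a disequality pp-definability result with the general machinery developed in Section~\ref{subsect:injective}.

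First I would invoke Lemma~\ref{lem:H:definition-neq} to conclude that the binary disequality relation $\neq$ is pp-definable in $\rel A$. Since pp-definable relations of an $\omega$-categorical structure are preserved by all polymorphisms, every operation in $\Pol(\rel A)$ preserves $\neq$.

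Next, I would simply apply Corollary~\ref{cor:binaryinjections}, which asserts exactly that for a first-order reduct of the random graph or of a universal homogeneous $K_n$-free graph (and indeed also the universal homogeneous tournament), the presence of an essential polymorphism preserving $\neq$ entails the presence of a binary injective polymorphism. Combining the two observations yields the desired binary injection in $\Pol(\rel A)$ and concludes the proof.

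There is no real obstacle here: the heavy lifting was done once and for all in Propositions~\ref{prop:arity-essential} and~\ref{prop:existence-injective}, whose hypotheses are met by $\sH$ (the homogeneous graphs admit a free orbit of non-adjacent pairs, and their canonical binary structures have finite duality with an obstruction set consisting of short forbidden patterns). The only thing to check is that the assumption of Lemma~\ref{lem:H:definition-neq} is satisfied, which is precisely our hypothesis that $\rel A$ is a model-complete core. Thus the proof is a direct transposition of the tournament argument, and no additional case analysis specific to $\sH$ is required at this stage.
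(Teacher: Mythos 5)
Your proposal is correct and matches the paper's proof exactly: the paper likewise invokes Lemma~\ref{lem:H:definition-neq} to conclude that all polymorphisms of $\rel A$ preserve $\neq$, and then applies Corollary~\ref{cor:binaryinjections} to obtain a binary injection. Nothing further is needed.
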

\begin{proof}
	By Lemma~\ref{lem:H:definition-neq}, every polymorphism of $\rel A$ preserves $\neq$. Corollary~\ref{cor:binaryinjections} thus implies that $\Pol(\rel A)$ contains a binary injective polymorphism.
\end{proof}

\begin{lemma}\label{lem:H:proj_semilattice}
	Let $\rel A$ be a first-order reduct of $\rel H$ that is a model-complete core and not a reduct of $(H;
	=)$. Suppose that $\Pol(\rel A)$ contains a binary injection. Then $\Pol(\rel A)$ contains a binary injection which  acts like a projection or like a semilattice operation on $\{E,N\}$.
\end{lemma}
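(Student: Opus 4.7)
The plan is to adapt the proof of Lemma~\ref{lem:generateProjection} to the undirected setting. The crucial difference is that $E$ is symmetric, so the conclusion must allow semilattice behaviour on $\{E,N\}$, not just projection. First, by the Ramsey property of the homogeneous expansion $(\sH,<)$ and since local interpolation preserves injectivity, $\Pol(\sA)$ contains a binary injection $f$ canonical with respect to $(\sH,<)$. An argument entirely analogous to the remark preceding Lemma~\ref{lem:generateProjection} (using that $(\sH,<)$ is universal for its age) lets us compose $f$ with an endomorphism of $\sH$ to ensure the lexicographic property $f(x,y)<f(x',y')$ iff $x<x'$, or $x=x'$ and $y<y'$. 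Since $\sA$ is a model-complete core and not a first-order reduct of $(H;=)$, replacing $f$ by $f(f(x,y),f(y,x))$ and iterating if necessary, we may assume that $f$ acts idempotently on the orbits of pairs under $\Aut(\sH,<)$.

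Thanks to the lexicographic property, the order orbit of the output pair is determined by those of the inputs, so only the $E/N$ status of the output needs to be analyzed. Call an input pair of pairs \emph{straight} if both pairs lie in the $<$-orbit, and \emph{twisted} if the first is in $<$ and the second in $>$. The symmetry of $E$ forces $f$ to have the same behaviour on ``both $>$'' as on straight, and on ``first $>$, second $<$'' as on twisted, so straight and twisted are the only two patterns to consider. On each, idempotency forces $f$ to induce an idempotent binary operation on $\{E,N\}$, hence by Post's classification either a projection or a semilattice operation.

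The remainder is a case analysis. If $f$ acts identically on straight and twisted inputs, we are done immediately. Otherwise, let $\alpha\in\Aut(\sH)$ reverse $<$; the operation $f(x,\alpha(y))$ swaps the roles of straight and twisted inputs, so by symmetric analysis it suffices to handle the case where the behaviours are either two distinct projections, two distinct semilattices, or a projection and a semilattice. Using compositions such as $f(x,f(x,y))$, $f(f(x,y),y)$, $f(\alpha(x),\alpha(y))$, and $f(f(x,y),f(y,x))$ — exactly as in the tournament proof — one propagates or aligns the behaviour across both patterns to yield a binary injection in $\Pol(\sA)$ acting uniformly on $\{E,N\}$. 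The only essential simplification compared to Lemma~\ref{lem:generateProjection} is that uniform semilattice behaviour is already an acceptable outcome, since $\{E,N\}$ admits a semilattice structure whereas $\lra$ did not; accordingly, the case in the tournament proof that produced a global semilattice and then had to be eliminated by an arrow-flipping automorphism no longer requires elimination. The main obstacle is the tedium of the bookkeeping in the case analysis; no new idea beyond the tournament case is required.
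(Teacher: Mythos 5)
Your proposal is correct and follows essentially the same route as the paper, which simply declares the proof identical to that of Lemma~\ref{lem:generateProjection} (with $E,N$ in place of $\rightarrow,\leftarrow$), the only change being that the uniform-semilattice case is now an admissible outcome rather than one to be eliminated. Your observation that the symmetry of $E$ reduces the analysis to straight and twisted inputs, and your handling of the final case, match the paper exactly.
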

\begin{proof}
The proof is the same as the one for Lemma~\ref{lem:generateProjection}, writing $E$ instead of $\rightarrow$ and $N$ instead of $\leftarrow$, with the only exception of the case where $f$ behaves  like one semilattice operation on all inputs, which is one of the possibilities of our statement.
\end{proof}

\subsubsection{The case when $\CAHH$ is non-trivial}
The following proposition is the same as Proposition~\ref{prop:CATT-nontrivial} for $\sH$; its proof is a simplification of the proof for $\sT$.

\begin{proposition}\label{prop:CAHH-nontrivial}
Suppose that $\sA$ is a first-order reduct of\/ $\sH$ that is a model-complete core. The following are equivalent.
\begin{itemize}
\item[(1)] $\CAHH$ contains a ternary operation that is pseudo-cyclic modulo $\overline{\Aut(\sH)}$; 
    \item[(2)] $\CAHH$ is equationally non-trivial;
    \item[(3)] $\CAH$ is equationally non-trivial;
    \item[(4)] $\CAH\actson\en$ is equationally non-trivial.
\end{itemize}
\end{proposition}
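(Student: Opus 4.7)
The plan is to mirror the proof of Proposition~\ref{prop:CATT-nontrivial} step by step, with $\sH$ in place of $\sT$ and $\en$ in place of $\lra$. The implications $(1)\Rightarrow(2)\Rightarrow(3)\Rightarrow(4)$ are immediate from the definitions, so the task reduces to the direction $(4)\Rightarrow(1)$.

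The first and main step is to establish a graph analog of Lemma~\ref{lem:generating-g2}: assuming that $\CAH\actson\en$ is equationally non-trivial, $\Pol(\sA)$ contains a binary injection $g$ that is canonical with respect to $\sH$ itself (not merely with respect to the Ramsey expansion $(\sH,<)$) and acts as (say) the first projection on $\en$. Starting from Lemmas~\ref{lem:H-binary-injective} and~\ref{lem:H:proj_semilattice}, one first obtains a binary injection $g'$ that is canonical with respect to $(\sH,<)$ and acts as a projection or semilattice on $\en$; then, using the equational non-triviality of $\CAH\actson\en$ together with Post's classification~\cite{Post} to extract a ternary majority or minority operation on $\en$, one performs a majority/minority-vote encoding exactly as in the proof of Lemma~\ref{lem:generating-g2}: for every finite substructure $\sS$ of $\sH$ one finds linear orderings of its vertex set and an appropriate polymorphism of $\sA$ whose value on those orderings reconstructs $\sS$; a standard compactness argument using the oligomorphicity of $\Aut(\sH,<)$ then yields $g$. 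I expect this canonicity-lifting step to be the main obstacle, and it is somewhat more delicate in the universal homogeneous $K_n$-free case since one must verify that the intermediate graphs produced by the vote embed into $\sH$; this should follow from the same inductive structure as in the tournament case because the target substructure is itself $K_n$-free and the vote is applied in the age of~$\sH$.

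The second step is entirely formal. Since $\CAH\actson\en$ is an equationally non-trivial idempotent clone on a two-element set, by~\cite{Post} it contains a ternary majority or minority operation, and then the theorem of~\cite{Cyclic} provides a ternary operation $f\in\CAH$ that acts cyclically on $\en$. Setting
\[
h(x,y,z):=g(x,g(y,z)),
\]
we obtain a ternary injection which is canonical with respect to $\sH$, so the operation
\[
F(x,y,z):=f\bigl(h(x,y,z),h(y,z,x),h(z,x,y)\bigr)
\]
is canonical with respect to $\sH$ and hence lies in $\CAHH$. The cyclicity of $f$ on $\en$ together with the canonicity of $h$ implies that $F(x,y,z)$ and $F(y,z,x)$ lie in the same $\Aut(\sH)$-orbit for all $x,y,z$, and a standard compactness argument over $\Aut(\sH)$ then yields unary functions in $\overline{\Aut(\sH)}$ witnessing that $F$ is pseudo-cyclic modulo $\overline{\Aut(\sH)}$. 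This proves~(1).
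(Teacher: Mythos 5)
Your overall architecture and your second step coincide with the paper's, but your first step proposes to transplant the vote-by-linear-orders construction of Lemma~\ref{lem:generating-g2}, and that construction has no graph analogue. The reason Lemma~\ref{lem:generating-g2} is needed for $\sT$ is that a binary injection canonical with respect to $(\sT,<)$ which acts on $\lra$ need not be canonical with respect to $\sT$: its output on pairs of orbits of the form (injective, constant) may be governed by the linear order rather than by the arc (the ``order-dominated'' behaviours $g_0,g_1$ of Figure~\ref{fig:canonical}), and the majority/minority vote over reorderings is precisely the device that averages this order-dependence away. For a graph this phenomenon cannot occur: if $a$ is an injective pair, $a'$ its reversal, and $b$ a constant pair, then $g(a',b)$ is the reversal of $g(a,b)$; since $E$ and $N$ are symmetric, the two outputs lie in the same $\Aut(\sH)$-orbit, even though $a$ and $a'$ lie in different $\Aut(\sH,<)$-orbits. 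Hence the $\Aut(\sH)$-orbit of the output cannot depend on the order, so any binary injection canonical with respect to $(\sH,<)$ that acts on $\en$ automatically acts on $\{=,E,N\}$ and is therefore already canonical with respect to $\sH$. By the same token, your proposed encoding --- reconstruct a finite graph from linear orderings of its vertex set via a majority/minority vote --- does not make sense for $\sH$: converting an ordering into a pattern of edges and non-edges would require exactly the order-dominated behaviour that symmetry forbids, and your worry about whether the intermediate structures are $K_n$-free is moot because there are no intermediate structures to build.

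So the step you single out as ``the main obstacle'' is in fact where the graph case becomes strictly easier than the tournament case; the paper's proof of $(4)\Rightarrow(1)$ consists of the symmetry observation above, applied to the injection produced by Lemma~\ref{lem:H:proj_semilattice} after canonization with respect to the Ramsey expansion $(\sH,<)$. Two minor further points: Lemma~\ref{lem:H:proj_semilattice} only provides an injection acting as a projection \emph{or} as a semilattice operation on $\en$, so you should not insist on the first projection; this is harmless, since your second step never uses the behaviour of $h$ on $\en$ --- only its injectivity and its canonicity with respect to $\sH$ matter for deducing the pseudo-cyclicity of $F$ from the cyclic action of $f$ on $\en$. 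That second step is correct and identical to the paper's.
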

\begin{proof}
The implications from~(1) to~(2), from~(2) to~(3), and from~(3) to~(4) are trivial. 

We show the implication from~(4) to~(1). By Lemma~\ref{lem:H:proj_semilattice}, $\CA$ contains a binary injection $g$ which acts like a projection or like a semilattice operation on $\en$. We may assume that $g$ is canonical with respect to $(\sH,<)$ since $(\sH,<)$ is a Ramsey structure and the above property, being invariant under composition with automorphisms of $(\sH,<)$, is not lost in the canonization process.  Since $g$ already acts on $\en$, and since for pairs $a,b\in H^2$ the $\Aut(\sH)$-orbit of $g(a,b)$ and of $g(b,a)$ does not depend on the order on $b$ if $a$ is constant by the symmetry of $E$ and $N$, we have that $g$ also acts on $\{=,E,N\}$ where $=$ denotes the $\Aut(\sH)$-orbit of constant pairs. Hence, $g$ is canonical with respect to $\sH$.

The clone of idempotent functions of $\CAH\actson\en$ satisfying non-trivial identities by~\cite{Post}, we have that $\CAH$ contains a ternary function $f$ which acts as a cyclic operation on $\en$, by~\cite{Cyclic}. Setting $h(x,y,z):=g(x,g(y,z))$, we then have that the function $$f(h(x,y,z),h(y,z,x),h(z,x,y))$$  is an element of $\CAHH$ which is pseudo-cyclic modulo $\overline{\Aut(\sH)}$, proving~(1).
\end{proof}

\begin{lemma}\label{lem:CAHinCAA}
   Let $\sA$ be a first-order reduct of\/ $\sH$ that is a model-complete core. If $\CAH$ is equationally trivial, then $\CAH\subseteq\CAA$. 
\end{lemma}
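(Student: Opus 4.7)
The plan is to mirror the argument for Lemma~\ref{lem:CATinCAA}. First, I would invoke Proposition~\ref{prop:CAHH-nontrivial} to pass from the equational triviality of $\CAH$ to that of $\CAH\actson\en$. By Post's classification~\cite{Post} this forces every operation in $\CAH\actson\en$ to be essentially unary. Since $\sH$ is homogeneous in a binary language, this induced behavior on $\en$ determines the action of $\CAH$ on the $\Aut(\sH)$-orbits of injective $n$-tuples for every $n\geq 1$: for every $k$-ary $f\in\CAH$ there will be a coordinate $i\in\{1,\dots,k\}$ such that, for all injective tuples $a_1,\dots,a_k$ of common length, the $\Aut(\sH)$-orbit of $f(a_1,\dots,a_k)$ coincides either with (a) the orbit of $a_i$ itself, or with (b) the orbit of the ``flipped'' tuple $\bar a_i$ obtained by exchanging $E$-pairs and $N$-pairs between every two components of $a_i$.

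In case (a), $f\in\CAA$ is immediate since $\Aut(\sH)$-orbits refine $\Aut(\sA)$-orbits. Case (b) then needs to be handled separately for the two families of graphs in the statement. If $\sH$ is the universal $K_n$-free graph with $n\geq 3$, I would rule case (b) out completely: applied to an injective $n$-tuple $a$ with no edges, the flipped tuple $\bar a$ would have to realize $K_n$ in $\sH$, which is impossible; hence case (b) cannot occur. If $\sH$ is the random graph, the flipped tuple is always realizable and case (b) may genuinely arise. In that situation I would argue that the unary map $x\mapsto f(x,\dots,x)$ is an endomorphism of $\sA$ which swaps $E$ and $N$ in its action on $\Aut(\sH)$-orbits. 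Since $\sA$ is a model-complete core, this endomorphism lies in $\overline{\Aut(\sA)}$, so on any finite tuple an element of $\Aut(\sA)$ realizes the corresponding edge/non-edge swap; this identifies the $\Aut(\sH)$-orbits of $a_i$ and $\bar a_i$ as subsets of a single $\Aut(\sA)$-orbit and gives $f\in\CAA$.

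The only place where the argument really diverges from the tournament proof is the elimination of case (b) for the $K_n$-free graphs, which is where the genuine obstacle lies: one must verify that a componentwise-evaluated polymorphism cannot create a $K_n$ in $\sH$, and exploit this by choosing an independent $n$-tuple as test input. Everything else is either identical to the tournament argument, or a direct consequence of Proposition~\ref{prop:CAHH-nontrivial} and the infrastructure already in place.
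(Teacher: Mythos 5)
Your proof is correct and takes essentially the same route as the paper, whose entire proof of this lemma is the remark that the argument of Lemma~\ref{lem:CATinCAA} goes through with $\en$ in place of $\lra$: Post's classification makes $\CAH\actson\en$ essentially unary, homogeneity in a binary language transfers this to orbits of injective tuples, and the ``flip'' case is absorbed into $\Aut(\sA)$ via the model-complete core property. Your additional step ruling out the edge/non-edge-flipping behaviour for the $K_n$-free graphs (by evaluating on an independent $n$-tuple) is precisely the point where the paper's ``identical'' claim needs a word of justification, since the complemented tuple need not embed into $\sH$ there, and you handle it correctly.
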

\begin{proof}
    The proof is identical with the one of Lemma~\ref{lem:CATinCAA}, with $\en$ replacing $\lra$.
\end{proof}

\begin{lemma}\label{lem:H:canonization-without-order}
    Let $\rel A$ be a first-order reduct of\/ $\sH$ that is a model-complete core and not a first-order reduct of $(H;=)$.
    Suppose that $\Pol(\rel A)$ contains 
    \begin{itemize}
        \item a binary injection which  acts like a projection on $\en$, and 
        \item a function that is canonical with respect to
    $(\sH,<)$ but not an element of $\CAH$.
    \end{itemize}
    Then $\CAH\actson\en$ contains a semilattice  operation.
\end{lemma}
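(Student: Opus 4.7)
The proof follows the structure of Lemma~\ref{lem:canonization-without-order}, adapted from the ternary majority to a binary semilattice operation on $\en$. First, let $f \in \Pol(\sA)$ be the given arity-$n$ function canonical with respect to $(\sH,<)$ and not in $\CAH$, let $g \in \Pol(\sA)$ be the binary injection acting as a projection on $\en$, and set $S := \{(E\cap<),(E\cap>),(N\cap<),(N\cap>)\}$. Since $\sA$ is not a reduct of $(H;=)$, the diagonal $\hat f$ acts bijectively on $S$ by Lemma~\ref{lem:H:cores}, so replacing $f$ by a suitable iterate we may assume it acts idempotently on $S$. A pigeonhole argument analogous to the one in Lemma~\ref{lem:canonization-without-order} then yields $o_1,\ldots,o_n\in\{(E\cap<),(N\cap<)\}$ and $q_1,\ldots,q_n\in S$ with $o_k,q_k$ in distinct $\Aut(\sH)$-orbits and such that $f(o_1,\ldots,o_n)$ and $f(q_1,\ldots,q_n)$ lie in the same $\Aut(\sH)$-orbit, which we may assume to be $E$.

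The main step is to establish, by induction on $m \geq 1$, that for all $a,b \in (\neq)^m$ such that for every $i \in \{1,\ldots,m\}$ at most one of $a_i,b_i$ lies in $N$, there exists $h\in\Pol(\sA)$ with $h(a,b)\in E^m$. The case $m=1$ follows by taking $h$ to be a projection onto a coordinate whose input is in $E$. For the inductive step, I would first use $g$ to ensure that $a,b$ share a common kernel $K$ on $\{1,\ldots,2m\}$, and apply the inductive hypothesis to the subtuples obtained by dropping the second, resp.\ first coordinate, to obtain polymorphisms $h_1,h_2\in\Pol(\sA)$ such that $h_1(a,b)$, resp.\ $h_2(a,b)$ lies in $E^m$ except possibly at the second, resp.\ first coordinate. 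The remaining obstructions at coordinates $1$ and $2$ are eliminated by combining $h_1,h_2$ through $f$: for each $k \in \{1,\ldots,n\}$ we select $\alpha_k \in \Aut(\sH)$ and $v^k \in \{h_1(a,b), h_2(a,b)\}$ such that $\alpha_k(v^k_1)\in q_k$ and $\alpha_k(v^k_2)\in o_k$, while ensuring that the $\Aut(\sH,<)$-orbits of $\alpha_k(v^k_i)$ for $i\geq 3$ coincide across all $k$; this is achieved by the linear-order extension argument of Lemma~\ref{lem:canonization-without-order}, producing for each $k$ an order $<_k$ on the $K$-classes extending the prescribed orders on $\{1/K,2/K\}$ and $\{3/K,4/K\}$ and agreeing with the others everywhere else. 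Applying $f$ componentwise then gives $h(a,b)\in E^m$, since $f$ sends both $(q_k)_k$ and $(o_k)_k$ to $E$ and is idempotent on $S$.

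A standard compactness argument using the $\omega$-categoricity of $\sH$ produces $h \in \CA$ satisfying the above property globally; invoking the Ramsey property of $(\sH,<)$, we may assume $h$ is canonical with respect to $(\sH,<)$, so that $h \in \CAH$, and its action on $\en$ satisfies $h(E,E)=h(E,N)=h(N,E)=E$. Because $\sA$ is a model-complete core, the endomorphism $x \mapsto h(x,x)$ lies in $\overline{\Aut(\sA)}$, and combined with the commutation of $h$ with $\Aut(\sA)$ this forces $h(N,N)=N$; hence $h$ acts on $\en$ as a semilattice operation. The main obstacle, as in the tournament case, lies in the combinatorial management of the common kernel and the simultaneous construction of the orderings $(<_k)_k$, but this argument depends only on the kernel structure and carries over from Lemma~\ref{lem:canonization-without-order} without essential change.
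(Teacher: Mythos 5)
Your proof follows the paper's argument for this lemma essentially verbatim: the same preprocessing of $\hat f$, the same orbits $o_k,q_k$ witnessing that $f\notin\CAH$, the same induction with the kernel-equalizing step via the binary injection and the linear-order extension bookkeeping (the paper starts the induction at $m=2$ rather than $m=1$, but your inductive step subsumes its base case), and the same compactness conclusion. The only phrasing to adjust is at the very end: canonicity with respect to $(\sH,<)$ does not by itself place $h$ in $\CAH$ (that is exactly the distinction the lemma is about) -- rather, it is the verified semilattice behaviour on $\en$, together with the homogeneity of $\sH$ in a binary language, that yields $h\in\CAH$, which is also all the paper's own (equally terse) final step asserts.
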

\begin{proof}
Let $f\in\CA$ be a function that is  canonical with respect to $(\sH,<)$ but not an element of $\CAH$, and denote its arity by $n$.

The map $\hat f$  defined by $x\mapsto f(x,\dots,x)$ either flips $E$ and $N$ or preserves them; in the first case, we replace $f$ by $\hat f\circ f$ and may thus henceforth assume that $\hat f$ acts as the identity on $\en$.

As in Lemma~\ref{lem:canonization-without-order},  there exist $\Aut(\sH,<)$-orbits $o_1,\dots,o_n,q_1,\dots,q_n$ of pairs such that $o_i$ is  increasing and $o_i$ and $q_i$ belong to distinct  $\Aut(\sH)$-orbits for all $1\leq i\leq n$, and such that $f(o_1,\ldots,o_n)$ and $f(q_1,\ldots,q_n)$ belong to the same $\Aut(\sH)$-orbit.

Assume without loss of generality that  $f(o_1,\dots,o_n)\subseteq{} E$. We prove by induction on $m\geq 2$ that for all tuples $ a, b\in(\neq)^m$ such that for every $i\in\{1,\dots,m\}$ at most one of the pairs $a_i,b_i$ is in $N$ 
there exists $h\in\Pol(\rel A)$ such that $h(a,b)\in(E)^m$. A standard compactness argument then yields  $h\in\CA$ which acts on $\en$ as a semilattice operation;  in particular  $h\in\CAH$.

By assumption, there is an injective binary operation $g$ in $\Pol(\rel A)$ that acts as a projection on $\en$; without loss of of generality, it acts as  the first projection.
This implies that, replacing $a$ by $g(a,b)$ and $b$ by $g(b,a)$, we can assume that $\ker(a)=\ker(b)=:K$. As in Lemma~\ref{lem:canonization-without-order},  there exist linear orders $\prec,\prec'$ on $\{1/K,\ldots,(2m)/K\}$ which agree everywhere except between $1/K,2/K$, where they disagree. By flipping $\prec$ and $\prec'$, we may assume that they are increasing on $\{3/K,4/K\}$.

In the base case of the induction $m=2$, we may assume that none of $a,b$ is an element of $(E)^2$, in which case the statement would be trivial. Hence, up to reordering the tuples, we may assume that $a\in (E\times N)$ and $b\in(N\times E)$.  
For every $k\in\{1,\dots,n\}$, let $<_{o_k}$ be the increasing order on $\{3/K,4/K\}$ and $<_{q_k}$ be the order on $\{1/K,2/K\}$ given by $q_k$.  Hence,  for all $k\in\{1,\dots,n\}$, there exist $\alpha_k\in\Aut(\rel H)$ and $u^k\in\{a,b\}$ such that  $\alpha_k(u^k_1,u^k_2)$ is in $q_k\times o_k$.
Then $f(\alpha_1 u^1,\dots,\alpha_nu^n)$ is an element of $(E)^2$, 
and this concludes the base step of the induction.

For the induction step, let  $a,b\in (\neq)^m$ for some $m> 2$. By induction hypothesis, there exist binary $h, h'\in\Pol(\rel A)$ such that all coordinates of $h(a,b)$ are an element of $E$ except possibly for the second, and all coordinates of $h'(a,b)$ are an element of $E$ except possibly for the first. Define $<_{o_k}$ and $<_{q_k}$ as in the base case. By the above, there exists a linear order $<_k$ on $\{1/K,\ldots,2m/K\}$ which extends these two, and moreover we may assume that all these orders agree everywhere except on $\{1/K,2/K\}$. Thus, as in the base case, we can apply automorphisms and $f$ to these two tuples in order to obtain a tuple in $(E)^m$.
\end{proof}

\begin{corollary}\label{cor:H:canonization-without-order}
    Let $\rel A$ be a first-order reduct of $\rel H$ that is a model-complete core and not a first-order reduct of $(H;=)$.
    If $\CAH$ is equationally trivial, then $\CA$ 
    locally interpolates  $\CAH$ modulo $\Aut(\rel H)$.
\end{corollary}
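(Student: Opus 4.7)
The plan is to adapt the proof of Corollary~\ref{cor:canonization-without-order} for $\sT$, substituting $\sH$ for $\sT$ and $\en$ for $\lra$. Since $(\sH,<)$ is a homogeneous Ramsey structure, every $f\in\CA$ locally interpolates modulo $\Aut(\sH)$ some function $g$ that is canonical with respect to $(\sH,<)$, so it suffices to prove that any such $g\in\CA$ already lies in $\CAH$.

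I would argue this by contradiction: suppose $g\in\CA$ is canonical with respect to $(\sH,<)$ but not in $\CAH$. First I would observe that $g$ must be essential, for if $g$ were essentially unary, then by Lemma~\ref{lem:H:cores} the structure $\sA$ would be a first-order reduct of $(H;=)$, contradicting the hypothesis. Thus $\CA$ contains an essential operation, and Lemma~\ref{lem:H-binary-injective} supplies a binary injection in $\CA$, which Lemma~\ref{lem:H:proj_semilattice} upgrades to a binary injection $h$ acting on $\en$ either as a projection or as a semilattice operation. In the projection case, the pair $(h,g)$ satisfies the hypotheses of Lemma~\ref{lem:H:canonization-without-order}, yielding a semilattice operation in $\CAH\actson\en$. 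In the semilattice case, $h$ itself acts on the $\Aut(\sH)$-orbits of pairs, hence belongs to $\CAH$ (since $\sH$ is homogeneous in a binary language) and witnesses a semilattice operation in $\CAH\actson\en$ directly. Either way, $\CAH\actson\en$ is equationally non-trivial, contradicting the equational triviality of $\CAH$ via Proposition~\ref{prop:CAHH-nontrivial}.

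The only genuine departure from the tournament argument is the semilattice possibility that survives in Lemma~\ref{lem:H:proj_semilattice}, which arises because edges in $\sH$ are symmetric and cannot be absorbed as in Lemma~\ref{lem:generateProjection}. This is not a real obstacle, however: it is precisely the case that resolves most immediately, producing the desired non-triviality of $\CAH\actson\en$ on the spot without invoking Lemma~\ref{lem:H:canonization-without-order} at all.
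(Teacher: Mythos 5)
Your proposal is correct and follows essentially the same route as the paper, which simply refers back to the proof of Corollary~\ref{cor:canonization-without-order}: interpolate a canonical function with respect to $(\sH,<)$, note it must be essential by Lemma~\ref{lem:H:cores}, and derive a semilattice operation in $\CAH\actson\en$ contradicting Proposition~\ref{prop:CAHH-nontrivial}. Your explicit handling of the extra semilattice case surviving from Lemma~\ref{lem:H:proj_semilattice} (where the binary injection already lies in $\CAH$ and yields the contradiction immediately) is exactly the right adaptation that the paper leaves implicit.
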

\begin{proof}
Same as for Corollary~\ref{cor:canonization-without-order}.
\end{proof}

\begin{lemma}\label{lem:H:canonicalSiggers}
Let $\rel A$ be a first-order reduct of $\rel H$ that is a model-complete core and not a first-order reduct of $(H;=)$. 
If $\CAH$ is equationally trivial, then so is  $\CAA$.
\end{lemma}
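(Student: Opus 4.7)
The proof will adapt that of Lemma~\ref{lem:canonicalSiggers} almost verbatim, substituting $\sH$ for $\sT$ and $\en$ for $\lra$. I argue by contrapositive: assume $\CAA$ is equationally non-trivial, and derive that $\CAH\actson\en$ is equationally non-trivial, contradicting the assumed triviality of $\CAH$ via Proposition~\ref{prop:CAHH-nontrivial}.

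The first step is to produce $s\in\CAA$ which acts as a pseudo-Siggers operation modulo $\Aut(\sA)$ on injective tuples. Since $\sA$ is a model-complete core and, by Lemmas~\ref{lem:H:definition-neq} and~\ref{lem:H-binary-injective}, $\CA$ contains a binary injection and preserves $\neq$, Proposition~\ref{prop:CAA_on_orbits_trivial} (in its contrapositive form, as in Corollary~\ref{cor:CAA_on_orbits_trivial_tournament}) transfers the equational non-triviality of $\CAA$ to non-triviality of its action on $\Aut(\sA)$-orbits of injective $k$-tuples for every $k\geq 1$. Each such action is also idempotent since $\sA$ is a model-complete core, so Siggers' theorem yields $6$-ary Siggers operations $s_k$ in each of these finite quotients, and a standard compactness argument delivers the desired $s\in\CAA$.

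Next, I will apply Corollary~\ref{cor:H:canonization-without-order} to obtain $s'\in\CAH$ locally interpolated by $s$ modulo $\Aut(\sH)$. The chain of equalities at the end of Lemma~\ref{lem:canonicalSiggers}, in which automorphisms of $\sH$ realise the local interpolation and automorphisms of $\sA$ realise the pseudo-Siggers identity for $s$, transports verbatim to show that $s'$ remains a pseudo-Siggers operation modulo $\Aut(\sA)$ on injective tuples. For the contradiction, assume $\CAH\actson\en$ is equationally trivial. Post's classification~\cite{Post} forces $s'$ to act on $\en$ essentially unarily, and after composition with a suitable automorphism of $\sA$ (available because $\sA$ is a model-complete core) I may assume it acts as the first projection on $\en$. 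Choose injective tuples $a,b$ of the same length in distinct $\Aut(\sA)$-orbits, which exist because $\sA$ is not a first-order reduct of $(H;=)$. Then the pseudo-Siggers identity for $s'$ places $s'(a,b,a,a,b,a)$ and $s'(b,a,a,a,a,b)$ in a common $\Aut(\sA)$-orbit, while canonicity of $s'$ with respect to $\sH$ combined with its action as first projection on $\en$ places these two tuples in the $\Aut(\sH)$-orbits of $a$ and $b$ respectively, hence in distinct $\Aut(\sA)$-orbits -- a contradiction.

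The only genuine adaptation from the tournament case concerns the ``wlog $s'$ acts as a projection on $\en$'' reduction: unlike $\lra$, the set $\en$ carries a natural symmetry, namely the swap $E\leftrightarrow N$. If $\Aut(\sA)$ realises this swap (as for certain reducts of the random graph), composition with such an automorphism absorbs it and produces a genuine projection on $\en$. Otherwise $E$ and $N$ are themselves distinct $\Aut(\sA)$-orbits of injective pairs, and the same orbit-comparison argument applies to the flipped first projection, since flipping then sends the $\Aut(\sA)$-orbit of $a$ to another $\Aut(\sA)$-orbit that is still distinguished from the flipped orbit of $b$. This small case distinction is the only place where the proof does not run word-for-word parallel to Lemma~\ref{lem:canonicalSiggers}.
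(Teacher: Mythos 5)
Your proposal is correct and takes exactly the paper's route: the paper's proof of this lemma is literally ``Same as for Lemma~\ref{lem:canonicalSiggers}'', and your adaptation reproduces that argument faithfully. Two small remarks. First, the fact that equational non-triviality of $\CAA$ passes to its actions on $\Aut(\sA)$-orbits of injective $k$-tuples is not obtained from Proposition~\ref{prop:CAA_on_orbits_trivial} or its contrapositive (which give the converse implication); it is simply the observation that satisfaction of identities is preserved under taking the quotient action, which is all the paper uses at this point. Second, the ``otherwise'' branch of your final case distinction is vacuous: if $s'$ acted on $\en$ as a projection composed with the swap $E\leftrightarrow N$, then its diagonal $x\mapsto s'(x,\dots,x)$ would be an endomorphism of $\sA$ realizing that swap, hence an element of $\overline{\Aut(\sA)}$ since $\sA$ is a model-complete core, putting you back in your first branch (and for the $K_n$-free graphs this behaviour is outright impossible, as it would map independent $n$-sets to cliques); the unjustified claim that flipping preserves the distinctness of the $\Aut(\sA)$-orbits of $a$ and $b$ is therefore never needed.
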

\begin{proof}
Same as for Lemma~\ref{lem:canonicalSiggers}.
\end{proof}

\subsubsection{From $\CAA$ to $\CA$}\label{subsect:H:CAA2CA}

\begin{corollary}\label{cor:H:CAA_on_orbits_trivial}
Let $\sA$ be a first-order reduct of\/ $\sH$ that is a model-complete core and not a first-order reduct of $(H;=)$. If $\CAA$ is equationally trivial, then there exists $k\geq 1$ such that the action of $\CAA$ on the $\Aut(\sA)$-orbits of injective $k$-tuples is equationally trivial.
\end{corollary}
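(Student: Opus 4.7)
The plan is to mirror exactly the proof of the analogous tournament result (Corollary~\ref{cor:CAA_on_orbits_trivial_tournament}), which in turn rested on Proposition~\ref{prop:CAA_on_orbits_trivial}. The key observation is that Proposition~\ref{prop:CAA_on_orbits_trivial} is formulated in general terms for an $\omega$-categorical model-complete core $\sA$ in which $\Pol(\sA)$ preserves $\neq$ and contains a binary injection; thus the task reduces to verifying these hypotheses for first-order reducts of $\sH$.

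First, I would invoke Lemma~\ref{lem:H:definition-neq} to conclude that $\neq$ is pp-definable in $\sA$, and hence preserved by every function of $\CA$. Second, I would distinguish two cases depending on whether $\CA$ contains only essentially unary functions. If it does, then the conclusion is immediate: since $\sA$ is not a first-order reduct of $(H;=)$, the group $\Aut(\sA)$ is a proper subgroup of the full symmetric group on $H$, so by $\omega$-categoricity there is some $k\geq 1$ for which $\Aut(\sA)$ has at least two orbits in its action on injective $k$-tuples; the action of $\CAA$ on these orbits is then essentially unary, hence equationally trivial. If on the other hand $\CA$ contains an essential function, then by Corollary~\ref{cor:binaryinjections}, which  applies uniformly to the random graph and to the universal $K_n$-free graph for $n\geq 3$, and which uses only that $\neq$ is preserved, we obtain a binary injective polymorphism of $\sA$.

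At this stage all hypotheses of Proposition~\ref{prop:CAA_on_orbits_trivial} are met, and invoking that proposition directly yields the contrapositive conclusion we want: if $\CAA$ is equationally trivial, then its action on the $\Aut(\sA)$-orbits of injective $k$-tuples is equationally trivial for some $k\geq 1$. There is no real obstacle: the only thing to notice is that the general Proposition~\ref{prop:CAA_on_orbits_trivial} was engineered precisely for this kind of application, so the lemma is essentially a dispatch statement that funnels $\sH$ into the same framework already used for $\sT$. The only small point worth double-checking is the case where $\CA$ is essentially unary, in order to make sure we are entitled to choose $k$ producing two $\Aut(\sA)$-orbits; this uses Lemma~\ref{lem:H:cores} via the assumption that $\sA$ is not a first-order reduct of $(H;=)$.
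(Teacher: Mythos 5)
Your proof is correct and follows exactly the route the paper takes: the paper's proof of Corollary~\ref{cor:H:CAA_on_orbits_trivial} simply defers to the general Proposition~\ref{prop:CAA_on_orbits_trivial}, whose hypotheses are verified precisely as you do (preservation of $\neq$ via Lemma~\ref{lem:H:definition-neq}, the essentially unary case handled by the existence of $k$ with at least two orbits of injective $k$-tuples, and Corollary~\ref{cor:binaryinjections} supplying the binary injection otherwise), mirroring the proof of Corollary~\ref{cor:CAA_on_orbits_trivial_tournament}.
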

\begin{proof}
As for the random tournament, this is a consequence of the general  Proposition~\ref{prop:CAA_on_orbits_trivial}.
\end{proof}

\begin{lemma}\label{lem:H:sl-from-symmetric}
Let $\rel A$ be a first-order reduct of\/ $\sH$ which is a model-complete core and not a first-order reduct of $(H;=)$. Suppose that $\CA$ contains a binary function $f$ such that there exist $k\geq 1$, $S\subseteq H^k$ consisting of injective tuples, and $\sim$ an equivalence relation on $S$ with two $\Aut(\sA)$-invariant classes such that $f(a,b)\sim f(b,a)$ for all disjoint injective tuples $a,b\in H^k$ with  $f(a,b),f(b,a)\in S$. Then $\CAH\actson\en$  contains a semilattice  operation.
\end{lemma}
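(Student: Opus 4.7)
The plan is to follow closely the strategy of Lemma~\ref{lem:maj-from-symmetric}, replacing the ordered tournament $(\rel T,<)$ by $(\rel H,<)$, the arrow orbits $\leftarrow,\rightarrow$ by $E,N$, and the goal of producing a ternary majority by that of producing a binary semilattice operation. The proof actually simplifies relative to the tournament case, because $E$ and $N$ are both symmetric, so there is no subdivision of the injective-pair orbits analogous to the pair $(\rightarrow,\leftarrow)$. First, since $\rel A$ is a model-complete core, $f(a,a)\in S$ for every $a\in S$; choosing $b\in S$ in the other $\sim$-class shows that $f$ must be essential. Since the weakly commutative property is preserved under diagonal interpolation modulo $\Aut(\sH)$, we may assume that $f$ is diagonally canonical with respect to $(\rel H,<)$.

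Set $O:=(E\cap<)$ and $O':=(N\cap<)$, two $\Aut(\rel H,<)$-orbits of pairs. By Lemmas~\ref{lem:H-binary-injective} and~\ref{lem:H:proj_semilattice}, we may assume that $\CA$ contains a binary injection $g$ that behaves on $\en$ either as a projection or as a semilattice; in the latter case there is nothing to prove. Assume therefore that $g$ acts as a projection on $\en$. If $f$ is not canonical with respect to $\sH$ on at least one of $O, O'$, then Lemma~\ref{lem:H:canonization-without-order} applies (with either $f$ directly, or with $f(e_1(x),e_2(y))$ for suitable self-embeddings of $(\rel H,<)$) and yields a semilattice operation in $\CAH\actson\en$. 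We may therefore suppose that $f$ acts on $\en$ when restricted to pairs from $O$ and from $O'$. Precomposing $f$ with appropriate self-embeddings of $(\rel H,<)$, we may furthermore assume that for pairs $a,b\in{}\neq$ only the diagonal order relations between components of $a$ and of $b$, not the graph relations between them, influence the $\Aut(\rel H)$-orbit of $f(a,b)$.

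Consider now complete diagonal order types $B$ for two pairs in $\neq$. Within each such $B$, $f$ acts on $\en$ either as a projection or as a semilattice. If some type yields a semilattice favoring $E$, together with the projection action on orbits where both pairs lie entirely in $E$ we can immediately realize the $E$-preferring semilattice; if instead $f$ acts as a projection within every complete diagonal order type, then the weakly commutative hypothesis forces the two projections on $O$ and $O'$ to depend on different arguments, since otherwise picking disjoint $a,b\in S$ in distinct $\sim$-classes and a suitable automorphism would produce $f(a,b)\not\sim f(b,a)$. In that situation the two projections can be combined through crossing diagonal types exactly as in the inductive construction in Lemma~\ref{lem:maj-from-symmetric}.

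Finally, we prove by induction on $m\geq 2$ that for all $a,b\in(\neq)^m$ such that for each $i$ at most one of $a_i,b_i$ lies in $N$, there exists $h\in\Pol(\rel A)$ with $h(a,b)\in(E)^m$. The binary injection $g$ is used to equalize the kernels of $a$ and $b$; the induction step combines this with either the favorable semilattice type or with the crossing diagonal types chosen so that the appropriate automorphisms place $(a_k,\alpha_k(b_k))$ into the target orbit. A standard compactness argument using $\omega$-categoricity and the oligomorphicity of $\Aut(\sH)$ then yields a binary $h\in\CA$ acting as the $E$-preferring semilattice on $\en$; in particular $h\in\CAH$. The main obstacle will be the projection-only case, where the weakly commutative property must be exploited precisely to force the two projections on $O$ and $O'$ to split between the two arguments; the resulting case analysis on crossing diagonal types is slightly delicate but parallels the tournament proof, and in the $K_n$-free setting the embedded configurations only introduce fresh non-edges, so no forbidden clique can arise.
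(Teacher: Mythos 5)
Your overall plan (transfer the proof of Lemma~\ref{lem:maj-from-symmetric} with $\en$ in place of $\lra$) is the right one, but your choice of the two distinguished orbits $O$ and $O'$ introduces a genuine gap. You set $O:=(E\cap<)$ and $O':=(N\cap<)$, i.e.\ two different $\Aut(\sH)$-orbits with the \emph{same} order type. The paper instead takes $O:=(N\cap<)$ and $O':=(N\cap>)$: the \emph{same} $\Aut(\sH)$-orbit (non-edges) in the two opposite orders. This matters twice. First, the step where weak commutativity is exploited compares $f(a,b)$ with $f(b,a)$ for tuples $a,b$ with all cross-pairs $(a_i,b_j)$ in $O$; the swapped cross-pairs $(b_j,a_i)$ then lie in the \emph{order-reverse} of $O$, so the argument only yields information relating the action of $f$ restricted to $O$ with its action restricted to the order-reverse of $O$. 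With your $O'=(N\cap<)$, the pair $(b_j,a_i)$ for $(a_i,b_j)\in E\cap<$ lies in $E\cap>$, which is neither of your sets, so your claim that ``the weakly commutative hypothesis forces the two projections on $O$ and $O'$ to depend on different arguments'' does not follow — the hypothesis simply never compares those two restrictions. Second, the separating configurations (all cross-pairs between two injective $k$-tuples lying in $O$, and the self-embeddings $e_1,e_2$ with $(e_1(x),e_2(y))\in O$ for all $x,y$) must embed into $\sH$; with $O=E\cap<$ this forces arbitrarily large cliques and fails for the $K_n$-free graphs. Your closing remark that ``the embedded configurations only introduce fresh non-edges'' is correct only for the paper's choice of $O,O'$, and is inconsistent with the choice you actually made.

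Two smaller points. The induction you state targets $h(a,b)\in(E)^m$ only; since the semilattice produced may prefer either $E$ or $N$, the correct inductive claim is ``$h(a,b)\in(E)^m$, or the same with $E$ and $N$ interchanged,'' as in the paper. And the core of the argument — the case analysis over crossing diagonal order types in the induction step — is exactly the part you defer with ``slightly delicate but parallels the tournament proof''; it does parallel it, but it is where the corrected choice of $O,O'$ and the derived fact that $f$ acts as the first projection on $O$ and the second on $O'$ are actually used, so it cannot be omitted.
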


\begin{proof}
As in the proof of Lemma~\ref{lem:maj-from-symmetric}, we see that $f$ is essential and that we may assume it to be diagonally canonical with respect to $\Aut(\sH,<)$.

Let $O:=(N\cap <)$, $O':=(N\cap >)$. Then $f$ is canonical on $O$ and on $O'$ with respect to $\Aut(\rel H,<)$. We may assume it has the same property for injective tuples $a,b$ and with respect to $\Aut(\sH)$ on each of these two sets $O$ and $O'$, since otherwise we are done by Lemma~\ref{lem:H:canonization-without-order} - see the argument in the proof of  Lemma~\ref{lem:maj-from-symmetric}.

This means that on $O$ as well as on $O'$, the function $f$ acts on $\en$. 
If one of these actions is a semilattice operation, then we are done by considering $f(e_1(x),e_2(y))$, where $e_1,e_2$ are self-embeddings of $\sH$ such that $(x,y)\in O$ (or in the other case, $O'$) for all $x$ in the range of $e_1$ and all $y$ in the range of $e_2$. Hence, we assume that the two actions on $\en$ are essentially unary. If one of the two actions is not a projection, then $\Aut(\sA)$ contains a function $\alpha$ which acts on $\en$ as the non-trivial permutation: this function is obtained as $f(e_1(x),e_2(x))$ using self-embeddings $e_1,e_2$ as above. Hence, in that case the $\Aut(\sA)$-orbits are invariant under flipping  edges and non-edges on a tuple.

It follows that whenever $a,b\in S$ are such that $(a_i,b_j)\in O$ for all $i,j\in\{1,\ldots,k\}$, then $f(a,b)$ and $f(b,a)$ are elements of $S$. Furthermore, the two  actions of the restrictions of $f$ to $O$ and $O'$ on $\en$ cannot depend on the same argument. Without loss of generality, $f$ depends on the first argument on $O$, and on the second argument on $O'$.

By considering $f(e_1(x),e_2(y))$ for suitable self-embeddings  $e_1,e_2$ of $(\sH,<)$ such that $(e_1,e_2)$, acting componentwise on $T^2$, preserves $<$ and $>$, and such that $(e_1(x),e_2(y))\in O\cup O'$ for all elements $x,y$ of $\sH$, we may assume that for all $a,b\in\neq$, the relations of $\sH$ which hold between $a$ and $b$ do not influence the $\Aut(\sH)$-orbit of $f(a,b)$.

Replacing $f$ by $f(e_1\circ f(x,y),e_2\circ f(x,y))$, where $e_1,e_2$ are self-embeddings of $(\sH,<)$ which ensure that for all $a,b\in{} \neq$, the diagonal order type of $(a,b)$ is equal to that of $(e_1\circ f(a,b),e_2\circ f(a,b))$, we may assume that $f$ acts  idempotently or as a constant function on the diagonal in its action on  $\en$ within  each complete diagonal order type $B$. In particular, $f$ then acts like the first projection on $\en$ when restricted to $O$, and like the second projection when restricted to $O'$. 

Suppose that within some complete diagonal order type $B$, we have that $f$ acts as a constant function on the diagonal in its action on $\en$; that is,  $f(E,E)=f(N,N)$ on $B$. Then $\CA$ contains all permutations on $H$, in contradiction with our assumption that $\rel A$ is not a fist-order reduct of $(H;=)$. To see this, one shows   that in that case, any finite injective tuple of elements of $H$ can be mapped by a unary function in $\Pol(\sA)$ to a tuple containing whose componends induce an independent set or a complete graph in $\sH$. The proof is as in  Lemma~\ref{lem:maj-from-symmetric}; here, the assumption  $f(E,E)=f(N,N)={}N$ on $B$ leads to an independent set, and the opposite assumption to a complete graph.

Summarizing, within every fixed complete diagonal order type for two pairs in $\neq$, we have that $f$ acts idempotently on $\en$. Hence,  within each such type,
$f$ \todom{acts} as a semilattice operation or as a projection on $\en$. 

Similarly as in the proof of Lemma~\ref{lem:H:canonization-without-order}, we now show  by induction on $m\geq 2$ that for all tuples $a, b\in(\neq)^m$ such that for every $i\in\{1,\dots,m\}$ at most one of the pairs $a_i,b_i$ is in $N$ 
there exists $h\in\Pol(\rel A)$ such that $h( a, b)\in(E)^m$; or the same is true with $E$ and $N$ flipped. A standard compactness argument then implies that $\CAT$ contains a function such that $\CAT\actson\en$ is a semilattice operation.

The base case $m=2$ is a subset of the induction step, which we shall perform now. Let  $a,b\in (\neq)^{m}$ for some $m\geq 2$. Since $f$ must be essential, it generates a binary injection which acts as a projection on $\en$, and hence, as in the proof of Lemma~\ref{lem:H:canonization-without-order},  we may assume that the kernels of $a,b$ are identical. By induction hypothesis, we may moreover assume that all components of $a_1\in N$, $a_2\in E$, $b_1\in E$, $a_2\in N$, and $a_i, b_i$ are in the same $\Aut(\sH)$-orbit for all $3\leq i\leq m$. In the following, we distinguish different behaviours of $f$ which do, however, not depend on $a,b$. We will prove that we can map $a,b$ to a tuple $c\in (\neq)^m$ such that the $\Aut(\sH)$-orbits of $c_1,c_2$ are equal, and the $\Aut(\sH)$-orbit of $c_i$ is equal to that of $a_i$ (and hence of $b_i$) for all $i\geq 3$; this is clearly sufficient.

If no equalities hold between $a_1$ and $a_{2}$, then there exists $\beta\in\Aut(\sH)$ such that all components of $a_2$ are smaller than those of $\beta(b_{2})$
with respect to $<$, and  all components of $a_{1}$ are larger  than those of $\beta(b_{1})$. In that case, $f(a,\beta(b))$ yields the desired tuple.

Otherwise, assume that the second coordinate of $a_1$ equals the first coordinate of $a_2$; the other cases are treated similarly. Consider the case where $f$ acts as a semilattice operation on $\en$ within some crossing type $B$; say  it prefers  $N$ over $E$ on $B$. We may moreover assume that $B$ demands $u_1<v_1$   for a pair $(u,v)\in (\neq)^2$ by flipping the conditions. But now we see that there exists $\beta\in\Aut(\sH)$ such that $(a_1,\beta(b_1))\in B$ and both components of  $a_{2}$ are larger  than both components of $\beta(b_{2})$ with respect to $<$. Whence, $f(a,\beta(b))$ yields the desired tuple.

So assume henceforth that $f$ acts as a projection on $\en$ within each crossing type $B$. Then, by the symmetry of the relations $E$ and $N$,  we can find a crossing type on which $f$ behaves like the first projection and  which demands $u_1<v_1$ for a pair $(u,v)\in (\neq)^2$. The same argument as above works.
\end{proof}

\subsection{Bounded width}\label{subsect:H:bw}

As for the random tournament, we now characterize among the first-order reducts  of $\sH$ those 
that have bounded width as those whose polymorphism clone does not have a uniformly continuous minion homomorphism to any clone of affine maps over a finite module.

\begin{theorem}\label{thm:H-bw}
Let $\sA$ be a first-order reduct of\/ $\sH$ \todom{that is a model-complete core}. Then precisely one of the following holds:
\begin{itemize}
    \item $\CA$ has a uniformly continuous clone homomorphism to the clone of affine maps over a finite module;
    \item $\CA$ contains for all $n\geq 3$ an $n$-ary operation that is canonical with respect to $\sH$  and a pseudo-WNU modulo $\overline{\Aut(\sH)}$.
\end{itemize}
\end{theorem}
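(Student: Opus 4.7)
The plan is to mirror the proof of Theorem~\ref{thm:T-bw} step by step, replacing $\sT$ with $\sH$ and $\lra$ with $\en$. As a preliminary reduction, the case where $\sA$ is a first-order reduct of $(H;=)$ is handled trivially (either $\CA$ is essentially unary and admits a uniformly continuous clone homomorphism to $\Projs$, hence to any affine clone, or it contains a binary injection and the second alternative is immediate). I henceforth assume $\sA$ is not of this form, so that $\CA$ preserves $\neq$ by Lemma~\ref{lem:H:definition-neq} and contains a binary injection whenever it has an essential operation by Lemma~\ref{lem:H-binary-injective}. The first milestone is the $\sH$-analog of Proposition~\ref{prop:CATT-nonaffine}, namely the equivalence of (1)~$\CAHH$ contains for every $n\geq 3$ an $n$-ary pseudo-WNU modulo $\overline{\Aut(\sH)}$; (2)~$\CAHH$ is not equationally affine; (3)~$\CAH$ is not equationally affine; (4)~$\CAH\actson\en$ is not equationally affine. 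Only (4)$\Rightarrow$(1) requires work, and it follows exactly as in Proposition~\ref{prop:CAHH-nontrivial}: combine the canonical binary injection acting on $\en$ as a projection or semilattice supplied by Lemma~\ref{lem:H:proj_semilattice} with WNU operations of all arities in $\CAH\actson\en$ produced by Post's classification~\cite{Post}.

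If any of these four equivalent conditions holds, the second alternative of the theorem is established, so I henceforth assume that $\CAH\actson\en$ is equationally affine, and hence consists of $\mathbb{Z}_2$-affine maps by Post's classification. The next step is to prove the three $\sH$-analogs of Lemmas~\ref{lem:bw-inclusion},~\ref{lem:bw-interpolation}, and~\ref{lem:CAA-homo-module}: $\CAH\subseteq\CAA$, using the classification of first-order reducts of $\sH$ to see that $\Aut(\sA)$-orbits of injective tuples are described by intersections of cosets of $\mathbb{Z}_2$-linear subspaces on the coordinate list of $\Aut(\sH)$-orbits of pairs, and are therefore preserved by $\mathbb{Z}_2$-affine actions on $\en$; $\CA$ locally interpolates $\CAH$, by Ramsey canonization modulo $\Aut(\sH,<)$ followed by Lemma~\ref{lem:H:canonization-without-order}, which rules out essential canonical functions outside $\CAH$ because they would force a semilattice operation in $\CAH\actson\en$, contradicting equational affineness; and $\CAA$ is also equationally affine, by verbatim adaptation of Lemma~\ref{lem:CAA-homo-module} using the Mal'tsev-style condition from~\cite{Maltsev-Cond}, Corollary~\ref{cor:H:canonization-without-order}, and a compactness argument.

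For the endgame, I pick $k\geq 1$ such that $\Aut(\sA)$ has at least two orbits on the set $U$ of injective $k$-tuples, and fix a minimal subfactor $(S,\sim)$ of $\CAA\actson U$ with $\Aut(\sH)$-invariant $\sim$-classes on which $\CAA$ acts as a clone $\mathscr{M}$ of affine maps over a finite module. Applying Theorem~\ref{thm:2-cases-general-2} to $\CAA\actson U \subseteq \CA\actson U$ with $n=1$ and $\gG_\cD = \Aut(\sA)\actson U$, case~(1) produces a presmooth $\CA$-invariant approximation of $\sim$, which Lemma~\ref{lem:primitive-neq-very-smooth} upgrades to a very smooth one (using that $\Aut(\sH)$ is $n$-``primitive'' by an argument analogous to Example~\ref{ex:primitive-tournament}, and that $\neq$ is preserved by $\CA$). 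Theorem~\ref{thm:sa} then yields a uniformly continuous clone homomorphism $\CA \to \CAA\actson S/{\sim}$, which, composed with the homomorphism to $\mathscr{M}$, gives the first alternative. In case~(2), Lemma~\ref{lem:weird-symmetric} provides a weakly commutative binary operation in $\CA$, and Lemma~\ref{lem:H:sl-from-symmetric} then forces a semilattice operation in $\CAH\actson\en$, contradicting equational affineness.

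The hardest step will be the $\sH$-analog of Lemma~\ref{lem:bw-inclusion}, which depends on a precise description of the reducts of $\sH$ and of the structure of $\Aut(\sA)$-orbits of tuples in terms of $\mathbb{Z}_2$-linear data on $\en$: the random graph must be treated with care because of its non-trivial switching reducts (which enlarge $\Aut(\sA)$ beyond $\Aut(\sH)$ in a way that mixes edges and non-edges globally), while for the $K_n$-free graphs the reduct lattice collapses and the situation is essentially trivial. The remaining steps are faithful translations of those in Section~\ref{subsect:T:bw}, and require no genuinely new ideas beyond the adaptations already performed in Section~\ref{subsect:H:PNP}.
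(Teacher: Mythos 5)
Your proposal is correct and follows essentially the same route as the paper, which itself only lists the ingredient statements (Proposition~\ref{prop:H:CATT-nonaffine} and Lemmas~\ref{lem:H:bw-inclusion}--\ref{lem:H:CAA-homo-module}) and declares the proofs isomorphic to those for the tournament in Section~\ref{subsect:T:bw}; your reduction to the non-$(H;=)$ case, the chain of affineness transfers, and the endgame via Theorem~\ref{thm:2-cases-general-2}, Lemma~\ref{lem:primitive-neq-very-smooth}, Theorem~\ref{thm:sa}, Lemma~\ref{lem:weird-symmetric} and Lemma~\ref{lem:H:sl-from-symmetric} match the paper's intended argument exactly. The one place deserving a little extra care (which the paper also glosses over) is verifying $n$-``primitivity'' of the $K_m$-free Henson graphs, where the direct amalgam of Example~\ref{ex:primitive-tournament} can create forbidden cliques and one should instead chain through the invariant equivalence relation.
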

\todom{This proves the bounded width classification of  Theorem~\ref{thm:bounded-width-classification} for the structure $\sB=\sH$ as before in the case of $\mathbb T$; also, Corollary~\ref{cor:T:bw:main} holds with  $\sH$ replacing $\sT$.}

The proof strategy is the same as for the random tournament, and similar as for Theorem~\ref{thm:H:mathdichotomy}. We refer to Section~\ref{subsect:T:bw} for the overview, and now only provide the statements which are the ingredients for it, reformulated for $\sH$. We do not give any details of the proofs, since they are isomorphic to the corresponding proofs for the  random tournament; the minor  differences have already been overcome in Section~\ref{subsect:H:PNP}.

\subsubsection{Ingredients of the proof}

\begin{proposition}\label{prop:H:CATT-nonaffine}
Suppose that $\sA$ is a first-order reduct of\/ $\sH$ that is a model-complete core. The following are equivalent.
\begin{itemize}
\item[(1)] $\CAHH$ contains for all $n\geq 3$ an $n$-ary pseudo-WNU modulo $\overline{\Aut(\sH)}$;
    \item[(2)] $\CAHH$ is not equationally affine;
    \item[(3)] $\CAH$ is not equationally affine;
    \item[(4)] $\CAH\actson\en$ is not equationally affine.
\end{itemize}
\end{proposition}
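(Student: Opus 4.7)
The implications $(1)\Rightarrow(2)\Rightarrow(3)\Rightarrow(4)$ are immediate from the definitions, so the entire content is in showing $(4)\Rightarrow(1)$. My proof will closely mirror that of Proposition~\ref{prop:CATT-nonaffine}, with the role of Lemma~\ref{lem:generating-g2} and the operation $g_2$ played by the binary injection furnished by Lemma~\ref{lem:H:proj_semilattice} together with the observation (already used in the proof of Proposition~\ref{prop:CAHH-nontrivial}) that symmetry of $E$ and $N$ automatically upgrades canonicity with respect to $(\sH,<)$ to canonicity with respect to $\sH$.

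First, assuming $(4)$, I would invoke Lemma~\ref{lem:H:proj_semilattice} to obtain a binary injection $g \in \Pol(\sA)$ whose action on $\en$ is either a projection or a semilattice. Since $(\sH,<)$ is Ramsey and the property of acting as a projection or semilattice on $\en$ is preserved under composition with automorphisms of $(\sH,<)$, the canonization procedure lets me assume $g$ is canonical with respect to $(\sH,<)$. As explained in the proof of Proposition~\ref{prop:CAHH-nontrivial}, the symmetry of $E$ and $N$ then forces $g$ to be canonical with respect to $\sH$, so $g \in \CAHH$.

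Next, I apply Post's classification~\cite{Post} to the idempotent (since $\sA$ is a model-complete core) clone $\CAH\actson\en$: not being equationally affine, it must contain a ternary majority operation, a binary semilattice operation, or both. In either situation, for each $n\geq 3$ this clone contains an $n$-ary WNU operation (the $n$-fold iterate of a semilattice is manifestly a WNU; and the clone generated by a two-element majority also contains WNUs of all arities $\geq 3$, namely the operation returning the majority value of its inputs). Let $f \in \CAH$ be an $n$-ary operation witnessing an $n$-ary WNU on $\en$.

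Finally, define recursively $h_2(x_1,x_2) := g(x_1,x_2)$ and $h_k(x_1,\ldots,x_k) := g(x_1, h_{k-1}(x_2,\ldots,x_k))$, and set
\[
W(x_1,\ldots,x_n) := f\bigl(h_n(x_1,x_2,\ldots,x_n),\ h_n(x_2,x_3,\ldots,x_n,x_1),\ \ldots,\ h_n(x_n,x_1,\ldots,x_{n-1})\bigr).
\]
Since $g \in \CAHH$, each $h_n$ is canonical with respect to $\sH$ and injective. Hence the $n$ inputs to $f$ in the definition of $W$ are tuples whose coordinatewise orbit types (and pairwise orbits) depend only on those of $(x_1,\ldots,x_n)$, and $f\in\CAH$ acts canonically on them, so $W\in\CAHH$. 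To verify the pseudo-WNU property modulo $\overline{\Aut(\sH)}$, observe that on each of the $n$ cyclic shifts of inputs $(x,y,\ldots,y)$ with a single $x$ in position $i$, the collection of $n$ tuples $h_n(\sigma(x,y,\ldots,y))$ (for $\sigma$ ranging over cyclic shifts) is, up to $\Aut(\sH)$, independent of $i$; applying the WNU behaviour of $f$ on $\en$ then yields that all $W(y,\ldots,y,x,y,\ldots,y)$ lie in the same $\overline{\Aut(\sH)}$-orbit, which is precisely what it means for $W$ to be a pseudo-WNU modulo $\overline{\Aut(\sH)}$. The main technical point, and the only place where more than bookkeeping is required, is the verification that the orbit of $W$ applied to an almost-constant tuple does not depend on the position of the distinguished coordinate; this is where the WNU identity for $f$ on $\en$ combined with the canonicity of $g$ is used.
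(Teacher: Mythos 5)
Your proposal is correct and follows essentially the same route as the paper, which literally proves this proposition by declaring it "almost identical" to Proposition~\ref{prop:CATT-nonaffine} with the differences already handled in Proposition~\ref{prop:CAHH-nontrivial}: the trivial chain $(1)\Rightarrow(2)\Rightarrow(3)\Rightarrow(4)$, then for $(4)\Rightarrow(1)$ the binary injection from Lemma~\ref{lem:H:proj_semilattice} canonized and upgraded to $\CAHH$ via the symmetry of $E$ and $N$, Post's classification yielding WNUs of all arities on $\en$ (allowing the semilattice case that is impossible for $\lra$), and the same recursive composition $f(h_n(x_1,\ldots,x_n),\ldots,h_n(x_n,x_1,\ldots,x_{n-1}))$. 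Your added verification of the pseudo-WNU property is more detail than the paper supplies but introduces nothing new in substance.
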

\begin{proof}
    The proof is almost identical with the one of Proposition~\ref{prop:CATT-nonaffine}, with the only slight differences already explained in  Proposition~\ref{prop:CAHH-nontrivial}.
\end{proof}

\begin{lemma}\label{lem:H:bw-inclusion}
    Let $\sA$ be a first-order reduct of\/ $\sH$ that is a model-complete core. Suppose that $\CAH\actson\en$ is equationally  affine.
    Then $\CAH\subseteq \CAA$.
\end{lemma}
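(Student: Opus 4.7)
The plan is to mimic the proof of Lemma~\ref{lem:bw-inclusion}, replacing the tournament ingredients by their graph analogues. The essential input is a structural classification of the orbit equivalences of $\Aut(\sA)$ on injective tuples for $\sA$ a first-order reduct of $\sH$: for every $k \geq 2$, the orbit of an injective $k$-tuple $a$ under $\Aut(\sA)$ is determined by parity conditions on the number of occurrences of $E$ in the list of $\Aut(\sH)$-orbits of the pairs $(a_i,a_j)$ with $1 \leq i < j \leq k$. This is the graph counterpart of Bennett's classification used in the tournament case and follows from Thomas' classification of reducts of the random graph and the analogous results for the universal homogeneous $K_n$-free graphs, both of which are available in the same references relied on elsewhere in the paper.

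With this classification in hand, the argument becomes essentially formal. First I would invoke Post's classification of clones on a two-element set: since $\CAH \actson \en$ is equationally affine, its operations act as affine (indeed linear, once an appropriate bijection $\en \leftrightarrow \{0,1\}$ is fixed) maps over $\mathbb{Z}_2$. Consequently, for every $k \geq 1$ and every affine subspace of $(\mathbb{Z}_2)^{\binom{k}{2}}$, the operations of $\CAH$ leave the subspace invariant; equivalently, they preserve every coset of every linear subspace.

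The final step is to combine the two inputs. By the classification, the $\Aut(\sA)$-orbit of any injective $k$-tuple corresponds, under the bijection $\en \leftrightarrow \{0,1\}$, to an intersection of cosets of linear subspaces of $(\mathbb{Z}_2)^{\binom{k}{2}}$ defined by parity conditions on the coordinates. Since $\CAH$ acts componentwise on $\en^{\binom{k}{2}} \cong \{0,1\}^{\binom{k}{2}}$ and preserves every such coset, it preserves every intersection of such cosets, hence every orbit under $\Aut(\sA)$. Therefore every function in $\CAH$ preserves the $\Aut(\sA)$-orbit equivalence on injective $k$-tuples for every $k$, which is exactly the defining property of $\CAA$, giving $\CAH \subseteq \CAA$.

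The only place where genuine work is required beyond transcription is verifying that the parity-based description of $\Aut(\sA)$-orbits holds uniformly for all the graphs $\sH$ under consideration, including the universal homogeneous $K_n$-free graphs; this is the main obstacle, but it is purely a citation-level issue and does not affect the structure of the argument.
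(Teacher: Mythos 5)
Your proposal is correct and follows exactly the paper's route: the paper proves this lemma by pointing back to the tournament case (Lemma~\ref{lem:bw-inclusion}) and substituting the classification of automorphism groups of first-order reducts of $\sH$ for Bennett's classification for $\sT$, which is precisely the transcription you carry out (Post's classification forcing $\CAH\actson\en$ to consist of affine maps over $\mathbb Z_2$, and $\Aut(\sA)$-orbits of injective tuples being intersections of cosets of linear subspaces of $\{0,1\}^{\binom{k}{2}}$ by the parity/switching-class description of the reducts). The ``citation-level issue'' you flag is indeed the only content, and it is unproblematic: for the $K_n$-free graphs the reducts are trivial, and for the random graph the parity description follows from Thomas's classification.
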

\begin{proof}
    As the proof of Lemma~\ref{lem:H:bw-inclusion}; we only have to refer to the classification of automorphism groups of first-order reducts of $\sH$~\cite{RandomReducts} rather than $\sT$.
\end{proof}

\begin{lemma}\label{lem:H:bw-interpolation}
Let $\sA$ be a first-order reduct of\/ $\sH$ that is a model-complete core and not a first-order reduct of $(H;=)$. 
    Suppose that $\CAH\actson\en$ is equationally affine.
    Then $\CAH$ is locally interpolated by $\CA$.
\end{lemma}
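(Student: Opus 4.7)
The plan is to mirror the proof of Lemma~\ref{lem:bw-interpolation}, handling the extra case that Lemma~\ref{lem:H:proj_semilattice} allows a semilattice action on $\en$ whereas Lemma~\ref{lem:generateProjection} only permitted a projection action on $\lra$ in the tournament setting. Since $(\sH,<)$ is Ramsey, every operation in $\CA$ locally interpolates, modulo $\Aut(\sH)$, an operation canonical with respect to $(\sH,<)$; it therefore suffices to show that every such canonical operation in $\CA$ already lies in $\CAH$.

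I would first dispose of the essentially unary case: any essentially unary operation in $\Pol(\sA)$ canonical with respect to $(\sH,<)$ is also canonical with respect to $\sH$, and hence lies in $\CAH$, for otherwise one would conclude via Lemma~\ref{lem:H:cores} that $\sA$ is a first-order reduct of $(H;=)$, contrary to our hypothesis. We may thus assume $\CA$ contains an essential operation. Lemma~\ref{lem:H-binary-injective} then yields a binary injection in $\CA$, and Lemma~\ref{lem:H:proj_semilattice} refines it to a binary injection $g\in\CA$ whose action on $\en$ is either a projection or a semilattice.

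In the semilattice case -- which has no counterpart in the tournament proof -- I would invoke the canonicalization argument from Proposition~\ref{prop:CAHH-nontrivial}: by the Ramsey property of $(\sH,<)$, one may assume $g$ is canonical with respect to $(\sH,<)$ without losing its semilattice behavior on $\en$; the symmetry of $E$ and $N$ under $\Aut(\sH)$ (which collapses the $\Aut(\sH,<)$-orbits of injective pairs into the two $\Aut(\sH)$-orbits) then upgrades $g$ to a function canonical with respect to $\sH$, so that $g\in\CAHH\subseteq\CAH$. This places a semilattice operation into $\CAH\actson\en$, contradicting the hypothesis of equational affineness since a semilattice on a two-element set is not an affine map over any finite module.

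Hence we may assume $g$ acts on $\en$ as a projection, which puts us in the setting of Lemma~\ref{lem:H:canonization-without-order}: either every operation of $\CA$ canonical with respect to $(\sH,<)$ already belongs to $\CAH$, or $\CAH\actson\en$ contains a semilattice, and the latter again contradicts equational affineness. The former conclusion, combined with the initial Ramsey reduction, completes the proof. The main obstacle compared to the tournament argument is dispatching the semilattice branch; it is handled by the edge/non-edge symmetry argument of Proposition~\ref{prop:CAHH-nontrivial}, which is specific to the undirected setting and has no analogue for $\sT$.
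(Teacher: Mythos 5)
Your proof is correct and takes essentially the same route as the paper, whose proof of this lemma is simply declared to be ``same as the proof of Lemma~\ref{lem:bw-interpolation}'' for the tournament. You are in fact slightly more careful than the paper: you explicitly dispatch the extra semilattice branch permitted by Lemma~\ref{lem:H:proj_semilattice} (which has no counterpart in Lemma~\ref{lem:generateProjection}) via the canonicalization argument of Proposition~\ref{prop:CAHH-nontrivial} and the incompatibility of the semilattice identities with equational affineness.
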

\begin{proof}
    Same as the proof of Lemma~\ref{lem:bw-interpolation}.
\end{proof}

\begin{lemma}\label{lem:H:CAA-homo-module}
    Let $\sA$ be a first-order reduct of\/ $\sH$ that is a model-complete core and not a first-order reduct of $(H;=)$. 
    If  $\CAH\actson\en$ is equationally  affine, then so is $\CAA$.
\end{lemma}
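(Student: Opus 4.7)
The plan is to imitate the proof of Lemma~\ref{lem:CAA-homo-module} verbatim, exploiting the fact that all the preparatory results needed for that argument have already been reformulated for $\sH$ in the preceding lemmas. Since the action $\CAH\actson\en$ is on a two-element set, Post's classification of clones still forces it to consist of linear maps over $\mathbb{Z}_2$ (after fixing an arbitrary identification of $\en$ with $\{0,1\}$), and this is the crucial feature that made the tournament proof work.

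First I would assume for contradiction that $\CAA$ has no clone homomorphism to some clone $\mathscr M$ of affine maps over a finite module witnessing the equational affineness of $\CAH\actson\en$. Let $U$ be the set of injective $n$-tuples of elements of $H$ for some $n\geq 2$ large enough so that $\Aut(\sA)$ has at least two orbits on $U$; such $n$ exists because $\sA$ is not a first-order reduct of $(H;=)$. Then the induced action of $\CAA$ on the finite set $U/\Aut(\sA)$ still has no clone homomorphism to $\mathscr M$, and by~\cite{Maltsev-Cond} there exist $w_3,w_4\in\CAA$ of arities 3 and 4 whose actions on $U/\Aut(\sA)$ are weak near-unanimity operations satisfying the identity $w_3(x,x,y)=w_4(x,x,x,y)$.

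Next, by Lemma~\ref{lem:H:bw-interpolation}, let $w'_3, w'_4\in\CAH$ be locally interpolated by $w_3, w_4$ modulo $\Aut(\sH)$. By the same compactness-cum-invariance argument used in Lemma~\ref{lem:H:canonicalSiggers}, $w'_3$ and $w'_4$ still satisfy the WNU identities on the  action on $U/\Aut(\sA)$ (after precomposition with appropriate automorphisms of $\sA$ that witness the pseudo-identities). By Post~\cite{Post}, their actions on $\en$ are linear over $\mathbb{Z}_2$, and up to composition with an element of $\overline{\Aut(\sA)}$ we may assume both are idempotent there. Hence $w'_3(x_1,x_2,x_3)=\sum_{i\in I} x_i$ for some $I\subseteq\{1,2,3\}$ with $|I|$ odd, and $w'_4(x_1,\ldots,x_4)=\sum_{j\in J} x_j$ with $|J|\subseteq \{1,2,3,4\}$ of odd size.

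Finally, I conclude exactly as in Lemma~\ref{lem:CAA-homo-module}: the bijective correspondence between $\Aut(\sH)$-orbits of injective $n$-tuples and elements of $\en^{\binom{n}{2}}$ turns the action $\CAH\actson U/\Aut(\sH)$ into a power of $\CAH\actson\en$, so that each of $w'_3, w'_4$ operates coordinatewise as a $\mathbb{Z}_2$-linear map. Picking two tuples $a,b\in U$ in distinct $\Aut(\sA)$-orbits with respective $\Aut(\sH)$-orbits $O,P$, if $|I|=1$ then $w'_3$ is a projection on $\en^{\binom{n}{2}}$ and the WNU identity $w'_3(O,P,P)=w'_3(P,O,P)$ would equate $O$ and $P$ modulo $\Aut(\sA)$, a contradiction; the same argument handles $|J|=1$. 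In the remaining case $I=\{1,2,3\}$ and $J=\{1,2,3,4\}$, chaining the two WNU identities $w'_3(O,O,P)=w'_4(O,O,O,P)$ and $w'_4$ symmetry gives $P \equiv_{\Aut(\sA)} w'_3(P,O,O) = w'_4(P,O,O,O) \equiv w'_4(O,O,O,P) = O$, again a contradiction. The main (and only) obstacle is ensuring that every ingredient used by the tournament proof has an $\sH$-avatar; but the groundwork in Lemmas~\ref{lem:H:bw-inclusion}, \ref{lem:H:bw-interpolation} and the Post/Maróti--McKenzie inputs all carry over mutatis mutandis, so no new ideas are required.
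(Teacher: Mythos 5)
Your proposal is correct and follows exactly the route the paper takes: the paper's own proof of this lemma simply declares it ``almost identical'' to the tournament case, Lemma~\ref{lem:CAA-homo-module}, and you have carried out precisely that translation, with the right supporting lemmas (\ref{lem:H:bw-inclusion}, \ref{lem:H:bw-interpolation}) in place. One small slip: in the final case you write $J=\{1,2,3,4\}$, but idempotency of the action on $\en$ forces $|J|$ to be odd, so the remaining case is $I=J=\{1,2,3\}$ up to permutation of coordinates --- which is in fact the case your computation $w'_4(O,O,O,P)=O$ implicitly assumes.
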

\begin{proof}
The proof is almost identical with the proof of Lemma~\ref{lem:CAA-homo-module}.
\end{proof}

 \section{Temporal CSPs}\label{sect:temp}
We now give a short proof of the complexity classification from~\cite{tcsps-journal,tcsps} for first-order expansions of $(\mathbb Q;<)$ using the theory of smooth approximations.
Using Cameron's classification of first-order reducts of $(\mathbb Q;<)$~\cite{Cameron5}, it is easy to either derive the complexity classification for arbitrary first-order reducts from the one for expansions, or to see that all non-trivial reducts have a hard CSP.
Either way, we rather choose to present the result only for expansions so as to keep the argument concise, in order to showcase the most important part of our proof.

\begin{theorem}[see \cite{tcsps-journal}]\label{thm:Q}
    Let $\sA$ be a first-order expansion of $(\mathbb Q;<)$ that is a model-complete core. Then one of the following holds.
    \begin{itemize}
        \item $\Pol(\sA,0)$ has a uniformly continuous clone homomorphism to $\P$, and $\CSP(\sA)$ is NP-complete;
        \item $\Pol(\sA,0)$ has no such clone homomorphism, and $\CSP(\sA)$ is in P.
    \end{itemize}
\end{theorem}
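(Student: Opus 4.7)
My plan is to follow the same pattern as for the random tournament (Section~\ref{sect:rt}) and the homogeneous graphs (Section~\ref{sect:graphs}), comparing the full polymorphism clone $\cD := \Pol(\sA,0)$ with the subclone $\cC := \CAQ$ of functions canonical with respect to the Ramsey structure $(\mathbb Q;<,0)$. Since $(\mathbb Q;<,0)$ is homogeneous and Ramsey in a finite signature, the results recalled in Section~\ref{sect:prelims} yield that $\cD$ locally interpolates $\cC$ modulo $\Aut(\mathbb Q;<,0)$. The important difference with the previous cases is that the reduction to finite-domain CSPs from~\cite{Bodirsky-Mottet} is not available here, as already noted in Section~\ref{subsect:dicho}: there exist first-order expansions of $(\mathbb Q;<)$ for which $\cC$ is equationally trivial yet $\CSP(\sA)\in\mathrm{P}$. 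Consequently, in the tractability branch I would have to produce a specific binary polymorphism from the list of~\cite{tcsps-journal} (traditionally denoted $\min$, $\mathrm{mx}$, $\mathrm{mi}$, $\mathrm{ll}$, $\mathrm{pp}$) and invoke the corresponding dedicated algorithm.

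If $\cD$ admits a uniformly continuous clone homomorphism to $\P$, then the algebraic approach~\cite{Topo-Birk, wonderland} yields NP-completeness of $\CSP(\sA)$, giving one side of the dichotomy. For the converse direction, I would apply the loop lemma of smooth approximations (Theorem~\ref{thm:2-cases-general}) to the pair $\cC \subseteq \cD$ in small arity, using a naked set $(S,\sim)$ of $\cC$ obtained from a minimal subfactor of $\cC \actson \mathbb Q^n/\Aut(\mathbb Q;<,0)$. In the smooth approximation branch of the loop lemma, the fundamental theorem (Theorem~\ref{thm:sa}) would lift the action $\cC \actson S/{\sim}$ to a uniformly continuous clone homomorphism $\cD \to \P$, contradicting the standing assumption. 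In the other branch, Lemma~\ref{lem:weird-symmetric} supplies a binary function $f \in \cD$ satisfying the weak commutativity property with respect to $\sim$; I would then diagonally canonize $f$ with respect to $(\mathbb Q;<,0)$, which is possible thanks to the Ramsey property, and study its behaviour on the handful of $\Aut(\mathbb Q;<,0)$-orbits of pairs in $\mathbb Q^2$.

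The hardest step will be this final one: from the behaviour of the diagonally canonical weakly commutative $f$ on pairs, I would need to deduce that some term built from $f$ and $\overline{\Aut(\sA,0)}$ acts as, or is pseudo-equivalent to, one of the five tractable operations from~\cite{tcsps-journal}, at which point tractability follows from the corresponding algorithm in~\cite{tcsps-journal}. This is where the temporal case is genuinely harder than the tournament and graph cases: because we cannot defer to the finite-domain CSP dichotomy, the combinatorial case analysis of the original proof has to be essentially reproduced, albeit in a much more structured way thanks to the smooth approximation framework. I would expect the case split to be driven by which $\Aut(\sA,0)$-invariant binary relations (such as $<$, $\leq$, the unary predicates $\{x<0\}$ and $\{x>0\}$, and any further relations pp-defined from $\sA$) are preserved by $f$, and to terminate once all possibilities for the behaviour of $f$ on the orbits of pairs under $\Aut(\mathbb Q;<,0)$ have been exhausted.
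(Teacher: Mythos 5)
Your first stage coincides with the paper's Proposition~\ref{prop:pp-or-ll} (up to the minor point that the paper works with $\Pol(\sA)$ and canonicity with respect to $(\mathbb Q;<)$ rather than $(\mathbb Q;<,0)$): one applies the loop lemma in arity $2$ to $\CAQ\subseteq\Pol(\sA)$, rules out the presmooth branch via Lemma~\ref{lem:primitive-neq-very-smooth} and Theorem~\ref{thm:sa} (which would give the forbidden clone homomorphism to $\P$), and in the pseudo-loop branch extracts a weakly commutative binary $f$ from Lemma~\ref{lem:weird-symmetric}, diagonally canonizes it, and reads off $pp$, $ll$, or a dual from its behaviour on $\nabla$ and $\Delta$. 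Up to here your plan is sound and is the paper's.

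The gap is in your final step. A single pass of this machinery only yields the \emph{master} operations $pp$ or $ll$ (or duals); it does not identify one of the minimal tractable operations $min$, $mi$, $mx$, and preservation by $pp$ alone does \emph{not} imply tractability (there are $pp$-preserved NP-hard temporal languages), so ``invoke the corresponding dedicated algorithm'' is not available at that point -- the $pp$ and $ll$ algorithms of~\cite{tcsps-journal} are procedures with oracle calls whose correctness requires knowing more about the clone. Your fallback, ``essentially reproducing the combinatorial case analysis of the original proof,'' concedes this but leaves the argument unwritten, and it is exactly the part the paper is designed to eliminate. The paper instead runs the smooth-approximation machinery a \emph{second} time, on the subfactor $(\mathbb Q_{\geq 0},\Theta)$ with $\Theta=\{(0,0)\}\cup(\mathbb Q_{>0})^2$ inside $\Pol(\sA,0)$ (Proposition~\ref{prop:poly-1-types}), to conclude that the induced idempotent clone $\Cf$ on the two-element set $\{\{0\},\mathbb Q_{>0}\}$ is equationally non-trivial. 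It then recasts the free-set subroutines of the $pp$/$ll$ master algorithms as polynomial-time Turing reductions to $\csp(\Af)$, where $\Af$ is a two-element template preserved by $\Cf$ (Lemma~\ref{lem:free-sets}), so that tractability of $\csp(\Af)$ -- and hence of $\csp(\sA)$ -- follows from Schaefer's theorem rather than from any case distinction among $min$, $mi$, $mx$. Without this second application and the Turing reduction (or a full re-derivation of the Bodirsky--K\'ara analysis), your proposal does not reach a polynomial-time algorithm in the tractable case.
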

In the situation of the first item, NP-completeness follows by the general theory~\cite{Topo-Birk}. 
In order to show polynomial-time solvability of $\CSP(\sA)$ in the absence of a uniformly continuous clone homomorphism from $\Pol(\sA,0)$ to $\P$, we apply twice the combination of Theorem~\ref{thm:2-cases-general}, Theorem~\ref{thm:sa}, and Lemma~\ref{lem:weird-symmetric}.
Since the tractability borderline is here known not to be  described by canonical polymorphisms, one cannot use the reduction to finite-domain CSPs from~\cite{Bodirsky-Mottet} in the same fashion as for, say, the universal homogeneous  tournament.
Rather than that, we give a new description of the algorithms given in~\cite{tcsps-journal}, writing them as a polynomial-time Turing reduction to a finite-domain CSP whose template we denote by $\Af$.
Finally, we obtain that $\csp(\Af)$ is in P using the polymorphisms of $\sA$ that arise from the application of Lemma~\ref{lem:weird-symmetric}.

In this section, $\mathbb Q_{>0}$ denotes the set of positive rational numbers, and we also use $\mathbb Q_{\leq 0}$ and $\mathbb Q_{<0}$ with the obvious meaning. Abusing notation, we write $<$ and $>$ for the two $\Aut(\mathbb Q;<)$-orbits of injective pairs, $=$ for the third orbit of pairs, and $\neq{}:={}<\cup >$.  
Moreover,  $\nabla:=\{(x,y)\in\mathbb Q^2\;|\; x<y\}$ and $\Delta:=\{(x,y)\in\mathbb Q^2\;|\; x>y\}$ (so in fact, $\nabla={}<$ and $\Delta={}>$, but sometimes this notation will be clearer).  
We denote by  $pp$ any fixed  binary operation on $\mathbb Q$  such that $pp(x,y)=x$ for $x\leq 0$ and $\epsilon(y)$ for $x>0$, where $\epsilon\colon \mathbb Q\to\mathbb Q_{>0}$ is strictly increasing.
The operation $lex$ is any binary operation on $\mathbb Q$ such that $lex(x,y)<lex(x',y')$ iff $x<x'$ or $x=x',y<y'$; $ll$ is any binary operation on $\mathbb Q$ satisfying the following properties:
\begin{itemize}
    \item $ll(0,0)=0$,
    \item $ll(x,y)<ll(x',y')$ if $x\leq 0<x'$,
    \item $ll(x,y)<ll(x',y')$ if $x,x'\leq 0$ and $lex(x,y)<lex(x',y')$,
    \item $ll(x,y)<ll(x',y')$ if $x,x'>0$ and $lex(y,x)<lex(y',x')$.
\end{itemize}
The \emph{dual} of a binary operation $f$ on $\mathbb Q$ is the operation $(x,y)\mapsto -f(-x,-y)$.

\begin{proposition}\label{prop:pp-or-ll}
    Let $\sA$ be a first-order expansion of $(\mathbb Q;<)$ that is a model-complete core. 
    Suppose that $\Pol(\sA)$ does not have a uniformly continuous clone homomorphism to $\P$. 
    Then $\Pol(\sA)$ contains $pp$, $ll$, or one of their duals.
\end{proposition}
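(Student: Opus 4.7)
The plan is to apply twice the combination of the loop lemma of approximations (Theorem~\ref{thm:2-cases-general}), the fundamental theorem of smooth approximations (Theorem~\ref{thm:sa}), and the weakly commutative functions lemma (Lemma~\ref{lem:weird-symmetric}), exploiting the fact that $(\mathbb{Q};<)$ is itself a homogeneous Ramsey structure in a finite signature. Consequently $\Pol(\sA)$ locally interpolates its subclone $\cC$ of operations canonical with respect to $\Aut(\mathbb{Q};<)$, and the three $\Aut(\mathbb{Q};<)$-orbits of pairs are $\{=,<,>\}$. I would first dispose of the degenerate case in which $\sA$ is interdefinable with $(\mathbb{Q};=)$, where the claim is either trivial or forces a clone homomorphism to $\P$.

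In the first application of the machinery, I would take $\cC \subseteq \cD := \Pol(\sA)$ and apply Theorem~\ref{thm:2-cases-general} (or its companion Theorem~\ref{thm:2-cases-general-2}) to an appropriately chosen minimal subfactor. A useful feature of $\Aut(\mathbb{Q};<)$ is that it is highly transitive on increasing tuples, which makes the ``primitivity'' hypotheses of Section~\ref{sect:sa} easy to verify, so that any $\cD$-invariant presmooth approximation is already smooth via Lemma~\ref{lem:primitive-implies-smooth}. If the first case of Theorem~\ref{thm:2-cases-general} were to hold, Theorem~\ref{thm:sa} would produce a uniformly continuous clone homomorphism from $\Pol(\sA)$ to $\P$, contradicting the hypothesis. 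Hence the pseudo-loop case must hold, and Lemma~\ref{lem:weird-symmetric} yields a weakly commutative binary operation $f_1 \in \Pol(\sA)$. Canonizing $f_1$ modulo $\Aut(\mathbb{Q};<)$ via Ramsey interpolation, I would extract a canonical binary operation whose action on $\{<,>\}$ is strongly constrained, typically behaving like a semilattice operation (of the $\min$/$\max$ family) or like $lex$.

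In the second application, I would use the canonical operation obtained from $f_1$ to refine the analysis by passing to a new subfactor that encodes a sign structure on $\mathbb{Q}$ (effectively distinguishing a point playing the role of the $0$ appearing in the very form of $pp$ and $ll$), and then iterate the loop lemma, the fundamental theorem, and the weakly commutative functions lemma. The smooth case is again excluded by the absence of a uniformly continuous clone homomorphism to $\P$, and the pseudo-loop case gives a second weakly commutative operation $f_2$. Combining $f_1$ and $f_2$ with automorphisms and canonizing with respect to a finer Ramsey expansion of $(\mathbb{Q};<)$ then forces the resulting canonical binary operation to coincide, up to composition with endomorphisms of $\sA$, with one of $pp$, $ll$, $pp^*$, $ll^*$.

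The hard part will be this final case analysis: one must verify that the only canonical binary behaviors simultaneously compatible with a pseudo-loop in both iterations are precisely the four listed operations. This replaces the long classical case distinction of~\cite{tcsps-journal} by a much shorter argument, but still requires a careful choice of the two subfactors so that the pseudo-loop alternative in each iteration locks in, respectively, the ``projection-like'' side (giving $pp$ or $pp^*$) versus the ``lexicographic'' side (giving $ll$ or $ll^*$) of the resulting dichotomy, and that the polarity (dual or not) is fixed by the behavior of $f_1$ together with the weak commutativity of $f_2$.
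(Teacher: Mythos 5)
Your proposal diverges from the paper's argument in a way that introduces real gaps. The paper proves this proposition with a \emph{single} application of the machinery: one applies Theorem~\ref{thm:2-cases-general} with $n=2$ to $\CAQ\subseteq\Pol(\sA)$ (using that $(\CAQ)^2/\Aut(\mathbb Q;<)$ is equationally trivial), rules out the approximation case, extracts one weakly commutative binary $f$ via Lemma~\ref{lem:weird-symmetric}, makes it \emph{diagonally} canonical, and reads off $pp$, $ll$, or a dual directly from the behaviour of $f$ on $\nabla$ and $\Delta$ by local interpolation. Your second iteration, built around a subfactor ``encoding a sign structure'' and a distinguished point playing the role of $0$, does not belong to this proposition: the constant $0$ and the clone $\Pol(\sA,0)$ only enter in the subsequent step (Proposition~\ref{prop:poly-1-types}), whose hypothesis is strictly stronger (no u.c.\ clone homomorphism from $\Pol(\sA,0)$). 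Under the hypothesis of the present proposition, which concerns only $\Pol(\sA)$, it is not clear what your second subfactor is, nor why its smooth case would again contradict the hypothesis; the $0$ in the definitions of $pp$ and $ll$ is merely a normalization absorbed by local interpolation with automorphisms, not something that requires a second loop-lemma pass.

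Two further concrete problems. First, in the smooth case you invoke Lemma~\ref{lem:primitive-implies-smooth}, which only upgrades presmooth to smooth; Theorem~\ref{thm:sa} then yields only a uniformly continuous \emph{minion} homomorphism to $\P$, which does not contradict the hypothesis (about \emph{clone} homomorphisms). You need Lemma~\ref{lem:primitive-neq-very-smooth} (using that $\neq$ is invariant and the subfactor is minimal) to get a \emph{very} smooth approximation and hence a clone homomorphism. Second, your plan to ``canonize $f_1$ modulo $\Aut(\mathbb Q;<)$'' is problematic: full canonization need not preserve weak commutativity — only diagonal canonization does (as remarked after Lemma~\ref{lem:weird-symmetric}) — and the entire content of the proposition lies in the case analysis you defer as ``the hard part'': that the diagonally canonical $f$ acts on each of $\nabla$ and $\Delta$ as a projection or as $lex$ (or its dual), that weak commutativity forces the two restrictions to depend on different arguments, and that the projection/projection case locally interpolates $pp$ while the $lex$ case, after composing to $lex(f(x,y),f(y,x))$, locally interpolates $ll$. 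Without that analysis the proof is not complete.
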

\begin{proof}
    Let $\CAQ$ be the clone of those  polymorphisms of $\sA$ that are canonical with respect to $(\mathbb Q;<)$.
    It is well-known and easy to see that $(\CAQ)^2/{\Aut(\mathbb Q;<)}$ is equationally trivial. 
    We remark that in order to prove Theorem~\ref{thm:Q}, it is not necessary to know this: if  $(\CAQ)^2/{\Aut(\mathbb Q;<)}$  were equationally non-trivial, then  $\csp(\sA)$ would be in P by the general reduction in~\cite{Bodirsky-Mottet}. Hence, we could afford to add equational triviality to the  hypotheses of our proof.
    
    By the loop lemma of smooth approximations (Theorem~\ref{thm:2-cases-general}) with $n=2$ and clones $\CAQ\subseteq \CA$,
    we are in one of two cases.
    
    In the first case, there is a naked set $(S,\sim)$ for $\CAQ$ such that $\sim$ is \todo{presmoothly} approximated by a $\Pol(\sA)$-invariant equivalence relation.
        Because $\Aut(\mathbb Q;<)$ is 2-``primitive'',
        and $\neq$ is preserved by canonical functions,
        Lemma~\ref{lem:primitive-neq-very-smooth} implies that the approximation is very smooth with respect to that group. 
        Then by the fundamental theorem (Theorem~\ref{thm:sa}), we obtain a uniformly continuous clone homomorphism from  $\Pol(\sA)$ to $\Projs$, a contradiction.

    In the second case, there is a naked set $(S,\sim)$ for $\CAQ$ such that every
        binary $\Pol(\sA)$-invariant symmetric relation $R$ that contains a pair $(a,b)\in S^2$ where $a,b$ are disjoint and such that $a\not\sim b$ contains a pseudo-loop modulo $\Aut(\mathbb Q;<)$.
        By Lemma~\ref{lem:weird-symmetric}, there is a binary $f\in\Pol(\sA)$
        such that whenever $f(a,b),f(b,a)$ are in $S$ and disjoint,
        then $f(a,b)\sim f(b,a)$.
        Without loss of generality, $f$ can be assumed to be diagonally canonical with respect to $\Aut(\mathbb Q;<)$.
        On $\nabla$ as well as on $\Delta$, $f$ acts as a canonical operation, and therefore it acts  there as a  projection or as  $lex$ or its dual.
        
        If $f$ behaves like a projection on $\nabla$ and $\Delta$, then
        by the property from Lemma~\ref{lem:weird-symmetric}, the two projections depend on different arguments.
        \todo{Supposing that $f$ behaves like the first projecion on $\nabla$, we show that it locally interpolates $pp$ (the other case is similar, where $f$ instead locally interpolates the dual of $pp$).
        Note that for all $(x,y)\in\nabla$ and $(x',y')\in\Delta$ with $x<y'$, $f(x,y)<f(x',y')$ holds.
        Indeed, for every $\epsilon>0$, we have $f(x,x+\epsilon)=f(x,y)$ and $f(y'+\epsilon,y')=f(x',y')$. Now since $x<y'$, for $\epsilon$ small enough we have $x+\epsilon<y'$, so that by preservation of $<$ we have $f(x,x+\epsilon)<f(y'+\epsilon.y)$.
        Thus, $f$ locally interpolates $pp$:
        for any finite set $S$, let $\alpha$ be an arbitrary automorphism of $(\mathbb Q;<)$ such that $\alpha(S\cap \mathbb Q_{\leq 0})<S<\alpha(S\cap\mathbb Q_{>0})$.
        Then $(x,y)\mapsto f(\alpha x,y)$ behaves like $pp$ on $S$.}
        
        Otherwise, $f$ behaves like $lex$ on $\nabla$ or on $\Delta$; without loss of generality, assume the latter. 
        Then the set $S$ must contain both orbits $<$ and $>$, and moreover, these two orbits must be  $\sim$-inequivalent: the reason for this is that $lex\in\CAQ$
        acts on $\{<,=\}$, $\{>,=\}$, and on $\{=,\neq\}$ like a semilattice operation, so the naked set $(S,\sim)$ cannot be given by any of these options. 
        Again, by the property from Lemma~\ref{lem:weird-symmetric},
        we get that the restrictions of $f$ to $\nabla$ and $\Delta$, acting on  $\{<,>\}$, depend on  different arguments.
        Then $g\colon (x,y)\mapsto lex(f(x,y),f(y,x))$ satisfies the same assumptions as $f$ but is now injective both on $\nabla$ and $\Delta$.
        By a similar argument as above, $ll$ or its dual is locally interpolated by $g$.
\end{proof}

Let $\Theta$ be the equivalence relation on $\mathbb Q_{\geq 0}$ defined as $\{(0,0)\}\cup (\mathbb Q_{>0})^2$. For $\sA$ a first-order expansion of $(\mathbb Q;<)$, we then have that $\Pol(\sA,0,\Theta)$ acts idempotently on the two $\Theta$-classes $\{0\}$ and $\mathbb Q_{>0}$; denote this action by $\Cf$.

\begin{proposition}\label{prop:poly-1-types}
    Let $\sA$ be a first-order expansion of $(\mathbb Q;<)$ that is a model-complete core. 
    Suppose that $\Pol(\sA,0)$ has no uniformly continuous clone homomorphism to $\Projs$, and that $\Pol(\sA)$ contains $pp$ or $ll$.
    Then $\Cf$ is equationally non-trivial.
\end{proposition}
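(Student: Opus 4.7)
My strategy is to apply the triple Theorem~\ref{thm:2-cases-general}/Theorem~\ref{thm:sa}/Lemma~\ref{lem:weird-symmetric} a second time, following the same scheme as in the proof of Proposition~\ref{prop:pp-or-ll}, splitting the argument according to which of $pp$ or $ll$ lies in $\Pol(\sA)$. The case $ll \in \Pol(\sA)$ is immediate: a direct inspection of the defining inequalities of $ll$ shows that $ll(0,0) = 0$, that $ll$ sends $\mathbb Q_{\geq 0}^2$ into $\mathbb Q_{\geq 0}$, and that $ll(x,y) = 0$ iff $x = y = 0$. Hence $ll \in \Pol(\sA, 0, \Theta)$, and its action on the two classes $\{0\}$ and $\mathbb Q_{>0}$ is the join, a semilattice operation, which witnesses that $\Cf$ is equationally non-trivial.

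Assume now that $pp \in \Pol(\sA)$ and suppose for contradiction that $\Cf$ is equationally trivial, so that $\Pol(\sA, 0, \Theta)$ acts only by projections on $\{\{0\}, \mathbb Q_{>0}\}$. I take $\cD := \Pol(\sA, 0)$ and let $\cC \subseteq \cD$ be the subclone of polymorphisms canonical with respect to $\Aut(\mathbb Q;<,0)$; since both $\Theta$-classes are $\Aut(\mathbb Q;<,0)$-orbits, every such canonical polymorphism automatically preserves $\Theta$, so $\cC \subseteq \Pol(\sA, 0, \Theta)$, and by the Ramsey property of $(\mathbb Q;<,0)$, $\cD$ locally interpolates $\cC$ modulo $\Aut(\mathbb Q;<,0)$. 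Under the triviality assumption, $(S, \sim) := (\mathbb Q_{\geq 0}, \Theta)$ is a minimal subfactor of $\cC$ with $\Aut(\mathbb Q;<,0)$-invariant classes. Applying Theorem~\ref{thm:2-cases-general-2} at $n = 1$ to this subfactor (or Theorem~\ref{thm:2-cases-general} should $\cC^1/\gG_\cD$ happen to be equationally trivial) yields two sub-cases. In the first, $\Theta$ has a $\cD$-invariant presmooth approximation, which upgrades to a very smooth one by an argument analogous to Lemma~\ref{lem:primitive-neq-very-smooth} (exploiting the $2$-transitivity of $\Aut(\mathbb Q;<,0)$ on each of its orbits); Theorem~\ref{thm:sa} then produces a uniformly continuous clone homomorphism from $\Pol(\sA, 0)$ to $\cC \actson S/{\sim}$, which composed with the projective action gives a uniformly continuous clone homomorphism $\Pol(\sA, 0) \to \Projs$, contradicting the hypothesis.

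In the second sub-case, Lemma~\ref{lem:weird-symmetric} supplies a binary $f \in \Pol(\sA, 0)$ satisfying $f(a, b) \Theta f(b, a)$ for all $a, b \in \mathbb Q$ such that $f(a, b)$ and $f(b, a)$ are distinct elements of $\mathbb Q_{\geq 0}$; by the remark following that lemma I may assume $f$ to be diagonally canonical with respect to $\Aut(\mathbb Q;<,0)$. Combining $f$ with $pp$, I aim to construct a polymorphism $g \in \Pol(\sA, 0, \Theta)$ whose induced action on $\Cf$ is commutative and idempotent, hence a semilattice on the two-element set $\{\{0\}, \mathbb Q_{>0}\}$, contradicting the triviality of $\Cf$. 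The main obstacle is this last construction: $pp$ acts as a projection on $\Cf$, so a naive composition with $f$ inherits that projectivity and fails to produce commutativity. The correct construction must interleave $pp$-applications (to force the output into $\mathbb Q_{\geq 0}$ and so secure $\Theta$-preservation) with $f$-applications (to import the $\Theta$-symmetry coming from weak commutativity); verifying that the resulting operation lies in $\Pol(\sA, 0, \Theta)$ and acts symmetrically on $\Cf$ will require a fine case analysis of the behavior of $f$ on the orbits of pairs under $\Aut(\mathbb Q;<,0)$, using its diagonal canonicity.
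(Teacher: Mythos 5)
Your treatment of the $ll$ case is correct and agrees with the paper. In the $pp$ case, however, there are two genuine problems. First, the appeal to Theorem~\ref{thm:2-cases-general-2} (or Theorem~\ref{thm:2-cases-general}) with $\cD=\Pol(\sA,0)$ is not licensed: both loop lemmas require $\cD$ to be a model-complete core \emph{without algebraicity}, and $\gG_{\cD}=\Aut(\sA,0)$ has the finite orbit $\{0\}$, hence has algebraicity in the sense of the preliminaries. This is precisely why the paper does not invoke the loop lemma in this proposition. Instead it proves by hand, using $pp$ applied inside the auxiliary ternary relation $T=\{(\alpha(x),\alpha(y),\alpha(0))\mid (x,y)\in R,\ \alpha\in\Aut(\sA)\}$, that \emph{every} non-empty binary symmetric $\Pol(\sA,0)$-invariant relation contains a pseudo-loop modulo $\Aut(\mathbb Q;<,0)$; only the residual case where $R$ meets $\{0\}\times\mathbb Q_{>0}$ and nothing else uses the observation that $\Theta$ has no presmooth $\Pol(\sA,0)$-invariant approximation. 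Relatedly, the paper applies Lemma~\ref{lem:weird-symmetric} with $S=\mathbb Q$ and $\sim$ equal to orbit equivalence modulo $\Aut(\mathbb Q;<,0)$ (three classes), not with $(\mathbb Q_{\geq 0},\Theta)$. This yields the much stronger property that $f'(a,b)$ and $f'(b,a)$ lie in the same $\Aut(\mathbb Q;<,0)$-orbit for \emph{all} $a,b$, which is indispensable below; your weaker $f$, constrained only when both values land in $\mathbb Q_{\geq 0}$, gives essentially no control when $f$ maps into $\mathbb Q_{<0}$.

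Second, and more seriously, the final step is missing, and it is the heart of the argument: you explicitly defer the construction of an operation witnessing non-triviality of $\Cf$ to an unspecified ``fine case analysis''. The paper's construction is short but non-obvious: either $f'$ preserves $\{\{0\},\mathbb Q_{>0}\}$ and already induces a commutative operation on $\Cf$, or $f'(0,\mathbb Q_{>0})\cup f'(\mathbb Q_{>0},0)\subseteq\mathbb Q_{<0}$; in the latter case one passes to $g(x,y):=f'(pp(x,y),pp(y,x))$, which makes $g(0,\cdot)$ and $g(\cdot,0)$ constant on $\mathbb Q_{>0}$ with values $b\leq a$, and then distinguishes $b<a$ (yielding a commutative action after composing with suitable automorphisms) from $a=b<0$, where one obtains $h$ with $h(x,0)=h(0,x)=0$ and $h(0,0)>0$, so that $j(x,y,z):=h(x,h(y,z))$ induces a \emph{minority} operation on $\Cf$. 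Note that your stated target --- a commutative idempotent, hence semilattice, action --- is not attainable in that last case; the non-trivial identity one ends up with is the minority identity. Without this construction (or a substitute for it) the proof does not go through.
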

\begin{proof}
    If $ll$ is in $\Pol(\sA)$, it acts as a binary commutative operation on $\{\{0\},\mathbb Q_{>0}\}$ and we are done. Therefore, let us assume that $pp$ is in $\Pol(\sA)$. 
    
    Suppose for contradiction that $\Cf$ is equationally trivial. 
    Then $\Theta$ cannot be \todo{presmoothly} approximated by $\Pol(\sA,0)$,
    as otherwise $\Pol(\sA,0)$ has a uniformly continuous clone homomorphism to $\Projs$ by Lemma~\ref{lem:primitive-neq-very-smooth} and Theorem~\ref{thm:sa}.
    
    We claim that every non-empty binary symmetric relation $R$ invariant under $\Pol(\sA,0)$ contains a pseudoloop modulo $\Aut(\mathbb Q;<,0)$.
    Consider the ternary relation
    \[ T:=\{ (\alpha(x),\alpha(y),\alpha(0)) \mid (x,y)\in R, \alpha\in\Aut(\sA)\}, \]
    which is invariant under $\Pol(\sA)$.
    
    First, suppose that $R$ intersects $\mathbb Q_{>0}\times\mathbb Q_{<0}$.
    Then $T$ contains in particular $(0,2,1)$ and $(2,0,1)$,
    so that $pp((0,2,1),(2,0,1))=(0,\epsilon(0),\epsilon(1))$ is in $T$.
    By definition of $T$, there are $\alpha\in\Aut(\sA)$ and $(x,y)\in R$
    such that $\alpha(x)=0$, $\alpha(y)=\epsilon(0)$, and $\alpha(0)=\epsilon(1)$; moreover, $\alpha(x)<\alpha(y)<\alpha(0)$ by definition of $\epsilon$.
    Let $\beta\in\Aut(\mathbb Q;<)$ be such that $\beta(\alpha(0))=0$.
    Then $\beta\circ\alpha\in\Aut(\sA;0)$, therefore it preserves $R$, so that $(\beta(\alpha(x)),\beta(\alpha(y)))\in R$.
    Note that this pair is in $(\mathbb Q_{<0})^2$ and is therefore a pseudoloop modulo $\Aut(\mathbb Q;<,0)$.
    
    Thus, we can suppose that $R$ does not intersect $\mathbb Q_{>0}\times\mathbb Q_{<0}$, and further assume that it intersects $\{0\}\times \mathbb Q_{<0}$.
    Then $T$ contains $(1,0,1)$ and $(0,1,1)$,
    so that $(\epsilon(0),0,\epsilon(1))$ is in $T$.
    Thus there are $\alpha\in\Aut(\sA)$, $(x,y)\in R$ such that $\alpha(x)=\epsilon(0),\alpha(y)=0,\alpha(0)=\epsilon(1)$.
    Then by letting $\beta\in\Aut(\mathbb Q;<)$ be such that $\beta(\alpha(0))=0$,
    we get that $(\beta(\alpha(x)),\beta(\alpha(y)))\in R$,
    and $\beta(\alpha(x)),\beta(\alpha(y))<0$.
    
    Finally, suppose that $R$ intersects $\{0\}\times \mathbb Q_{>0}$
    and none of the sets above.
    If $R$ did not have a pseudoloop, then $R\circ R$ would smoothly approximate $\Theta$, which contradicts what we established above.

    This finishes the proof of the claim.
    We can thus apply Lemma~\ref{lem:weird-symmetric} %
    with $S=\mathbb Q$ and $\sim$ being orbit-equivalence modulo $\Aut(\mathbb Q;<,0)$, and obtain a polymorphism $f'\in\Pol(\sA,0)$ such that for all $a,b$, $f'(a,b)$ and $f'(b,a)$ are in the same orbit under $\Aut(\mathbb Q;<,0)$.
    We look at the behaviour of $f'$ on $\{\mathbb Q_{>0},\{0\}\}$.
    If $f'$ preserves this set,
    then $f'$ induces a binary commutative operation in $\Cf$, which contradicts the fact that $\Cf$ is equationally trivial. 
    Therefore it must be the case that $f'(0,\mathbb Q_{>0}),f'(\mathbb Q_{>0},0)\subseteq\mathbb Q_{<0}$.
    Let $g(x,y):=f'(pp(x,y),pp(y,x))$.
    One sees that $g$ has the same behavior as $f'$ on orbits of $\Aut(\mathbb Q;<,0)$,
    but now $g(0,y)=g(0,y')$ for all $y,y'>0$ and $g(x,0)=g(x',0)$ for all $x,x'>0$.
    Let $a,b$ be these two values.
    Without loss of generality, $b\leq a$.
	
	If $b<a$, let $\alpha$ map $b$ to $0$.
	Then $(x,y)\mapsto \alpha g(\alpha g(x,y),y)$ is similar to $g$, but now satisfies $a,b>0$, i.e., it induces a binary commutative operation in $\Cf$, a contradiction.
	
	So we must have $a=b<0$.
	For $\alpha\in\Aut(\mathbb Q;<)$ mapping $a$ to $0$ we obtain that $h:=\alpha g$ satisfies $h(x,0)=h(0,x)=0$ for all $x>0$, and $h(0,0)>0$.
	Now the ternary $j(x,y,z):=h(x,h(y,z))$ induces a minority operation in $\Cf$, a contradiction.
\end{proof}

We now show how the existing polynomial-time algorithms for tractable CSPs of first-order expansions $\sA$ of $(\mathbb Q;<)$ can be seen as polynomial-time Turing reductions to a CSP on a two-element domain,  whose template $\Af$ admits the operations from $\Cf$ as polymorphisms.
Thus, if $\Pol(\sA)$ has no uniformly continous minion homomorphism to $\Projs$, then $\Cf$ is equationally non-trivial by Proposition~\ref{prop:poly-1-types}, so that $\CSP(\Af)$ is tractable by the finite-domain dichotomy theorem~\cite{BulatovFVConjecture,ZhukFVConjecture} (in fact, by Schaefer's theorem~\cite{Schaefer}). 
We then obtain that $\Pol(\sA)$ is in P, which proves the dichotomy.

The algorithms for the tractable cases are based on the notion of \emph{free sets}:
given an instance $\sX$ of  $\csp(\sA)$, a free set for $\sX$ is a \todo{non-empty} subset $S$ of its domain such that for every tuple $t=(t_1,\dots,t_k)$ in a relation $R^{\sX}$,
either $\{t_1,\dots,t_k\}$ does not intersect $S$ or there is $h\colon \{t_1,\dots,t_k\}\to\mathbb Q$ such that $h(t_1,\dots,t_k)\in R^{\sA}$ and $\argmin(h)=\{t_1,\dots,t_k\}\cap S$, i.e., the minimum of the range of $h$ is attained exactly by those elements of $\{t_1,\ldots,t_k\}$ which are elements of $S$.

Let $\Af$ be the relational structure whose two-element domain $D$ consists of the two $\Theta$-classes $\{0\}$ and $\mathbb Q_{>0}$, and which has one unary relation for each of its two elements, as well as  for every $k\geq 1$ and every $k$-ary relation $R^{\sA}$ of $\sA$ a corresponding relation
\[ R^{\Af} := \{(t_1,\dots,t_k)\in D^k \mid \exists u\in R^{\sA}\; (u\in (\mathbb Q_{\geq 0})^k) \text{ and } \forall 1\leq i \leq k\; (t_i=\{0\}\Leftrightarrow u_i=0)\}.\]
In other words, $\Af$ is obtained by first restricting $\sA$ to $\mathbb Q_{\geq 0}$, then factoring by $\Theta$, and then  adding two additional  relations naming the two $\Theta$-classes. It is clear that every element of $\Cf$ is a polymorphism of $\Af$.

\begin{lemma}\label{lem:free-sets}
    The following problems
    have a polynomial-time Turing-reduction to $\csp(\Af)$:
    \begin{itemize}
        \item the problem of, given an instance $\sX$, computing a free set for $\sX$, or deciding that none exist;
        \item the problem of, given a free set $S$ for an instance $\sX$, finding a proper subset $T$ that is a free set for $\sX$, or deciding that none exist.
    \end{itemize}
\end{lemma}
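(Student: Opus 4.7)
The plan is to design polynomial-time Turing reductions based on the following correspondence: given an instance $\sX$ of $\csp(\sA)$ on variable set $V$, let $\sY$ be the $\csp(\Af)$ instance on $V$ obtained by replacing each constraint $R^{\sX}(t_1,\ldots,t_k)$ by $R^{\Af}(t_1,\ldots,t_k)$. I would prove that a non-empty $S\subseteq V$ is a free set of $\sX$ containing a distinguished element $v$ if and only if $\sY$ admits a homomorphism $h\colon\sY\to\Af$ with $h(v)=\{0\}$ and $h^{-1}(\{0\})=S$. Granting this correspondence, the first problem reduces to querying the oracle on each instance $\sY_v$ obtained from $\sY$ by adding the unary constraint $h(v)=\{0\}$ (using the unary relation in $\Af$ naming $\{0\}$), for all $v\in V$; I would return $h^{-1}(\{0\})$ on the first successful query, or report ``no free set exists'' if all queries fail, for a total of at most $|V|$ oracle calls. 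For the second problem, given a free set $S$, I would iterate over pairs $(v,w)\in S\times S$ with $v\neq w$, each time querying the instance obtained from $\sY$ by adding the unary constraints $h(v)=\{0\}$, $h(w)=\mathbb{Q}_{>0}$, and $h(u)=\mathbb{Q}_{>0}$ for all $u\in V\setminus S$: a successful query yields $T:=h^{-1}(\{0\})\subseteq S\setminus\{w\}$, a non-empty proper subset of $S$ that is free by the correspondence.

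The substance lies in verifying the correspondence. For the forward direction, given a free set $S\ni v$, define $h$ by $h(u)=\{0\}$ iff $u\in S$. For each constraint $R^{\sX}(t_1,\ldots,t_k)$ that meets $S$, I translate the witness $g\colon\{t_1,\ldots,t_k\}\to\mathbb{Q}$ from the free set definition by the order-automorphism $x\mapsto x-\min(g)$ of $(\mathbb{Q};<)$ (which lies in $\Aut(\sA)$, as $\sA$ is a first-order expansion of $(\mathbb{Q};<)$): the translated tuple lies in $R^{\sA}\cap\mathbb{Q}_{\geq 0}^k$ and its zero-positions are precisely the indices $i$ with $t_i\in S$, so $R^{\Af}(h(t_1),\ldots,h(t_k))$ holds. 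For constraints disjoint from $S$, I use that any non-empty $R^{\sA}$ can be translated by a sufficiently large positive shift to lie in $\mathbb{Q}_{>0}^k$, so $R^{\Af}$ always contains the all-$\mathbb{Q}_{>0}$ tuple. For the backward direction, any solution $h$ provides, via the definition of $R^{\Af}$, a realization $u\in R^{\sA}\cap\mathbb{Q}_{\geq 0}^k$ whose zero-positions match $h$, from which the required witness $g$ is read off by $g(t_i):=u_i$.

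The main obstacle I anticipate is the subtle case of constraints with repeated variables: if $t_i=t_j$ but the realization $u$ extracted from $R^{\Af}$ satisfies $u_i\neq u_j$ (both positive), then $g(t_i):=u_i$ is inconsistent. I plan to bypass this by a preprocessing step: for each constraint of $\sX$, let $K$ denote the equality pattern on its coordinates determined by which $t_i$ coincide, and replace the underlying relation $R^{\sA}$ by its pp-definable subrelation $R_K^{\sA}\subseteq R^{\sA}$ of tuples whose kernel refines $K$. The resulting preprocessed instance uses only constraints whose tuples are effectively injective, so the $R_K^{\Af}$-realizations automatically respect the equalities forced by the variables, and the reduction proceeds as above.
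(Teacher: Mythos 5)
Your proposal follows essentially the same route as the paper's proof: the same correspondence between free sets containing a given element $x$ and homomorphisms to $\Af$ from the instance with the added constraint $x=\{0\}$, the same translation argument in the forward direction, and the same iteration over single elements (resp.\ pairs) combined with a search-to-decision self-reduction for $\csp(\Af)$. The two places where you are more careful than the paper --- adding $h(u)=\mathbb Q_{>0}$ for $u\notin S$ so that the free set produced in the second problem is genuinely a subset of $S$, and the preprocessing for constraints with repeated variables --- are refinements of points the paper treats implicitly; only note that the latter formally makes the target a fixed finite expansion of $\Af$ by the relations $R_K^{\Af}$ (all still invariant under $\Cf$, hence harmless for the intended application).
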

\begin{proof}
    Let $\sX$ be an instance of $\CSP(\sA)$.
    For any element $x$ of the domain of $\sX$, let $\sY_x$ be the instance of $\csp(\Af)$
    obtained by adding the constraint $x=\{0\}$ to $\sX$.
    
    Then free sets of $\sX$ containing $x$ are in bijection with homomorphisms from $\sY_x$ to $\Af$.
    Indeed, let $h\colon \sY_x\to\Af$ be a homomorphism.
    Then $h^{-1}(\{0\})$ contains $x$ by definition and it is a free set for $\sX$: indeed, let $(t_1,\dots,t_k)\in R^{\sX}$ and suppose that $\{t_1,\dots,t_k\}\cap h^{-1}(\{0\})$ is non-empty.
    The instance $\sY_x$ contains the corresponding constraint,
    so that $h(t_1,\dots,t_k)\in R^{\Af}$,
    which by definition of the relation gives that there exists
    $u\in R^{\sA}$ such that for all $1\leq i\leq k$, we have $u_i\geq 0$ and $u_i=0\Leftrightarrow h(t_i)=\{0\}\Leftrightarrow t_i\in h^{-1}(\{0\})$.
    Since $h^{-1}(\{0\})\cap\{t_1,\dots,t_k\}\neq\emptyset$, we have $u_i=0$ for at least one $i$. 
    Thus $u$ is minimal precisely for those $i$ such that $t_i\in h^{-1}(\{0\})$, i.e., $h^{-1}(\{0\})$ is a free set.
    
    Conversely, any free set $S$ of $\sX$ containing $x$ defines a homomorphism $h\colon\sY_x\to\Af$ by $h(y)=\{0\}\Leftrightarrow y\in S$.
    
    Note that the search problem for $\csp(\Af)$ Turing-reduces to the decision problem.
    Therefore, to compute a free set for $\sX$ (or decide that none exists),
    it suffices to iterate over every element $x$ of $\mathbb X$, and construct a homomorphism from  $\sY_x$ to $\Af$ if one exists.
    
    Similarly, if one constructs the instance $\sY'_{(x,y)}$ 
    as $\sX$ together with the constraints $x=\{0\},y=\mathbb Q_{>0}$, the homomorphisms from  $\sY'_{(x,y)}$ to $\Af$ correspond precisely to the free sets containing $x$ and omitting $y$.
    Thus, given a free set $S$, it suffices to iterate the following for every $x\in S$: for every $y\neq x$, decide whether there exists a homomorphism $\sY'_{(x,y)}\to\Af$.
    In such a case, we obtain a free set $T\subseteq S\setminus\{y\}$.
\end{proof}

We summarize the proof of Theorem~\ref{thm:Q}.

\begin{proof}[Proof of Theorem~\ref{thm:Q}] If the first item holds, we are done by~\cite{Topo-Birk}. Otherwise, 
    Proposition~\ref{prop:pp-or-ll} shows that $\Pol(\sA)$ contains $pp$, $ll$, or one of their duals. Suppose that $\Pol(\sA)$ contains $pp$ or $ll$, the other case being symmetric.
    By Proposition~\ref{prop:poly-1-types}, $\Pol(\sA)$ contains a polymorphism $f$ that induces an idempotent cyclic action on $\{\{0\},\mathbb Q_{>0}\}$.
    Thus, $f$ induces a cyclic polymorphism of $\Af$,
    so that $\csp(\Af)$ is in P by Schaefer's theorem~\cite{Schaefer}.
    Thus, the two computational tasks from Lemma~\ref{lem:free-sets} are in P.
    A short argument (see~\cite{tcsps-journal,ll}) shows that $\csp(\sA)$ can be solved using an oracle to the problems above, thus $\csp(\sA)$ is in P.
\end{proof}

\todo{The P/NP-complete complexity dichotomy for CSPs of first-order reducts of $(\mathbb Q;<)$ has been refined in~\cite{RydvalFP}, where the descriptive complexity of such CSPs is investigated.
We observe that the positive results about membership in fixed-point logic (with or without counting, and with or without the rank operator) in~\cite{RydvalFP} (see Proposition 3.6, Proposition 3.13, Proposition 3.23, Proposition 4.10 in~\cite{RydvalWithoutPakusa} for the most recent exposition) can also be derived from our approach by directly lifting the membership of $\csp(\Af)$ in FP/FPR.}
\begin{corollary}
    Let $\sA$ be a first-order reduct of $(\mathbb Q;<)$ such that $pp$ or $ll$ is a polymorphism of $\sA$.
    If $\csp(\Af)$ is in FP (resp. FPR), then so is $\csp(\sA)$.
\end{corollary}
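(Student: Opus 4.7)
The plan is to lift the polynomial-time Turing reduction from $\csp(\sA)$ to $\csp(\Af)$ used in the proof of Theorem~\ref{thm:Q} to a reduction that is expressible in FP (resp.\ FPR). Since FP and FPR are closed under first-order reductions and under composition with FP/FPR procedures, it suffices to verify that each step of the argument in Lemma~\ref{lem:free-sets} and the subsequent assembly into a decision procedure for $\csp(\sA)$ can be formulated within fixed-point logic with a single oracle for $\csp(\Af)$.

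First, I would observe that the auxiliary instances $\sY_x$ and $\sY'_{(x,y)}$ built from an instance $\sX$ in the proof of Lemma~\ref{lem:free-sets} are obtained by a first-order interpretation of $\sX$: the universe is that of $\sX$, the symbol for the unary $\{0\}$-relation is defined by the formula $z = x$, and the relations corresponding to $R^{\sA}$ are inherited verbatim from $\sX$. Consequently, the decision problem ``is there a free set of $\sX$ containing $x$ and avoiding $y$?'' is first-order reducible to $\csp(\Af)$, hence in FP (resp.\ FPR) under our hypothesis. Quantifying over $x$ and $y$ via an ordering inherent to finite structures in FP, one then obtains FP (resp.\ FPR) definitions of a canonical free set and of a canonically chosen minimal free set for $\sX$ (provided one exists), directly formalising the two computational tasks of Lemma~\ref{lem:free-sets}.

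Second, I would recast the outer loop of the algorithm of~\cite{tcsps-journal,ll} as a single inflationary fixed-point over the structure $\sX$. Maintain a set $P$ of \emph{processed} variables (initially empty) and, at each stage, use the FP/FPR procedure from the previous paragraph to find a minimal free set $S$ of the sub-instance induced by the constraints whose variables are not entirely in $P$; mark the elements of $S$ as committed to the minimum value, add them to $P$, and iterate. This inflationary process closes in at most $|X|$ stages, and because at every stage all the auxiliary queries are FP/FPR-definable, the entire computation is expressible by a single FP (resp.\ FPR) formula. Accepting iff this fixed-point exhausts the variable set of $\sX$ without ever encountering an inconsistent constraint then decides $\csp(\sA)$.

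The main obstacle, in my view, is not producing the FP expression itself but verifying faithfully that the correctness argument of~\cite{tcsps-journal,ll} goes through when one works with the \emph{canonically chosen} minimal free set produced inside FP/FPR rather than with an arbitrary one, as in the polynomial-time setting. For $pp$ this is essentially immediate because any free set suffices to make progress, but for $ll$ one needs to check that the additional lexicographic bookkeeping accompanying the removal of a free set can be encoded as further first-order information annotating the remaining sub-instance; once that bookkeeping is in place, the same inflationary fixed-point carries it through. No additional logical resources beyond those required to express $\csp(\Af)$ itself are introduced, so the lift preserves both FP and FPR.
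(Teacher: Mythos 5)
Your proposal has a genuine gap at its core: you propose to obtain a \emph{canonically chosen} (minimal) free set by ``quantifying over $x$ and $y$ via an ordering inherent to finite structures in FP'', but no such ordering exists. Instances of $\csp(\sA)$ are unordered finite structures, and the inability of FP (and of FPC, FPR) to define a linear order --- and hence to make a canonical choice among the several minimal free sets an instance may admit --- is precisely the obstacle that separates the descriptive-complexity statement from the plain polynomial-time statement already contained in Theorem~\ref{thm:Q}. Your inflationary fixed-point, which at each stage commits the elements of ``a minimal free set'' to the minimum value, therefore cannot be expressed as written. You correctly flag this as ``the main obstacle'', but the proposed resolution (an order-based canonical choice inside FP) is exactly what fixed-point logic cannot do.

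The paper sidesteps the issue by not re-deriving the outer algorithm at all. It invokes results from~\cite{RydvalWithoutPakusa} (Corollary~3.2 for the $pp$ case, the proof of Proposition~3.23 for the $ll$ case) which show that $\csp(\sA)$ is in FP (resp.\ FPR) provided certain \emph{element-parameterized, isomorphism-invariant} queries are: ``does $x$ belong to a free set of $\sX$?'' and ``do $x$ and $y$ belong to the same minimal free set of $\sX$?''. These queries require no choice. The only new content of the corollary is then to express them by the first-order interpretations producing $\sY_x$ and $\sY'_{(x,y)}$ followed by the FP/FPR sentence for $\csp(\Af)$; in the $ll$ case one additionally uses that $\Af$ has a semilattice polymorphism, so that $\csp(\Af)$ is in FP, making the set $F_t=\{z \mid \sY_t\to\Af \text{ and } \sY'_{(t,z)}\not\to\Af\}$ FP-definable and the ``same minimal free set'' relation expressible. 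Your first-order interpretations of $\sY_x$ and $\sY'_{(x,y)}$ are correct and match the paper; what is missing is the symmetrization of the outer loop, which the paper delegates to the cited reference rather than attempting to redo.
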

\begin{proof}
    Suppose first that $\sA$ is preserved by $pp$.
    Consider the following decision problem: on an input $(\sX,x)$ where $x\in X$, determine whether $x$ belongs to a free set of $\sX$.
By Corollary 3.2 in~\cite{RydvalWithoutPakusa}, if this problem is in FP (resp.\ FPR), then so is $\csp(\sA)$.\footnote{In~\cite{RydvalWithoutPakusa}, the authors consider a more general problem also taking as input a set $U\subseteq X$ and asking whether $x$ belongs to a free set of the projection of $\sX$ onto $U$. We simply observe that this projection is first-order interpretable over $\sX$, so that it suffices to consider our problem with input $(\sX,x)$.}
By our proof of Lemma~\ref{lem:free-sets}, solving this problem amounts to computing the structure $\sY_{x}$ from $(\sX,x)$ and testing for a homomorphism to $\Af$. Recall that $\sY_x$ is the structure obtained by simply adding the constraint $x=0$ to $\sX$, in particular it has a first-order interpretation in $(\sX,x)$.
Thus, the FP (resp.\ FPR) sentence defining $\sY_x\to\Af$ can be lifted to a formula over $(\sX,x)$.

Suppose now that $\sA$ has $ll$ as a polymorphism.
Consider this time the following decision problem: on an input $(\sX,x,y)$ with $x,y\in X$, determine whether $x$ and $y$ belong to the same minimal free set of $\sX$.
By the proof of Proposition~3.23 in~\cite{RydvalWithoutPakusa}, if this problem is in FP then so is $\csp(\sA)$.\footnote{Similarly as above, the problem considered in~\cite{RydvalWithoutPakusa} also takes as input binary relations $E,T$ and construct from $\sX$ another instance obtained by contracting $T$ and projecting onto $U:=\{z\in X \mid (z,z)\not\in E\}$. This instance is FP-interpretable in $\sX$ and therefore one does not need to consider the general case.}
For any $t,z\in X$, recall that $\sY'_{(t,z)}$ is the structure obtained from $\sX$ by adding the constraints $t=0,z>0$.
The set $F_t:=\{z\in X\mid \sY_t\to\Af \text{ and } \sY'_{(t,z)}\not\to\Af\}$ defines the minimal free set containing $t$ (and is empty if $t$ is not contained in any free set).
Since $ll$ is a polymorphism of $\sA$, $\Af$ has a semilattice polymorphism, which implies that $\csp(\Af)$ is in FP.
Therefore $F_t$ is FP-definable for all $t$.
Moreover the formula $F_t\neq\emptyset\land\forall z(z\in F_t\Rightarrow t\in F_z)$ states that the minimal free set containing $t$ exists and is minimal (among arbitrary free sets).
Now the formula $\emptyset\neq F_x=F_y\land \forall z(z\in F_x\Rightarrow x\in F_z)$
states precisely that $x$ and $y$ belong to the same minimal free set,
and this formula is equivalent to an FP formula.
\end{proof}

It can be observed \emph{a posteriori} that $\csp(\sA)$ is in FP (resp.\ FPR) iff $\csp(\Af)$ is, although no direct proof is known at the moment. We see it as an interesting question whether this result can also be proved directly with our approach. %

\section{Acknowledgements}

We are grateful to Libor Barto and Simon Kn\"auer for their numerous comments on earlier versions of this paper.

\bibliographystyle{alpha}
\bibliography{global}
\end{document}